\crefname{algocf}{Algorithm}{Algorithms}
\Crefname{algocf}{Algorithm}{Algorithms}
\theoremstyle{plain}
\newtheorem{theorem}{Theorem}[section]
\newtheorem{lemma}[theorem]{Lemma}
\newtheorem{prop}[theorem]{Proposition}
\newtheorem{claim}[theorem]{Claim}
\theoremstyle{definition}
\newtheorem{definition}[theorem]{Definition}
\newtheorem{remark}[theorem]{Remark}
\newtheorem{property}[theorem]{Property}
\DeclarePairedDelimiter{\tvdbasic}{\lVert}{\rVert}
\newcommand{\@tvdstar}[2]{\tvdbasic*{#1 - #2}_{\mathrm{TV}}}
\newcommand{\@tvdnostar}[3][]{\tvdbasic[#1]{#2 - #3}_{\mathrm{TV}}}
\newcommand{\tvd}{\@ifstar\@tvdstar\@tvdnostar}
\renewcommand{\epsilon}{\varepsilon}
\newcommand{\ceil}[1]{\left\lceil #1 \right\rceil}
\newcommand{\norm}[1]{\left\lVert #1 \right\rVert}
\newcommand{\midnorm}[1]{\lVert #1 \rVert}
\newcommand{\bk}[1]{\left( #1 \right)}
\newcommand{\bigbk}[1]{\bigl( #1 \bigr)}
\newcommand{\Bk}[1]{\left[ #1 \right]}
\newcommand{\bigBk}[1]{\bigl[ #1 \bigr]}
\newcommand{\BK}[1]{\left\{ #1 \right\}}
\newcommand{\midBK}[1]{\{ #1 \}}
\newcommand{\bigBK}[1]{\bigl\{ #1 \bigr\}}
\newcommand{\BigBK}[1]{\Bigl\{ #1 \Bigr\}}
\newcommand{\angbk}[1]{\left\langle #1 \right\rangle}
\newcommand{\abs}[1]{\left\lvert #1 \right\rvert}
\newcommand{\bigabs}[1]{\bigl\lvert #1 \bigr\rvert}
\DeclareMathOperator*{\E}{\mathbb{E}}
\let\Pr\myPr
\DeclareMathOperator{\poly}{poly}
\DeclareMathOperator*{\argmax}{arg\,max}
\newcommand{\F}{\mathbb{F}}
\renewcommand{\tilde}{\widetilde}
\newcommand{\defeq}{\coloneqq}
\newcommand{\eps}{\varepsilon}
\newcommand{\T}{\mathcal{T}}
\newcommand{\N}{\mathbb{N}}
\newcommand{\R}{\mathbb{R}}
\newcommand{\Z}{\mathbb{Z}}
\renewcommand{\l}{\ell}
\newcommand{\numberthis}{\addtocounter{equation}{1}\tag{\theequation}}
\newcommand{\tall}{\vphantom{\sum}}
\let\left\mleft
\let\right\mright
\newcommand{\CW}{\mathrm{CW}}
\newcommand{\degen}{\unrhd}
\renewcommand{\split}{\textsf{\textup{split}}}
\newcommand{\Split}{\textsf{\textup{split}}}
\newcommand{\calG}{\mathcal{G}}
\newcommand{\calS}{\mathcal{S}}
\let\save@mathaccent\mathaccent
\newcommand*\if@single[3]{%
  \setbox0\hbox{${\mathaccent"0362{#1}}^H$}%
  \setbox2\hbox{${\mathaccent"0362{\kern0pt#1}}^H$}%
  \ifdim\ht0=\ht2 #3\else #2\fi
  }
\newcommand*\rel@kern[1]{\kern#1\dimexpr\macc@kerna}
\newcommand*\widebar[1]{\@ifnextchar^{{\wide@bar{#1}{0}}}{\wide@bar{#1}{1}}}
\newcommand*\wide@bar[2]{\if@single{#1}{\wide@bar@{#1}{#2}{1}}{\wide@bar@{#1}{#2}{2}}}
\newcommand*\wide@bar@[3]{%
  \begingroup
  \def\mathaccent##1##2{%
    \let\mathaccent\save@mathaccent
    \if#32 \let\macc@nucleus\first@char \fi
    \setbox\z@\hbox{$\macc@style{\macc@nucleus}_{}$}%
    \setbox\tw@\hbox{$\macc@style{\macc@nucleus}{}_{}$}%
    \dimen@\wd\tw@
    \advance\dimen@-\wd\z@
    \divide\dimen@ 3
    \@tempdima\wd\tw@
    \advance\@tempdima-\scriptspace
    \divide\@tempdima 10
    \advance\dimen@-\@tempdima
    \ifdim\dimen@>\z@ \dimen@0pt\fi
    \rel@kern{0.6}\kern-\dimen@
    \if#31
      \overline{\rel@kern{-0.6}\kern\dimen@\macc@nucleus\rel@kern{0.4}\kern\dimen@}%
      \advance\dimen@0.4\dimexpr\macc@kerna
      \let\final@kern#2%
      \ifdim\dimen@<\z@ \let\final@kern1\fi
      \if\final@kern1 \kern-\dimen@\fi
    \else
      \overline{\rel@kern{-0.6}\kern\dimen@#1}%
    \fi
  }%
  \macc@depth\@ne
  \let\math@bgroup\@empty \let\math@egroup\macc@set@skewchar
  \mathsurround\z@ \frozen@everymath{\mathgroup\macc@group\relax}%
  \macc@set@skewchar\relax
  \let\mathaccentV\macc@nested@a
  \if#31
    \macc@nested@a\relax111{#1}%
  \else
    \def\gobble@till@marker##1\endmarker{}%
    \futurelet\first@char\gobble@till@marker#1\endmarker
    \ifcat\noexpand\first@char A\else
      \def\first@char{}%
    \fi
    \macc@nested@a\relax111{\first@char}%
  \fi
  \endgroup
}
\xpatchcmd\thmt@restatable{%
\csname #2\@xa\endcsname\ifx\@nx#1\@nx\else[{#1}]\fi
}{%
\ifthmt@thisistheone
\csname #2\@xa\endcsname\ifx\@nx#1\@nx\else[{#1}]\fi
\else
\csname #2\@xa\endcsname[{Restated}]
\fi}{}{}
\renewcommand{\tilde}{\widetilde}
\renewcommand{\bar}{\widebar}
\newcommand{\Span}{\textup{span}}
\newcommand{\labs}[1]{\left\lvert #1 \right\rvert}
\newcommand{\lpr}[1]{\left( #1 \right)}
\newcommand{\lcr}[1]{\left\{ #1 \right\}}
\newcommand{\alphx}[1][\alpha]{{#1}_{\textit{X}}}
\newcommand{\alphy}[1][\alpha]{{#1}_{\textit{Y}}}
\newcommand{\alphz}[1][\alpha]{{#1}_{\textit{Z}}}
\let\alphax\alphx
\let\alphay\alphy
\let\alphaz\alphz
\newcommand{\lvl}{\ensuremath{\l}}  %
\newcommand{\numxblock}{N_{\textup{BX}}}
\newcommand{\numyblock}{N_{\textup{BY}}}
\newcommand{\numzblock}{N_{\textup{BZ}}}
\newcommand{\numtriple}{N_{\alphax, \alphay, \alphaz}}
\newcommand{\numalpha}[1][\alpha]{N_{#1}}
\newcommand{\pcomp}{p_{\textup{comp}}}
\newcommand{\splres}{\beta}  %
\newcommand{\Thole}{T_{\mathrm{hole}}}
\newcommand{\splresX}{\splres_{\textit{X}}}
\newcommand{\splresY}{\splres_{\textit{Y}}}
\newcommand{\splresZ}{\splres_{\textit{Z}}}
\newcommand{\splresavg}{\bar{\beta}}
\newcommand{\splresXt}[1][t]{\splres_{\textit{X}, #1}}
\newcommand{\splresYt}[1][t]{\splres_{\textit{Y}, #1}}
\newcommand{\splresZt}[1][t]{\splres_{\textit{Z}, #1}}
\newcommand{\splresWt}[1][t]{\splres_{\textit{W}, #1}}
\newcommand{\splonelevelX}{\gamma_{\textit{X}}}
\newcommand{\splonelevelY}{\gamma_{\textit{Y}}}
\newcommand{\splonelevelZ}{\gamma_{\textit{Z}}}
\newcommand{\splonelevelXt}[1][t]{\gamma_{\textit{X}, #1}}
\newcommand{\splonelevelYt}[1][t]{\gamma_{\textit{Y}, #1}}
\newcommand{\splonelevelZt}[1][t]{\gamma_{\textit{Z}, #1}}
\newcommand{\xiXt}[1][t]{\xi_{\textit{X}, #1}}
\newcommand{\xiYt}[1][t]{\xi_{\textit{Y}, #1}}
\newcommand{\xiZt}[1][t]{\xi_{\textit{Z}, #1}}
\newcommand{\xiWt}[1][t]{\xi_{\textit{W}, #1}}
\newcommand{\K}{\kappa}
\newcommand{\oeps}{o_{1/\varepsilon}}
\newcommand{\ang}[1]{\langle #1 \rangle}
\newcommand{\splavg}[1]{\splresavg_{#1, *, *, *}}
\newcommand{\+}{\textup{+}}
\newcommand{\<}{\textup{<}}
\newcommand{\calP}{\mathcal{P}}
\newcommand{\calH}{\mathcal{H}}
\newcommand{\hole}{\textup{hole}}
\newcommand{\TPrime}{\T_{\textup{hash}}}
\newcommand{\TDoublePrime}{\T_{\textup{comp}}}
\newcommand{\TTriplePrime}{\T_{\textup{useful}}}
\newcommand{\mmid}{\;\middle|\;}
\begin{document}

\author{
    Virginia Vassilevska Williams\thanks{Massachusetts Institute of Technology. \texttt{virgi@mit.edu}. Supported by NSF Grants CCF-2129139 and CCF-2330048 and BSF Grant 2020356.}
    \and
    Yinzhan Xu\thanks{Massachusetts Institute of Technology. \texttt{xyzhan@mit.edu}. Partially supported by NSF Grants CCF-2129139 and CCF-2330048 and BSF Grant 2020356.}
    \and
    Zixuan Xu\thanks{Massachusetts Institute of Technology. \texttt{zixuanxu@mit.edu}.}
    \and
    Renfei Zhou\thanks{Institute for Interdisciplinary Information Sciences, Tsinghua University. \texttt{zhourf20@mails.tsinghua.edu.cn}.}
}
\title{New Bounds for Matrix Multiplication: from Alpha to Omega}
\date{}
\pagenumbering{gobble} 
\maketitle

\begin{abstract}
The main contribution of this paper is a new improved variant of the laser method for designing matrix multiplication algorithms. Building upon the recent techniques of [Duan, Wu, Zhou, FOCS 2023], the new method introduces several new ingredients that not only yield an improved bound on the matrix multiplication exponent $\omega$, but also improve the known bounds on rectangular matrix multiplication by [Le Gall and Urrutia, SODA 2018].

In particular, the new bound on $\omega$ is 
\begin{center}
$\omega\le 2.371552$ (improved from $\omega\le 2.371866$).\end{center}
For the dual matrix multiplication exponent $\alpha$ defined as the largest $\alpha$ for which $\omega(1,\alpha,1)=2$, we obtain the improvement
\begin{center}
$\alpha \ge 0.321334$ (improved from $\alpha \ge 0.31389$).
\end{center}
Similar improvements are obtained for various other exponents for multiplying rectangular matrices.
\end{abstract}

\newpage
\pagenumbering{arabic}

\section{Introduction}
Matrix multiplication is arguably the most basic linear algebraic operation, with plentiful applications throughout computer science and beyond. Its algorithmic complexity has been studied for many decades. In 1969 a breakthrough result by Strassen \cite{strassen} showed that $n\times n$ matrices can be multiplied faster than the naive cubic time algorithm. Since then there has been an explosion of results obtaining lower and lower bounds on the exponent $\omega$ defined as the smallest constant such that for all $\eps>0$, $n\times n$ matrices can be multiplied using $O(n^{\omega+\eps})$ arithmetic operations (additions, subtractions, multiplications and divisions; this is called the arithmetic circuit model of computation).
In recent years, the bound $\omega<2.373$ has been obtained~\cite{virgi12,stothers,LeGall32power,AlmanW21}. A new paper by Duan, Wu and Zhou~\cite{duan2023} shows that $\omega<2.3719$.

The dream bound would be $\omega=2$, implying a near-linear time algorithm for multiplying matrices. Unfortunately, a series of works \cite{ambainis,almanitcs,journals/toc/ChristandlVZ21,journals/toc/Alman21,aw2,blasiak2017cap,AlonSU13-sunflower-matrixmult,blasiak2017groups} has shown that the known techniques for multiplying matrices cannot achieve $\omega=2$.

All work on matrix multiplication since 1986 \cite{laser,virgi12,stothers,LeGall32power,AlmanW21,duan2023} has used various variants of the so-called {\em laser method}. The strongest limitation result known for the laser method and its variants \cite{ambainis} is that such techniques cannot show that $\omega<2.3078$. 

The limitation results could mean that radically new methods need to be produced to make big strides. Yet, even if one stays within the laser method framework, it is still an intriguing question: {\em how close can we get to the $2.3078$ barrier bound?}

In many applications of matrix multiplication, one needs to multiply {\em rectangular} matrices: $n^a\times n^b$ by $n^b\times n^c$, where $a,b,c$ are different. Here one defines $\omega(a,b,c)$ to be the exponent for which matrix products of such dimensions can be multiplied in $O(n^{\omega(a,b,c)+\eps})$ time for all $\eps>0$, in the arithmetic circuit model of computation.

For instance, in the study of All-Pairs Shortest Paths (APSP) in unweighted directed graphs \cite{zwickbridge}, the complexity of APSP depends on the value $\mu$ which is defined as the real number satisfying the equation $\omega(1,\mu,1)=1+2\mu$. The same value is needed for the best known algorithms for computing minimum witnesses of Boolean Matrix Multiplication \cite{CzumajKL07}, for All-Pairs Bottleneck Paths in node-weighted graphs~\cite{ShapiraYZ11} and other problems.

In the work on $k$-clique detection, the value of $\omega(1,2,1)$ is important, as it is known \cite{EG04} that $4$-cliques in $n$-node graphs can be detected in $O(n^{\omega(1,2,1)+\eps})$ time for any $\eps > 0$. Moreover, if $\omega(1,2,1)<3.16$, this would improve the known algorithms for $k$-clique detection for all $k\geq 8$ \cite{NS85}. 

A final value of interest is $\alpha$, the largest constant so that $\omega(1,\alpha,1)=2$, first studied by Coppersmith~\cite{Coppersmith82,coppersmith1997rectangular}. If $\omega=2$, then $\alpha=1$. So one can view the goal of increasing $\alpha$ as another way to attempt to prove that $\omega=2$. 

The best bounds on rectangular matrix multiplication to date are given by Le Gall and Urrutia \cite{legallrect2}, which improved upon \cite{Coppersmith82,coppersmith1997rectangular, HUANG1998257, ke2008fast,legallrect}. For the values listed above, the bounds obtained by \cite{legallrect2} are as follows: $\mu<0.5286$, $\omega(1,2,1)<3.25164$ and $\alpha<0.31389$.

The goal of this paper is to obtain better bounds on $\omega,\alpha,\mu$ and rectangular matrix multiplication in general.

\subsection{Our Results}
The main result of this paper is a new improved variant of the laser method for designing matrix multiplication algorithms. Applying the new method, we obtain improved bounds for both square and rectangular matrix multiplication.

In particular, we show that $\alpha >0.321334$ (improving upon the previous bound $0.31389$), $\mu<0.527661$ (improving upon the previous bound $0.5286$) and $\omega(1,2,1)<3.250385$ (improving upon $3.25164$).

As a consequence, Zwick's algorithm for APSP in directed unweighted graphs (and several other algorithms, e.g., minimum witnesses for Boolean Matrix Multiplication \cite{CzumajKL07} and All-Pairs Bottleneck Paths in node-weighted graphs \cite{ShapiraYZ11}) runs in $O(n^{2.527661})$ time and $4$-cliques can be found in $O(n^{3.250385})$ time.

For many other bounds on rectangular matrix multiplication, see Table \ref{table:result}.
\newcommand{\colwidth}{2.5cm}
\newcolumntype{M}[1]{>{\centering\arraybackslash}m{#1}}

\newcommand{\columnone}{%
\begin{tabular}{|c|M{\colwidth}|M{\colwidth}|}
  \hline
  $\kappa$ & upper bound on $\omega(1, \kappa, 1)$ & previous bound on $\omega(1, \kappa, 1)$ \\
  \hline
  \textbf{0.321334} & 2 & N/A \\
  0.33 & 2.000100 & 2.000448 \\
  0.34 & 2.000600 & 2.001118 \\
  0.35 & 2.001363 & 2.001957 \\
  0.40 & 2.009541 & 2.010314 \\
  0.45 & 2.023788 & 2.024801 \\
  0.50 & 2.042994 & 2.044183 \\
  \textbf{0.527661} & 2.055322 & N/A \\
  0.55 & 2.066134 & 2.067488 \\
  0.60 & 2.092631 & 2.093981 \\
  0.65 & 2.121734 & 2.123097 \\
  0.70 & 2.153048 & 2.154399 \\
  \hline
\end{tabular}
}

\newcommand{\columntwo}{%
\begin{tabular}{|c|M{\colwidth}|M{\colwidth}|}
  \hline
  $\kappa$ & upper bound on $\omega(1, \kappa, 1)$ & previous bound on $\omega(1, \kappa, 1)$ \\
  \hline
  0.75 & 2.186210 & 2.187543 \\
  0.80 & 2.220929 & 2.222256 \\
  0.85 & 2.256984 & 2.258317 \\
  0.90 & 2.294209 & 2.295544 \\
  0.95 & 2.332440 & 2.333789 \\
  1.00 & \textbf{2.371552} & 2.371866 \\
  1.10 & 2.452056 & 2.453481 \\
  1.20 & 2.535063 & 2.536550 \\
  1.50 & 2.794941 & 2.796537 \\
  2.00 & 3.250385 & 3.251640 \\
  2.50 & 3.720468 & 3.721503 \\
  3.00 & 4.198809 & 4.199712 \\
  \hline
\end{tabular}
}

\begin{table}[ht]
\centering
\begin{subfigure}
\columnone
\end{subfigure}
\begin{subfigure}
\columntwo
\end{subfigure}
\caption{Our bounds on $\omega(1, \kappa, 1)$ by analyzing the fourth power of the CW tensor compared to the best previous bounds. The previous bound for $\kappa = 1$ comes from \cite{duan2023}'s eighth-power analysis, while all other entries come from \cite{legallrect2}.}\label{table:result}
\end{table}

\paragraph{Independent Work.} Independently, Le Gall \cite{LeGall24} also obtained bounds on rectangular matrix multiplication, improving over \cite{legallrect2}. His method generalizes the approach of \cite{duan2023} to rectangular matrices. For technical reasons, the bound on $\omega$ produced by his method does not match the bound in \cite{duan2023}. In comparison, our method is not only a generalization of \cite{duan2023} to rectangular matrices, but also an improvement. As a result, our bounds are better than the bounds in \cite{LeGall24}.

\section{Technical Overview}

\subsection{Overview of Previous Work}

For positive integers $a,b,c$, the $a\times b \times c$ matrix multiplication tensor $\ang{a,b,c}$ is a tensor over the variable sets $\{x_{ij}\}_{i\in [a], j\in [b]}, \{y_{jk}\}_{j\in [b],k\in[c]}, \{z_{ki}\}_{k\in [c],i\in [a]}$  defined as the tensor computing the $a\times c$ product matrix $\{z_{ki}\}_{k\in [c], i\in [a]}$ of an $a\times b$ matrix $\{x_{ij}\}_{i\in [a], j\in [b]}$ and a $b\times c$ matrix $\{y_{jk}\}_{j\in [b], k\in [c]}$.\footnote{For integer $n \ge 0$, the notation $[n]$ denotes $\{1,\dots, n\}$.} Specifically, $\ang{a,b,c}$ can be written as the following trilinear form
\[\ang{a,b,c} = \sum_{i\in [a]}\sum_{j\in [b]}\sum_{k\in [c]} x_{ij} y_{jk} z_{ki}.\]
It is not hard to check that $\ang{a,b,c}\otimes \ang{d,e,f} = \ang{ad, be, cf}$. For a tensor $T$, let $R(T)$ denote the tensor rank of $T$ and the matrix multiplication exponent $\omega$ is defined as 
\[\omega \defeq \inf_{q\in \N, \, q\ge 2}\log_q R(\ang{q,q,q}).\]
It is hard to directly bound the tensor rank of $\ang{q,q,q}$ in general, so current approaches to bounding $\omega$ utilize Sch{\"o}nhage's asymptotic sum inequality \cite{Schonhage81}, which states that if one can bound the asymptotic rank of a direct sum of matrix multiplication tensors, where the asymptotic rank $\widetilde{R}(T)$ of a tensor $T$ is defined as 
$$\widetilde{R}(T) \defeq \lim_{n \rightarrow \infty} R(T^{\otimes n})^{1/n}, $$
then one can get a bound on $\omega$. More specifically, we recall the asymptotic sum inequality as follows.

\begin{theorem}[Asymptotic Sum Inequality \cite{Schonhage81}]
For positive integers $r > m$ and $a_i, b_i,c_i$ for $i\in [m]$, if 
\[\widetilde{R}\bk{\bigoplus_{i = 1}^m \ang{a_i,b_i,c_i}}\le r,\]
then $\omega\le 3\tau$ where $\tau\in [2/3,1]$ is the solution to the equation
\[\sum_{i = 1}^m (a_i\cdot b_i\cdot c_i)^\tau = r.\]
\end{theorem}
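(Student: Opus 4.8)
The plan is to prove Schönhage's asymptotic sum inequality by an amortization argument: tensor the direct sum with itself $N$ times, extract many disjoint copies of a single large matrix multiplication tensor from the resulting tensor power, and then invoke the definition of $\omega$ together with the sub-multiplicativity of rank. The engine is the observation that a direct sum of matrix multiplication tensors, when raised to a high tensor power, contains — as a restriction — a large number of copies of $\ang{a,b,c}$ for various $(a,b,c)$, and by choosing the multinomial term that dominates we can make that number essentially as large as $r^N$ while the dimensions of the extracted tensor match the multinomial weights.

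First I would set $T = \bigoplus_{i=1}^m \ang{a_i,b_i,c_i}$ and consider $T^{\otimes N}$. Using the fact that $\ang{a,b,c}\otimes\ang{d,e,f} = \ang{ad,be,cf}$ and that tensor product distributes over direct sum, $T^{\otimes N}$ decomposes as a direct sum, over all $(n_1,\dots,n_m)$ with $\sum_i n_i = N$, of $\binom{N}{n_1,\dots,n_m}$ copies of the single tensor $\ang{\prod_i a_i^{n_i},\prod_i b_i^{n_i},\prod_i c_i^{n_i}}$. A direct sum of matrix multiplication tensors degenerates (by a standard padding/restriction argument of Schönhage) into a single matrix multiplication tensor whose dimensions are comparable to the sum of the parts — this is the ``$\tau$-theorem'' direction of Schönhage's work, and it is exactly where the exponent $\tau$ enters: the number of copies times $(a b c)^\tau$ of each copy must be balanced. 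Concretely, I would fix the $n_i$ to be $n_i \approx \theta_i N$ where $\theta_i = (a_ib_ic_i)^\tau / r$ (these sum to $1$ precisely because $\tau$ solves $\sum_i (a_ib_ic_i)^\tau = r$), so that the single term $\binom{N}{n_1,\dots,n_m}\cdot \ang{\prod a_i^{n_i},\prod b_i^{n_i},\prod c_i^{n_i}}$ already has the property that (number of copies) $\times$ (volume of each copy)$^\tau \approx (r)^N$ up to subexponential factors coming from Stirling.

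Next I would combine the many disjoint copies of this matrix multiplication tensor into one larger matrix multiplication tensor: $s$ disjoint copies of $\ang{a,b,c}$ restrict into $\ang{a,b,c}\otimes\ang{1,s,1}$-type structures, or more simply one uses that $R\bk{\bigoplus_{j=1}^s \ang{a,b,c}} $ controls $R$ of a single product of the form $\ang{a s', b, c}$ for suitable $s'$; Schönhage's original argument shows a direct sum of $s$ copies of $\ang{a,b,c}$ has asymptotic rank at most roughly $s\cdot\widetilde R(\ang{a,b,c})$, but crucially it can be \emph{realized inside} a single $\ang{as,b,c}$ up to lower-order terms by the so-called $\tau$-theorem trick of using a generic tensor. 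Then since $\widetilde R(T)\le r$ by hypothesis, we get $\widetilde R(T^{\otimes N}) \le r^N$, hence the single extracted matrix multiplication tensor of dimensions $(a,b,c) = \bk{\prod a_i^{n_i},\prod b_i^{n_i},\prod c_i^{n_i}}$, taken with multiplicity $M_N = \binom{N}{n_1,\dots,n_m}$, satisfies $M_N \cdot \widetilde R(\ang{a,b,c}) \lesssim r^N$ in an amortized sense. Applying the definition of $\omega$ gives $\widetilde R(\ang{a,b,c}) \ge (abc)^{\omega/3 - o(1)}$ — wait, one wants the reverse: one knows $R(\ang{q,q,q}) \le q^{\omega+\eps}$, so for a general (possibly rectangular) $\ang{a,b,c}$ one embeds it into $\ang{q,q,q}$ with $q = \max(a,b,c)^{1/1}$... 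I would instead use the symmetrization $\ang{a,b,c}^{\otimes 3}$ which contains $\ang{abc,abc,abc}$ as a restriction (up to permuting the three roles), so $\widetilde R(\ang{a,b,c})^3 \ge \widetilde R(\ang{abc,abc,abc}) $, and then the trivial lower bound $\widetilde R(\ang{n,n,n})\ge n^2$ is too weak; the right move is the \emph{upper} bound direction, pushing everything back: from $M_N\cdot R(\ang{a,b,c}^{\otimes 3}) \ge$ something. Let me restate cleanly: the inequality we will actually extract is a lower bound $r^N \gtrsim M_N \cdot (abc)^{2/3}$-flavoured, which after taking $N$-th roots and letting $N\to\infty$ yields $r \ge \sum_i (a_ib_ic_i)^{2/3} \cdot (\text{correction})$ — and the correction, tracked through the Stirling asymptotics of the multinomial coefficient with the optimal $\theta_i$, is exactly what upgrades the exponent $2/3$ to the claimed $\tau$.

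**The main obstacle** is the direction of the rank bound and the realization of a direct sum of matrix multiplications as (a restriction of) a single matrix multiplication tensor: this is the genuinely nontrivial ``$\tau$-theorem'' content, requiring the trick of tensoring with a small generic (laser-method) tensor or using Schönhage's padding construction so that $\widetilde R\bk{\bigoplus_{j=1}^s\ang{a_j,b_j,c_j}}$ on the one hand is at least $\sum_j (a_jb_jc_j)^{\omega/3}$ (from the definition of $\omega$ applied after symmetrizing) and on the other hand, when $s$ and all dimensions are taken to high powers, does not incur more than a subexponential overhead. Getting the Stirling bookkeeping to produce precisely the weights $\theta_i = (a_ib_ic_i)^\tau/r$ — i.e., recognizing that $\max_{\sum\theta_i=1}\prod_i\bk{(a_ib_ic_i)^{\omega/3}/\theta_i}^{\theta_i}$ is attained at exactly those weights and equals $\bk{\sum_i(a_ib_ic_i)^{\omega/3}}$ raised to a power matching $r$ — is a Lagrange-multiplier computation that is routine once set up but is where the exponent $3\tau$ crystallizes. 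I would present the argument in two lemmas: (i) the multinomial decomposition of $T^{\otimes N}$ into weighted direct sums of matrix multiplication tensors; (ii) the amortized realization lemma converting a weighted direct sum of matrix multiplications plus the hypothesis $\widetilde R(T)\le r$ into the asserted bound on $\omega$ via the variational calculation.
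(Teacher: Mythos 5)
The paper cites this theorem from \cite{Schonhage81} without proof, so there is no internal proof to compare against; I evaluate your proposal on its own terms.

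Your high-level skeleton is the right one: raise $T = \bigoplus_i \ang{a_i,b_i,c_i}$ to the $N$-th tensor power, use the multinomial decomposition into $\binom{N}{n_1,\dots,n_m}$ disjoint copies of $\ang{\prod a_i^{n_i},\prod b_i^{n_i},\prod c_i^{n_i}}$, choose the weights $n_i \approx \theta_i N$ with $\theta_i = (a_i b_i c_i)^\tau / r$, symmetrize via the three cyclic rotations to get many disjoint copies of a square $\ang{P,P,P}$, and balance the multinomial entropy against $r^N$. All of that is correct and is indeed how the argument is set up.

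However, there is a genuine gap at the one step you yourself flag as ``the genuinely nontrivial $\tau$-theorem content.'' Everything you have said up to that point reduces the theorem to the following: \emph{if $\widetilde R\bigl(\bigoplus_{j=1}^s \ang{P,P,P}\bigr) \le R$ then $s \cdot P^{\omega} \le R$} (up to lower-order terms). You do not supply a proof of this, and the two things you say instead are incorrect. First, your claim that a direct sum of $s$ copies of $\ang{a,b,c}$ ``can be realized inside a single $\ang{as,b,c}$'' is false: $\bigoplus_{j=1}^s\ang{a,b,c}$ has $s\cdot bc$ $Y$-variables while $\ang{as,b,c}$ has only $bc$, so no restriction (nor degeneration) can produce the former from the latter when $s>1$; the $s$ block products in $\ang{as,b,c}$ share the second factor and are therefore not disjoint copies. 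Second, your ``clean restatement'' extracts only a lower bound of the flavor $r^N \gtrsim M_N \cdot (abc)^{2/3}$, which is the trivial dimension count (every $\ang{n,n,n}$ has rank $\ge n^2$, additively over direct summands) and carries no information about $\omega$ at all; the subsequent assertion that Stirling bookkeeping ``upgrades the exponent $2/3$ to $\tau$'' has no mechanism behind it — the multinomial coefficient cancels on both sides of the trivial bound.

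Concretely, the missing content is Schönhage's direct-sum lemma, which is where the hard work lives: one cannot in general pack $s$ disjoint copies of $\ang{n,n,n}$ into a single $\ang{n',n',n'}$ by a restriction, so the conversion from a bounded-rank direct sum to a bound on $\omega$ requires an iterative construction in which the $q^k$ disjoint copies of $\ang{n^k,n^k,n^k}$ obtained by tensor-powering are used as a \emph{subroutine} to compute an $m\times m$ product by block decomposition, with $m \approx n^k q^{k/3}$ chosen so that one batch of $q^k$ block multiplications exactly covers the $(m/n^k)^3$ blocks, and the resulting recursion solved to bound $\omega$. This argument (or an equivalent by contradiction) is the theorem; it has to be proved, not invoked. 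Without it, the multinomial and Lagrange-multiplier bookkeeping you describe — while correct — never connects to $\omega$.
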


Sch{\"o}nhage's asymptotic sum inequality gave rise to the following approach to bounding $\omega$: start with a tensor $T$ whose asymptotic rank $\widetilde{R}(T)$ is easy to bound. Consider $T^{\otimes n}$ for some $n$ sufficiently large and we want to transform $T^{\otimes n}$ into a direct sum of matrix multiplication tensors whose asymptotic rank is upper bounded by the asymptotic rank of $\widetilde{R}(T^{\otimes n}) = \widetilde{R}(T)^{n}$. The common ways of doing such transformation is via zeroing-out, i.e., setting some variables in $T^{\otimes n}$ to zero, or the more general degeneration, whose definition is deferred to \cref{sec:prelim}.
Then we can apply the asymptotic sum inequality to get a bound on $\omega$. 
Observe that if $T^{\otimes n}$ can be degenerated into $\bigoplus_{i = 1}^m \ang{a_i,b_i,c_i}$, then for a fixed $\tau$, we want to maximize the value of $\sum_{i = 1}^m (a_i\cdot b_i\cdot c_i)^\tau$. This gives a notion of the ``matrix multiplication value'' of a tensor $T$ that we want to maximize. Then notice that a lower bound on the value of $T^{\otimes n}$ would directly imply an upper bound on $\omega$ via the asymptotic sum inequality. It still remains unknown how to get the best possible bound on $\omega$ via a tensor power $T^{\otimes n}$, but the laser method provides one way to give a nontrivial bound.

\paragraph{Laser method.} Let $T$ be a tensor over three sets of variables $X, Y, Z$. For positive integers $s_X, s_Y, s_Z$, let $X = \bigsqcup_{i \in [s_X]} X_i, Y = \bigsqcup_{j \in [s_Y]} Y_j$ and $Z = \bigsqcup_{k \in [s_Z]} Z_k$ be partitions of the $X$-, $Y$-, $Z$-variable sets into $s_X, s_Y, s_Z$ parts respectively. Then $T$ can be written as a sum of subtensors $\sum_{i, j, k} T_{i,j,k}$, where $T_{i,j,k}$ denotes the subtensor of $T$ restricted to variables $X_i, Y_j, Z_k$. 

Suppose for now that each subtensor $T_{i, j, k}$ is a matrix multiplication tensor. If $T$ is a direct sum of matrix multiplication tensors, then we can apply Sch{\"o}nhage's asymptotic sum inequality \cite{Schonhage81} to obtain a bound on $\omega$. However, $T$ is a sum of $T_{i, j, k}$, not necessarily a direct sum.

The laser method \cite{laser} is devised to overcome this issue. First, we take the $n$-th  tensor power of $T$ for some large $n$, which is a tensor over variables $X^n, Y^n, Z^n$: 
$$T^{\otimes n} = \sum_{I \in [s_X]^n} \sum_{J \in [s_Y]^n} \sum_{K \in [s_Z]^n} T_{I,J,K}, $$
where 
\[
  T_{I, J, K} = T_{I_1, J_1, K_1} \otimes T_{I_2, J_2, K_2} \otimes \cdots \otimes T_{I_n, J_n, K_n}.
\]
We will refer to these three sets of variables as $X$-variables, $Y$-variables and $Z$-variables respectively. 
Because each $T_{i, j, k}$ is a matrix multiplication tensor and the tensor products of several $T_{i,j,k}$'s will still be matrix multiplication tensors, $T_{I, J, K}$ is a matrix multiplication tensor for any $I\in [s_X]^n, J\in [s_Y]^n, K\in [s_Z]^n$.
For any $I \in [s_X]^n$, let $X_I$ denote $X_{I_1} \times X_{I_2} \times \cdots \times X_{I_n}$, which is a subset of $X^n$. Similarly we define $Y_J$ and $Z_K$. It is not difficult to see that $T_{I, J, K}$ is exactly the subtensor of $T^{\otimes n}$ when restricted to $X_I, Y_J, Z_K$. We call such subsets $X_I, Y_J, Z_K$ \emph{variable blocks}.

The goal of the laser method is to select some of the variable blocks $X_I, Y_J$ or $Z_K$ and zero out all of the variables in these blocks, i.e. ``zero out the blocks'', so that the remaining tensor is a direct sum of $T_{I,J,K}$'s.

The laser method specifies a distribution $\alpha$ over triples $(i, j, k)$ where $i\in [s_X], j\in [s_Y], k\in [s_Z]$, so that for each $T_{I,J,K}$ that we want to keep in the direct sum, we require that
\begin{equation}\label{eq:consistent-alpha}
    \big| \{t \in [n]\mid (I_t, J_t, K_t) = (i, j, k)\}\big| = \alpha(i, j, k) \cdot n.
\end{equation}
If a subtensor $T_{I,J,K}$ satisfies \eqref{eq:consistent-alpha}, we say that it is {\em consistent} with the distribution $\alpha$. 

The distribution $\alpha$ induces the marginal distributions $\alpha_X, \alpha_Y, \alpha_Z$ on the $X$-, $Y$-, $Z$-variables over the indices $[s_X], [s_Y], [s_Z]$ respectively as follows. Let $\alpha_X$, $\alpha_Y$, $\alpha_Z$ be the marginal distributions of $\alpha$ on the three dimensions respectively, i.e., 
\begin{align*}
    \alpha_X(i) &= \sum_{j \in [s_Y], k \in [s_Z]} \alpha(i, j, k) \quad \forall i\in [s_X],\\
    \alpha_Y(j) &= \sum_{i \in [s_X], k \in [s_Z]} \alpha(i, j, k) \quad \forall j\in [s_Y],\\
    \alpha_Z(k) &= \sum_{i \in [s_X], j \in [s_Y]} \alpha(i, j, k) \quad \forall k\in [s_Z].
\end{align*}
In the laser method, we zero out all $X$-variable blocks $X_I$ that are not consistent with $\alpha_X$ ($X_I$ is consistent with $\alpha_X$ if $|\midBK{t \in [n]: I_t = i}| = \alpha_X(i) \cdot n$ for every $i \in [s_X]$).
We similarly zero out all $Y$-variable blocks $Y_J$ that are not consistent with $\alpha_Y$ and $Z$-variable blocks $Z_K$ that are not consistent with $\alpha_Z$.

At this stage, a subtensor $T_{I, J, K}$ remains if $X_I, Y_J$ and $Z_K$ all remain. Thus, all remaining $T_{I,J,K}$'s are consistent with some distribution $\alpha'$ that induces the same marginal distributions $\alpha_X, \alpha_Y, \alpha_Z$, though $\alpha'$ might be different from $\alpha$. The final stages of the laser method aim to keep a collection of independent subtensors $T_{I, J, K}$ and zero out the subtensors $T_{I,J,K}$ that are consistent with a distribution $\alpha' \ne \alpha$, using techniques such as hashing and greedy procedures. Eventually, the laser method obtains multiple independent copies of the tensor isomorphic to:
\[ \T \defeq \bigotimes_{i, j, k} T_{i,j,k}^{\otimes \alpha(i, j, k) \cdot n}. \]

\paragraph{The Coppersmith-Winograd tensor $\CW_q$.}  Prior works \cite{cw90,stothers,virgi12,LeGall32power,AlmanW21,duan2023} that obtained the best bounds on $\omega$  used the Coppersmith-Winograd tensor $\CW_q$ and its powers as the starting tensor $T$ in the laser method. The Coppersmith-Winograd tensor $\CW_q$ for a nonnegative integer $q$ is defined as
\[\CW_q \defeq x_0y_0z_{q+1} + x_0y_{q+1}z_0 + x_{q+1}y_0z_0 + \sum_{i = 1}^q \lpr{x_0y_iz_i + x_iy_0z_i + x_iy_iz_0}.\]
Observe that 
\[\sum_{i = 1}^q \lpr{x_0y_iz_i + x_iy_0z_i + x_iy_iz_0} \equiv \ang{1,1,q} + \ang{q,1,1} + \ang{1,q,1},\]
so $\CW_q$ is the sum of six matrix multiplication tensors where the other three are copies of $\ang{1,1,1}$. 
Next, we describe the leveled partitions of $\CW_q$ and $\CW_q^{\otimes 2^\lvl}$ that are crucial to our analysis. For simplicity, we denote $T^{(\lvl)} \defeq \CW_q^{\otimes 2^{\lvl-1}}$.

For $T^{(1)} = \CW_q$, its variable sets are partitioned into three parts
\begin{align*}
    X^{(1)} &= X^{(1)}_0 \sqcup X^{(1)}_1 \sqcup X^{(1)}_2 = \{x_0\}\sqcup \{x_1,\dots, x_q\}\sqcup \{x_{q+1}\},\\
    Y^{(1)} &= Y^{(1)}_0 \sqcup Y^{(1)}_1 \sqcup Y^{(1)}_2 = \{y_0\}\sqcup \{y_1,\dots, y_q\}\sqcup \{y_{q+1}\},\\
    Z^{(1)} &= Z^{(1)}_0 \sqcup Z^{(1)}_1 \sqcup Z^{(1)}_2 =  \{z_0\}\sqcup \{z_1,\dots, z_q\}\sqcup \{z_{q+1}\}.
\end{align*}
Notice that under this partition, a constituent tensor $T^{(1)}_{i, j, k}$ is nonzero if and only if $i+j+k = 2$.

For $T^{(\lvl)}=\CW_q^{\otimes 2^{\lvl-1}}$ with variable sets $X^{(\lvl)}, Y^{(\lvl)}, Z^{(\lvl)}$, the above partition on $T^{(1)}$ directly induces a partition on the variable sets $X^{(\lvl)}, Y^{(\lvl)}, Z^{(\lvl)}$ where each part of the partition is indexed by a $\{0, 1, 2\}$-sequence of length $2^{\lvl-1}$. Specifically, this gives the partition
\[X^{(\lvl)} = \bigsqcup_{(\hat{i}_1, \hat{i}_2, \ldots, \hat{i}_{2^{\lvl-1}}) \in \{0, 1, 2\}^{2^{\lvl-1}}}X^{(1)}_{\hat{i}_1} \otimes  X^{(1)}_{\hat{i}_2} \otimes \cdots \otimes X^{(1)}_{\hat{i}_{2^{\lvl-1}}}\]
for $X$-variables and analogous partitions for $Y$- and $Z$-variables. 

One can use the laser method on these partitions. However, this would not yield an improved bound on $\omega$ from what one would get just by analyzing $T^{(1)}$.
The reason behind the improvement obtained by analyzing higher powers of $\CW_q$ comes from the fact that we can consider the following coarsening of the above partition where the parts corresponding to sequences with the same sum are ``merged'' into a single part. More specifically, we have
\[
  X^{(\lvl)} = \bigsqcup_{i = 0}^{2^\lvl} X_i^{(\lvl)},
  \qquad \textup{where} \quad
  X_i^{(\lvl)} \defeq
  \bigsqcup_{\substack{(\hat{i}_1, \hat{i}_2, \ldots, \hat{i}_{2^{\lvl-1}}) \in \{0, 1, 2\}^{2^{\lvl-1}}  \\\sum_t \hat{i}_t = i}}X^{(1)}_{\hat{i}_1} \otimes  X^{(1)}_{\hat{i}_2} \otimes \cdots \otimes X^{(1)}_{\hat{i}_{2^{\lvl-1}}}.
\]
 We refer to this specific partition of $T^{(\lvl)}$ as the \emph{level-$\lvl$ partition}.
 Note that we can also view this partition as obtained from coarsening the level-($\lvl-1$) partition, i.e.,
 \[X_i^{(\lvl)} = \bigsqcup_{\substack{0 \le i' \le i \\ 0 \le i', i - i' \le 2^{\ell}}}X^{(\lvl-1)}_{i'} \otimes X^{(\lvl-1)}_{i-i'}.\]
 We can partition the variable sets $Y^{(\lvl)}$ and $Z^{(\lvl)}$ similarly. Then we use $T^{(\lvl)}_{i, j, k}$ to denote the subtensor of $T^{(\lvl)}$ restricted to the variable subsets $X_i^{(\lvl)}, Y_j^{(\lvl)}, Z_k^{(\lvl)}$ and note that $T^{(\lvl)}_{i, j, k}$ is nonzero if and only if $i+j+k = 2^{\lvl}$. We call $T^{(\lvl)}_{i, j, k}$ a level-$\lvl$ \emph{constituent tensor}, $X^{(\lvl)}_i, Y^{(\lvl)}_j, Z^{(\lvl)}_k$ level-$\lvl$ variable blocks, and we omit the superscript $(\lvl)$ when $\lvl$ is clear from context.

For $\lvl > 1$, some level-$\lvl$ constituent tensors $T^{(\lvl)}_{i,j,k}$ are no longer matrix multiplication tensors, so each independent copy of $\T=\bigotimes_{i, j, k} \bigbk{T_{i,j,k}^{(\lvl)}}^{\otimes \alpha(i,j,k) \cdot n}$ may also no longer be a matrix multiplication tensor. To resolve this issue, prior works \cite{cw90,stothers,virgi12,LeGall32power,AlmanW21} use the laser method recursively to analyze  $T_{i,j,k}$'s that are not matrix multiplication tensors.

\paragraph{The work of \cite{duan2023}.}  Consider the analysis on the tensor $T^{(\lvl)}$ of the laser method. In previous approaches prior to the work of Duan, Wu and Zhou~\cite{duan2023}, one would first apply the laser method on $T^{(\lvl)}$ to obtain multiple copies of $\T=\bigotimes_{i, j, k} \bigbk{T_{i,j,k}^{(\lvl)}}^{\otimes \alpha(i,j,k) \cdot n}$ which consists of level-$\lvl$ constituent tensors $T_{i,j,k}^{(\ell)}$ and do not share level-$\lvl$ variable blocks. Then for each $T_{i,j,k}^{(\ell)}$ that is not a matrix multiplication tensor, one would recursively apply the laser method to obtain multiple copies of some other tensors that are independent over level-$(\lvl-1)$ variable blocks.

Recall that for a level-$\lvl$ constituent tensor $T_{i, j, k}^{(\lvl)}$, we can partition its variable set $X_i^{(\lvl)}, Y_j^{(\lvl)}, Z_k^{(\lvl)}$ recursively into $\bigsqcup_{i'} X_{i'}^{(\lvl-1)} \otimes X_{i-i'}^{(\lvl-1)}$, $\bigsqcup_{j'} Y_{j'}^{(\lvl-1)} \otimes Y_{j-j'}^{(\lvl-1)}$ and $\bigsqcup_{k'} Z_{k'}^{(\lvl-1)} \otimes Z_{k-k'}^{(\lvl-1)}$ respectively.
In the first recursive step, when applying the laser method on $T_{i, j, k}^{(\lvl)}$, we take the $n'$-th tensor power $\bigbk{T_{i,j,k}^{(\lvl)}}^{\otimes n'}$ of $T_{i,j,k}^{(\lvl)}$ for some large $n'$ and specify a distribution $\beta$ over triples $((i', i-i'), \, (j', j-j'), \, (k', k-k'))$ where $0\le i'\le i$, $0\le j'\le j$, $0\le k'\le k$, and zero out all variables blocks that are not consistent with the marginal distributions induced by $\beta$. Therefore, in $T_{i, j, k}^{\otimes n'}$, only a subset of the level-$(\ell-1)$ variable blocks survive the above zeroing-out.

\begin{figure}[ht]
  \centering
  \tikzset{%
  >={Latex[width=2mm,length=2mm]},
    base/.style = {rectangle, rounded corners, draw=black, minimum width=4cm, minimum height=1cm, text centered},
    regularblock/.style = {base},
    highlightblock/.style = {base, fill=red!30},
    inoutblock/.style = {base, fill=green!30},
    changeblock/.style = {base, fill=yellow!30}
}

\begin{tikzpicture}[node distance=1.5cm,
    every node/.style={fill=white}, align=center,scale=0.65, every node/.style={scale=0.65}]
  \node (CW4) [inoutblock] {$(\CW_q^{\otimes 4})^{\otimes n}$};
  \node (zerooutalpha)  [regularblock, below=0.55cm of CW4] {Zero out level-3 blocks inconsistent with $\alpha$};
  \node (hashlevel3) [regularblock, below=0.55cm of zerooutalpha] {Hash level-3 blocks};
  \node (level3ind) [inoutblock, below=0.55cm of hashlevel3] {Level-3 independent copies of $\T$};

  \node (zerooutbeta)  [regularblock, below=0.55cm of level3ind]          {Zero out level-2 blocks inconsistent with $\beta$};
  \node (hashlevel2) [regularblock, below of=zerooutbeta] {Hash level-2 blocks};
  \node (level2ind) [inoutblock, below of=hashlevel2] {Level-2 independent copies of $\T'$};

  \node (zerooutgamma)  [regularblock, below of=level2ind]          {Zero out level-1 blocks inconsistent with $\gamma$};
  \node (hashlevel1) [regularblock, below of=zerooutgamma] {Hash level-1 blocks};
  \node (level1ind) [inoutblock, below of=hashlevel1] {Level-1 independent copies of $\T''$};

  \draw[->] (CW4) -- (zerooutalpha);
  \draw[->] (zerooutalpha) -- (hashlevel3);
  \draw[->] (hashlevel3) -- (level3ind);
  \draw[->] (level3ind) -- (zerooutbeta);
  \draw[->] (zerooutbeta) -- (hashlevel2);
  \draw[->] (hashlevel2) -- (level2ind);
  \draw[->] (level2ind) -- (zerooutgamma);
  \draw[->] (zerooutgamma) -- (hashlevel1);
  \draw[->] (hashlevel1) -- (level1ind);

    \coordinate (shift1) at (8,0);
  \begin{scope}[shift=(shift1)]
     \node (CW4) [inoutblock] {$(\CW_q^{\otimes 4})^{\otimes n}$};
  \node (zerooutalpha)  [regularblock, below of=CW4]          {Zero out level-3 blocks inconsistent with $\alpha$};
  
  \node (hashlevel3) [regularblock, below of=zerooutalpha] {Hash level-3 blocks};
    \node (zerooutbeta)  [highlightblock, below of=hashlevel3]          {Zero out level-2 blocks inconsistent with $\beta$};
  \node (level3ind) [changeblock, below of=zerooutbeta] { \textbf{Level-2} independent copies of $\T$};

  \node (hashlevel2) [regularblock, below of=level3ind] {Hash level-2 blocks};

   \node (zerooutgamma)  [highlightblock, below of=hashlevel2]          {Zero out level-1 blocks inconsistent with $\gamma$};
  
  \node (level2ind) [changeblock, below of=zerooutgamma] {\textbf{Level-1} independent copies of $\T'$};

  \node (hashlevel1) [regularblock, below=0.6cm of level2ind] {Hash level-1 blocks};
  \node (level1ind) [inoutblock, below=0.6cm of hashlevel1] {Level-1 independent copies of $\T''$};

  \draw[->] (CW4) -- (zerooutalpha);
  \draw[->] (zerooutalpha) -- (hashlevel3);
  \draw[->] (hashlevel3) -- (zerooutbeta);
  \draw[->] (zerooutbeta) -- (level3ind);
  \draw[->] (level3ind) -- (hashlevel2);
  \draw[->] (hashlevel2) -- (zerooutgamma);
  \draw[->] (zerooutgamma) -- (level2ind);
  \draw[->] (level2ind) -- (hashlevel1);
  \draw[->] (hashlevel1) -- (level1ind);
  \end{scope}

    \coordinate (shift2) at (16,0);
    \begin{scope}[shift=(shift2)]
        \node (CW4) [inoutblock] {$(\CW_q^{\otimes 4})^{\otimes n}$};
  \node (zerooutalpha)  [regularblock, below of=CW4]          {Zero out level-3 blocks inconsistent with $\alpha$};
  
  \node (hashlevel3) [regularblock, below of=zerooutalpha] {Hash level-3 blocks};
  
    \node (zerooutgamma)  [highlightblock, below of=hashlevel3]          {Zero out level-1 blocks inconsistent with $\gamma$};
    
  \node (level3ind) [changeblock, below of=zerooutgamma] {\textbf{Level-1} independent copies of $\T$};

  \node (hashlevel2) [regularblock, below=0.65cm of level3ind] {Hash level-2 blocks};
  \node (level2ind) [changeblock, below=0.7cm of hashlevel2] {\textbf{Level-1} independent copies of $\T'$};

  \node (hashlevel1) [regularblock, below=0.8cm of level2ind] {Hash level-1 blocks};
  \node (level1ind) [inoutblock, below=0.7cm of hashlevel1] {Level-1 independent copies of $\T''$};

  \draw[->] (CW4) -- (zerooutalpha);
  \draw[->] (zerooutalpha) -- (hashlevel3);
  \draw[->] (hashlevel3) -- (zerooutgamma);
  \draw[->] (zerooutgamma) -- (level3ind);
  \draw[->] (level3ind) -- (hashlevel2);
  \draw[->] (hashlevel2) -- (level2ind);
  \draw[->] (level2ind) -- (hashlevel1);
  \draw[->] (hashlevel1) -- (level1ind);
  
    \end{scope}

  \draw[dotted,very thick] (-6,-6.75) -- (20,-6.75);
  \node[scale=1.3] at (-4.5,-4) {Global};
  \node[scale=1.3] at (-4.5,-9) {Level-2};
  \node[scale=1.3] at (-4.5,-13) {Level-1};

  \draw[dotted,very thick] (-6,-11.25) -- (20,-11.25);
  
  \node[scale=1.5] at (0,-17) {(a) \cite{virgi12,stothers,LeGall32power,AlmanW21}};
  \node[scale=1.5] at (8,-17) {(b) \cite{duan2023}};
  \node[scale=1.5] at (16,-17) {(c) This work};

  \end{tikzpicture}
  \caption{High-level comparison between this work and prior works on $(\CW_q^{\otimes 4})^{\otimes n}$. Here, $\alpha$ is a distribution over level-$3$ constituent tensors, $\beta$ is a collection of distributions over level-$2$ constituent tensors, and $\gamma$ is a collection of distributions over level-$1$ constituent tensors.}
  \label{fig:compare}
\end{figure}

Now suppose we can move the above zeroing-out step earlier, say before we have independent copies of $\T$ when we first apply the laser method on $T^{(\lvl)}$, then instead of keeping independent copies of $\T$, we only need to keep a subtensor $\T'$ of it, where $\T'$ is $\T$ after applying the above zeroing-out step. 
This leads to one of the key observations in~\cite{duan2023}: we do not need to have copies of $\T'$ that are fully independent over the level-$\lvl$ variable blocks. Instead, any two copies can share the same level-$\lvl$ variable block as long as they do not share the same level-$(\lvl-1)$ variable blocks that would survive the first zeroing-out in the recursive application of the laser method on the level-$\lvl$ constituent tensors. As a result, we can potentially keep more independent copies of  $\T'$, because of the relaxed constraints, and each copy $\T'$ would still be essentially as good as $\T$ for the purpose of the analysis because we are merely moving a later zeroing-out earlier. Because we are keeping more copies, by the asymptotic sum inequality, we will achieve a better bound for $\omega$. 
 
As illustrated in \cref{fig:compare}, consider $(\CW_q^{\otimes 4})^{\otimes n}$ and suppose $\alpha, \beta, \gamma$ are (collections of) distributions over level-$3$, level-$2$, level-$1$ constituent tensors respectively. In subfigure (a), works prior to \cite{duan2023} including~\cite{virgi12, AlmanW21, LeGall32power} zero out level-$3$ blocks according to $\alpha$ and obtain level-$3$-independent\footnote{We say several subtensors of $\CW_q^{\otimes N}$ are \emph{level-$\lvl$-independent} if they do not share any level-$\lvl$ variable block, and thus they are also \emph{independent}.} copies of $\T'$ before zeroing out level-$2$ blocks. As shown in subfigure (b), Duan et al.~\cite{duan2023} moved the step of zeroing out level-$2$ blocks according to $\beta$ earlier and only obtained level-$2$-independence as opposed to level-$3$-independence.

It is not obvious how one can accomplish the above modification. Duan et al.~\cite{duan2023} considered the notion of \emph{split distributions}, which roughly measures how a level-$\lvl$ block ``splits'' into level-($\lvl-1$) blocks with respect to the recursive leveled partition. By observing the split distribution of a level-$\lvl$ block, one gains some partial information about the level-$(\lvl-1)$ blocks that allows the modification of zeroing out level-$(\lvl - 1)$ blocks inconsistent with $\beta$ earlier. Ideally, one would hope to achieve this modification symmetrically over the $X$-, $Y$-, and $Z$-variables, i.e., allow the sharing of level-$\lvl$ variable blocks in all three dimensions, but the method in \cite{duan2023} did not achieve that. Instead, their technique works when the multiple copies of $\T$ only share the same level-$\lvl$ $Z$-variable block while each $X$- and $Y$-variable block needs to be contained in a unique level-$\lvl$ subtensor. (More generally, their technique works when level-$\lvl$ variable blocks are shared in exactly one of $X$-, $Y$-, $Z$-variables). In order to set up the tensor satisfying the required constraints, they need to zero out the $Z$-variable blocks asymmetrically with respect to the $X$- and $Y$-variables. It still remains an open question whether the techniques in \cite{duan2023} can be symmetrized over the three dimensions.

Another technical detail is that the obtained independent copies of tensors in \cite{duan2023} are not all necessarily full copies of $\T'$. 
That is, some variables of the independent tensors are zeroed out. This creates independent copies of $\T'$ but with some ``holes''.
 Because of the asymmetry of their method, such holes can only appear in $Z$-variables. In order to overcome this issue, they showed that, as long as the fraction of holes is small, and all holes are in $Z$-variables, one can degenerate a small number of independent copies of $\T'$ with holes to a full copy of $\T$. Prior to their work, Sch{\"{o}}nhage \cite{Schonhage81} also studied this problem of degenerating multiple independent copies of a tensor with holes to a full copy of the tensor. Sch{\"{o}}nhage's method applied to the case when two of the three dimensions can have holes, but it focuses only on matrix multiplication tensors. 

\subsection{Our Improvements}

\paragraph{Complete split distribution. } We take the observation of \cite{duan2023} one step further. The high-level idea is the following: instead of keeping copies of $\T$ that are independent over level-$(\lvl-1)$ variable blocks, we keep copies of it that are independent over level-$1$ variable blocks. For $\lvl > 1$, this should give more degrees of freedom and enable us to keep more copies of $\T$. As illustrated in \cref{fig:compare}~(c), we directly move the step of zeroing out level-$1$ blocks according to $\gamma$ earlier and obtain level-$1$ independence as opposed in level-$2$ independence in \cite{duan2023}.

To implement the above idea, we utilize the notion of \emph{complete split distributions}, which can be viewed as an extension of the notion of split distributions used in \cite{duan2023}. Recall that in \cite{duan2023}, a level-$\lvl$ split distribution measures how a level-$\lvl$ variable block splits into level-$(\lvl-1)$ blocks. A level-$\lvl$ complete split distribution measures how a level-$\lvl$ block splits into level-$1$ variable blocks. Specifically, a level-$1$ block sequence of length $2^{\lvl-1} \cdot n$ in $T^{(\lvl)}$ can be viewed as $n$ consecutive chunks of $\{0, 1, 2\}$-sequences each of length $2^{\lvl-1}$, and we consider the proportion of each of these $3^{2^{\lvl-1}}$ possible types of chunks in the $n$ chunks. A level-$\lvl$ complete split distribution is a distribution on these $3^{2^{\lvl-1}}$ types of chunks, and a level-$1$ block sequence (and its corresponding level-$1$ variable block) is said to be consistent with a level-$\lvl$ complete split distribution if the proportion of each type of chunks matches the corresponding probability specified in the complete split distribution.

Let $\splres_X, \splres_Y, \splres_Z$ be three level-$\lvl$ complete split distributions, and let $T_{i,j,k}$ be a level-$\lvl$ constituent tensor. We will consider the tensor $T_{i,j,k}^{\otimes n}[\splres_X, \splres_Y, \splres_Z]$, which is obtained from $T_{i,j,k}^{\otimes n}$ by zeroing out all level-$1$ $X$-, $Y$-, $Z$-variable blocks that are not consistent with $\splres_X, \splres_Y, \splres_Z$ respectively. We call this ``enforcing the complete split distributions''. In our recursive steps, we will analyze $T_{i,j,k}^{\otimes n}[\splres_X, \splres_Y, \splres_Z]$ instead of $T_{i,j,k}^{\otimes n}$.

\paragraph{Enforcing split distributions in all three dimensions.} Dual et al.~\cite{duan2023} only enforce their split distribution in one of the dimensions (the $Z$ variables). In our method, we need to enforce complete split distributions in all three dimensions. Here we explain why.

First of all, when analyzing a level-$\lvl$ constituent tensor $T_{i, j, k}^{\otimes n}$, \cite{duan2023} only consider split distributions, instead of complete split distributions. Every level-$(\lvl-1)$ block sequence in $T_{i, j, k}^{\otimes n}$ can be viewed as a length-$(2n)$ sequence on $\{0, 1, \ldots, 2^{\lvl-1}\}$. If we split the sequence to chunks of length $2$, we obtain a length-$n$ sequence of pairs in $\{0, 1, \ldots, 2^{\lvl-1}\}^2$. The split distribution used in \cite{duan2023} essentially specifies the proportion of each type of pairs, and they zero out all level-$(\lvl-1)$ variable blocks that are not consistent with the specified proportions.

Similar to what we discussed earlier, when enforcing the split distribution on the tensor $T_{i',j',k'}^{\otimes n}$ (or $T_{i-i',j-j',k-k'}^{\otimes n}$), the constraint becomes a constraint that enforces the proportion of each level-$(\lvl-1)$ variable block in the level-$(\lvl-1)$ variable blocks in $T_{i',j',k'}^{\otimes n}$. Since there is only one level-$(\lvl-1)$ block in $T_{i',j',k'}^{\otimes n}$, either the whole tensor $T_{i',j',k'}^{\otimes n}$ satisfies the constraints, or it does not. Thus, the constraints of the split distribution do not carry over to further recursion levels.

When analyzing each constituent tensor $T_{i, j, k}^{\otimes n}$, Duan et al.~\cite{duan2023} aim to obtain some ``symmetrized value'' of $T_{i, j, k}$, similar to previous works \cite{cw90,stothers,virgi12,LeGall32power,AlmanW21}. As a result, when analyzing $T_{i, j, k}^{\otimes n}$, they apply their method multiple times to enforce a split distribution on each of the three possible dimensions, i.e., they can choose to share $X$-, $Y$-, or $Z$-variables depending on which application of their method it is. Still, the constraints of the split distribution do not carry over to the next recursion level as discussed in the previous paragraph. Thus, in their analysis, holes only appear in one of the dimensions. 

However, when enforcing a complete split distribution, the constraints carry over to further recursion levels: say in the analysis for $T_{i, j, k}$ in some application of the method in the current level, we choose to enforce a complete split distribution on $Z$-variables. This constraint still has an effect on the next level. However, in the analysis at the next level, we can choose to enforce a complete split distribution on $Y$-variables instead. This creates constraints on the complete split distribution in two dimensions. In general, these constraints can appear in all three dimensions, and therefore, we need to handle holes in all three dimensions. 

\paragraph{A technical issue.} A technical issue arises if we enforce complete split distributions in three dimensions. We consider a simplified scenario where the support of the distribution $\beta$ has size $1$ to explain the issue. In other words, we aim to zero out $T_{i,j,k}^{\otimes n}$ into independent copies of $(T_{i',j',k'} \otimes T_{i-i', j-j', k-k'})^{\otimes n}$ for some $i', j', k'$. In this simplified scenario, if we do not enforce complete split distributions, we could rewrite $(T_{i',j',k'} \otimes T_{i-i', j-j', k-k'})^{\otimes n}$ equivalently as $T_{i',j',k'}^{\otimes n} \otimes T_{i-i', j-j', k-k'}^{\otimes n}$ by simply permuting the indices, and then recursively analyze $T_{i',j',k'}^{\otimes n}$ and $T_{i-i', j-j', k-k'}^{\otimes n}$ separately. Now with complete split distribution, this step becomes problematic.
Suppose we are able to obtain independent copies of
\[ 
  \T_1 \defeq (T_{i',j',k'} \otimes T_{i-i', j-j', k-k'})^{\otimes n}\left[\splres_X, \splres_Y, \splres_Z\right],
\]
for some $\splres_X, \splres_Y, \splres_Z$.
Then in order to recursively analyze $\T_1$, we instead need a tensor 
\[
  \T_2 \defeq \left( T_{i',j',k'}^{\otimes n}\bigBk{\splres_X^{(L)}, \splres_Y^{(L)}, \splres_Z^{(L)}}\right) \otimes \left(T_{i-i', j-j', k-k'}^{\otimes n}\bigBk{\splres_X^{(R)}, \splres_Y^{(R)}, \splres_Z^{(R)}} \right),
\]
for some level-$(\lvl-1)$ complete split distributions $\splres_X^{(L)}, \splres_Y^{(L)}, \splres_Z^{(L)}, \splres_X^{(R)}, \splres_Y^{(R)}, \splres_Z^{(R)}$. 

Let us discuss how the above level-$(\ell-1)$ complete split distributions are related to $\beta_X,\beta_Y,\beta_Z$.
To give some intuition, in each length-$2^{\lvl}$ chunk of a level-$1$ block sequence in $\T_1$, the first half-chunk belongs to some $T_{i', j', k'}$, and the second half-chunk belongs to some $T_{i - i', j - j', k - k'}$. In $\T_2$, we permute the indices so that all the first half-chunks belonging to some $T_{i', j', k'}$ are put together in the first half of the resulting sequence, and all the second half-chunks belonging to some $T_{i-i', j-j', k-k'}$ are put together in the second half of the resulting sequence. If we enforce a level-$\lvl$ complete split distribution $\splres_X$ on a level-$1$ block sequence $\hat{I}\in \{0,1,2\}^{2^{\lvl-1}}$ in $\T_1$, what would the permuted sequence look like? Let $\sigma_1,\sigma_2\in \{0,1,2\}^{2^{\lvl-2}}$ denote two length-$2^{\lvl - 2}$ chunks and let $\sigma_1 \circ \sigma_2$  denote their concatenation. Since $\hat{I}$ is consistent with $\splres_X$, $\hat{I}$ contains $\splres_X(\sigma_1 \circ \sigma_2) \cdot n$ chunks $\sigma_1 \circ \sigma_2$ for every $\sigma_1, \sigma_2$. For each of these chunks, $\sigma_1$ gets permuted to the first half of the permuted level-$1$ block sequence in $\T_2$, and $\sigma_2$ gets permuted to the second half of the permuted level-$1$ block sequence in $\T_2$. Summing over all $\sigma_1, \sigma_2$, it is not difficult to verify that
\[\splres_X^{(L)}(\sigma_1) = \sum_{\sigma_2} \splres_X(\sigma_1 \circ \sigma_2), \qquad \splres_X^{(R)}(\sigma_2) = \sum_{\sigma_1} \splres_X(\sigma_1 \circ \sigma_2).\]
In this sense, $\splres_X^{(L)}$ and $\splres_X^{(R)}$ can be viewed as two marginal distributions of $\splres_X$. This similarly holds for $Y$ and $Z$. 

One set of constraints we can add to make $\splres_X^{(L)}$ and $\splres_X^{(R)}$ always the two marginal distributions of $\splres_X$ is $\splres_X = \splres_X^{(L)} \times \splres_X^{(R)}$, namely we enforce $\splres_X$ to be the joint distribution of (independently distributed) $\splres_X^{(L)}$ and $\splres_X^{(R)}$. 
Similarly we can add the constraints $\splres_Y = \splres_Y^{(L)} \times \splres_Y^{(R)}$ and $\splres_Z = \splres_Z^{(L)} \times \splres_Z^{(R)}$. 

However, even with these constraints, $\T_1$ might not necessarily be equivalent to $\T_2$. By the above reasoning, every level-$1$ block sequence in $\T_1$ is permuted into a level-$1$ block sequence in $\T_2$, but not all block sequences in $\T_2$ can be obtained this way. Intuitively, this is because joint distributions can determine marginal distributions, which means that, for instance, $\splres_X^{(L)} \times \splres_X^{(R)}$ can determine both  $\splres_X^{(L)}$ and $\splres_X^{(R)}$. The other way is not true, and there could be multiple joint distributions whose marginals satisfy $\splres_X^{(L)}$ and $\splres_X^{(R)}$. 

By a careful calculation, one can still show that the proportion of $X$-, $Y$-, $Z$-variables in $\T_2$ that are not in $\T_1$ is at most a $1-2^{-o(N)}$ fraction of those in $\T_2$. These variables become holes. Unfortunately, the methods in previous works \cite{Schonhage81, duan2023} do not apply, as they are unable to fix holes that are present in all three dimensions ($X$-, $Y$-, $Z$-variables). 

Next, we discuss how we fix the technical issue.

\paragraph{Intuition of the fix.} The first step towards resolving this issue is to decrease the fraction of holes in all three dimensions, from $1-2^{-o(N)}$ all the way down to $2^{-\Omega(N)}$. Then we describe a generic method adapted from \cite{duanpersonal} for fixing holes in all three dimensions as long as the fractions of holes are small. 

For the first step, we slightly relax the condition for zeroing out variables in $\T_1$ and $\T_2$. Let $\eps > 0$ be an arbitrary constant. For any $T_{i,j,k}$, we use $T_{i,j,k}^{\otimes n}[\splres_X, \splres_Y, \splres_Z, \eps]$ to denote  $T_{i,j,k}^{\otimes n}$ but we zero out all level-$1$ $X$-, $Y$-, $Z$-variables, where the proportion of each chunk in $\{0, 1, 2\}^{2^{\lvl-1}}$ in their level-$1$ block sequence differs at most $\eps$ from the corresponding probability in  $\splres_X, \splres_Y, \splres_Z$ respectively. That is, we allow some small flexibility when zeroing out variables. Then, let
\[
  \T_1' \defeq \bigbk{T_{i',j',k'} \otimes T_{i-i', j-j', k-k'}}^{\otimes n}\left[\splres_X^{(L)} \times \splres_X^{(R)}, \; \splres_Y^{(L)} \times \splres_Y^{(R)}, \; \splres_Z^{(L)} \times \splres_Z^{(R)}, \; \eps\right],
\]
and recall
\[
  \T_2 = \left( T_{i',j',k'}^{\otimes n}\bigBk{\splres_X^{(L)}, \splres_Y^{(L)}, \splres_Z^{(L)}}\right) \otimes \left(T_{i-i', j-j', k-k'}^{\otimes n}\bigBk{\splres_X^{(R)}, \splres_Y^{(R)}, \splres_Z^{(R)}} \right).
\]

Intuitively, we allow more flexibility in $\T_1$ than that in $\T_2$, so that more variables remain in $\T_1$ compared to $\T_2$, and the fraction of holes should become smaller. The idea for proving this is to use concentration bounds: if we pick a uniformly random level-$1$  $X$-variable block from $T_{i',j',k'}^{\otimes n}\bigBk{\splres_X^{(L)}, \splres_Y^{(L)}, \splres_Z^{(L)}}$ and another uniformly random level-$1$ $X$-variable block from $T_{i-i', j-j', k-k'}^{\otimes n}\bigBk{\splres_X^{(R)}, \splres_Y^{(R)}, \splres_Z^{(R)}}$, then with very high probability ($1-2^{-\Omega(n)}$), the combination (interleaving the length $2^{\lvl-2}$ chunks between their level-$1$  block sequences) of them satisfies $\splres_X^{(L)} \times \splres_X^{(R)}$, up to $\eps$ additive error. Then the fraction of holes is $2^{-\Omega(n)}$. Similar reasons also apply to $Y$- and $Z$-variables.

\paragraph{Fixing the holes in all three dimensions.}
Suppose we have many ``broken'' copies of some tensor $T$, in each of which a small fraction of variables (holes) are missing. The goal of this step is to degenerate these broken tensors into one without holes. Sch{\"o}nhage~\cite{Schonhage81} solved this problem for matrix multiplication tensors with holes in only $X$- and $Y$-variables, but not $Z$, via an elegant linear transformation. 
Duan et al.~\cite{duan2023} introduced another method for so-called standard form tensors, which are quite general and are able to capture tensor products of constituent tensors, but can only deal with holes in a single dimension. Duan~\cite{duanpersonal} developed a  method utilizing an elegant recursive approach for fixing holes in all three dimensions, but only for matrix multiplication tensors. 

We generalize the method of \cite{duanpersonal} so that it can fix holes in all three dimensions simultaneously, while supporting a broad class of tensors similar to \cite{duan2023}. The only additional requirement compared to \cite{duan2023} is that the fraction of holes is below $O(1/\log N)$, where $N$ is the number of variables in the tensor $T$. This requirement is satisfied via the previous step of the fix.

Next, we provide some intuition of the recursive hole-fixing approach. Assume $T$ is supported on variable sets $X, Y, Z$, and the fraction of holes in every copy of $T$ does not exceed $c \ll 1$. We first take one broken  copy of $T$, which we call $T_{\textup{hole}}$, and let $X^{(0)}, Y^{(0)}, Z^{(0)}$ denote the set of holes in $T_{\textup{hole}}$; let $X^{(1)} \defeq X \setminus X^{(0)}$, $Y^{(1)} \defeq Y \setminus Y^{(0)}$, $Z^{(1)} \defeq Z \setminus Z^{(0)}$ represent the set of non-hole variables. We can further divide $T$ into the sum of eight subtensors:
\begin{align*}
  T \,=\, \sum_{a,b,c \in \midBK{0,1}} T \big\vert_{X^{(a)}, Y^{(b)}, Z^{(c)}} = T_{\textup{hole}} + \sum_{\substack{a,b,c \in \midBK{0,1} \\ 1 \in \midBK{a,b,c} }} T \big\vert_{X^{(a)}, Y^{(b)}, Z^{(c)}},
\end{align*}
where $T \vert_{X', Y', Z'}$ denotes the subtensor of $T$ over subsets of variables $X' \subseteq X$, $Y' \subseteq Y$, and $Z' \subseteq Z$. We directly use the broken copy $T_{\textup{hole}}$ for the first term, and recurse into seven subproblems to produce the other terms. In each subproblem, at least one of the variable sets is $X_1$, $Y_1$ or $Z_1$, which is $c$ times the size of $X$, $Y$, or $Z$. As long as $c$ is very small, the number of broken copies of $T$ used in this recursive algorithm is affordable. 

\paragraph{Rectangular matrix multiplication.}

In the analysis for square matrix multiplication, we could lower bound the ``symmetrized value'' of every constituent tensor $T_{i,j,k}$, which captures the asymptotic ability of $T_{i,j,k}^{\otimes n} \otimes T_{j,k,i}^{\otimes n} \otimes T_{k,i,j}^{\otimes n}$ to degenerate into matrix multiplication tensors. The reason why we could symmetrize the constituent tensors is that we want to obtain square matrix multiplication tensors $\angbk{a, a, a}$ for some $a$, which is symmetric about all three dimensions. The situation is different when we consider rectangular matrix multiplications, where we produce matrix multiplication tensors of the form $\angbk{a, a^{\K}, a}$ to bound $\omega(1, \K, 1)$. Thus, we no longer treat the analysis of each constituent tensor $T_{i,j,k}$ as an individual subproblem, because the proportion of $T_{i,j,k}$, $T_{j,k,i}$, and $T_{k,i,j}$ could be different. Hence, it is natural to adopt the framework introduced by Le Gall~\cite{legallrect} (and further used in \cite{legallrect2}) for rectangular matrix multiplication: we directly apply the laser method on a tensor consisting of multiple constituent tensors, e.g., on $\T = \bigotimes_{i,j,k} T_{i,j,k}^{\otimes \alpha(i,j,k) \cdot n}$, rather than doing this for every term $T_{i,j,k}^{\otimes \alpha(i,j,k) \cdot n}$ separately.

\paragraph{Difficulty of applying the refined laser method.}

Another natural attempt would be to combine our techniques with the refined laser method introduced in \cite{AlmanW21}, which aims to reduce the ``penalty term'' that arises when we deal with the block triples inconsistent with the selected distribution $\alpha$ but consistent with the marginals of $\alpha$. Alman and Vassilevska W.~\cite{AlmanW21} pick a collection of disjoint level-$\lvl$ block triples $X_I Y_J Z_K$ consistent with the chosen distribution $\alpha$, which we call the ``wanted'' triples. Then, they zero out a wanted triple with probability $1 - p$ and keep it with probability $p$. Any ``unwanted'' triple $X_{I'} Y_{J'} Z_{K'}$ only remains with probability $p^3$, since three involved variable blocks come from three different wanted triples and are zeroed out independently; in contrast, every wanted triple has probability $p$  to remain. The gap between $p$ and $p^3$ makes it a nontrivial improvement beyond the older method (increasing the modulus of hashing, see, e.g.,~\cite{stothers,virgi12,LeGall32power}), which produces a gap between $p$ and $p^2$.

However, a difficulty arises when the refined laser method is combined with the asymmetric hashing technique in \cite{duan2023} and this paper. Since we allow, e.g., level-$\l$ $Z$-variable blocks to be shared, we can no longer zero out all three blocks $X_I, Y_J, Z_K$ when we decide to give up on this triple, as $Z_K$ might be utilized by other wanted triples. If we only zero out $X_I$ and $Y_J$ simultaneously, the probability of remaining becomes $p$ (for a wanted triple) versus $p^2$ (for an unwanted triple), which results in the same bound as the older approach.

\section{Preliminaries}
\label{sec:prelim}

\subsection{Tensors and Tensor Operations}

\paragraph{Tensors.} A tensor $T$ over variable sets $X = \midBK{x_1, \ldots, x_{|X|}}$, $Y = \midBK{y_1, \ldots, y_{|Y|}}$, $Z = \midBK{z_1, \ldots, z_{|Z|}}$ and field $\F$ is a trilinear form
\[
  T = \sum_{i=1}^{|X|} \sum_{j=1}^{|Y|} \sum_{k=1}^{|Z|} a_{i,j,k} \cdot x_i y_j z_k,
\]
where all $a_{i,j,k}$ are from $\F$. $X, Y, Z$ are also called the \emph{support} of the tensor. If all $a_{i,j,k} \in \midBK{0, 1}$, the tensor $T$ can be considered as over any field $\F$, which is the case for all tensors involved in this paper.

In the following, assume $T$ is a tensor over $X = \midBK{x_1, \ldots, x_{|X|}}$, $Y = \midBK{y_1, \ldots, y_{|Y|}}$, $Z = \midBK{z_1, \ldots, z_{|Z|}}$ and $T'$ is a tensor over $X' = \midBK{x'_1, \ldots, x'_{|X'|}}$, $Y' = \midBK{y'_1, \ldots, y'_{|Y'|}}$, $Z' = \midBK{z'_1, \ldots, z'_{|Z'|}}$, written as
\[
  T = \sum_{i=1}^{|X|} \sum_{j=1}^{|Y|} \sum_{k=1}^{|Z|} a_{i,j,k} \cdot x_i y_j z_k, \qquad
  T' = \sum_{i=1}^{|X'|} \sum_{j=1}^{|Y'|} \sum_{k=1}^{|Z'|} b_{i,j,k} \cdot x'_i y'_j z'_k,
\]
\paragraph{Tensor Operations.} Recall the following tensor operations between two tensors $T$ and $T'$:
\begin{itemize}
\item The \emph{sum} $T + T'$ is only defined when both tensors are supported on the same sets $(X, Y, Z) = (X', Y', Z')$, given by
  \[
    T + T' = \sum_{i=1}^{|X|} \sum_{j=1}^{|Y|} \sum_{k=1}^{|Z|} (a_{i,j,k} + b_{i,j,k}) \cdot x_i y_j z_k.
  \]
\item The \emph{direct sum} $T \oplus T'$ equals the sum $T + T'$ over disjoint unions $X \sqcup X'$, $Y \sqcup Y'$, and $Z \sqcup Z'$, i.e., we first relabel the variables so that $T$ and $T'$ have disjoint supports, and then take their sum. If $T$ and $T'$ are supported on disjoint variable sets, their sum is the same as their direct sum, in which case we say $T$ and $T'$ are \emph{independent}. We write $T^{\oplus n} \defeq \underbrace{T \oplus T \oplus \cdots \oplus T}_{n~\textup{copies}}$ to denote the sum of $n$ independent copies of $T$.
\item The \emph{tensor product}, a.k.a.~the \emph{Kronecker product}, is defined as the tensor
  \[
    T \otimes T' = \sum_{i=1}^{|X|} \sum_{j=1}^{|Y|} \sum_{k=1}^{|Z|} \sum_{i'=1}^{|X'|} \sum_{j'=1}^{|Y'|} \sum_{k'=1}^{|Z'|} a_{i,j,k} \cdot b_{i', j', k'} \cdot (x_i, x'_{i'}) \cdot (y_j, y'_{j'}) \cdot (z_k, z'_{k'})
  \]
  over variable sets $X \times X'$, $Y \times Y'$, and $Z \times Z'$. We write $T^{\otimes n} \defeq \underbrace{T \otimes T \otimes \cdots \otimes T}_{n~\textup{times}}$ to denote the \emph{$n$-th tensor power} of $T$.
\item We say $T$ and $T'$ are \emph{isomorphic}, denoted by $T \equiv T'$, if $|X| = |X'|$, $|Y| = |Y'|$, $|Z| = |Z'|$, and there are permutations $\pi_X, \pi_Y, \pi_Z$ over $[|X|], [|Y|], [|Z|]$ respectively, such that $a_{i,j,k} = b_{\pi_X(i), \pi_Y(j), \pi_Z(k)}$ for all $i, j, k$. In other words, both tensors are equivalent up to a relabeling of the variables. 
\end{itemize}

\subsection{Tensor Rank}

Given a tensor $T$ over $X, Y, Z$, the tensor rank $R(T)$ is defined to be the minimum integer $r\ge 0$ such that $T$ can be written as 
\[T = \sum_{t = 1}^r \lpr{\sum_{i = 1}^{|X|} a_{t,i}\cdot x_i} \lpr{\sum_{j = 1}^{|Y|} b_{t,j}\cdot y_j} \lpr{\sum_{k = 1}^{|Z|} c_{t,k}\cdot z_k},\]
where the above sum is called the \emph{rank decomposition} of $T$.

Given two tensors $T, T'$, the tensor rank satisfies the following property with respect to tensor operations.

\begin{itemize}
    \item $R(T+T')\le R(T) + R(T')$.
    \item $R(T\oplus T')\le R(T) + R(T')$.
    \item $R(T\otimes T')\le R(T)\cdot R(T').$
\end{itemize}

The \emph{asymptotic rank} $\Tilde{R}(T)$ of $T$ is defined as 
\[\Tilde{R}(T) := \lim_{n\to \infty}\lpr{R(T^{\otimes n})}^{1/n}.\]
Due to the third item above and Fekete's lemma, the asymptotic rank is well-defined and upper bounded by $R(T^{\otimes m})^{1/m}$ for any fixed integer $m > 0$.

\subsection{Degenerations, Restrictions, Zero-outs}
Let $T$ be a tensor over $X, Y, Z$ and $T'$ be a tensor over $X', Y', Z'$. Both $T$ and $T'$ are tensors over a field $\F$.

\paragraph{Degeneration.}
Let $\F[\lambda]$ be the ring of polynomials of the formal variable $\lambda$. We say that $T'$ is a degeneration of $T$, written as $T \unrhd T'$, if there exists $\F[\lambda]$-linear maps 
\begin{align*}
    \phi_X &: \Span_{\F[\lambda]}(X)\to \Span_{\F[\lambda]}(X'),\\
    \phi_Y &: \Span_{\F[\lambda]}(Y)\to \Span_{\F[\lambda]}(Y'),\\
    \phi_Z &: \Span_{\F[\lambda]}(Z)\to \Span_{\F[\lambda]}(Z'),
\end{align*}
and $d\in \N$ such that  
\[T' = \lambda^{-d}\lpr{\sum_{i = 1}^{|X|}\sum_{j = 1}^{|Y|}\sum_{k = 1}^{|Z|}a_{i,j,k} \cdot \phi_X(x_i) \cdot \phi_Y(y_j) \cdot \phi_Z(z_k)} + O(\lambda).\]

It is not hard to check that if $T'\unrhd T$, then $\Tilde{R}(T')\le \Tilde{R}(T)$.

\paragraph{Restriction.}
Restriction is a special type of degeneration that considers the case where the maps $\phi_X,\phi_Y, \phi_Z$ are $\F$-linear maps. More specifically, $T'$ is a restriction of $T$ if there exist $\F$-linear maps
\begin{align*}
    \phi_X &: \Span_{\F}(X)\to \Span_{\F}(X'),\\
    \phi_Y &: \Span_{\F}(Y)\to \Span_{\F}(Y'),\\
    \phi_Z &: \Span_{\F}(Z)\to \Span_{\F}(Z'),
\end{align*}
such that
\[T' = \sum_{i = 1}^{|X|}\sum_{j = 1}^{|Y|}\sum_{k = 1}^{|Z|}a_{i,j,k} \cdot \phi_X(x_i) \cdot \phi_Y(y_j) \cdot \phi_Z(z_k).\]
It is not hard to see that since the maps $\phi_X, \phi_Y,\phi_Z$ are linear transformations, we have $R(T')\le R(T)$ and consequently $\Tilde{R}(T')\le \Tilde{R}(T)$.

\paragraph{Zero-out.}

In the laser method, we only consider a limited type of restriction called zero-outs, namely the maps $\phi_X, \phi_Y, \phi_Z$ set some variables to zero. More specifically, we choose subsets $X'\subseteq X$, $Y'\subseteq Y$, $Z'\subseteq Z$ and define the maps as 
\[\phi_X(x_i) = \begin{cases}x_i & \text{If }x_i\in X',\\ 0 & \text{otherwise},\end{cases}\]
and similarly for $\phi_Y, \phi_Z$. The resulting tensor 
\[T' = \sum_{i = 1}^{|X|}\sum_{j = 1}^{|Y|}\sum_{k = 1}^{|Z|}a_{i,j,k} \cdot \phi_X(x_i) \cdot \phi_Y(y_j) \cdot \phi_Z(z_k) = \sum_{x_i\in X'}\sum_{y_j\in Y'}\sum_{z_k\in Z'}a_{i,j,k} \cdot x_iy_jz_k\]
is called a zero-out of $T$. Throughout this paper, we use the notation $T' = T\vert_{X',Y',Z'}$ to denote such a tensor $T'$ obtained as a zero-out of $T$ and we say that the variables in $X\setminus X'$, $Y\setminus Y'$, $Z\setminus Z'$ are zeroed out. In this case, we also call $T'$ the \emph{subtensor} of $T$ over $X', Y', Z'$.

\subsection{Matrix Multiplication Tensors}

For positive integers $a,b,c$, the $a\times b \times c$ matrix multiplication tensor $\ang{a,b,c}$ is a tensor over the variable sets $\{x_{ij}\}_{i\in [a], j\in [b]}, \{y_{jk}\}_{j\in [b],k\in[c]}, \{z_{ki}\}_{i\in [a],k\in [c]}$ defined as the tensor computing the $a\times c$ product matrix
$\{z_{ki}\}_{i\in [a], k\in [c]}$
of an $a\times b$ matrix $\{x_{ij}\}_{i\in [a], j\in [b]}$ and $b\times c$ matrix $\{y_{jk}\}_{j\in [b], k\in [c]}$. Specifically, $\ang{a,b,c}$ can be written as the trilinear form
\[\ang{a,b,c} = \sum_{i\in [a]}\sum_{j\in [b]}\sum_{k\in [c]} x_{ij} y_{jk} z_{ki}.\]
It is not hard to check that $\ang{a,b,c}\otimes \ang{d,e,f} \equiv \ang{ad, be, cf}$.

Following from the recursive approach introduced by Strassen in \cite{strassen}, for any integer $q \ge 2$, if $R(\ang{q,q,q})\le r$, then one can use the rank decomposition of $\ang{q,q,q}$ to design an arithmetic circuit of size $O(n^{\log_q(r)})$ to multiply two $n \times n$ matrices. This motivates the definition of the \emph{matrix multiplication exponent} $\omega$ as follows:
\[\omega := \inf_{q \in \N, \, q \ge 2} \log_q (R(\ang{q,q,q})).\]
Namely, for every $\eps > 0$, there exists an arithmetic circuit of size $O(n^{\omega + \eps})$ that computes the multiplication of two $n\times n$ matrices. Since $\ang{q,q,q}^{\otimes n} \equiv \ang{q^n, q^n, q^n}$, equivalently $\omega$ can be written in terms of the asymptotic rank of $\ang{q,q,q}$ as 
\[\omega = \log_q (\Tilde{R}(\ang{q,q,q})).\]

In this paper, we also consider the arithmetic complexity of multiplying rectangular matrices of sizes $n^a\times n^b$ and $n^b\times n^c$ where $a,b,c \in \R_{\ge 0}$. We define the quantity $\omega(a,b,c)$ similar to $\omega$ as 
\[\omega(a,b,c) = \log_q\lpr{\Tilde{R}(\ang{q^a,q^b,q^c})}\]
where $q \ge 2$ is a positive integer. This means that for any $\eps > 0$, there exists an arithmetic circuit of size $O(n^{\omega(a,b,c) + \eps})$ that computes the multiplication of an $n^a\times n^b$ matrix with an $n^b\times n^c$ matrix. In this paper, we focus on bounds for the values of the form $\omega(1,\kappa,1)$ for $\kappa > 0$. We remark that it is known that $\omega(1,1,\kappa) =\omega(1,\kappa,1) = \omega(\kappa,1,1)$.

\subsection{Sch\"onhage's Asymptotic Sum Inequality}

By the above definition of $\omega$, it is clear that if one can bound the asymptotic rank of matrix multiplication tensors, then one would get an upper bound on $\omega$. In fact, Sch\"onhage showed in \cite{Schonhage81} that one can obtain an upper bound on $\omega$ if one can bound the asymptotic rank of a direct sum of matrix multiplication tensors. Specifically, we recall Sh\"onhage's asymptotic sum inequality as follows.

\begin{theorem}[Asymptotic Sum Inequality \cite{Schonhage81}]\label{thm:schonhage-ineq}
For positive integers $r > m$ and $a_i, b_i,c_i$ for $i\in [m]$, if 
\[\Tilde{R}\bk{\bigoplus_{i = 1}^m \ang{a_i,b_i,c_i}}\le r,\]
then $\omega\le 3\tau$ where $\tau\in [2/3,1]$ is the solution to the equation
\[\sum_{i = 1}^m (a_i\cdot b_i\cdot c_i)^\tau = r.\]
\end{theorem}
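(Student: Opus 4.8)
The plan is to follow the classical route: power up the direct sum, split the power into ``type classes'', merge the many isomorphic copies inside one class into a single (rectangular) matrix multiplication tensor, feed that into the definition of $\omega$, and optimize the one free parameter with a short entropy computation.

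First I would set $D \defeq \bigoplus_{i=1}^m \ang{a_i,b_i,c_i}$ and, for a large integer $N$, expand $D^{\otimes N}$ using $\ang{a,b,c}\otimes\ang{a',b',c'} \equiv \ang{aa',bb',cc'}$ together with distributivity of $\otimes$ over $\oplus$. The summands are indexed by words in $[m]^N$; collecting the words with a common multiplicity vector $\vec n=(n_1,\dots,n_m)$, $\sum_i n_i=N$, gives
\[
  D^{\otimes N} \;=\; \bigoplus_{|\vec n|=N} \ang{A_{\vec n},B_{\vec n},C_{\vec n}}^{\oplus\binom{N}{\vec n}},
  \qquad A_{\vec n}=\prod_i a_i^{n_i},\; B_{\vec n}=\prod_i b_i^{n_i},\; C_{\vec n}=\prod_i c_i^{n_i}.
\]
Since $\tilde R$ is submultiplicative under $\otimes$ we get $\tilde R(D^{\otimes N})\le r^N$, and since each type class is a zero-out of $D^{\otimes N}$ (delete the variables of all other classes), $\tilde R\bigl(\ang{A_{\vec n},B_{\vec n},C_{\vec n}}^{\oplus\binom{N}{\vec n}}\bigr)\le r^N$ for every $\vec n$.

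The heart of the argument is a \emph{combination lemma}: for any matrix multiplication tensor and any positive integer $M$, $\tilde R\bigl(\ang{A,B,C}^{\oplus M}\bigr)\ge M\cdot(ABC)^{\omega/3}$. Granting this and applying it with $M=\binom{N}{\vec n}$, and writing $v_i\defeq a_ib_ic_i$, we obtain $\binom{N}{\vec n}\bigl(\prod_i v_i^{n_i}\bigr)^{\omega/3}\le r^N$ for every $\vec n$. There are at most $(N+1)^{m-1}$ type classes, so by the multinomial theorem
\[
  \Bigl(\textstyle\sum_i v_i^{\omega/3}\Bigr)^N \;=\; \sum_{|\vec n|=N}\binom{N}{\vec n}\prod_i\bigl(v_i^{\omega/3}\bigr)^{n_i} \;\le\; (N+1)^{m-1}\,r^N,
\]
and letting $N\to\infty$ gives $\sum_i v_i^{\omega/3}\le r=\sum_i v_i^{\tau}$. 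Since $t\mapsto\sum_i v_i^{t}$ is nondecreasing (and strictly increasing unless all $v_i=1$, which the hypothesis $r>m$ rules out), this forces $\omega/3\le\tau$, i.e.\ $\omega\le 3\tau$, as desired.

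It remains to prove the combination lemma, and this is where I expect the real difficulty to lie. A first, elementary attempt: for any factorization $M=M_1M_2M_3$, the linear map sending the copy-$(\ell_1,\ell_2,\ell_3)$ variable $x_{ij}$ to the variable indexed $((\ell_1,i),(\ell_2,j))$, and cyclically on the $Y$- and $Z$-variables, shows that $\ang{A,B,C}^{\oplus M}$ restricts to $\ang{M_1A,M_2B,M_3C}$; composing with the cyclic symmetrization $\ang{p,q,s}\otimes\ang{q,s,p}\otimes\ang{s,p,q}\equiv\ang{pqs,pqs,pqs}$ and $\tilde R(\ang{e,e,e})=e^{\omega}$ yields only $\tilde R\bigl(\ang{A,B,C}^{\oplus M}\bigr)\ge M^{\omega/3}(ABC)^{\omega/3}$ — the multiplicity $M$ is absorbed into the matrix dimensions and is exploited only to the power $\omega/3<1$. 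Upgrading the exponent of $M$ from $\omega/3$ to $1$ — equivalently, showing that asymptotic rank is essentially additive over disjoint isomorphic copies of a square matrix multiplication tensor — is precisely the subtle step in Sch\"onhage's argument, and is the main obstacle; once it is in hand the three steps above are routine (the various $M_i$, and the floors incurred when $M$ is not a perfect product of the desired shape, only cost factors that vanish on taking $N$-th roots).
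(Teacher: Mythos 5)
The paper itself does not prove this theorem; it states it as a known result with a citation to Sch\"onhage's 1981 paper, so there is no in-paper proof to compare against. Your outline is nonetheless worth assessing on its own terms. The type-class decomposition of $D^{\otimes N}$, the zero-out to isolate a single class, and the closing multinomial/entropy argument are all correct and are indeed the standard reduction. But you have also correctly put your finger on where the argument bottoms out: the ``combination lemma'' $\widetilde{R}\bigl(\langle A,B,C\rangle^{\oplus M}\bigr)\ge M\,(ABC)^{\omega/3}$ is not an auxiliary fact one can hope to dispatch by a short side argument --- it \emph{is} the asymptotic sum inequality, specialized to the case where every summand is isomorphic (take $m=M$ and all $(a_i,b_i,c_i)=(A,B,C)$ in the statement and unwind the definition of $\tau$). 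Reducing the general statement to this special case via type classes is real and correct work, but it does not shrink the remaining problem; as it stands the proposal is a faithful reformulation of the theorem rather than a proof of it.

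Your worry about the elementary route is also well-founded: merging $M$ copies block-diagonally into one $\langle M_1A,M_2B,M_3C\rangle$ and cyclically symmetrizing can never give more than $M^{\omega/3}(ABC)^{\omega/3}$, for any factorization $M=M_1M_2M_3$, so that path stalls exactly as you describe. Sch\"onhage's actual closing move is of a different character --- a substitution recursion rather than a degeneration to a single square block. After symmetrizing to the square case, take a rank-$r$ decomposition of $\langle n,n,n\rangle^{\oplus s}$ and substitute it into the outer $n\times n$ block structure of $\langle n^k,n^k,n^k\rangle^{\oplus s}$: the $r$ bilinear products of the decomposition become $r$ independent $n^{k-1}\times n^{k-1}$ matrix products on linear combinations of blocks, which exhibits $\langle n^k,n^k,n^k\rangle^{\oplus s}$ as a restriction of $\langle n^{k-1},n^{k-1},n^{k-1}\rangle^{\oplus r}$. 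Splitting those $r$ copies into $\lceil r/s\rceil$ batches of size $s$ and recursing gives $R\bigl(\langle n^k,n^k,n^k\rangle^{\oplus s}\bigr)\le \lceil r/s\rceil\cdot R\bigl(\langle n^{k-1},n^{k-1},n^{k-1}\rangle^{\oplus s}\bigr)$, whence (after powering up to kill the ceiling and letting $k\to\infty$) $n^\omega\le r/s$, i.e.\ $s\,n^\omega\le r$. It is exactly this ``re-feed the $r$ intermediate products back through the same $s$-fold algorithm'' step that is missing from your sketch, and it has no analogue in the picture of degenerating the direct sum to one big $\langle m,m,m\rangle$.
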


Analogously, the asymptotic sum inequality can also be used to obtain bounds on the rectangular matrix multiplication as follows.

\begin{theorem}[Asymptotic Sum Inequality for $\omega(a,b,c)$ \cite{Schonhage81}]\label{thm:schonhage-ineq-rect}
    Let $t, \, q > 0$ be positive integers and $a,b,c \ge 0$ , then
    \[t\cdot q^{\omega(a,b,c)}\le \Tilde{R}\lpr{\bigoplus_{i = 1}^t \ang{q^a, q^b,q^c}}.\]
\end{theorem}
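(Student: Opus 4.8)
The plan is to reformulate the inequality and then follow Schönhage's idea of absorbing the direct summands into a tensor power of the tensor itself. Since the excerpt defines $\omega(a,b,c) = \log_q \widetilde{R}(\ang{q^a,q^b,q^c})$, setting $T \defeq \ang{q^a,q^b,q^c}$ — which I take to be a genuine matrix multiplication tensor, so that $q^a,q^b,q^c$ are positive integers — the statement is equivalent to $\widetilde{R}(T^{\oplus t}) \ge t\cdot\widetilde{R}(T)$; the opposite inequality is immediate from subadditivity of rank under $\oplus$, so the theorem really asserts that asymptotic rank is \emph{additive} over $t$ independent copies of a matrix multiplication tensor. The case $T = \ang{1,1,1}$ I would dispose of directly ($\ang{1,1,1}^{\oplus t}$ is a diagonal tensor and $(\ang{1,1,1}^{\oplus t})^{\otimes n}=\ang{1,1,1}^{\oplus t^n}$, so $\widetilde{R}(\ang{1,1,1}^{\oplus t})=t$), and likewise the case $t=1$ is trivial; so from now on $t\ge 2$ and $\widetilde{R}(T)>1$.

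The key step I would prove is a one-shot degeneration: \emph{for all integers $k,N\ge 1$ with $t^N \ge R(T^{\otimes k})$, one has $(T^{\oplus t})^{\otimes N} \unrhd T^{\otimes(N+k)}$.} Indeed, distributing $\otimes$ over $\oplus$ gives $(T^{\oplus t})^{\otimes N}\equiv \ang{1,1,1}^{\oplus t^N}\otimes T^{\otimes N}$; zeroing out all but $R(T^{\otimes k})$ of the $t^N$ diagonal summands (a restriction) leaves $\ang{1,1,1}^{\oplus R(T^{\otimes k})}\otimes T^{\otimes N}$; and reading a rank decomposition of $T^{\otimes k}$ as a restriction of the diagonal tensor $\ang{1,1,1}^{\oplus R(T^{\otimes k})}$ shows $\ang{1,1,1}^{\oplus R(T^{\otimes k})}\unrhd T^{\otimes k}$, so the whole tensor degenerates to $T^{\otimes k}\otimes T^{\otimes N}=T^{\otimes(N+k)}$. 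I would then invoke monotonicity of $\widetilde{R}$ under degeneration together with $\widetilde{R}(S^{\otimes n})=\widetilde{R}(S)^n$ (both immediate from the definitions recalled in the preliminaries) to conclude
\[ \widetilde{R}(T^{\oplus t})^{N} \;=\; \widetilde{R}\bigl((T^{\oplus t})^{\otimes N}\bigr) \;\ge\; \widetilde{R}\bigl(T^{\otimes(N+k)}\bigr) \;=\; \widetilde{R}(T)^{N+k}, \]
i.e. $\widetilde{R}(T^{\oplus t}) \ge \widetilde{R}(T)^{1+k/N}$ for every admissible pair $(k,N)$.

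To finish, I would let $k\to\infty$ and, for each $k$, take the smallest admissible $N$, namely $N(k)\defeq\lceil\log_t R(T^{\otimes k})\rceil$. Since $R(T^{\otimes k})^{1/k}\to\widetilde{R}(T)$ — the Fekete limit defining asymptotic rank — we have $N(k)/k\to\log_t\widetilde{R}(T)$, hence $k/N(k)\to\log t/\log\widetilde{R}(T)$, and the bound $\widetilde{R}(T^{\oplus t})\ge\widetilde{R}(T)^{1+k/N(k)}$ improves in the limit to $\widetilde{R}(T)^{1+\log t/\log\widetilde{R}(T)}=t\cdot\widetilde{R}(T)=t\cdot q^{\omega(a,b,c)}$, which is the claim. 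I expect this last limiting step to be the delicate point: a single application of the degeneration only yields the lossy bound $\widetilde{R}(T)^{1+\log t/\log R(T)}$, and since $R(T)\ge\widetilde{R}(T)$ is typically strict, recovering the sharp factor $t$ requires driving $k$ and $N$ jointly to infinity at exactly the rate dictated by the monotone convergence $R(T^{\otimes k})^{1/k}\downarrow\widetilde{R}(T)$. Everything else — distributivity, the ``discard summands'' zero-out, reading a rank decomposition as a restriction, and monotonicity/multiplicativity of $\widetilde{R}$ — is routine given the preliminaries.
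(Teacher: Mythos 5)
The paper states this theorem and attributes it to Sch\"onhage~\cite{Schonhage81} without providing a proof, so there is no in-paper argument to compare against; your proposal supplies one, and it is correct. It is essentially Sch\"onhage's original technique specialised to the case where all direct summands are equal: reduce to the additivity claim $\widetilde{R}(T^{\oplus t}) \ge t\cdot\widetilde{R}(T)$, use the identity $(T^{\oplus t})^{\otimes N}\equiv\ang{1,1,1}^{\oplus t^N}\otimes T^{\otimes N}$, pay a ``tax'' of $t^N\ge R(T^{\otimes k})$ diagonal entries to restrict the diagonal tensor to a further copy $T^{\otimes k}$, and then let $k\to\infty$ along $N(k)=\lceil\log_t R(T^{\otimes k})\rceil$ so that Fekete's lemma drives the exponent $1+k/N(k)$ to $1+\log t/\log\widetilde{R}(T)$. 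Your separate treatment of $T=\ang{1,1,1}$ and $t=1$, and the observation that the argument requires $q^a,q^b,q^c$ to be integers for $\ang{q^a,q^b,q^c}$ to be a tensor at all, are both appropriate: the paper's definition of $\omega(a,b,c)$ is slightly informal on exactly this point, and your caveat is the right one.

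Two small remarks on wording, neither of which affects the argument. First, you describe $R(T^{\otimes k})^{1/k}\downarrow\widetilde{R}(T)$ as ``monotone convergence''; Fekete's lemma gives convergence to the infimum but not monotonicity of the sequence, and monotonicity is not in fact needed (you only use $k/N(k)\to\log t/\log\widetilde{R}(T)$). Second, the easy direction $\widetilde{R}(T^{\oplus t})\le t\,\widetilde{R}(T)$ does not follow verbatim from subadditivity of $R$ under $\oplus$; one should first expand $(T^{\oplus t})^{\otimes n}$ into $t^n$ independent copies of $T^{\otimes n}$ and then apply subadditivity, giving $R((T^{\oplus t})^{\otimes n})\le t^n R(T^{\otimes n})$ before taking $n$-th roots. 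Since that direction is not part of the claimed inequality, this is merely a parenthetical precision.
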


\subsection{The Coppersmith-Winograd Tensor}

For a nonnegative integer $q\ge 0$, the Coppersmith-Winograd tensor $\CW_q$ over the variables $X = \{x_0,\dots, x_{q+1}\}$, $Y = \{y_0,\dots, y_{q+1}\}$, $Z = \{z_0,\dots, z_{q+1}\}$ is defined as
\[\CW_q := x_0y_0z_{q+1} + x_0y_{q+1}z_0 + x_{q+1}y_0z_0 + \sum_{i = 1}^q \lpr{x_0y_iz_i + x_iy_0z_i + x_iy_iz_0}.\]
Observe that 
\[\sum_{i = 1}^q x_0y_iz_i + \sum_{i = 1}^q x_iy_0z_i + \sum_{i = 1}^q x_iy_iz_0 \equiv \ang{1,1,q} + \ang{q,1,1} + \ang{1,q,1},\]
so $\CW_q$ is the sum of six matrix multiplication tensors where the other three are copies of $\ang{1,1,1}$. It is known from Coppersmith and Winograd \cite{cw90} that $\Tilde{R}(\CW_q) \le q+2$.

\subsection{Base Leveled Partition of \texorpdfstring{$\CW_q$}{CWq}}

We will consider the $2^{\lvl-1}$-th tensor power of $\CW_q$ for $\lvl \ge 1$. For convenience, we use the notation $T^{(\lvl)} := \CW_q^{\otimes 2^{\lvl-1}}$. There is a natural partitioning of the variables of $\CW_q$ introduced in \cite{cw90} and consequently used in all following works including \cite{virgi12,AlmanW21,LeGall32power,duan2023}. We now describe the leveled partition of $T^{(\lvl)}$.

\paragraph{Level-$1$ Partition.} For $T^{(1)} = \CW_q$, its variable sets $X^{(1)}, Y^{(1)}, Z^{(1)}$ are partitioned into three parts
\begin{align*}
    X^{(1)} &= X^{(1)}_0 \sqcup X^{(1)}_1 \sqcup X^{(1)}_2 = \{x_0\}\sqcup \{x_1,\dots, x_q\}\sqcup \{x_{q+1}\},\\
    Y^{(1)} &= Y^{(1)}_0 \sqcup Y^{(1)}_1 \sqcup Y^{(1)}_2 = \{y_0\}\sqcup \{y_1,\dots, y_q\}\sqcup \{y_{q+1}\},\\
    Z^{(1)} &= Z^{(1)}_0 \sqcup Z^{(1)}_1 \sqcup Z^{(1)}_2 =  \{z_0\}\sqcup \{z_1,\dots, z_q\}\sqcup \{z_{q+1}\}.
\end{align*}
We use $T_{i,j,k}^{(1)}$ to denote the subtensor $T^{(1)}\vert_{X_i,Y_j,Z_k}$ and we call $T_{i,j,k}^{(1)}$ a \emph{level-$1$ constituent tensor}. Then notice that under the above partition, the constituent tensor $T^{(1)}_{i, j, k}$ is nonzero if and only if $i+j+k = 2$. In particular, we can write $\CW_q$ as a sum of constituent tensors as follows
\[T^{(1)} = \CW_q = \sum_{\substack{i,j,k\ge 0\\ i+j+k = 2}} T_{i,j,k}^{(1)}.\]

\paragraph{Level-$\lvl$ Partition.} For $T^{(\lvl)}=\CW_q^{\otimes 2^{\lvl-1}}$ with variable sets $X^{(\lvl)}, Y^{(\lvl)}, Z^{(\lvl)}$, the above level-$1$ partition on $T^{(1)}$ directly induces a partition on the variable sets $X^{(\lvl)}, Y^{(\lvl)}, Z^{(\lvl)}$ where each part of the partition is indexed by a $\{0, 1, 2\}$-sequence of length $2^{\lvl-1}$. Specifically, this gives the partition
\[X^{(\lvl)} = \bigsqcup_{(\hat{i}_1, \hat{i}_2, \ldots, \hat{i}_{2^{\lvl-1}}) \in \{0, 1, 2\}^{2^{\lvl-1}}}X^{(1)}_{\hat{i}_1} \otimes  X^{(1)}_{\hat{i}_2} \otimes \cdots \otimes X^{(1)}_{\hat{i}_{2^{\lvl-1}}}\]
for $X$-variables and analogous partitions for $Y$- and $Z$-variables. 

In order to obtain an improvement by analyzing higher tensor powers of $\CW_q$, we need to consider the following coarsening of the induced partition where the parts corresponding to sequences with the same sum are ``merged'' into a single part. More specifically, we have
\[
  X^{(\lvl)} = \bigsqcup_{i = 0}^{2^\lvl} X_i^{(\lvl)},
  \qquad \textup{where} \quad
  X_i^{(\lvl)} \defeq
  \bigsqcup_{\substack{(\hat{i}_1, \hat{i}_2, \ldots, \hat{i}_{2^{\lvl-1}}) \in \{0, 1, 2\}^{2^{\lvl-1}}  \\\sum_t \hat{i}_t = i}}X^{(1)}_{\hat{i}_1} \otimes  X^{(1)}_{\hat{i}_2} \otimes \cdots \otimes X^{(1)}_{\hat{i}_{2^{\lvl-1}}}.
\]
 We refer to this above coarsened partition of $T^{(\lvl)}$ as the \emph{level-$\lvl$ partition}.
 Note that we can also view this partition as obtained from coarsening the level-($\lvl-1$) partition, i.e.,
 \[X_i^{(\lvl)} = \bigsqcup_{\substack{0 \le i' \le i \\ 0 \le i', i - i' \le 2^{\lvl}}}X^{(\lvl-1)}_{i'} \otimes X^{(\lvl-1)}_{i-i'}.\]
 We can partition the variable sets $Y^{(\lvl)}$ and $Z^{(\lvl)}$ similarly. 
 
 Under the level-$\lvl$ partition, we use $T^{(\lvl)}_{i, j, k}$ to denote the subtensor $T^{(\lvl)}\vert_{X_i^{(\lvl)}, Y_j^{(\lvl)}, Z_k^{(\lvl)}}$ and note that $T^{(\lvl)}_{i, j, k}$ is nonzero if and only if $i+j+k = 2^{\lvl}$. So we have 
 \[T^{(\lvl)} = \CW_q^{\otimes 2^{\lvl-1}}= \sum_{\substack{i,j,k\ge 0\\ i+j+k = 2^{\lvl}}} T_{i,j,k}^{(\lvl)}.\]
 We call each $T^{(\lvl)}_{i, j, k}$ a \emph{level-$\lvl$ constituent tensor}, $X^{(\lvl)}_i, Y^{(\lvl)}_j, Z^{(\lvl)}_k$ level-$\lvl$ variable blocks, and we omit the superscript $(\lvl)$ when $\lvl$ is clear from context.

\subsection{Leveled Partition for Large Tensor Powers of \texorpdfstring{$\CW_q$}{CWq}}

In the laser method, we consider a large tensor power of $\CW_q$ in the form $(T^{(\lvl)})^{\otimes n} = (\CW_q)^{\otimes n\cdot 2^{\lvl-1}}$. We set $N := n\cdot 2^{\lvl-1}$ and note that the leveled partition of $T^{(\lvl)}$ induces a partition on $(T^{\lvl})^{\otimes n}$. We recall some basic terminology and notations with respect to the leveled-partition of $(T^{\lvl})^{\otimes n}$.

\paragraph{Level-$1$ partition of $(\CW_q)^{\otimes N}$.}
In level-$1$, we view $(\CW_q)^{\otimes N}$ as the tensor $(T^{(1)})^{\otimes N}$ and consider the partition induced by the level-$1$ partition on $T^{(1)}$. Each level-$1$ $X$-variable block $X_{\hat{I}}$ is indexed by a sequence $\hat{I} = (\hat{I}_1,\dots, \hat{I}_N)$ of length $N$ in $\{0,1,2\}^N$. The variable block $X_I$ is defined as 
\[X_{\hat{I}} := X_{\hat{I}_1}^{(1)} \otimes \dots \otimes X_{\hat{I}_N}^{(1)},\]
where $X_{I_t}^{(1)}$ for $t\in [N]$ is the level-$1$ partition of $T^{(1)}$. We call $X_{\hat{I}}$ a \emph{level-$1$ variable block} and $\hat{I}$ its \emph{level-$1$ index sequence}. The level-$1$ $Y$- and $Z$-variable blocks $Y_{\hat{J}}$ and $Z_{\hat{K}}$ are defined similarly for level-$1$ index sequences $\hat{J}, \hat{K}\in \{0,1,2\}^N$. Then notice that $X_{\hat{I}}, Y_{\hat{J}}, Z_{\hat{K}}$ form a nonzero subtensor of $(T^{(1)})^{\otimes N}$ if $\hat{I}_t + \hat{J}_t + \hat{K}_t = 2$ for all $t\in [N]$. So we can write $(T^{(1)})^{\otimes N}$ as a sum of subtensors
\[(T^{(1)})^{\otimes N} = \sum_{\substack{\hat{I}, \hat{J}, \hat{K}\in \{0,1,2\}^{N}\\ \hat{I}_t + \hat{J}_t + \hat{K}_t = 2\,\ \forall t\in [N]}} (T^{(1)})^{\otimes N} \big\vert_{X_{\hat{I}}, Y_{\hat{J}}, Z_{\hat{K}}}.\]
For convenience, we use $X_{\hat{I}}Y_{\hat{J}}Z_{\hat{K}}$ to denote the subtensor $(T^{(1)})^{\otimes N}\vert_{X_{\hat{I}}, Y_{\hat{J}}, Z_{\hat{K}}}$ and we call $X_{\hat{I}}Y_{\hat{J}}Z_{\hat{K}}$ a \emph{level-$1$ triple}.

\paragraph{Level-$\lvl$ partition of $(\CW_q)^{\otimes N}$.}
In level-$\lvl$, we view $(\CW_q)^{\otimes N}$ as the tensor $(T^{(\lvl)})^{\otimes n}$ where $n = N / 2^{\lvl - 1}$ and consider the partition induced by the level-$\lvl$ partition on $T^{(\lvl)}$. Each level-$1$ $X$-variable block $X_{I}$ is indexed by a sequence $I\in \{0,1,\dots, 2^{\lvl}\}^n$ of length $n$. The variable block $X_I$ is defined as 
\[X_I :=  X_{I_1}^{(\lvl)}\otimes \dots \otimes X_{I_n}^{(\lvl)}\]
where $X_{i}^{(\lvl)}$ ($0 \le i \le 2^{\lvl}$) is the $i$-th part in the level-$\lvl$ partition of $T^{(\lvl)}$. We call $X_{I}$ a \emph{level-$\lvl$ variable block} and $I$ its \emph{level-$\lvl$ index sequence}. The level-$\lvl$ $Y$- and $Z$-variable blocks $Y_{J}$ and $Z_{K}$ are defined similarly for level-$\lvl$ index sequences $J,K\in \{0,1,\dots, 2^{\lvl}\}^n$. Similarly, the level-$\lvl$ variable blocks $X_I, Y_J, Z_K$ form a nonzero subtensor of $(T^{(\lvl)})^{\otimes n}$ when $I_t + J_t + K_t = 2^\lvl$ for all $t\in [n]$. So we can write
\[(T^{(\lvl)})^{\otimes n} = \sum_{\substack{\hat{I}, \hat{J}, \hat{K}\in \{0,1,2^\lvl\}^{n}\\ I_t+J_t +K_t = 2^\lvl\,\ \forall t\in [N]}}(T^{(\lvl)})^{\otimes n}\vert_{X_I, Y_J, Z_K}.\]
For convenience, we use the notation $X_IY_JZ_K$ to denote the subtensor $(T^{(\lvl)})^{\otimes n}\vert_{X_I, Y_J, Z_K}$ and we call such $X_IY_JZ_K$ a \emph{level-$\lvl$ triple.}

In addition, note that since the level-$\lvl$ partition of $T^{(\lvl)}$ is a coarsening of the partition induced by the level-$1$ partition of $T^{(1)}$, a level-$1$ variable block $X_{\hat{I}}$ is contained in a level-$\lvl$ variable block $X_I$ if the sequence $I' = (I'_1,\dots, I'_n)$ formed by taking $I'_t = \sum_{i = 1}^{2^{\lvl-1}} \hat{I}_{(t-1)\cdot 2^{\lvl-1} + i}$ satisfies $I'_t = I_t$ for all $t\in [n]$. Namely, if taking the sum of consecutive length-$2^{\lvl-1}$ subsequences in $\hat{I}$ yields the sequence $I$, then $X_{\hat{I}}$ is contained in $X_I$. In this case, we use the notation $\hat{I}\in I$ and $X_{\hat{I}}\in X_I$.

\subsection{Distributions and Entropy}

In this paper, we only consider distributions with a finite support. Let $\alpha$ be a distribution supported on a set $S$, we have $\alpha(s)\ge 0$ for all $s\in S$ and $\sum_{s\in S}\alpha(s) = 1$. The \emph{entropy} of $\alpha$, denoted as $H(\alpha)$, is defined as
\[H(\alpha) \defeq - \sum_{\substack{s\in S\\ \alpha(s) > 0}}\alpha(s)\log \alpha(s), \]
where the $\log$ has base $2$. 
We will frequently use the following well-known combinatorial fact.
\begin{lemma}
  Let $\alpha$ be a distribution over the set $[s] = \{1,\dots, s\}$. Let $N > 0$ be a positive integer, then we have
  \[\binom{N}{\alpha(1)N,\dots, \alpha(s)N}= 2^{N(H(\alpha) \pm o(1))}.\]
\end{lemma}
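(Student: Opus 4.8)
The plan is to take base-$2$ logarithms and apply Stirling's formula. Assume first that every $\alpha(i)N$ is a nonnegative integer, and write $n_i \defeq \alpha(i)N$, so that $\sum_{i=1}^s n_i = N$ and those $i$ with $\alpha(i) = 0$ contribute $n_i! = 1$ (consistent with the convention $0\log 0 = 0$ in the definition of $H$). Then
\[
\log_2\binom{N}{n_1,\dots,n_s} \;=\; \log_2 N! \,-\, \sum_{i:\,\alpha(i)>0} \log_2 n_i!,
\]
and I would invoke Stirling's formula in the form $\log_2 m! = m\log_2 m - (\log_2 e)\,m + O(\log m)$, valid uniformly for $m \ge 1$.

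Substituting and simplifying is then mechanical. The term $-(\log_2 e)N$ coming from $\log_2 N!$ is cancelled by $\sum_i (\log_2 e)\,n_i = (\log_2 e)\,N$, since $\sum_i n_i = N$. Writing $\log_2 n_i = \log_2 N + \log_2 \alpha(i)$, the remaining $\log_2 N$ contributions cancel for the same reason, and what is left is
\[
\log_2\binom{N}{n_1,\dots,n_s} \;=\; -\,N\sum_{i:\,\alpha(i)>0}\alpha(i)\log_2\alpha(i) \,+\, O(s\log N) \;=\; N\,H(\alpha) \,+\, O(\log N),
\]
because $s$ is a fixed constant. Dividing the error term by $N$ shows it contributes only $o(1)\cdot N$ to the exponent, so $\binom{N}{n_1,\dots,n_s} = 2^{N(H(\alpha)\pm o(1))}$.

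It remains to remove the integrality assumption on the $\alpha(i)N$. The multinomial coefficient that actually appears is $\binom{N}{n_1,\dots,n_s}$ for some integer tuple with $\sum_i n_i = N$ and $|n_i - \alpha(i)N| \le 1$; I would fix such a tuple and set $\alpha' \defeq (n_1/N,\dots,n_s/N)$, so $\midnorm{\alpha'-\alpha}_\infty \le 1/N$ and hence $\midnorm{\alpha'-\alpha}_1 \le s/N$. A standard uniform-continuity estimate for entropy (of the form $|H(\alpha')-H(\alpha)| \le \delta\log_2(s/\delta) + O(\delta)$ for $\midnorm{\alpha'-\alpha}_1 \le \delta$) then gives $|H(\alpha') - H(\alpha)| = o(1)$, and applying the integral case to $\alpha'$ finishes the proof.

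The only non-mechanical point — and thus the main obstacle — is this last continuity step when $\alpha$ has coordinates extremely close to $0$, where $x\mapsto x\log x$ has unbounded derivative; one must make sure the $o(1)$ bound on $|H(\alpha')-H(\alpha)|$ is genuinely uniform and that the bookkeeping still works when several $\alpha(i)$ vanish exactly. An alternative that sidesteps Stirling (and this subtlety) entirely is the method of types: the upper bound $\binom{N}{\alpha(1)N,\dots}\le 2^{NH(\alpha)}$ holds because $N$ i.i.d.\ samples from $\alpha$ fall in this single type class with probability $\binom{N}{\alpha(1)N,\dots}\cdot 2^{-NH(\alpha)}\le 1$, and the matching lower bound holds because this type class is the most probable among the at most $(N+1)^s$ type classes, so its probability is at least $(N+1)^{-s}$, yielding $\binom{N}{\alpha(1)N,\dots} \ge 2^{NH(\alpha)}/(N+1)^s = 2^{N(H(\alpha)-o(1))}$. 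I would likely record this type-counting argument as the actual proof, since it is cleaner, with the Stirling computation mentioned as the classical alternative.
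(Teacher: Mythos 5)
The paper states this lemma as a well-known combinatorial fact and gives no proof, so there is no argument of the paper's to compare against. Your proof is correct: both the Stirling-based computation (including the care you take over non-integral $\alpha(i)N$ and the entropy-continuity estimate, which is a genuine subtlety when some $\alpha(i)$ are near $0$) and the method-of-types argument you close with are standard and valid, and the type-counting route is indeed the cleaner of the two since it bypasses Stirling and the continuity issue altogether. Either one is a complete proof of the stated fact.
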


 For two distributions $\alpha$ and $\beta$ over the sets $S$ and $S'$ respectively, we define the joint distribution $\alpha \times \beta$ as the distribution over $S\times S' = \{(s,s') \mid s\in S, s'\in S'\}$ such that
 \[(\alpha\times \beta) (s,s') = \alpha(s)\cdot \beta(s').\]
  When $S$ and $S'$ are sets of integer sequences, we will instead define $\alpha \times \beta$ as a distribution over all integer sequences that can be obtained by concatenating one sequence in $S$ and one sequence in $S'$, such that 
$$(\alpha \times \beta)(s \circ s') = \alpha(s) \cdot \beta(s'), $$
where $s \circ s'$ denotes the concatenation of $s$ and $s'$. 

\subsection{Complete Split Distributions}

Motivated by the leveled partition of tensor powers of $\CW_q$, we define the notion of complete split distributions to characterize the level-$1$ variable blocks contained in level-$\lvl$ variable blocks.

\begin{definition}[Complete Split Distribution]
    A \emph{complete split distribution} for a level-$\lvl$ constituent tensor $T_{i,j,k}$ with $i+j+k=2^{\lvl}$ is a distribution on all length $2^{\lvl-1}$ sequences $(\hat{i}_1, \hat{i}_2, \ldots, \hat{i}_{2^{\lvl-1}}) \in \{0, 1, 2\}^{2^{\lvl-1}}$. 
\end{definition}

For a level-$1$ index sequence $\hat I \in \{0, 1, 2\}^{2^{\lvl-1} \cdot n}$, we say that it is \emph{consistent} with a complete split distribution $\splres$ if the proportion of any  index sequence $(\hat{i}_1, \hat{i}_2, \ldots, \hat{i}_{2^{\lvl-1}})$ in
\[ \left\{ \bigbk{\hat{I}_{(t-1)\cdot 2^{\lvl - 1}+p}}_{p=1}^{2^{\lvl-1}} \;\middle|\; t \in [n] \right\} \]
equals $\splres(\hat{i}_1, \hat{i}_2, \ldots, \hat{i}_{2^{\lvl-1}})$. Namely, for every $(\hat{i}_1,\dots, \hat{i}_{2^{\lvl-1}})\in \{0,1,2\}^{2^{\lvl-1}}$, we have
\[\labs{\lcr{t \in [n] \mmid \bigbk{\hat{I}_{(t-1)\cdot 2^{\lvl - 1}+p}}_{p=1}^{2^{\lvl - 1}} = (\hat{i}_1,\dots, \hat{i}_{2^{\lvl-1}})}}  = \splres(\hat{i}_1, \hat{i}_2, \ldots, \hat{i}_{2^{\lvl-1}}) \cdot n.\]

Notice that any level-1 index sequence $\hat{I}\in \{0,1,2\}^{2^{\lvl - 1}\cdot n}$ defines a complete split distribution by computing the proportions of each type of length-$2^{\lvl-1}$ consecutive chunks present in $\hat{I}$. More specifically, we have the following definition.

\begin{definition}\label{def:split-hatI}
    Given a level-1 index sequence $\hat{I}\in \{0,1,2\}^{2^{\lvl - 1}\cdot n}$, its complete split distribution over $(\hat{i}_1,\dots, \hat{i}_{2^{\lvl-1}})\in \{0,1,2\}^{2^{\lvl - 1}}$ is defined as
    \[\split\bigbk{\hat{I}}\bigbk{\hat{i}_1,\dots, \hat{i}_{2^{\lvl-1}}} = \frac{1}{n}\cdot \labs{\lcr{t \in [n] \mmid \bigbk{\hat{I}_{(t-1)\cdot 2^{\lvl - 1}+p}}_{p=1}^{2^{\lvl - 1}} = \bigbk{\hat{i}_1,\dots, \hat{i}_{2^{\lvl-1}}}}}.\]
    Given a subset $S\subseteq [n]$, we can define the complete split distribution over $(\hat{i}_1,\dots, \hat{i}_{2^{\lvl-1}})\in \{0,1,2\}^{2^{\lvl - 1}}$ given by $\hat{I}$ restricted to the subset $S$ as 
    \[\split\bigbk{\hat{I}, S}\bigbk{\hat{i}_1,\dots, \hat{i}_{2^{\lvl-1}}} = \frac{1}{|S|}\cdot \labs{\lcr{t \in S \mmid \bigbk{\hat{I}_{(t-1)\cdot 2^{\lvl - 1}+p}}_{p=1}^{2^{\lvl - 1}} = \bigbk{\hat{i}_1,\dots, \hat{i}_{2^{\lvl-1}}}}}.\]
\end{definition}

Given two complete split distributions $\splres_1$ and $\splres_2$ over the length-$2^{\lvl-1}$ index sequences $\{0,1,2\}^{2^{\lvl - 1}}$, the $L_\infty$ distances between $\splres_1$ and $\splres_2$ is defined to be 
\[\norm{\splres_1 - \splres_2}_\infty = \max_{\sigma\in \{0,1,2\}^{2^{\lvl-1}}}|\splres_1(\sigma) - \splres_2(\sigma)|.\]
For any constant $\eps > 0$ and a fixed complete split distribution $\splres$, we say that a level-1 index sequence $\hat{I}\in \{0,1,2\}^{2^{\lvl - 1}\cdot n}$ is consistent with $\splres$ up to $\eps$ error if  $\midnorm{\split(\hat{I}) - \splres}_\infty \le \eps$. When the $\eps$ is clear from context, we say that $\hat{I}$ is approximately consistent with $\splres$ if it is consistent with $\splres$ up to $\eps$ error.

\begin{definition}
    For a level-$\lvl$ constituent tensor $T_{i,j,k}$, an integer exponent $N$, a constant $\eps \ge 0$, and three complete split distributions $\splres_X, \splres_Y, \splres_Z$ for the $X$-, $Y$-, $Z$-variables respectively, we define
    $$T_{i,j,k}^{\otimes N}[\splres_X, \splres_Y, \splres_Z,\eps] := \sum_{\substack{\text{level-}1 \text{ triple } X_{\hat I}Y_{\hat J}Z_{\hat K} \textup{ in } T_{i,j,k}^{\otimes N} \\ \hat I \text{ approximately consistent with } \splres_X \\ \hat J \text{ approximately consistent with } \splres_Y \\ \hat K \text{ approximately consistent with } \splres_Z}} X_{\hat I} Y_{\hat J} Z_{\hat K}. $$
    It is a subtensor of $T_{i,j,k}^{\otimes N}$ over all level-1 $X$-, $Y$-, $Z$-variable blocks that are approximately consistent with $\splresX$, $\splresY$, $\splresZ$, respectively. When $\eps = 0$, we will simplify the notation to $T_{i,j,k}^{\otimes N}[\splres_X, \splres_Y, \splres_Z]$.
\end{definition}

\subsection{Salem-Spencer Sets}
In the hashing step of the laser method, we make use of the existence of a large dense subset of $\Z_M$ that avoids $3$-term arithmetic progressions. We recall the following past result.

\begin{theorem}[\cite{salemspencer,behrend1946sets}]
 For every positive integer $M > 0$, there exists a subset $B\subseteq \Z_M$ of size 
 \[|B|\ge M\cdot e^{-O(\sqrt{\log M})} = M^{1-o(1)}\]
 that contains no nontrivial $3$-term arithmetic progressions. Specifically, any $a,b,c\in B$ satisfy $a+b \equiv 2c \pmod M$ if and only if $a = b = c$.
\end{theorem}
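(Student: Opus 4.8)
The plan is to use Behrend's sphere construction. Fix a base $2d$ and a length $n$, both to be chosen at the end, and consider the $d^n$ integers of the form $x(\vec a) \defeq \sum_{i=1}^n a_i\,(2d)^{i-1}$ for $\vec a = (a_1,\dots,a_n) \in \{0,1,\dots,d-1\}^n$. The structural point I would exploit is that addition of two such integers never produces a carry in base $2d$: if $\vec a,\vec b,\vec c$ all have entries in $\{0,\dots,d-1\}$, then $\vec a+\vec b$ and $2\vec c$ have entries in $\{0,\dots,2d-2\}\subseteq\{0,\dots,2d-1\}$, so reading off base-$(2d)$ digits gives $x(\vec a)+x(\vec b)=x(\vec a+\vec b)$ and $2x(\vec c)=x(2\vec c)$, and $\vec v\mapsto x(\vec v)$ is injective on vectors with entries below $2d$. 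Hence a three-term progression $x(\vec a)+x(\vec b)=2x(\vec c)$ among these integers holds if and only if $\vec a+\vec b=2\vec c$ as integer vectors.

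Next I would carve out a progression-free set by intersecting with a Euclidean sphere. The squared norm $\sum_i a_i^2$ takes one of at most $n(d-1)^2+1$ integer values, so by pigeonhole there is a radius-squared $r$ for which $S_r \defeq \{\vec a\in\{0,\dots,d-1\}^n : \sum_i a_i^2 = r\}$ has size at least $d^n/(n(d-1)^2+1)$. Put $B_0 \defeq \{x(\vec a) : \vec a\in S_r\}$. If $x(\vec a)+x(\vec b)=2x(\vec c)$ with $\vec a,\vec b,\vec c\in S_r$, then $\vec c=(\vec a+\vec b)/2$ by the first paragraph; but strict convexity of the $\ell_2$ ball — equivalently the parallelogram identity $\norm{\vec a-\vec b}_2^2 = 2\norm{\vec a}_2^2+2\norm{\vec b}_2^2-\norm{\vec a+\vec b}_2^2 = 0$ — forces $\vec a=\vec b$, hence $\vec a=\vec b=\vec c$ and the progression is trivial. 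Every element of $B_0$ lies in $\{0,1,\dots,(2d)^n-1\}$, so provided we also arrange $(2d)^n\le M/2$ we get $a+b<M$ and $2c<M$ for all $a,b,c\in B_0$, whence $a+b\equiv 2c\pmod M$ holds iff $a+b=2c$ over $\Z$. Therefore $B\defeq B_0\subseteq\Z_M$ has no nontrivial $3$-term arithmetic progression.

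It remains to pick $n$ and $d$ so that $|B|\ge M\cdot e^{-O(\sqrt{\log M})}$. I would take $n\approx\sqrt{\log_2 M}$ and let $d$ be the largest integer with $(2d)^n\le M/2$; then $|B|\ge d^n/(n(d-1)^2+1) = (2d)^n\cdot 2^{-n}/(n(d-1)^2+1)$. For this choice one has $\log_2 d = O(\sqrt{\log M})$, so $2^{-n}=e^{-O(\sqrt{\log M})}$ and $n(d-1)^2+1 = e^{O(\sqrt{\log M})}$, while maximality of $d$ gives $(2d)^n \ge (M/2)\,(d/(d+1))^n \ge (M/2)/e$ once $n\le d$, which holds for the chosen parameters. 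Multiplying these estimates yields $|B|\ge M\cdot e^{-O(\sqrt{\log M})} = M^{1-o(1)}$; for $M$ below an absolute constant the claimed bound is vacuous and $B=\{0\}$ works. The only step requiring care is this last bookkeeping — keeping $d$ an integer while ensuring that $2^n$, the count $n d^2$, and the gap between $(2d)^n$ and $M$ are all controlled by $e^{O(\sqrt{\log M})}$ — but it is entirely routine. The conceptual content is just the one-line implication ``the Euclidean sphere is strictly convex, hence contains no nontrivial $3$-AP'' combined with the carry-free digit encoding that transports this from $\R^n$ back to $\Z_M$.
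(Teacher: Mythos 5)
The paper states this result citing Salem--Spencer and Behrend without reproducing a proof, so there is no in-paper argument to compare against. Your proposal is a correct, self-contained reproduction of Behrend's sphere construction, which is exactly the argument behind the second citation: the carry-free base-$(2d)$ encoding transports the problem to $\{0,\dots,d-1\}^n\subseteq\Z^n$, the pigeonhole on $\sum_i a_i^2$ isolates a large sphere shell, the parallelogram identity shows a Euclidean sphere has no nontrivial $3$-AP, and the padding $(2d)^n\le M/2$ converts an integer $3$-AP into a $3$-AP mod~$M$ (and vice versa). The bookkeeping with $n\approx\sqrt{\log_2 M}$, $d$ maximal with $(2d)^n\le M/2$, the bound $(d/(d+1))^n\ge 1/e$ for $n\le d$, and the check that $n\le d$ holds for this choice are all accurate; small $M$ is handled by taking $B=\{0\}$ and absorbing into the $O(\cdot)$ constant. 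I see no gap.
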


\section{Algorithm Outline}
\label{sec:outline}

In the following, we will use $\K \ge 0$ to denote that we want to obtain an upper bound on $\omega(1, \K, 1)$. 

In this section, we give the outline of our algorithm, which accepts $\CW_q^{\otimes N}$ as its input for a large enough $N$, and degenerates it into a collection of independent matrix multiplication tensors of the same size $\angbk{m, m^\K, m}$. By the asymptotic sum inequality (\cref{thm:schonhage-ineq-rect}), this will give an upper bound on $\omega(1, \K, 1)$.

\subsection{Algorithm Framework}

The following notion of \emph{interface tensor} acts as an interface of our algorithm between different levels. In general, each level of our algorithm takes an interface tensor as input (except the first level, which takes a large tensor power of $\CW_q$), and degenerates it into independent copies of an interface tensor. 

\begin{definition}[Interface Tensor]\label{def:interface-tensor}
  For a positive integer $\lvl\ge 1$ and any constant $0\le \eps \le 1$, a level-$\lvl$ $\eps$-\emph{interface tensor} $\T^*$ with parameter list 
  \[\{(n_t, i_t, j_t, k_t, \splresXt, \splresYt, \splresZt)\}_{t \in [s]}\]
  is defined as
  \[
    \T^* \defeq \bigotimes_{t = 1}^{s} T_{i_t, j_t, k_t}^{\otimes n_t}[\splresXt, \splresYt, \splresZt, \eps],
  \]
  where $i_t+j_t+k_t = 2^{\lvl}$ for every $t\in [s]$ (i.e., $T_{i_t, j_t, k_t}$ is a level-$\lvl$ constituent tensor) and  $\splresXt, \splresYt, \splresZt$ are level-$\lvl$ complete split distributions for $X$-, $Y$-, $Z$-variables respectively. We call each $T_{i_t, j_t, k_t}^{\otimes n_t}[\splresXt, \splresYt, \splresZt, \eps]$ a \emph{term} of $\T^*$. When $\eps = 0$, we will simply call $\T^*$ a level-$\lvl$ interface tensor. 
\end{definition}

Note that the same $(i_t, j_t, k_t)$ can appear multiple times in the parameter list, with potentially different $n_k, \splresXt, \splresYt, \splresZt$. Also note that the tensor product of two level-$\lvl$ $\eps$-interface tensors is also a level-$\lvl$ $\eps$-interface tensor, whose parameter list is the concatenation of the parameter lists of the two level-$\lvl$ $\eps$-interface tensors.

The framework of our algorithm is as follows. First, we apply the \emph{global stage} algorithm described in \cref{sec:global} on input $\bigbk{\CW_q^{\otimes 2^{\lvl^*}}}^{\otimes n}$ to degenerate it into independent copies of a level-$\lvl^*$ $\eps_{\lvl^*}$-interface tensor.
Then we apply the \emph{constituent tensor stage} algorithm described in \cref{sec:constituent} for $\lvl = \lvl^* ,\lvl^*-1,\dots, 2$ to obtain the tensor product between a matrix multiplication tensor and independent copies of a level-$1$ $\eps_1$-interface tensor. More specifically, the constituent tensor stage algorithm takes as input a level-$\lvl$ $\eps_\lvl$-interface tensor and outputs the tensor product between a matrix multiplication tensor and independent copies of a level-$(\lvl-1)$ $\eps_{\lvl-1}$-interface tensor, so we can keep applying the constituent tensor stage algorithm on each level-$(\lvl-1)$ interface tensors that was outputted previously until we get a  tensor product between a matrix multiplication tensor and independent copies of a  level-$1$ $\eps_1$-interface tensor. Finally, we show that each level-$1$ $\eps_1$-interface tensor can be easily degenerated into a matrix multiplication tensor, so we obtain independent copies of matrix multiplication tensors of dimension $\angbk{m, m^\K, m}$.

\subsection{Algorithm Outline}
We first give a high-level outline of each step of the global stage algorithm. The constituent tensor stage algorithm will share similar high-level ideas. 

The algorithm takes in $\bigbk{\CW_q^{\otimes 2^{\lvl-1}}}^{\otimes n}$ as input and outputs level-$1$-independent level-$\lvl$ interface tensors as a degeneration of the input (for simplicity, we consider the $\eps=0$ case in this outline). In the algorithm, we define the notion of compatibility between level-$1$ blocks and level-$\lvl$ triples with respect to some specified complete split distributions, so that if all level-$1$ blocks in the remaining tensor are compatible with \emph{exactly one} level-$\lvl$ triple, then the subtensors over each remaining triple are level-$1$-independent. So the goal of the algorithm is to zero out some level-$\lvl$ and level-$1$ variable blocks such that each remaining level-$1$ block is compatible with a unique level-$\lvl$ triple. The structure of the algorithm is similar to the global stage algorithm in \cite{duan2023} with the main modification being the generalization from split distributions to complete split distributions.

On input $\bigbk{\CW_q^{\otimes 2^{\lvl-1}}}^{\otimes n}$, we first view the tensor as the tensor product of three terms, where each term is called a region, i.e., we write $\bigbk{\CW_q^{\otimes 2^{\lvl-1}}}^{\otimes n}$ as $\bigotimes_{r \in [3]} \bigbk{\CW_q^{\otimes 2^{\lvl-1}}}^{\otimes A_r \cdot n}$ for some $A_1, A_2, A_3\ge 0$  and $A_1+A_2+A_3 = 1$. Recall that we are only able to allow the sharing of level-$\lvl$ variable blocks in one of $X$-, $Y$-, $Z$-dimensions, so each region will allow the sharing of level-$\lvl$ variable blocks in different dimensions and we will perform the subsequent steps on the three regions separately. This step helps balance the number of remaining variable blocks in the three dimensions due to the asymmetric nature of the subsequent procedure.

From now on, we describe the procedure on the first region where we allow the sharing of level-$\lvl$ $Z$-variable blocks. We perform the same procedure up to rotation of the three dimensions on the other two regions separately.
\begin{enumerate}
\item \textbf{Zero out according to $\alpha$.} For a distribution $\alpha$ over the level-$\lvl$ constituent subtensors and its induced marginals $\alphx, \alphy, \alphz$ in the $X$-, $Y$-, $Z$-dimensions, we zero out level-$\lvl$ $X$-, $Y$-, $Z$-variable blocks that are not consistent with $\alphx, \alphy, \alphz$ respectively.
  
\item \textbf{Asymmetric hashing.} We use pairwise independent hash functions that hash level-$\lvl$ index sequences to the set $\{0,\dots, M-1\}$ for some $M$ which partitions the level-$\lvl$ variable blocks into buckets based on its hash value. Within each bucket, we do asymmetric cleanup so that every level-$\lvl$ $X$-variable block $X_I$ or $Y$-variable block $Y_J$ is contained in a unique level-$\lvl$ triple $X_I Y_J Z_K$, while a level-$\lvl$ $Z$-variable block $Z_K$ could be contained in multiple level-$\lvl$ triples.

\item \textbf{Compatibility zero-out I.} 
We define a notion of \emph{compatibility} with respect to the complete split distributions between level-$1$ blocks and level-$\ell$ triples for a set of specified level-$\lvl$ complete split distributions $\left\{\splres_{X, i, j, k}, \splres_{Y, i, j, k}, \splres_{Z, i, j, k}\right\}_{i+j+k = 2^\lvl}$ for the $X$-, $Y$-, $Z$-blocks. We zero out all the level-$1$ $X$- or $Y$-blocks that are not consistent with $\left\{\splres_{X, i, j, k}\right\}_{i + j + k = 2^\lvl}, \left\{\splres_{Y, i, j, k}\right\}_{i+j+k=2^\lvl}$ respectively (we can only do this because every level-$\lvl$ $X$-variable block $X_I$ or $Y$-variable block $Y_J$ is contained in a unique level-$\lvl$ triple). We zero out all the level-$1$ $Z$-blocks that are incompatible with any level-$\lvl$ triples.

\item \textbf{Compatibility zero-out II: unique triple.} After the compatibility zero-out I, every level-$1$ block is compatible with at least $1$ level-$\lvl$ triple and we want every level-$1$ block to be compatible with \emph{exactly one} level-$\lvl$ triple. So in this step, we zero out level-1 $Z$-blocks that are compatible with more than one level-$\lvl$ triples. Note that the level-$1$ blocks zeroed out in this step will become holes.  

\item \textbf{Usefulness zero-out.} Now that each remaining level-$1$ $Z$-block $Z_{\hat{K}}$ is contained in exactly one level-$\ell$ triple $X_IY_JZ_K$, we can define the notion of whether a level-$1$ block is \emph{useful} for the level-$\lvl$ triple containing it as whether it is consistent with the complete split distributions $\{\splres_{Z,i,j,k}\}_{i+j+k=2^\lvl}$. Note that we can only do this now because previously we do not have the property that every level-$1$ $Z$-block is in a unique level-$\lvl$ triple. In this step we zero out the level-$1$ blocks that are not useful for the level-$\ell$ triple containing it.

\item \textbf{Fixing holes.} Now we have obtained level-1-independent level-$\lvl$ interface tensors with holes. We use the following result which will be proved in \cref{sec:fixing_holes} to fix the holes.

  \begin{restatable}[Fixing holes in interface tensors]{cor}{fixinterface}\label{cor:fix-interface}
    Let $T$ be a level-$\ell$ interface tensor with parameter list
    \[\{(n_t, i_t, j_t, k_t, \splresXt, \splresYt, \splresZt)\}_{t \in [s]}.\]
    Let $N = 2^{\ell-1} \cdot \sum_{t\in [s]} n_t$. Suppose $T_1, \dots, T_r$ are broken copies of $T$ where $\le \frac{1}{8N}$ fraction of level-1 $X$-, $Y$- and $Z$-blocks are holes. If $r \ge 2^{C_1 N / \log N}$ for some large enough constant $C_1>0$, the direct sum $\bigoplus_{i = 1}^r T_i$ can degenerate into an unbroken copy of $T$.
  \end{restatable}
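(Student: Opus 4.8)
The plan is to deduce \cref{cor:fix-interface} from a general hole-fixing theorem that we establish for a class of tensors broad enough to contain every interface tensor and closed under the operations used below, in the spirit of \cite{duan2023}'s reduction from standard-form tensors to concrete ones. First I would isolate the structural features of an interface tensor $\T^*$ that the argument relies on: each of its three variable sets carries the level-$1$ block partition, every hole is a union of level-$1$ blocks, $\T^*$ is a subtensor of $\CW_q^{\otimes N}$ (so each of its variable sets has at most $(q+2)^N$ elements, i.e.\ $\log_2$-size $O(N)$), and restricting $\T^*$ to an arbitrary union of level-$1$ $X$-, $Y$-, $Z$-blocks again yields a tensor of the same kind. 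All of these are immediate from \cref{def:interface-tensor}, and the quantitative hypothesis ``hole fraction $\le \tfrac{1}{8N}$'' comfortably meets the $O(1/\log(\#\text{variables})) = O(1/N)$ requirement of the general theorem, so the real content is that theorem, whose proof I sketch next.

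\textbf{The recursive hole fix.} Given broken copies $T_1,\dots,T_r$ of $T$, each missing at most a $c \defeq \tfrac{1}{8N}$ fraction of the level-$1$ blocks in every dimension, pick one copy as $T_{\hole}$, let $X^{(0)},Y^{(0)},Z^{(0)}$ be its holes, set $X^{(1)} \defeq X \setminus X^{(0)}$ and similarly for $Y,Z$, and write $T = \sum_{a,b,c\in\{0,1\}} T|_{X^{(a)},Y^{(b)},Z^{(c)}}$ as a sum of eight block-aligned subtensors. The bulk term $T|_{X^{(1)},Y^{(1)},Z^{(1)}}$ is exactly the tensor realized by $T_{\hole}$; each of the remaining seven terms has at least one of its variable sets among the small sets $X^{(0)},Y^{(0)},Z^{(0)}$, which are a $\le c$ fraction of the corresponding full set, and we produce each such term recursively from a fresh batch of the $T_i$'s restricted to the relevant union of blocks (pruning to the copies whose restricted hole pattern still obeys the $c$ bound, of which enough survive since $r$ is huge). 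Finally I would reassemble the eight pieces into a single unbroken $T$ by a degeneration that generalizes Schönhage's linear hole-fixing map \cite{Schonhage81} and the standard-form argument of \cite{duan2023,duanpersonal} -- this is the step that exploits the structural features listed above. The recursion bottoms out when some variable set has size below $1/c = 8N$, because then no block in that dimension can be a hole and the remaining at-most-two-dimensional holes are fixed directly by \cite{Schonhage81,duan2023}.

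\textbf{Accounting.} Along any root-to-leaf path of the recursion, each step shrinks one of the three variable sets by a factor of at most $c$; since each variable set of $T$ has $\log_2$-size $O(N)$, a set can be shrunk at most $O(N/\log N)$ times, so the recursion has depth $O(N/\log N)$. With branching factor $7$, the total number of broken copies consumed is $7^{O(N/\log N)} = 2^{O(N/\log N)}$, which is covered by the hypothesis $r \ge 2^{C_1 N / \log N}$ for a large enough constant $C_1$. This is exactly where the bound $c \le \tfrac{1}{8N}$ is used rather than a small constant (which sufficed for the single-dimension fix of \cite{duan2023}): a constant hole fraction would force depth $\Omega(N)$ and consume $2^{\Omega(N)}$ copies, far too many to be absorbed afterwards by \cref{thm:schonhage-ineq-rect}.

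\textbf{Main obstacle.} I expect the reassembly step to be the crux: one must write down an explicit degeneration turning the eight pieces -- one broken bulk copy plus seven recursively built thin subtensors, all on disjoint fresh variable sets -- back into a single unbroken $T$, which means generalizing the formal-variable ($\lambda$) linear transformations of \cite{Schonhage81,duanpersonal} from matrix multiplication tensors to the whole interface-tensor class and choosing the $\lambda$-exponents so that all cross terms vanish in the $\lambda \to 0$ limit; one also has to check that every tensor appearing in the recursion, being a restriction of an interface tensor to a union of level-$1$ blocks, still supports such transformations. The secondary difficulty is the quantitative bookkeeping of the preceding paragraph -- ensuring that enough broken copies with hole fraction $\le c$ survive into each recursive call -- but this follows from the sheer number of available copies. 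The remaining ingredients (the eight-way decomposition and the recursion's termination) are routine once the reassembly lemma is in hand.
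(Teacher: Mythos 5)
The central idea your sketch is missing is the use of symmetry permutations to \emph{relocate} the holes of each broken copy before it is consumed. You choose one copy $\Thole$, split $T$ into eight pieces along the hole/non-hole partition of $\Thole$, and then try to produce the seven small pieces by restricting the \emph{other} broken copies $T_i$ to the small sets $X^{(0)},Y^{(0)},Z^{(0)}$, ``pruning to the copies whose restricted hole pattern still obeys the $c$ bound.'' But the holes are adversarial, so nothing prevents all $r$ copies from having exactly the same hole pattern as $\Thole$; then the restriction of every $T_i$ to $X^{(0)}$ is the empty tensor, no copy survives the pruning, and the recursion never gets off the ground. Making the number of copies $r$ huge does not help at all in this scenario. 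The paper's proof (\cref{thm:fix-holes-general}, via \cref{lem:shufflable}) avoids this by positing \cref{property:fixable}: $T$ admits a set $\calG$ of partition- and structure-preserving variable permutations under which a fixed block is sent to a uniformly random block. At every recursion node one fresh broken copy is consumed; a Markov argument over $\calG$ yields a $(\pi_X,\pi_Y,\pi_Z)$ that relocates that copy's holes so they overlap the current target parts $P_X,P_Y,P_Z$ in only a $\le 4\,|P^{(0)}_X|/M_X \le 1/\log M_X$ fraction (and similarly for $Y,Z$). This shuffling is precisely what converts the \emph{global} hole bound into the \emph{local} shrinkage factor $1/\log M$ that drives the recursion; without it, ``restrict and recurse'' simply does not control the local hole fraction. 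The corollary-specific work is then to exhibit $\calG$ for interface tensors (permutations of the $n_t$ length-$2^{\ell-1}$ chunks within each term) and verify \cref{property:fixable}, which your sketch never addresses.

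Two further points. The reassembly you flag as the ``main obstacle'' is in fact trivial and requires no Schönhage-style $\lambda$-trick, let alone a generalization of it beyond matrix multiplication tensors: the eight subtensors $T\|_{P_X^{(a)'},\,P_Y^{(b)'},\,P_Z^{(c)'}}$ (for $a,b,c\in\{0,1\}$) literally \emph{sum} to $T\|_{P_X,P_Y,P_Z}$ over the same variable sets, since each term of $T\|_{P_X,P_Y,P_Z}$ lands in exactly one of the eight, and a direct sum of tensors always degenerates into their sum by identifying corresponding variables. Also, the recursion in the paper terminates when one of $h_X,h_Y,h_Z$ reaches $0$ (the subtensor is then zero), not when a variable set becomes ``too small to hold a hole,'' and there is no appeal to the one- or two-dimensional hole fixers of Schönhage or Duan et al.\ at the leaves. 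Your depth and branching arithmetic happens to give the right order $2^{O(N/\log N)}$, but it is built on the unavailable assumption that each recursive call can use a single restricted copy with hole fraction still $\le c$.
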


\end{enumerate}

\section{Global Stage}\label{sec:global}

In the global stage, we take as input the tensor $\CW_q^{\otimes N}$ for $N = n\cdot 2^{\lvl^*}$ and output independent copies of a level-$\lvl^*$ interface tensor, where the output will be a degeneration of the input. For the rest of this section, we will use $\lvl$ to denote $\lvl^*$ for convenience.

Given $\alpha$, which is a distribution over $\{(i, j, k) \in \mathbb{Z}_{\ge 0}^3 \mid i + j + k = 2^{\lvl}\}$, and $\splres_{X, i, j, k}, \splres_{Y, i, j, k}, \splres_{Z, i, j, k}$, which are level-$\lvl$ complete split distributions, we define the following quantities:
\begin{itemize}
\item $\alphx$ is the marginal distribution of $\alpha$ on the $X$-dimension, i.e., $\alphx(i) = \sum_{j, k} \alpha(i, j, k)$ for any $i$. We also similarly define $\alphy$ and $\alphz$.
\item $D$ is the set of distributions  whose marginal distributions on the three dimensions are $\alphx, \alphy, \alphz$ respectively, and let the penalty term $P_\alpha \defeq \max_{\alpha' \in D} H(\alpha') - H(\alpha) \ge 0$. 
\item For every $k$, 
  $\alpha(\+, \+, k) \defeq \sum_{i > 0, j > 0} \alpha(i, j, k)$; for every $j$, $\alpha(\+, j, \+) \defeq \sum_{i > 0, k > 0} \alpha(i, j, k)$; and for every $i$, 
  $\alpha(i, \+, \+) \defeq \sum_{j > 0, k > 0} \alpha(i, j, k)$. 
\item For every $k$, $\splresavg_{Z, \+, \+, k} \defeq \frac{1}{\alpha(\+, \+, k)} \sum_{i > 0, j > 0} \alpha(i, j, k) \cdot \splres_{Z, i, j, k}$, and $\splresavg_{Y, \+, j, \+}$ and $\splresavg_{X, i, \+, \+}$ are defined similarly. 
\item $\splavg{X} \defeq \sum_{i, j, k} \alpha(i, j, k) \cdot \splres_{X, i, j, k}$ and $\splavg{Y}$ and $\splavg{Z}$ are defined similarly.
\item $\lambda_Z \defeq \sum_{i, j, k: i = 0 \text{ or } j = 0} \alpha(i, j, k) \cdot H(\splres_{Z, i, j, k}) + \sum_k \alpha(\+, \+, k) \cdot H(\splresavg_{Z, \+, \+, k})$, and $\lambda_X$ and $\lambda_Y$ are defined similarly. 
\end{itemize}

In the following proposition, we will have $\alpha^{(r)}, \splres_{X, i, j, k}^{(r)}, \splres_{Y, i, j, k}^{(r)}, \splres_{Z, i, j, k}^{(r)}$ for every $r \in [3]$. For every $r \in [3]$, we use superscript $(r)$ on variables to denote that they are computed using values of $\alpha^{(r)}, \splres_{X, i, j, k}^{(r)}, \splres_{Y, i, j, k}^{(r)}, \splres_{Z, i, j, k}^{(r)}$.

\begin{prop}
\label{prop:global-stage-no-eps}
    $\bigbk{\CW_q^{\otimes 2^{\lvl-1}}}^{\otimes n}$ can be degenerated into
    \[ 2^{(A_1 E_1 + A_2 E_2 + A_3 E_3) n - o(n)} \]
    independent copies of a level-$\lvl$ interface tensor with parameter list 
    \[ \left\{\bk{ n \cdot A_r \cdot \alpha^{(r)}(i, j, k), i, j, k, \splres_{X, i, j, k}^{(r)}, \splres^{(r)}_{Y, i, j, k}, \splres^{(r)}_{Z, i, j, k} }\right\}_{r \in [3], \, i + j + k = 2^{\lvl}} \]
    where
    \begin{itemize}
        \item $0 \le A_1, A_2, A_3 \le 1, A_1 + A_2 + A_3 = 1$;
        \item $\alpha^{(r)}$ for every $r \in [3]$ is a  distribution over $\{(i, j, k) \in \mathbb{Z}_{\ge 0}^3 \mid i + j + k = 2^{\lvl}\}$;
        \item For every $W \in \{X, Y, Z\}$, $\splres_{W, i, j, k}^{(r)}$ for $r \in [3], i + j + k = 2^{\lvl - 1}$ is a level-$\lvl$ complete split distribution;
        \item \qquad\qquad $\displaystyle
        \begin{aligned}[t]
            E_1 &\defeq \min\left\{H(\alphx^{(1)}) - P_\alpha^{(1)}, H(\alphy^{(1)}) - P_\alpha^{(1)},  H(\splavg{Z}^{(1)}) - \lambda_Z^{(1)}\right\}, \\
            E_2 &\defeq \min\left\{H(\alphx^{(2)}) - P_\alpha^{(2)}, H(\alphz^{(2)}) - P_\alpha^{(2)},  H(\splavg{Y}^{(2)}) - \lambda_Y^{(2)}\right\}, \\
            E_3 &\defeq \min\left\{H(\alphy^{(3)}) - P_\alpha^{(3)}, H(\alphz^{(3)}) - P_\alpha^{(3)},  H(\splavg{X}^{(3)}) - \lambda_X^{(3)}\right\}.
        \end{aligned}$
    \end{itemize}
\end{prop}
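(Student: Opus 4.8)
I would realize the six-step global-stage procedure sketched in \cref{sec:outline} and track the exponent of the number of independent copies through each step. Write $N=n\cdot 2^{\lvl-1}$, so the input is $\CW_q^{\otimes N}$, carrying both its level-$\lvl$ and level-$1$ partitions. The first move is to split $\bigbk{\CW_q^{\otimes 2^{\lvl-1}}}^{\otimes n}=\bigotimes_{r\in[3]}\bigbk{\CW_q^{\otimes 2^{\lvl-1}}}^{\otimes A_r n}$ and process region $1$ so as to allow sharing of level-$\lvl$ $Z$-blocks, regions $2$ and $3$ being the identical argument with the three dimensions cyclically rotated. Since a tensor product of level-$\lvl$ interface tensors is again a level-$\lvl$ interface tensor with the parameter lists concatenated, it is enough to degenerate region $r$ into $2^{E_r A_r n-o(n)}$ copies of the interface tensor with parameter list $\{(nA_r\alpha^{(r)}(i,j,k),i,j,k,\splres^{(r)}_{X,i,j,k},\splres^{(r)}_{Y,i,j,k},\splres^{(r)}_{Z,i,j,k})\}_{i+j+k=2^{\lvl}}$ and then take the tensor product over $r$, which yields $2^{(A_1E_1+A_2E_2+A_3E_3)n-o(n)}$ copies overall.

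For region $1$ (sharing $Z$-blocks) I would run, in order: (1) a zero-out keeping only level-$\lvl$ $X$-, $Y$-, $Z$-blocks consistent with the marginals $\alphx^{(1)},\alphy^{(1)},\alphz^{(1)}$, which by the multinomial estimate leaves $2^{(H(\alphx^{(1)})\pm o(1))A_1n}$ surviving $X$-blocks and similarly for $Y,Z$; (2) asymmetric hashing: apply a uniformly random linear hash $I\mapsto\sum_t b_tI_t\bmod M$ for a modulus $M=2^{(P_\alpha^{(1)}\pm o(1))A_1 n}$ together with a Behrend/Salem--Spencer set $B\subseteq\Z_M$, keep only blocks hashing into suitable shifts of $B$, and then run the within-bucket asymmetric cleanup of \cite{duan2023} so that, up to a $2^{-o(n)}$ factor in the triple count, every surviving level-$\lvl$ $X$- and $Y$-block lies in a unique surviving triple while $Z$-blocks may be reused -- here the Salem--Spencer property kills the ``spurious'' triples that are consistent with the marginals but not with $\alpha^{(1)}$, and the choice of $M$ is what forces the penalty $P_\alpha^{(1)}$ to be subtracted; (3)--(5) enforce the complete split distributions: for each level-$\lvl$ $X$- (resp.\ $Y$-) block, lying in a unique triple $X_IY_JZ_K$, zero out every level-$1$ sub-block whose restriction to the coordinate set labelled $(i,j,k)$ is not consistent with $\splres^{(1)}_{X,i,j,k}$ (resp.\ $\splres^{(1)}_{Y,i,j,k}$); for $Z$, zero out level-$1$ $Z$-blocks compatible with no surviving triple, then those compatible with two or more surviving triples (these become holes), then those not useful for the unique triple that still contains them. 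A first-moment computation using pairwise independence of $h$ shows the number of surviving triples is at least $2^{-o(n)}$ times the minimum over the three dimensions of the relevant quantity, namely $2^{(H(\alphx^{(1)})-P_\alpha^{(1)})A_1n}$ and $2^{(H(\alphy^{(1)})-P_\alpha^{(1)})A_1n}$ from the two unshareable dimensions and $2^{(H(\splavg{Z}^{(1)})-\lambda_Z^{(1)})A_1n}$ from the shared dimension, where the $\lambda_Z^{(1)}$ subtraction is precisely the entropy ``spent'' to pin down the per-constituent split distributions $\splres^{(1)}_{Z,i,j,k}$ (with the ``triple'' constituents $i,j>0$ only distinguishable up to the $k$-averaged distribution $\splresavg_{Z,\+,\+,k}^{(1)}$, because their $Z$-blocks are shared by exponentially many triples). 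Taking the minimum gives $E_1$; regions $2$ and $3$ give $E_2,E_3$ by rotation.

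For the hole-fixing step (6): after Steps (1)--(5), region $1$ is a direct sum of $2^{E_1A_1n-o(n)}$ level-$1$-independent copies of the target interface tensor, and I would check that each copy has only an exponentially small -- in particular $\le\tfrac1{8N}$ -- fraction of holes among its level-$1$ $X$-, $Y$-, $Z$-blocks, the holes arising solely because a level-$1$ $Z$-block that would have belonged to the copy turned out to be simultaneously useful for two different triples sharing its level-$\lvl$ $Z$-block and was therefore zeroed out. Batching the copies into groups of size $2^{C_1N/\log N}=2^{o(n)}$ and invoking \cref{cor:fix-interface} on each group replaces them by unbroken copies while changing the exponent only by $o(n)$; assembling the three regions by tensor product completes the proof.

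\textbf{Main obstacle.} Everything except Steps (3)--(5) is the asymmetric laser method of \cite{duan2023} with ``split distribution'' replaced, essentially verbatim, by ``complete split distribution.'' The genuinely new work is: (a) defining the compatibility relation between a level-$1$ $Z$-block and a level-$\lvl$ triple through restrictions of complete split distributions; (b) proving the hole fraction is small, i.e.\ that the set of level-$1$ $Z$-blocks useful for two or more triples sharing a given level-$\lvl$ $Z$-block is an exponentially small fraction -- this is what lets $\eps=0$ still be admissible here; and (c) verifying that the entropy bookkeeping produces exactly $H(\splavg{Z}^{(1)})-\lambda_Z^{(1)}$ for the yield of the shared dimension with the stated definition of $\lambda_Z^{(1)}$. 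I expect (b)/(c) to be where the care is needed, since the competition between how many triples a level-$\lvl$ $Z$-block can serve (which pushes the count up) and how much of each triple's level-$1$ $Z$-fibre is lost to collisions (which creates holes) has to be balanced precisely against the $\lambda_Z^{(1)}$ term.
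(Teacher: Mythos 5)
Your outline matches the paper's actual proof essentially step for step: the three-region decomposition, asymmetric hashing with a Salem--Spencer set and within-bucket cleanup, the compatibility, unique-triple, and usefulness zero-outs, and a batched invocation of \cref{cor:fix-interface} to repair holes; the stated $E_r$ arise as $\numalpha/M_0$ after the right choice of modulus $M_0$, and your ``main obstacle'' paragraph correctly locates where the new technical content lives (the compatibility definition, the hole-fraction estimate via $\pcomp$, and the $\lambda_Z$ accounting). Two small slips are worth noting, though neither undermines the plan: the modulus you state, $M=2^{(P_\alpha^{(1)}\pm o(1))A_1 n}$, is not the paper's choice and would not by itself satisfy the constraints for within-bucket cleanup or hole control --- the paper sets $M_0=\max\bigl\{8\numtriple/\numxblock,\ 8\numtriple/\numyblock,\ 80N\cdot\numalpha\cdot\pcomp/\numzblock\bigr\}$, so its exponent depends on $H(\alpha)$, $H(\alphx)$, $H(\alphy)$, $\lambda_Z$, and $H(\splavg{Z})$ and not only on $P_\alpha$, and the final yield $\numalpha/M_0^{1+o(1)}$ is what produces the $\min$ in $E_1$ --- and the hole fraction in each surviving copy is only polynomially small, $O(1/N)$ (via Markov's inequality and the extra $80N$ factor in $M_0$, as in \cref{cl:global:prob-of-holes}), not exponentially small, though $\le 1/(8N)$ is still what \cref{cor:fix-interface} requires.
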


\begin{remark}
\label{rmk:assumptions_on_complete_split_dist}
Note that without loss of generality, we can assume that, for every $r, i, j, k$, and every $L \in \{0, 1, 2\}^{2^{\lvl-1}}$, 
\[
\splres^{(r)}_{X, i, 0, k}(L) = \splres^{(r)}_{Z, i, 0, k}(\vec{2}-L), \quad   
\splres^{(r)}_{Z, 0, j, k}(L) = \splres^{(r)}_{Y, 0, j, k}(\vec{2}-L), \quad
\splres^{(r)}_{Y, i, j, 0}(L) = \splres^{(r)}_{X, i, j, 0}(\vec{2}-L), 
\]
where $\vec{2}$ denotes the length-$(2^{\lvl-1})$ vector whose coordinates are all $2$, 
and 
\[
\splres^{(r)}_{X, i, j, k}(L) = 0 \text{ if } \sum_{t} L_t \ne i, \quad   
\splres^{(r)}_{Y, i, j, k}(L) = 0 \text{ if } \sum_{t} L_t \ne j, \quad   
\splres^{(r)}_{Z, i, j, k}(L) = 0 \text{ if } \sum_{t} L_t \ne k,   
\]
because otherwise, the level-$\lvl$ interface tensor will be the zero tensor and the lemma will follow trivially. 
\end{remark}

Next, we show \cref{thm:global-stage-with-eps}, which is a corollary of \cref{prop:global-stage-no-eps}. 
\begin{theorem}
\label{thm:global-stage-with-eps}
For any $\eps > 0$, $2^{o(n)}$ independent copies of $(\CW_q^{\otimes 2^{\lvl-1}})^{\otimes n}$ can be degenerated into 
    $$2^{(A_1 E_1 + A_2 E_2 + A_3 E_3 - o_{1/\eps}(1)) n - o(n)}$$
    independent copies of a level-$\lvl$ $\eps$-interface tensor with parameter list 
    \[ \left\{ \bk{ n \cdot A_r \cdot \alpha^{(r)}(i, j, k), i, j, k, \splres_{X, i, j, k}^{(r)}, \splres^{(r)}_{Y, i, j, k}, \splres^{(r)}_{Z, i, j, k} } \right\}_{r \in [3], i + j + k = 2^{\lvl}} \]
    where the constraints are the same as those in \cref{prop:global-stage-no-eps}.\footnote{$\oeps(1)$ denotes a function $f(\eps)$ where $f(\eps) \to 0$ as $\eps \to 0$. We also use $\oeps(n)$ to denote $\oeps(1) \cdot n$.}
\end{theorem}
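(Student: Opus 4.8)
The plan is to deduce \cref{thm:global-stage-with-eps} from \cref{prop:global-stage-no-eps} by a ``decompose, apply, glue'' argument. The point is that the $\eps$-relaxed interface tensor of \cref{thm:global-stage-with-eps} is, after a harmless relabeling, an honest direct sum of \emph{ordinary} ($\eps=0$) interface tensors whose complete split distributions are $\eps$-close to the prescribed ones; each of these ordinary pieces is produced by \cref{prop:global-stage-no-eps}, and one copy of each can then be glued back into a single copy of the $\eps$-relaxed tensor by the $\F$-linear map that identifies corresponding variables. The only costs are a $\poly(n)=2^{o(n)}$ blow-up in the number of input copies (one application of \cref{prop:global-stage-no-eps} per summand) and an $\oeps(1)\cdot n$ loss in the exponent, coming from the fact that the quantities $E_r$ move continuously under $\eps$-perturbations of the complete split distributions.

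First, the decomposition. Fix target parameters $A_r,\alpha^{(r)},\splres^{(r)}_{W,i,j,k}$ as in the statement and set $m_{r,i,j,k}\defeq n A_r\alpha^{(r)}(i,j,k)$. The key observation is that for any level-$\lvl$ constituent tensor $T_{i,j,k}$ and any exponent $m$, the complete split distribution $\split(\hat I)$ of a level-$1$ $X$-block $X_{\hat I}$ occurring in $T_{i,j,k}^{\otimes m}$ is uniquely determined by $\hat I$ (and likewise for $Y$- and $Z$-blocks), and it always has denominator $m$. Consequently $T_{i,j,k}^{\otimes m}[\splres_X,\splres_Y,\splres_Z,\eps]$ decomposes, as a sum of subtensors of $T_{i,j,k}^{\otimes m}$ in which every monomial occurs exactly once, as
\[
  T_{i,j,k}^{\otimes m}[\splres_X,\splres_Y,\splres_Z,\eps]
  \;=\; \sum_{\gamma_X,\gamma_Y,\gamma_Z} T_{i,j,k}^{\otimes m}[\gamma_X,\gamma_Y,\gamma_Z,0],
\]
the sum ranging over all triples of complete split distributions of denominator $m$ within $L_\infty$-distance $\eps$ of $(\splres_X,\splres_Y,\splres_Z)$; there are only $\poly(m)$ such triples, and any two distinct summands use disjoint blocks in at least one of the three dimensions. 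Taking the tensor product over all $(r,i,j,k)$ and distributing products over these sums, the full $\eps$-interface tensor $\T^{\eps}$ with the parameter list of \cref{thm:global-stage-with-eps} becomes a sum $\sum_{\vec\gamma}\T_{\vec\gamma}$ over $\poly(n)$ tuples $\vec\gamma$ (there are $O(1)$ triples $(i,j,k)$ and $r\in[3]$, hence $O(1)$ coordinates, each ranging over $\poly(n)$ distributions), where $\T_{\vec\gamma}\defeq\bigotimes_{r,i,j,k}T_{i,j,k}^{\otimes m_{r,i,j,k}}[\gamma^{(r)}_{X,i,j,k},\gamma^{(r)}_{Y,i,j,k},\gamma^{(r)}_{Z,i,j,k},0]$ is an ordinary level-$\lvl$ interface tensor, and again every monomial of $\T^{\eps}$ lies in exactly one $\T_{\vec\gamma}$.

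Second, I apply \cref{prop:global-stage-no-eps} once for each $\vec\gamma$, on a fresh copy of $(\CW_q^{\otimes 2^{\lvl-1}})^{\otimes n}$, using the distribution $\alpha^{(r)}$ and the complete split distributions $\gamma^{(r)}_{W,i,j,k}$; since there are $\poly(n)=2^{o(n)}$ tuples $\vec\gamma$, this consumes $2^{o(n)}$ input copies. One checks that these $\gamma^{(r)}$ automatically satisfy the conventions of \cref{rmk:assumptions_on_complete_split_dist}: $\split(\hat I)$ is supported on $L$ with $\sum_t L_t=i$, which is the support condition, and for a degenerate constituent tensor such as $T_{i,0,k}$ the $Z$-index of any level-$1$ triple equals $\vec 2$ minus its $X$-index coordinatewise while the $Y$-block is the unique block $\vec 0$, so only the complementary pair $(\gamma_X,\gamma_Z)$ yields a nonzero summand, which is the flip condition. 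Each application yields $2^{(A_1 E_1^{\vec\gamma}+A_2 E_2^{\vec\gamma}+A_3 E_3^{\vec\gamma})n-o(n)}$ copies of $\T_{\vec\gamma}$; because the $E_r$ depend on the complete split distributions only through the entropies of these distributions and of their fixed-weight convex combinations $\splavg{Z}$, $\splresavg_{Z,\+,\+,k}$, and so on, and entropy is uniformly continuous on the fixed finite domain $\{0,1,2\}^{2^{\lvl-1}}$, we get $E_r^{\vec\gamma}=E_r\pm\oeps(1)$ uniformly in $\vec\gamma$. Hence each $\T_{\vec\gamma}$ is obtained in at least $M\defeq 2^{(A_1 E_1+A_2 E_2+A_3 E_3-\oeps(1))n-o(n)}$ copies.

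Finally, I glue. Take one copy of each $\T_{\vec\gamma}$, relabel variables so that the supports become pairwise disjoint, and form the direct sum $\bigoplus_{\vec\gamma}\T_{\vec\gamma}$. All the $\T_{\vec\gamma}$ are subtensors of the common ambient tensor $\bigotimes_{r,i,j,k}T_{i,j,k}^{\otimes m_{r,i,j,k}}$, so there are $\F$-linear maps on the $X$-, $Y$-, $Z$-spans that send each relabeled copy of a variable back to the original variable; these maps restrict $\bigoplus_{\vec\gamma}\T_{\vec\gamma}$ to $\sum_{\vec\gamma}\T_{\vec\gamma}=\T^{\eps}$, with no monomial overcounted precisely because each monomial of $\T^{\eps}$ lay in exactly one summand. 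Performing this gluing $M$ times in parallel turns the $M$ copies of each $\T_{\vec\gamma}$ into $M$ independent copies of $\T^{\eps}$, which is exactly the asserted bound. The step I expect to require the most care is the decomposition: one must keep the distinction between a ``sum of overlapping subtensors of a common tensor'' and a ``genuine direct sum'' straight, and verify that the particular overlap pattern here (any two summands disjoint in at least one dimension, and every monomial of $\T^{\eps}$ lying in exactly one summand) is precisely what makes the variable-identification restriction reproduce $\T^{\eps}$ with coefficient-one monomials. The continuity bookkeeping giving the $\oeps(1)\cdot n$ loss and the verification of the conventions in \cref{rmk:assumptions_on_complete_split_dist} are routine.
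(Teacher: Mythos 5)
Your proof is correct and follows essentially the same route as the paper: decompose the $\eps$-interface tensor as a sum of ordinary ($\eps=0$) interface tensors indexed by nearby complete split distributions, apply \cref{prop:global-stage-no-eps} once per index to a fresh input copy, invoke continuity of the exponent $A_1E_1+A_2E_2+A_3E_3$ to absorb the $\oeps(1)$ loss, and observe that a direct sum degenerates onto the sum. You merely spell out some details the paper leaves implicit (the disjoint-support structure guaranteeing coefficient-one monomials in the gluing step, and the check against \cref{rmk:assumptions_on_complete_split_dist}).
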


Here, the differences with \cref{prop:global-stage-no-eps} are the followings:
\begin{itemize}
    \item The input becomes multiple independent copies of $\bigbk{\CW_q^{\otimes 2^{\lvl-1}}}^{\otimes n}$. 
    \item The output tensor becomes independent copies of some level-$\lvl$ $\eps$-interface tensor, instead of level-$\lvl$ interface tensor in \cref{prop:global-stage-no-eps}. 
    \item There is a small $2^{o_{1/\eps}(n)}$ factor loss in the number of independent copies of the level-$\lvl$ $\eps$-interface tensor we can keep. 
\end{itemize}
The high-level idea of the proof is the following: for each copy of $\bigbk{\CW_q^{\otimes 2^{\lvl-1}}}^{\otimes n}$ in the input, we apply \cref{prop:global-stage-no-eps} where the target complete split distributions are slightly different in each application (up to $\eps$ in $L_\infty$ distance with some specified complete split distributions). Finally, we merge  the level-$\lvl$ interface tensors into a level-$\lvl$ $\eps$-interface tensor.

\begin{proof}[Proof of \cref{thm:global-stage-with-eps}]
  Let
  \[\left\{\xi_{W, i, j, k}^{(r)}\right\}_{r \in [3], W \in \{X, Y, Z\}, i + j + k = 2^\lvl}\]
  be a set of level-$\lvl$ complete split distributions whose $L_\infty$ distance with
  \[ \left\{\splres_{W, i, j, k}^{(r)}\right\}_{r \in [3], W \in \{X, Y, Z\}, i + j + k = 2^\lvl} \]
  is at most $\eps$. Furthermore, we require that all entries of $\xi_{W, i, j, k}^{(r)}$ are integral multiples of $\frac{1}{A_r \cdot \alpha(i, j, k) \cdot n}$. Let $\tilde{\mathcal{D}}$ be the collection of such sets of complete split distributions. For every $W, i, j, k$, there are $O(n)$ choices for the value of each entry in $\xi_{W, i, j, k}^{(r)}$, and the total number of entries is $3^{2^{\lvl - 1}} = O(1)$ as $\lvl$ is a constant. Thus, the number of $\xi_{W, i, j, k}^{(r)}$ is bounded by $\poly(n) = 2^{o(n)}$, 
    and consequently the number of $\left\{\splres_{W, i, j, k}^{(r)}\right\}_{r \in [3], W \in \{X, Y, Z\}, i + j + k = 2^\lvl}$ (i.e., the size of $\tilde{\mathcal{D}}$) is also bounded by $2^{o(n)}$.  Also, it is not difficult to verify that the level-$\lvl$ $\eps$-interface tensor with parameter list 
    \begin{equation}\label{eq:param-list-1}
        \left\{ \bk{ n \cdot A_r \cdot \alpha^{(r)}(i, j, k), i, j, k, \splres_{X, i, j, k}^{(r)}, \splres^{(r)}_{Y, i, j, k}, \splres^{(r)}_{Z, i, j, k} } \right\}_{r \in [3], i + j + k = 2^{\lvl}} 
    \end{equation}
    is the sum of all level-$\lvl$ interface tensors with parameter lists
    \begin{equation}\label{eq:param-list-2}
        \left\{ \bk{ n \cdot A_r \cdot \alpha^{(r)}(i, j, k), i, j, k, \xi_{X, i, j, k}^{(r)}, \xi_{Y, i, j, k}^{(r)}, \xi_{Z, i, j, k}^{(r)} } \right\}_{r \in [3], i + j + k = 2^{\lvl}}
    \end{equation}
    over all such $\left\{\xi_{W, i, j, k}^{(r)}\right\} \in \tilde{\mathcal{D}}$. 

    Let $E_1, E_2, E_3$ be defined as in \cref{prop:global-stage-no-eps} applied to complete split distributions $\left\{\splres_{W, i, j, k}^{(r)}\right\}$, and let $E_1', E_2', E_3'$ be defined as in \cref{prop:global-stage-no-eps} but applied to some complete split distributions $\left\{\xi_{W, i, j, k}^{(r)}\right\} \in \tilde{\mathcal{D}}$. By \cref{prop:global-stage-no-eps}, each copy of $\bigbk{\CW_q^{\otimes 2^{\lvl-1}}}^{\otimes n}$ can be degenerated into $2^{(A_1 E_1' + A_2 E_2' + A_3 E_3') n - o(n)}$ independent copies of the level-$\lvl$ interface tensor with parameter list as in~\eqref{eq:param-list-2}.
    
    It is not difficult to see that $A_1 E_1 + A_2 E_2 + A_3 E_3$ is continuous with respect to $\left\{\splres_{W, i, j, k}^{(r)}\right\}$, and because the $L_\infty$ distance between $\left\{\splres_{W, i, j, k}^{(r)}\right\}$ and $\left\{\xi_{W, i, j, k}^{(r)}\right\}$ is at most $\eps$, we get that $A_1 E_1' + A_2 E_2' + A_3 E_3' \ge A_1 E_1 + A_2 E_2 + A_3 E_3 - o_{1/\eps}(1)$.

    Thus, $2^{o(n)}$ independent copies of $\bigbk{\CW_q^{\otimes 2^{\lvl-1}}}^{\otimes n}$ can be degenerated into $2^{(A_1 E_1 + A_2 E_2 + A_3 E_3 - o_{1/\eps}(1)) n - o(n)}$ independent copies of a direct sum of all level-$\lvl$  interface tensor with parameter list
    \[ \left\{ \bk{ n \cdot A_r \cdot \alpha^{(r)}(i, j, k), i, j, k, \xi_{X, i, j, k}^{(r)}, \xi_{Y, i, j, k}^{(r)}, \xi_{Z, i, j, k}^{(r)} } \right\}_{r \in [3], i + j + k = 2^{\lvl}} \]
    over all such $\left\{\xi_{W, i, j, k}^{(r)}\right\} \in \tilde{\mathcal{D}}$, and because a direct sum of some tensors can be degenerated into the sum of these tensors, the theorem follows. 
\end{proof}

The remainder of this section aims to show and analyze an algorithm that proves \cref{prop:global-stage-no-eps}.

\subsection{Dividing into Regions}

Similar to \cite{duan2023}, we consider
\[\bigbk{\CW_q^{\otimes 2^{\lvl-1}}}^{\otimes n} \equiv \bigbk{\CW_q^{\otimes 2^{\lvl-1}}}^{\otimes A_1\cdot n}\otimes \bigbk{\CW_q^{\otimes 2^{\lvl-1}}}^{\otimes A_2\cdot n} \otimes \bigbk{\CW_q^{\otimes 2^{\lvl-1}}}^{\otimes A_3\cdot n}\]
for $A_1,A_2,A_3\ge 0$ and $A_1+A_2+A_3 = 1$. We call each of the three factors of the above tensor product a \emph{region}.
For $r\in [3]$, we denote the $r$-th region as
\[ \T^{(r)} \defeq \bigbk{\CW_q^{\otimes 2^{\lvl - 1}}}^{\otimes A_r \cdot n}. \]

The idea is to apply asymmetric hashing on the three regions separately. We will use asymmetric hashing that shares level-$\lvl$ $Z$-blocks in the first region, $Y$-blocks in the second region, and $X$-blocks in the third region. Each region will be degenerated into independent copies of a level-$\lvl$ interface tensor and the output will be the tensor product of the independent copies of the three level-$\lvl$ interface tensors from the three regions. Thus we can analyze each region independently and we only give the detailed analysis on the first region as the analysis for the other two regions follow by symmetry. 

From now on, we will describe the analysis on $\T^{(1)}$ in which the level-$\lvl$ $Z$-variable blocks are shared and we will omit the superscript $(1)$ on all variables for conciseness. 

\subsection{Asymmetric Hashing}
Recall that $\alpha$ is a distribution on $\{(i, j, k) \in \mathbb{Z}_{\ge 0}^3\mid i + j + k = 2^{\lvl}\}$, i.e., it can be viewed as a distribution on level-$\lvl$ constituent tensors. Recall that $\alpha$ induces marginal distributions $\alphx, \alphy, \alphz$. We first zero out $X$-, $Y$-, $Z$-blocks that are not consistent with the marginals $\alphx, \alphy, \alphz$ respectively. 
Let $\numxblock$ be the number of remaining level-$\lvl$ $X$-blocks, and it is not difficult to see that
\begin{equation}
\label{eq:global:NBX}
    \numxblock = 2^{H(\alphx) \cdot A_1 n \pm o(n)}.
\end{equation}
Similarly, let $\numyblock$ and $\numzblock$ be the number of remaining $Y$- and $Z$-blocks, and we have 
\begin{equation}
\label{eq:global:NBY-NBZ}
    \numyblock = 2^{H(\alphy) \cdot A_1 n \pm o(n)}, \quad \numzblock = 2^{H(\alphz) \cdot A_1 n \pm o(n)}. 
\end{equation}
Let $\numalpha$ be the number of remaining block triples that are consistent with $\alpha$. We have 
\begin{equation}
\label{eq:global:numalpha}
    \numalpha = 2^{H(\alpha) \cdot A_1 n \pm o(n)}.
\end{equation}
Finally, let $\numtriple$ be the number of remaining block triples $X_IY_JZ_K$.
\begin{claim}
  \label{cl:global:numtriple-bound}
  $\numtriple = 2^{(H(\alpha) + P_\alpha) \cdot A_1 n \pm o(n)}$.
\end{claim}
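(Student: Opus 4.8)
The plan is to count the remaining level-$\lvl$ block triples $X_IY_JZ_K$ after the zeroing-out according to the marginals $\alphx,\alphy,\alphz$. A triple $X_IY_JZ_K$ is nonzero and remains precisely when, for each coordinate $t\in[A_1 n]$, we have $I_t+J_t+K_t=2^\lvl$, and the sequences $I,J,K$ are consistent with $\alphx,\alphy,\alphz$ respectively. Equivalently, such a triple corresponds to a function $t\mapsto (I_t,J_t,K_t)$ that induces some distribution $\alpha'$ over $\{(i,j,k):i+j+k=2^\lvl\}$ whose marginals are exactly $\alphx,\alphy,\alphz$; that is, $\alpha'\in D$ in the notation of the section. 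So the count is
\[
  \numtriple \;=\; \sum_{\alpha'\in D,\ \alpha'\text{ has denominator }A_1 n}\ \binom{A_1 n}{\{\alpha'(i,j,k)\cdot A_1 n\}_{i,j,k}}.
\]

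The main step is then a standard entropy estimate. By the combinatorial fact in the Distributions and Entropy subsection, each multinomial coefficient in the sum equals $2^{H(\alpha')\cdot A_1 n \pm o(n)}$. The number of distributions $\alpha'$ with denominator $A_1 n$ over the fixed finite support $\{(i,j,k):i+j+k=2^\lvl\}$ (whose size depends only on the constant $\lvl$) is at most $\poly(n)=2^{o(n)}$, so the sum is dominated, up to a $2^{o(n)}$ factor, by its largest term, which is $2^{(\max_{\alpha'\in D}H(\alpha'))\cdot A_1 n \pm o(n)}$. By the definition $P_\alpha = \max_{\alpha'\in D}H(\alpha') - H(\alpha)$, this equals $2^{(H(\alpha)+P_\alpha)\cdot A_1 n \pm o(n)}$, as claimed. (For the lower bound one also needs that $D$ is nonempty and that the maximizer can be taken — or approximated to within $L_\infty$ error $o(1)$ — by a distribution with denominator $A_1 n$; since $\alpha$ itself lies in $D$ with the right denominator, $\numtriple \ge \numalpha = 2^{H(\alpha)\cdot A_1 n \pm o(n)}$, and a rounding argument applied to the true maximizer gives the full bound with only an $o(n)$ loss in the exponent.)

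The only real subtlety — and the step I expect to need the most care — is the rounding in the lower bound: the maximizer $\alpha^*\in D$ of the entropy need not have all entries that are integer multiples of $1/(A_1 n)$, so it need not correspond to an actual index-sequence triple. The fix is routine: perturb $\alpha^*$ to a nearby $\tilde\alpha$ with denominator $A_1 n$ and $\|\tilde\alpha-\alpha^*\|_\infty = O(1/n)$ while preserving (or only $o(1)$-perturbing) the marginals $\alphx,\alphy,\alphz$; by continuity of entropy, $H(\tilde\alpha)\ge H(\alpha^*)-o(1)$, so the corresponding multinomial coefficient is $2^{H(\alpha^*)\cdot A_1 n - o(n)}$, giving the matching lower bound. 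Everything else is bookkeeping: the upper bound is just "sum of $2^{o(n)}$ terms each at most $2^{(H(\alpha)+P_\alpha)A_1 n + o(n)}$."
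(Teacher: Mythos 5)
Your proof is correct and follows essentially the same route as the paper's: count remaining triples by the distribution $\alpha'\in D$ they induce, estimate each count with the multinomial/entropy bound, observe the number of such $\alpha'$ is $\poly(n)$, and pass from the discrete maximum to $\max_{\alpha'\in D}H(\alpha')$ by continuity. If anything, you spell out the rounding subtlety in the lower bound more explicitly than the paper does (the paper dispatches it with a one-line appeal to the continuity of $H$), but the underlying argument is identical.
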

\begin{proof}
Recall that $P_\alpha = \max_{\alpha' \in D} H(\alpha') - H(\alpha)$ where $D$ is the set of distributions whose marginal distributions on the three dimensions are $\alphx, \alphy, \alphz$ respectively.

As we zeroed out $X$-, $Y$-, $Z$-blocks based on $\alphx, \alphy, \alphz$ respectively, all remaining block triples are consistent with one of the distributions $\alpha' \in D$. Additionally, $\alpha'(i, j, k) \cdot A_1 \cdot n$ must be an integer for every $i, j, k$. Let us denote the set of distributions satisfying such properties as $D'$. 

Thus, $\numtriple = \sum_{\alpha' \in D'} 2^{H(\alpha') \cdot A_1 n \pm o(n)}$. As $|D'| = \poly(n)$, we have that
\[ \numtriple = 2^{(\max_{\alpha' \in D'} H(\alpha')) \cdot A_1 n \pm o(n)}. \]
When $n$ approaches $\infty$, the difference between $\max_{\alpha' \in D'} H(\alpha')$ and $\max_{\alpha' \in D} H(\alpha')$ will approach $0$, as the entropy function $H$ is continuous. Thus, 
\[ \numtriple = 2^{(\max_{\alpha' \in D} H(\alpha')) \cdot A_1 n \pm o(n)} = 2^{(H(\alpha) + P_\alpha) \cdot A_1 n \pm o(n)}. \qedhere
\]
\end{proof}

Let $M \in [M_0, 2M_0]$ be a prime number for some integer $M_0$. The value of $M_0$ is yet to be fixed, but we first require that 
\begin{equation}
\label{eq:global:M_0_bounds_1}
M_0 \ge 8\cdot \max\left\{\frac{\numtriple}{\numxblock}, \frac{\numtriple}{\numyblock}\right\}.
\end{equation}
One additional term that lower bounds $M_0$ will be mentioned later. 

We independently pick uniformly random elements $b_0, \{w_t\}_{t=0}^n \in \{0, \ldots, M-1\}$, and define the following hash functions $h_X, h_Y, h_Z : \{0, \ldots, 2^\lvl\}^n \rightarrow \{0, \ldots, M - 1\}$:
\begin{align*}
    h_X(I) &= b_0 + \left(\sum_{t=1}^n w_t \cdot I_t \right) \bmod M,\\
    h_Y(J) &= b_0 + \left(w_0 + \sum_{t=1}^n w_t \cdot J_t \right) \bmod M,\\
    h_Z(K) &= b_0 + \frac{1}{2}\left(w_0+\sum_{t=1}^n w_t \cdot (2^\lvl - K_t) \right) \bmod M.
\end{align*}

Let $B$ be a Salem-Spencer subset of $\{0, \ldots, M - 1\}$ that has size $M^{1-o(1)}$ and  does not contain any nontrivial $3$-term arithmetic progressions (modulo $M$). Then we zero out all blocks $X_I$ with $h_X(I) \notin B$, $Y_J$ with $h_Y(J) \notin B$, and $Z_K$ with $h_Z(K) \notin B$.

For every block triple $X_I Y_J Z_K$ in $\T$, we have that $X_t + Y_t + Z_t = 2^\lvl$ for every $t \in [n]$. Therefore, it is not difficult to verify that $h_X(I) + h_Y(J) \equiv 2h_Z(K) \pmod{M}$. In order for $h_X(I), h_Y(J), h_Z(K) \in B$, we must have $h_X(I) = h_Y(J) = h_Z(K) = b$ for some $b$, because $B$ does not contain any nontrivial $3$-term arithmetic progression (modulo $M$). We say that triples $X_I Y_J Z_K$ with $h_X(I) = h_Y(J) = h_Z(K) = b$ are contained in bucket $b$. 

For every bucket $b$, if it contains two level-$\lvl$ triples $X_I Y_J Z_K$ and $X_I Y_{J'}Z_{K'}$ that share the same $X$-block, then we zero out $X_I$. Similarly, if a bucket contains two level-$\lvl$ triples $X_I Y_J Z_K$ and $X_{I'} Y_{J} Z_{K'}$ that share the same level-$\lvl$ $Y$-block, then we zero out $Y_J$. We repeatedly perform the previous zeroing-outs so that eventually, all remaining triples in the same bucket do not share $X$- or $Y$-blocks. 
As each level-$\lvl$ block triple in $\T$ must belong to some bucket, we get that all remaining triples do not share $X$- or $Y$-blocks, i.e., each level-$\lvl$ block $X_I$ or $Y_J$ is in a unique level-$\lvl$ block triple. For every level-$\lvl$ block  $X_I$ (or $Y_J$), we check whether the unique triple containing it is consistent with the distribution $\alpha$; if not, we zero out $X_I$ (or $Y_J$). We call the tensor after this step $\TPrime$. 

\begin{claim}[Implicit in \cite{cw90}, see also \cite{duan2023}]
\label{cl:global:hash-function-basic-properties}
For a block triple $X_I Y_J Z_K \in \T$, and for every $b \in \{0, \ldots, M - 1\}$,
\[\Pr\Bk{\tall h_X(I) = h_Y(J) = h_Z(K) = b} = \frac{1}{M^2}.\]
Furthermore, for two different block triples $X_I Y_J Z_K,  X_I Y_{J'} Z_{K'} \in \T$ that share the same $X$-block, and for every $b \in \{0, \ldots, M - 1\}$,
\[\Pr\Bk{\tall h_X(I) = h_Y(J') = h_Z(K') = b \;\middle\vert\; h_X(I) = h_Y(J) = h_Z(K) = b} = \frac{1}{M}. \]
This also holds analogously for different block triples that share the same $Y$-block or $Z$-block.
\end{claim}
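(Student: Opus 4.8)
The plan is to view $h_X$, $h_Y$, $h_Z$ as affine-linear functionals over the field $\F_M$ (recall $M$ is prime, and $M \ge M_0 \ge 8$ is odd) in the jointly uniform and independent variables $b_0, w_0, w_1, \dots, w_n \in \F_M$, and then invoke the elementary fact that a tuple of affine functionals on $\F_M^d$, evaluated at a uniformly random point, is jointly uniform over the corresponding $\F_M^k$ exactly when the linear parts of the functionals are $\F_M$-linearly independent. Before that I would record the reduction already implicit in the surrounding text: for a valid level-$\lvl$ triple $X_I Y_J Z_K$ (so $I_t + J_t + K_t = 2^\lvl$ for all $t$) one has $h_X(I) + h_Y(J) \equiv 2 h_Z(K) \pmod M$, and since $2$ is invertible modulo $M$, the event $h_X(I) = h_Y(J) = b$ already forces $h_Z(K) = b$. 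Consequently every event appearing in the claim can be rewritten using only the values of $h_X$ and $h_Y$, which is what makes the exponents of $M$ come out correctly.

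For the first assertion, the linear parts of $h_X(I)$ and $h_Y(J)$, read as vectors indexed by $(b_0, w_0, w_1, \dots, w_n)$, are $(1, 0, I_1, \dots, I_n)$ and $(1, 1, J_1, \dots, J_n)$; the $w_0$-coordinate ($0$ versus $1$) immediately shows these two are linearly independent, so $(h_X(I), h_Y(J))$ is uniform over $\F_M^2$ and $\Pr[h_X(I) = h_Y(J) = b] = 1/M^2$. Together with the reduction above this gives $\Pr[h_X(I) = h_Y(J) = h_Z(K) = b] = 1/M^2$.

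For the second assertion I would treat the three sharing cases uniformly. If two distinct triples share the $X$-block, say $X_I Y_J Z_K$ and $X_I Y_{J'} Z_{K'}$, then distinctness together with $I_t + J_t + K_t = I_t + J'_t + K'_t = 2^\lvl$ forces $J \ne J'$, hence $J_{t^*} \ne J'_{t^*}$ for some coordinate $t^*$; since $0 \le J_{t^*}, J'_{t^*} \le 2^\lvl < M$ this difference is nonzero in $\F_M$. The linear parts of $h_X(I), h_Y(J), h_Y(J')$ are $(1,0,I_\bullet), (1,1,J_\bullet), (1,1,J'_\bullet)$; in a vanishing combination $a,c,d$, the $w_0$-coordinate gives $c + d = 0$, then the $b_0$-coordinate gives $a = 0$, then the $w_{t^*}$-coordinate gives $c(J_{t^*} - J'_{t^*}) = 0$, so $c = d = 0$. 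Hence $(h_X(I), h_Y(J), h_Y(J'))$ is uniform over $\F_M^3$, and conditioning on $h_X(I) = h_Y(J) = b$ — an event of probability $1/M^2$ that already forces $h_Z(K) = b$ — the event $h_Y(J') = b$ has conditional probability $(1/M^3)/(1/M^2) = 1/M$; and $h_Y(J') = b$ together with $h_X(I) = b$ forces $h_Z(K') = b$, completing this case. The $Y$-sharing and $Z$-sharing cases are the same computation with the third functional being $h_X(I')$ instead of $h_Y(J')$: distinctness gives $I \ne I'$ (in the $Z$-sharing case one also uses that the shared $Z$-block is already pinned to $b$ by the conditioning and that $h_X(I') + h_Y(J') = 2 h_Z(K)$ forces one value from the other), and the analogous linear-independence check of $(1,0,I_\bullet), (1,1,J_\bullet), (1,0,I'_\bullet)$ goes through at a coordinate where $I$ and $I'$ differ.

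The only point that needs care is the bookkeeping of which hash-value equalities are \emph{automatically implied} by the matrix-multiplication constraint $h_X + h_Y \equiv 2 h_Z$, so that one divides by $M^2$ (not $M^3$) in the conditional-probability computations and obtains $1/M$ rather than $1/M^2$. The side condition $M > 2^\lvl$, used only to ensure that integer index-sequence differences lying in $\{-2^\lvl, \dots, 2^\lvl\}$ are nonzero modulo $M$, is harmless since $\lvl$ is a fixed constant while $M \ge M_0$ is large; it is exactly the kind of additional lower bound on $M_0$ the text says will be imposed later (cf.\ the remark following \eqref{eq:global:M_0_bounds_1}). No concentration or limiting arguments enter; the whole claim reduces to finite-field linear algebra.
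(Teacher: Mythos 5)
Your proof is correct, and it is exactly the standard pairwise-independence argument (via linear independence of the affine functionals $h_X, h_Y$ over $\F_M$, plus the observation that $h_X(I)+h_Y(J)\equiv 2h_Z(K)$ for valid triples) that the paper leaves implicit by citing \cite{cw90,duan2023}. The one hypothesis you flag, $M>2^\lvl$, is indeed harmless since $M_0$ is later set to be at least $80N$, far exceeding the constant $2^\lvl$.
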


\begin{claim}
\label{cl:global:prob-IJK-remain}
For every $b \in B$ and for every level-$\lvl$ block triple $X_I Y_J Z_K \in \T$ that is consistent with $\alpha$, the probability that $X_I Y_J Z_K$ remains in $\TPrime$ conditioned on $h_X(I) = h_Y(J) = h_Z(K) = b$ is $\ge \frac{3}{4}$. 
\end{claim}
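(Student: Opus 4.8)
The plan is a standard Coppersmith--Winograd-style hashing argument: conditioned on $h_X(I) = h_Y(J) = h_Z(K) = b \in B$, I will exhibit two ``collision'' events $\mathcal{E}_X, \mathcal{E}_Y$ such that $X_I Y_J Z_K$ survives into $\TPrime$ whenever neither occurs, and then bound each of $\Pr[\mathcal{E}_X], \Pr[\mathcal{E}_Y]$ by $\tfrac18$ using \cref{cl:global:hash-function-basic-properties} together with the lower bound \eqref{eq:global:M_0_bounds_1} on $M_0$.

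First I would set up the events. Fix $b \in B$ and a triple $X_I Y_J Z_K \in \T$ consistent with $\alpha$, and condition on $h_X(I) = h_Y(J) = h_Z(K) = b$; in particular, since $b \in B$, none of $X_I, Y_J, Z_K$ is zeroed out by the step discarding blocks that hash outside $B$, so $X_I Y_J Z_K$ lies in bucket $b$. Let $\mathcal{E}_X$ be the event that some triple $X_I Y_{J'} Z_{K'} \in \T$ with $(J', K') \ne (J, K)$ also falls into bucket $b$, and let $\mathcal{E}_Y$ be the event that some triple $X_{I'} Y_J Z_{K'} \in \T$ with $(I', K') \ne (I, K)$ also falls into bucket $b$. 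If neither event occurs, then within bucket $b$ the block $X_I$ appears only in the triple $X_I Y_J Z_K$ and the block $Y_J$ appears only in that same triple; hence the asymmetric cleanup never zeroes out $X_I$ or $Y_J$ (deleting triples cannot create new within-bucket collisions, so no later round of the cleanup does either), and the final $\alpha$-consistency filter leaves $X_I$ and $Y_J$ untouched because the unique triple through each of them is $X_I Y_J Z_K$, which is $\alpha$-consistent. So $X_I Y_J Z_K \in \TPrime$, and it remains to prove $\Pr[\mathcal{E}_X \mid \cdots] + \Pr[\mathcal{E}_Y \mid \cdots] \le \tfrac14$.

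Next I would do the counting. Since $X_I Y_J Z_K$ is $\alpha$-consistent, $X_I$ is $\alphx$-consistent, so $I$ has a fixed ``type''; permuting the $n$ coordinates shows that every $\alphx$-consistent $X$-block lies in exactly the same number of triples of $\T$, and since each triple is charged to its unique $X$-block this common count is $\numtriple / \numxblock$. Thus $\mathcal{E}_X$ is a union over at most $\numtriple / \numxblock$ triples of the event ``that triple lands in bucket $b$'', and by the ``furthermore'' part of \cref{cl:global:hash-function-basic-properties} each such event has conditional probability $1/M$; hence $\Pr[\mathcal{E}_X \mid \cdots] \le \frac{\numtriple / \numxblock}{M} \le \frac{\numtriple / \numxblock}{M_0} \le \tfrac18$ by \eqref{eq:global:M_0_bounds_1} (using $M \ge M_0$). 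The same reasoning with $Y$ in place of $X$ gives $\Pr[\mathcal{E}_Y \mid \cdots] \le \frac{\numtriple / \numyblock}{M_0} \le \tfrac18$, and adding the two bounds gives the claim. I do not expect a real obstacle here; the only point needing a little care is the bookkeeping of the iterated cleanup in the first step, namely checking that once the two collision events are excluded, neither $X_I$ nor $Y_J$ can be zeroed out in any round of the cleanup nor by the subsequent $\alpha$-filter.
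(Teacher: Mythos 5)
Your proof is correct and follows essentially the same route as the paper's: identify collisions at $X_I$ or $Y_J$ in bucket $b$ as the only obstruction, count $\numtriple/\numxblock$ potential colliders per $X$-block (and similarly for $Y$), bound each collision by $1/M$ via \cref{cl:global:hash-function-basic-properties}, and conclude by a union bound and \eqref{eq:global:M_0_bounds_1}. The only difference is that you spell out more carefully why the iterated cleanup and the final $\alpha$-consistency check cannot remove $X_I Y_J Z_K$ once the two collision events are excluded (the set of triples only shrinks, so no new within-bucket collisions can arise), a point the paper treats as implicit.
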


\begin{proof}
The only way that $X_I Y_J Z_K$ does not remain in $\TPrime$ conditioned on $h_X(I) = h_Y(J) = h_Z(K) = b$ is when some other block triples that share the same $X$-block or the same $Y$-block are hashed to the same bucket $b$. 

Right before the hashing step, 
the total number of block triples remaining is $\numtriple$, and the number of $X$-blocks is $\numxblock$. By symmetry, each $X$-block is in the same number of block triples, which is $\frac{\numtriple}{\numxblock}$. Thus, the total number of block triples that share the same $X$-block as  $X_I Y_J Z_K$ is $\frac{\numtriple}{\numxblock} - 1$. For each of them, the probability that they are hashed to the same bucket $b$ with $X_I Y_J Z_K$ is $\frac{1}{M}$ by \cref{cl:global:hash-function-basic-properties}. Therefore, by union bound, the probability that any of them is hashed to the same bucket with $X_I Y_J Z_K$ is at most
\[
  \frac{\numtriple}{M \cdot \numxblock} \le \frac{\numtriple}{M_0 \cdot \numxblock} \;\overset{\cref{eq:global:M_0_bounds_1}}{\le}\; \frac{1}{8}.
\]

Similarly, the probability that any block triple that shares the same level-$\lvl$ $Y$-block is mapped to the same bucket as $X_I Y_J Z_K$ is at most $\frac{1}{8}$. By union bound, the probability that $X_I Y_J Z_K$ will be zeroed out is $\le \frac{1}{4}$. 
\end{proof}

\begin{claim}
  \label{cl:global:expected-IJK-remain}
  The expected number of level-$\lvl$ block triples in $\TPrime$ is at least 
  \[\numalpha \cdot M_0^{-1-o(1)}.\]
\end{claim}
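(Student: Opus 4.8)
The plan is to lower bound the expected size of $\TPrime$ by linearity of expectation, summed over the $\numalpha$ level-$\lvl$ block triples that are consistent with $\alpha$ and present just before the hashing step. These are exactly the triples that can possibly appear in $\TPrime$: every triple consistent with $\alpha$ has all of its $X$-, $Y$-, $Z$-blocks consistent with the marginals $\alphx, \alphy, \alphz$, so it survived the first zeroing-out; and conversely, any triple remaining in $\TPrime$ is consistent with $\alpha$, since the cleanup step explicitly zeroes out any $X_I$ or $Y_J$ whose unique triple is inconsistent with $\alpha$. Thus it suffices to show that each fixed triple $X_I Y_J Z_K$ consistent with $\alpha$ remains in $\TPrime$ with probability at least $M_0^{-1-o(1)}$, after which the claim follows by multiplying by $\numalpha$.

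For a fixed such triple, I would first invoke the identity $h_X(I) + h_Y(J) \equiv 2 h_Z(K) \pmod{M}$ (valid since $I_t + J_t + K_t = 2^\lvl$ for all $t$) together with the fact that the Salem--Spencer set $B$ has no nontrivial $3$-term arithmetic progression: these force $h_X(I) = h_Y(J) = h_Z(K) = b$ for some single $b \in B$ whenever all three hash values lie in $B$. Hence the event that $X_I Y_J Z_K$ survives is the disjoint union, over $b \in B$, of its intersection with the event $E_b := \{h_X(I) = h_Y(J) = h_Z(K) = b\}$, so
\[
  \Pr[X_I Y_J Z_K \text{ survives}] = \sum_{b \in B} \Pr[E_b] \cdot \Pr[X_I Y_J Z_K \text{ survives} \mid E_b].
\]
By \cref{cl:global:hash-function-basic-properties}, $\Pr[E_b] = 1/M^2$, and by \cref{cl:global:prob-IJK-remain}, $\Pr[X_I Y_J Z_K \text{ survives} \mid E_b] \ge 3/4$ for every $b \in B$. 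Summing over the $|B| = M^{1-o(1)}$ elements of $B$ yields $\Pr[X_I Y_J Z_K \text{ survives}] \ge \tfrac{3}{4}|B|/M^2 \ge \tfrac{3}{4}M^{-1-o(1)}$, which is at least $M_0^{-1-o(1)}$ because $M \le 2M_0$ and the constant $3/4$ together with the factor-$2$ loss are absorbed into the $M_0^{-o(1)}$ term. Multiplying by $\numalpha$ via linearity of expectation then gives the stated bound.

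I do not foresee a genuine obstacle; this is a routine argument given the preceding claims. The only points to handle with care are: (i) the reduction of the survival event to a disjoint union over single buckets of $B$, which is precisely the Salem--Spencer property already exploited in the hashing setup; (ii) the bookkeeping of the $o(1)$ exponent, where one notes that the ratio $|B|/M^2$, the constant $3/4$, and the passage from $M$ to $M_0$ each change $M_0^{-1}$ only by factors of the form $M_0^{\pm o(1)}$; and (iii) the observation that $\TPrime$ contains exactly the triples consistent with $\alpha$ that survive, so no triples outside the count $\numalpha$ contribute. In particular, no independence among distinct triples is needed --- linearity of expectation alone suffices.
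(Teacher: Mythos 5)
Your proof is correct and follows essentially the same route as the paper's: linearity of expectation over the $\numalpha$ triples consistent with $\alpha$, conditioning on the bucket $b \in B$ via \cref{cl:global:hash-function-basic-properties} and \cref{cl:global:prob-IJK-remain}, and summing over $|B| = M^{1-o(1)}$ buckets. You are slightly more explicit about the disjointness of the events $E_b$ (via the Salem--Spencer property) and the passage from $M$ to $M_0$, but these are exactly the steps the paper leaves implicit.
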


\begin{proof}
For every level-$\lvl$ block triple $X_I Y_J Z_K \in \T$ that is consistent with $\alpha$, and for every $b \in B$, the probability that $h_X(I) = h_Y(J) = h_Z(K) = b$ is $\frac{1}{M^2}$ by \cref{cl:global:hash-function-basic-properties}. Also, by \cref{cl:global:prob-IJK-remain}, $X_I Y_J Z_K$ will remain in $\TPrime$ with probability $\ge \frac{3}{4} \cdot \frac{1}{M^2}$. 

Summing over all block triples $X_I Y_J Z_K$ and all $b \in B$, we get that the expected number of block triples in $\TPrime$ is at least
\[
\numalpha \cdot |B| \cdot \frac{3}{4} \cdot \frac{1}{M^2} = \numalpha \cdot M_0^{-1-o(1)}.
\qedhere
\]
\end{proof}

\subsection{Compatibility Zero-Out I}

Recall that $\left\{\splres_{X, i, j, k}, \splres_{Y, i, j, k}, \splres_{Z, i, j, k}\right\}_{i+j+k = 2^\lvl}$ are level-$\lvl$ complete split distributions for the $X$-, $Y$-, $Z$-blocks.

Let 
\[ S^{(I, J, K)}_{i, j, k} \defeq \{t \in [n]\mid I_t = i, J_t = j, K_t = k\}, \]
and 
\[S^{(K)}_{*, *, k} \defeq \{t \in [n] \mid K_t = k\}. \]
If clear from the context, we will drop the superscript $(I, J, K)$ or $(K)$. 

Recall that in $\TPrime$, every level-$\lvl$ block $X_I$ is in a unique block triple $X_I Y_J Z_K$. For every level-1 block $X_{\hat{I}} \in X_I$, we will zero out $X_{\hat{I}}$ if $\split(\hat I, S_{i,j,k}) \ne \splres_{X, i, j, k}$ for any $i, j, k$ (recall the definition of $\split(\hat I, S_{i,j,k})$ in \cref{def:split-hatI}). Similarly, every level-$\lvl$ block $Y_J$ is in a unique block triple, and we zero out every $Y_{\hat{J}} \in Y_J$ where $\split(\hat J, S_{i,j,k}) \ne \splres_{Y, i, j, k}$ for any $i, j, k$.

We can not perform the same zeroing out for $Z$-variables, because in $\TPrime$ each level-$\lvl$ $Z$-block is not in a unique block triple and $S_{i, j,k}$ is not well-defined just given the $Z$-block. Instead, for every level-1 block $Z_{\hat K} \in Z_K$, we zero out $Z_{\hat K}$ if $\split(\hat K, S_{*,*,k}) \ne \splresavg_{Z, *,*,k}$ for any $k$,  where
\[
\splresavg_{Z, *,*,k} = \frac{1}{\sum_{i+j = 2^\lvl - k} \alpha(i, j, k)}\sum_{i+j = 2^\lvl - k} \alpha(i, j, k) \cdot \splres_{Z,i,j,k}
\]
is the average complete split distribution for constituent tensors whose third coordinate is $k$.

We call the tensor after the previous zeroing-outs $\TDoublePrime$.

Next, we are ready to define the notion of compatibility. The notion is adapted from \cite{duan2023}, which is a crucial ingredient in their analysis (and ours). 

\begin{definition}[Compatibility]
  \label{def:global:compatibility}
  For some $I, J, K$, a level-1 block $Z_{\hat{K}} \in Z_K$ is compatible with a level-$\lvl$ triple $X_IY_J Z_K$ if 
  \begin{enumerate}
  \item 
    \label{item:global:compatibility1} For every $(i, j, k) \in \mathbb{Z}_{\ge 0}^3$ with $ i + j + k = 2^{\lvl}, i = 0 \text{ or } j = 0$,  $\split(\hat K, S_{i,j,k}) = \splres_{Z,i,j,k}$.
  \item
    \label{item:global:compatibility2}
    For every index $k \in \{0, 1, \ldots, 2^\lvl\}$,  $\split(\hat K, S_{*,*,k}) = \splresavg_{Z, *,*,k}$.
  \end{enumerate}
\end{definition}

\begin{claim}
  \label{cl:global:compatible}
  In $\TDoublePrime$, for every level-1 block triple $X_{\hat I}Y_{\hat J}Z_{\hat K}$ and the level-$\lvl$ block triple $X_I Y_J Z_K$ that contains it, $Z_{\hat K}$ is compatible with $X_I Y_J Z_K$.
\end{claim}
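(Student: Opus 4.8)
The plan is to verify directly the two defining conditions of \cref{def:global:compatibility} for an arbitrary level-$1$ triple $X_{\hat I} Y_{\hat J} Z_{\hat K}$ that survives in $\TDoublePrime$, where $X_I Y_J Z_K$ denotes the level-$\lvl$ triple containing it. First note that this containing triple is well-defined and that $X_I$, $Y_J$ are each contained in a \emph{unique} level-$\lvl$ triple: this is precisely the property of $\TPrime$ guaranteed by the asymmetric hashing cleanup, and $\TDoublePrime$ is obtained from $\TPrime$ by further zero-outs, so any surviving level-$1$ triple lies inside such an $X_I Y_J Z_K$. Condition~(2) of \cref{def:global:compatibility} is then immediate: in forming $\TDoublePrime$ from $\TPrime$ we zeroed out every $Z_{\hat K} \in Z_K$ with $\split(\hat K, S_{*,*,k}) \ne \splresavg_{Z, *,*,k}$ for some $k$ (this zeroing is well-defined on $Z$-blocks because $S^{(K)}_{*,*,k}$ depends only on $K$), so a surviving $Z_{\hat K}$ satisfies $\split(\hat K, S_{*,*,k}) = \splresavg_{Z, *,*,k}$ for every $k$.

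For Condition~(1), fix $(i,j,k)$ with $i+j+k = 2^\lvl$ and suppose first that $i = 0$. The key observation is that at any position $t \in S_{0,j,k}$ we have $I_t = 0$, and the only sequence in $\{0,1,2\}^{2^{\lvl-1}}$ summing to $0$ is the all-zeros chunk; hence the $t$-th length-$2^{\lvl-1}$ chunk of $\hat I$ is identically $0$. Since $X_{\hat I} Y_{\hat J} Z_{\hat K}$ is a nonzero level-$1$ triple, each coordinate of $\hat I + \hat J + \hat K$ equals $2$, so the $t$-th chunk of $\hat K$ equals $\vec 2$ minus the $t$-th chunk of $\hat J$. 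Summing over $t \in S_{0,j,k}$ gives $\split(\hat K, S_{0,j,k})(L) = \split(\hat J, S_{0,j,k})(\vec 2 - L)$ for every $L$. Now $Y_{\hat J}$ survives in $\TDoublePrime$, and that step zeroed out $Y_{\hat J} \in Y_J$ whenever $\split(\hat J, S_{i',j',k'}) \ne \splres_{Y,i',j',k'}$ for some $(i',j',k')$ (well-defined because $Y_J$ is in a unique triple in $\TPrime$), so in particular $\split(\hat J, S_{0,j,k}) = \splres_{Y,0,j,k}$. Combining with the reflection identity $\splres_{Z,0,j,k}(L) = \splres_{Y,0,j,k}(\vec 2 - L)$ recorded in \cref{rmk:assumptions_on_complete_split_dist}, we get $\split(\hat K, S_{0,j,k}) = \splres_{Z,0,j,k}$. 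The case $j = 0$ is symmetric with the roles of $X$ and $Y$ swapped: an all-zeros $\hat J$-chunk forces the $t$-th chunk of $\hat K$ to be $\vec 2$ minus that of $\hat I$, so $\split(\hat K, S_{i,0,k})(L) = \split(\hat I, S_{i,0,k})(\vec 2 - L) = \splres_{X,i,0,k}(\vec 2 - L)$ using the survival condition for $X_{\hat I}$ (valid since $X_I$ is in a unique triple in $\TPrime$), and then $\splres_{X,i,0,k}(\vec 2 - L) = \splres_{Z,i,0,k}(L)$ by the identity $\splres_{X,i,0,k}(L) = \splres_{Z,i,0,k}(\vec 2 - L)$ from \cref{rmk:assumptions_on_complete_split_dist}.

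One routine edge case to dispatch is $S_{i,j,k} = \emptyset$, which occurs exactly when $\alpha(i,j,k) = 0$ (all triples remaining in $\TPrime$ are consistent with $\alpha$); there both sides of the required identity are vacuous, so nothing needs to be checked. Everything else is bookkeeping: keeping track of which zero-out step supplies which constraint, and invoking the uniqueness of the containing triple exactly where $S_{i,j,k}$ for the $X$- or $Y$-survival condition must be well-defined.

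I do not expect a genuine obstacle here. The entire content is the single structural observation that an all-zeros $X$-chunk (resp.\ $Y$-chunk) forces the corresponding $Z$-chunk to be the $\vec 2$-reflection of the $Y$-chunk (resp.\ $X$-chunk), after which compatibility follows mechanically from the survival conditions built into $\TDoublePrime$ together with the ``without loss of generality'' reflection symmetries of \cref{rmk:assumptions_on_complete_split_dist}. The mildest difficulty is purely organizational — making sure the triple $(I,J,K)$ used throughout is consistently the one containing $X_{\hat I} Y_{\hat J} Z_{\hat K}$, which follows from the uniqueness properties of $\TPrime$.
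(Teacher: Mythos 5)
Your proof is correct and takes essentially the same approach as the paper: Condition~(2) follows immediately from the zero-out that defines $\TDoublePrime$, and Condition~(1) follows from the structural observation that an all-zeros chunk in one dimension forces the remaining two chunks to be $\vec{2}$-reflections of each other, combined with the survival condition on the unique containing triple and the reflection identities of \cref{rmk:assumptions_on_complete_split_dist}. The only cosmetic differences are that you treat $i=0$ before $j=0$ (the paper does the reverse and leaves $i=0$ to symmetry) and that you explicitly flag the vacuous case $S_{i,j,k}=\emptyset$.
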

\begin{proof}
  First of all, \cref{item:global:compatibility2} is clearly satisfied, because we zeroed out every $\hat{K}$ with $\split(\hat K, S_{*,*,k}) \ne \splresavg_{Z, *,*,k}$ for any $k$. Next, we show that \cref{item:global:compatibility1} is also satisfied.

  Recall that we zeroed out all $X_{\hat{I}}$ where $\split(\hat I, S_{i,j,k}) \ne \splres_{X, i, j, k}$ for any $i, j, k$. Let $(i, j, k) \in \mathbb{Z}^3_{\ge 0}$ where $i+j+k = 2^\lvl$ and $j = 0$. 
  As $X_{\hat{I}} Y_{\hat{J}} Z_{\hat{K}}$ remains in $\TDoublePrime$, $\split(\hat I, S_{i, j, k}) = \splres_{X, i, j, k}$. Because $j = 0$, $J_t = 0$ for every $t \in S_{i, j, k}$, which implies that $\bigl(\hat{J}_{(t-1) \cdot 2^{\lvl - 1} + 1}, \, \hat{J}_{(t-1) \cdot 2^{\lvl - 1} + 2}, \, \ldots \, , \, \hat{J}_{t \cdot 2^{\lvl - 1}}\bigr) = \vec{0}$. As $\hat{I}_{\hat{t}} + \hat{J}_{\hat{t}} + \hat{K}_{\hat{t}} = 2$ for every $\hat{t}$, we have that 
  \[\bigbk{\hat{K}_{(t-1) \cdot 2^{\lvl - 1} + 1}, \, \hat{K}_{(t-1) \cdot 2^{\lvl - 1} + 2}, \, \ldots \, , \, \hat{K}_{t \cdot 2^{\lvl - 1}}} = \vec{2} - \bigbk{\hat{I}_{(t-1) \cdot 2^{\lvl - 1} + 1}, \, \hat{I}_{(t-1) \cdot 2^{\lvl - 1} + 2}, \, \ldots \, , \, \hat{I}_{t \cdot 2^{\lvl - 1}}}\]
  for every $t \in S_{i, j, k}$.
  Thus, for every $L \in \{0, 1, 2\}^{2^{\lvl - 1}}$, the proportion of $L$ appearing in $\bigl(\hat{I}_{(t-1) \cdot 2^{\lvl - 1} + 1},$ $\hat{I}_{(t-1) \cdot 2^{\lvl - 1} + 2}, \, \ldots \, , \, \hat{I}_{t \cdot 2^{\lvl - 1}}\bigr)$ over $t \in S_{i, j, k}$ is exactly the proportion of $\vec{2} - L$ appearing in $\bigl(\hat{K}_{(t-1) \cdot 2^{\lvl - 1} + 1},$ $\hat{K}_{(t-1) \cdot 2^{\lvl - 1} + 2}, \, \ldots \,,\, \hat{K}_{t \cdot 2^{\lvl - 1}}\bigr)$. In other words,
  $\split(\hat K, S_{i, j, k})(L) = \split(\hat I, S_{i, j, k})(\vec{2} - L) = \splres_{X, i, j, k}(\vec{2} - L)$. By \cref{rmk:assumptions_on_complete_split_dist}, this implies that $\split(\hat K, S_{i, j, k}) = \splres_{Z,i,j,k}$.

  We can show that $\split(\hat K, S_{i, j, k}) = \splres_{Z,i,j,k}$ with $i = 0$ similarly.
\end{proof}

\subsection{Compatibility Zero-Out II: Unique Triple}

In this step, we zero out level-1 $Z$-blocks that are compatible with more than one level-$\lvl$ triples. To do so, we check if each level-1 $Z$-block $Z_{\hat K}$ is compatible with multiple level-$\lvl$ triples. If so, we zero it out and it becomes a ``hole''. Note that after this step, each remaining level-1 $Z$-block $Z_{\hat{K}}\in Z_K$ is compatible with a unique level-$\lvl$ triple $(X_I, Y_J, Z_K)$ containing it.

\subsection{Usefulness Zero-Out}

Next, we further zero out some level-1 $Z$-blocks using the following definition of \emph{usefulness}.

\begin{definition}[Usefulness]
For a level-1 block $Z_{\hat K}$ and a level-$\lvl$ triple $X_I Y_J Z_K$ containing it, if for all $(i,j,k)$ we have $\split(\hat K, S_{i,j,k}) = \splres_{Z,i,j,k}$, then we say that $Z_{\hat K}$ is \emph{useful} for $X_I Y_J Z_K$.
\end{definition}

For each $Z_{\hat K}$, it appears in a unique triple $X_I Y_J Z_K$ by the previous zeroing out. Furthermore, if $Z_{\hat K}$ is not useful for this triple, we zero out $Z_{\hat K}$. We call the current tensor $\TTriplePrime$.

If there is no hole, then the subtensor of the remaining tensor over $X_I Y_J Z_K$ is isomorphic to 
\[ \mathcal{T}^* = \bigotimes_{i+j+k = 2^\lvl} T_{i,j,k}^{\otimes A_1 \cdot \alpha(i, j, k) \cdot n} [\splres_{X, i, j, k}, \splres_{Y, i, j, k}, \splres_{Z, i, j, k}], \]
i.e., it is the level-$\lvl$ interface tensor with parameter list 
\[ \left\{ \bk{ A_1 \cdot \alpha(i, j, k) \cdot n, i, j, k, \splres_{X, i, j, k}, \splres_{Y, i, j, k}, \splres_{Z, i, j, k} } \right\}_{i + j + k = 2^{\lvl}}. \]
More formally:

\begin{claim}
  \label{cl:global:after-unique-triple-zero-out}
  For any level-$\lvl$ block triple $X_I Y_J Z_K$ contained in $\TDoublePrime$ (or equivalently, $\TPrime$), the subtensor of $\TTriplePrime$ restricted to blocks $X_I, Y_J, Z_K$ is a subtensor of $\T^*$, where the missing variables in this subtensor are exactly those in level-1 blocks $Z_{\hat K}$ that are compatible with multiple level-$\lvl$ triples in $\TDoublePrime$.
\end{claim}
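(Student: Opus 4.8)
The plan is to trace, block by block, which level-$1$ variable blocks inside $X_I$, $Y_J$, $Z_K$ survive the three zeroing-out steps (compatibility zero-out I, compatibility zero-out II, the usefulness zero-out), and to compare the outcome with $\T^*$. The starting point is that since $X_I Y_J Z_K$ is a level-$\lvl$ block triple of $\TPrime$, it is consistent with $\alpha$ — at the end of the hashing step we zeroed out every $X_I$ or $Y_J$ whose unique triple is not $\alpha$-consistent, and compatibility zero-out I only touches level-$1$ blocks — so $|S^{(I,J,K)}_{i,j,k}| = \alpha(i,j,k)\cdot A_1 n$ for all $i+j+k = 2^\lvl$, and permuting the $n$ coordinates to group equal $(I_t,J_t,K_t)$ gives $(T^{(\lvl)})^{\otimes n}\big\vert_{X_I,Y_J,Z_K}\equiv\bigotimes_{i+j+k=2^\lvl}T_{i,j,k}^{\otimes\alpha(i,j,k)A_1 n}$. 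Under this isomorphism a level-$1$ block $X_{\hat I}\in X_I$ is carried to the tuple of restrictions of $\hat I$ to the sets $S^{(I,J,K)}_{i,j,k}$, and ``$\split(\hat I,S^{(I,J,K)}_{i,j,k})=\splres_{X,i,j,k}$ for all $i,j,k$'' says exactly that each coordinate of this tuple survives inside the factor $T_{i,j,k}^{\otimes\alpha(i,j,k)A_1 n}[\splres_{X,i,j,k},\splres_{Y,i,j,k},\splres_{Z,i,j,k}]$ of $\T^*$; the same holds for $Y_{\hat J}\in Y_J$ with $\splres_{Y,i,j,k}$ and for $Z_{\hat K}\in Z_K$ with $\splres_{Z,i,j,k}$. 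The only step removing a level-$1$ block inside $X_I$ (resp.\ $Y_J$) is compatibility zero-out I, which removes $X_{\hat I}$ precisely when $\split(\hat I,S^{(I,J,K)}_{i,j,k})\ne\splres_{X,i,j,k}$ for some $i,j,k$; hence after all three steps the surviving $X$-blocks in $X_I$ are exactly the $X$-blocks of $\T^*$, and the same for $Y$, so $\T^*$ has no missing $X$- or $Y$-variable.

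For the $Z$-side I would first show that every nonzero term of $\TTriplePrime\big\vert_{X_I,Y_J,Z_K}$ uses only variables of $\T^*$. Indeed, if $X_{\hat I}Y_{\hat J}Z_{\hat K}$ is such a term then it is a term of $\TDoublePrime$, so by \cref{cl:global:compatible} $Z_{\hat K}$ is compatible with $X_IY_JZ_K$; since $Z_{\hat K}$ survived compatibility zero-out II it is compatible with exactly one level-$\lvl$ triple, necessarily $X_IY_JZ_K$, and since it survived the usefulness zero-out it is useful for $X_IY_JZ_K$, i.e.\ $\split(\hat K,S^{(I,J,K)}_{i,j,k})=\splres_{Z,i,j,k}$ for every $i,j,k$, which makes $Z_{\hat K}$ a $Z$-block of $\T^*$. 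Combined with the $X$/$Y$ analysis and the fact that both $\TTriplePrime\big\vert_{X_I,Y_J,Z_K}$ and $\T^*$ are zero-outs of the common ambient tensor $(T^{(\lvl)})^{\otimes n}\big\vert_{X_I,Y_J,Z_K}$ (so coefficients agree), this shows $\TTriplePrime\big\vert_{X_I,Y_J,Z_K}$ is a subtensor of $\T^*$. (A level-$1$ $Z$-block of $Z_K$ that survives globally only because it is useful for some \emph{other} triple $X_{I'}Y_{J'}Z_K$ contributes no nonzero term over $X_I,Y_J,Z_K$, again by \cref{cl:global:compatible}, so it is a harmless dangling variable.)

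It remains to identify the missing $Z$-variables. Let $Z_{\hat K}$ be a $Z$-block of $\T^*$, i.e.\ $\split(\hat K,S^{(I,J,K)}_{i,j,k})=\splres_{Z,i,j,k}$ for all $i,j,k$. Using $|S^{(I,J,K)}_{i,j,k}|=\alpha(i,j,k)A_1 n$ and $S^{(K)}_{*,*,k}=\bigsqcup_{i+j=2^\lvl-k}S^{(I,J,K)}_{i,j,k}$, we get the averaged identity $\split(\hat K,S^{(K)}_{*,*,k})=\tfrac{1}{\sum_{i+j=2^\lvl-k}\alpha(i,j,k)}\sum_{i+j=2^\lvl-k}\alpha(i,j,k)\,\splres_{Z,i,j,k}=\splresavg_{Z,*,*,k}$ for every $k$, so $Z_{\hat K}$ survives compatibility zero-out I; moreover $Z_{\hat K}$ satisfies both conditions of \cref{def:global:compatibility} with $X_IY_JZ_K$ (Condition 1 is the $i=0$ or $j=0$ case of the above equalities, Condition 2 is the averaged identity), so it is compatible with $X_IY_JZ_K$, and it is also useful for it by assumption. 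Hence $Z_{\hat K}$ is removed in the three steps if and only if it is also compatible with a different level-$\lvl$ triple of $\TDoublePrime$ (removed by compatibility zero-out II); and when it is not removed, it still participates in a nonzero term of $\T^*$ — on each chunk with $i=0$ or $j=0$ the partner is forced and is $\splres_X$- resp.\ $\splres_Y$-consistent by \cref{rmk:assumptions_on_complete_split_dist}, while on the remaining chunks one realizes a joint chunk distribution with marginals $\splres_{X,i,j,k},\splres_{Y,i,j,k},\splres_{Z,i,j,k}$ — and the partner $X$- and $Y$-blocks survive compatibility zero-out I, so the term is present in $\TTriplePrime\big\vert_{X_I,Y_J,Z_K}$ as well. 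Since $Z$-blocks that are not $Z$-blocks of $\T^*$ are already removed by compatibility zero-out I or the usefulness zero-out, the set of $Z$-variables of $\T^*$ absent from $\TTriplePrime\big\vert_{X_I,Y_J,Z_K}$ is exactly $\{Z_{\hat K}\in Z_K : Z_{\hat K}$ compatible with more than one level-$\lvl$ triple of $\TDoublePrime\}$, which is the claim.

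The step I expect to require the most care is the $Z$-side: combining \cref{cl:global:compatible}, \cref{rmk:assumptions_on_complete_split_dist}, and the $\alpha$-consistency of $X_IY_JZ_K$ to pin down that a $Z$-block of $\T^*$ disappears during the three steps precisely when it is compatible with more than one level-$\lvl$ triple of $\TDoublePrime$ — in particular lining up ``useful for the unique compatible triple'' with survival of the usefulness zero-out, and checking that a surviving block still sits in a nonzero term. The $X$/$Y$ bookkeeping and the coefficient matching are routine.
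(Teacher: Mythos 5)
Your proof follows essentially the same route as the paper's: under the isomorphism $\TPrime\vert_{X_I,Y_J,Z_K}\equiv\bigotimes_{i+j+k=2^\lvl}T_{i,j,k}^{\otimes\alpha(i,j,k)A_1 n}$ you identify the surviving $X$- and $Y$-blocks with the $X$- and $Y$-blocks of $\T^*$ (since compatibility zero-out I enforces exactly the same $\splres$-conditions and nothing else), and for $Z$ you observe that a $Z$-block of $\T^*$ automatically satisfies the averaged condition and Def.~\ref{def:global:compatibility} with respect to $X_IY_JZ_K$, so it survives unless compatibility zero-out II discards it for being compatible with more than one triple. The paper states the same implication (usefulness $\Rightarrow$ compatibility zero-out I's averaged condition) and draws the same conclusion, so the two proofs align. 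Your extra care about the ``dangling'' $Z_{\hat K}$ that survives globally because it is useful for a different triple $X_{I'}Y_{J'}Z_K$ but contributes no term over $X_IY_JZ_K$ is a nice clarification of a point the paper leaves implicit.

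One small caution: the parenthetical argument that a surviving $Z$-block of $\T^*$ ``still participates in a nonzero term of $\T^*$'' by realizing a joint chunk distribution with the prescribed marginals $\splres_{X,i,j,k},\splres_{Y,i,j,k},\splres_{Z,i,j,k}$ is both unnecessary and not true in general. For certain choices of the three marginals there is no coupling on triples $(\sigma_X,\sigma_Y,\sigma_Z)$ with $\sigma_X+\sigma_Y+\sigma_Z=\vec 2$ having exactly those marginals (indeed two of the three marginals can force the third); in that degenerate case $\T^*$ has its prescribed $Z$-block present in the variable set but with no term, so ``participates in a nonzero term'' fails. Fortunately you don't need that claim: to conclude $Z_{\hat K}$ is not missing from $\TTriplePrime\vert_{X_I,Y_J,Z_K}$ it is enough that $Z_{\hat K}$ survives all three zeroing-out steps (which you already established) and lies in $Z_K$; zero-outs retain the entire specified variable set regardless of whether each variable carries a nonzero term. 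Dropping that detour leaves a clean, correct proof identical in substance to the paper's.
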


\begin{proof}
Initially,
$$\left.\TPrime\right\vert_{X_I Y_J Z_K} \equiv \bigotimes_{i+j+k = 2^\lvl} T_{i,j,k}^{\otimes A_1 \cdot \alpha(i, j, k) \cdot n}. $$

To show $\left.\TTriplePrime\right|_{X_I Y_J Z_K}$ is a subtensor of $\T^*$, it suffices to show that the  level-1 $X$-blocks ($Y$-blocks or $Z$-blocks resp.) remaining in $\left.\TTriplePrime\right|_{X_I Y_J Z_K}$ have the property that $\split(\hat I, S_{i,j,k}) = \splres_{X, i, j, k}$ ($\split(\hat J, S_{i,j,k}) = \splres_{Y, i, j, k}$ or $\split(\hat K, S_{i,j,k}) = \splres_{Z, i, j, k}$ resp.) for every $i, j, k$. This is true because we enforced these constraints on $X$- and $Y$-blocks in the compatibility zeroing-out step, and enforced the constraints on $Z$-blocks by zeroing out $Z_{\hat{K}}$ that is not useful for the unique level-$\lvl$ triple that contains it. Furthermore, these are the only constraints we have on the level-1 $X$- and $Y$-blocks, so the set of $X$- and $Y$-variables in $\left.\TTriplePrime\right|_{X_I Y_J Z_K}$ is the same as that in $\T^*$. It remains to analyze which level-1 $Z$-blocks are missing in $\left.\TTriplePrime\right|_{X_I Y_J Z_K}$.

There are three constraints we enforced on level-1 $Z$-blocks:
\begin{enumerate}
\item In the compatibility zeroing-out, we enforced that for every index $k \in \{0, 1, \ldots, 2^\lvl\}$,  $\split(\hat K, S_{*,*,k}) = \splresavg_{Z,*,*,k}$.
\item In the unique triple zeroing-out, we zeroed out $Z_{\hat{K}}$ that is compatible with multiple level-$\lvl$ triples.
\item In the unique triple zeroing-out, we zeroed out $Z_{\hat{K}}$ that is not useful for the unique level-$\lvl$ triple $X_I Y_J Z_K$ that contains it. Thus, we will have that $\split(\hat K, S_{i,j,k}) = \splres_{Z, i, j, k}$ for every $i, j, k$ if $Z_{\hat{K}}$ remains.
\end{enumerate}
The third constraint implies the first constraint, because if the third constraint holds, then for every $k$, 
\begin{align*}
\Split(\hat K, S_{*,*,k}) &= \frac{1}{\sum_{i, j} \alpha(i, j, k)} \sum_{i+j = 2^\lvl - k} \alpha(i, j, k) \cdot \Split(\hat K, S_{i, j, k}) \\
&= \frac{1}{\sum_{i, j} \alpha(i, j, k)}\sum_{i+j = 2^\lvl - k} \alpha(i, j, k) \cdot \splres_{Z, i, j, k} = \splresavg_{Z, *, *, k}.
\end{align*}
Therefore, we can ignore the first condition. As a result, the set of level-1 $Z$-blocks not in  $\left.\TTriplePrime\right|_{X_I Y_J Z_K}$ but in $\T^*$ are exactly those that are compatible with multiple block triples in $\TDoublePrime$.
\end{proof}

Also, note that for different remaining block triples $X_I Y_J Z_K$, $\TTriplePrime\vert_{X_I Y_J Z_K}$ are level-1-independent, i.e., they do not share the same level-1 blocks. This is because $X_I$ and $Y_J$ are already in unique level-$\lvl$ triples in $\TPrime$; for every level-1 block $Z_{\hat{K}}$, \cref{cl:global:compatible} shows that $Z_{\hat{K}}$ is compatible with every level-$\lvl$ triple $X_I Y_J Z_K$ containing it, and then we zeroed out $Z_{\hat{K}}$ that are compatible with multiple triples. Thus, every remaining $Z_{\hat{K}}$ in $\TTriplePrime$ is contained a unique level-$\lvl$ triple as well. As a result, we can write
\[
  \TTriplePrime = \bigoplus_{X_I Y_J Z_K \; \textup{remaining}} \TTriplePrime \vert_{X_I Y_J Z_K}
  \numberthis \label{eq:T'''_as_direct_sum}
\]
as a direct sum of broken copies of $\T^*$.

\subsection{Fixing Holes}
\label{sec:global:fix-holes}

Next, we analyze the fraction of holes in the broken copies of $\T^*$ contained in $\TTriplePrime$. To do so, we define the following notion of \emph{typicalness}, which will then be used to define the quantity $\pcomp$:

\begin{definition}[Typicalness]
  A level-1 $Z$-block $Z_{\hat{K}}$ in some level-$\lvl$ $Z$-block $Z_K$ is \emph{typical} if $\split(\hat K, S_{*,*,k}) = \splresavg_{Z,*,*,k}$ for every $k$. When $Z_K$ is consistent with $\alphz$, this condition can be equivalently written as $\split(\hat K, [A_1 n]) = \splavg{Z}$, where we recall that $\splavg{Z} = \sum_{i, j, k} \alpha(i, j, k) \cdot \splres_{Z, i, j, k}.$
\end{definition}

\begin{definition}[$\pcomp$]
  For fixed $Z_{\hat{K}}$ and $Z_K$ where $Z_{\hat{K}} \in Z_K$ and $Z_{\hat{K}}$ is typical, $\pcomp$ is the probability that a uniformly random block triple $X_I Y_J Z_K$ consistent with $\alpha$ is compatible with $\hat{K}$.
\end{definition}

By symmetry, this probability is the same for different $Z_{\hat{K}}$ and $Z_K$ where $Z_{\hat{K}} \in Z_K$ and $Z_{\hat{K}}$ is typical, so $\pcomp$ is well-defined. Since holes only arise when some $Z_{\hat K}$ is compatible with multiple triples, the value of $\pcomp$ is closely related to the fraction of holes, and is given by the following claim.

\begin{claim}
  The value of $\pcomp$ is
  $$2^{\left(\lambda_Z - H(\splavg{Z}) + H(\alphz)\right) A_1 \cdot n \pm o(n)}, $$
  where we recall that 
  \[
    \lambda_Z = \sum_{i, j, k: i = 0 \textup{ or } j = 0} \alpha(i, j, k) \cdot H(\splres_{Z, i, j, k}) + \sum_k \alpha(\+, \+, k) \cdot H(\splresavg_{Z, \+, \+, k}),
  \]
  \[ \alpha(\+, \+, k) = \sum_{i, j > 0} \alpha(i, j, k), \quad
    \splresavg_{Z, \+, \+, k} = \frac{1}{\alpha(\+, \+, k)} \sum_{i, j > 0} \alpha(i, j, k) \cdot \splres_{Z, i, j, k},
  \]
  and
  \[
    \splavg{Z} = \sum_{i, j, k} \alpha(i, j, k) \cdot \splres_{Z, i, j, k}.
  \]
\end{claim}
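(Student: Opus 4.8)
The plan is to write $\pcomp = \prod_{k=0}^{2^\lvl} p_k$ as a product over the index $k$ of per-$k$ probabilities, evaluate each factor by a multinomial count, and then sum the exponents; the quantities $\lambda_Z$, $H(\splavg{Z})$, $H(\alphz)$ will appear naturally. Fix the typical block $Z_{\hat K}$ and the level-$\lvl$ block $Z_K \supseteq Z_{\hat K}$, assuming (as holds for all $Z_K$ surviving to this point) that $Z_K$ is consistent with $\alphz$. Since $Z_{\hat K}$ is typical, the second condition of \cref{def:global:compatibility} holds for \emph{every} level-$\lvl$ triple $X_I Y_J Z_K$ with this $Z_K$, so $\pcomp$ is exactly the probability that the first condition of \cref{def:global:compatibility} holds for a uniformly random triple $X_I Y_J Z_K$ consistent with $\alpha$ (with $Z_K$ held fixed). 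Sampling such a triple is the same as choosing, independently for each $k$, a uniformly random partition of $S_{*,*,k} = \{t : K_t = k\}$ (of size $\alphz(k)\cdot A_1 n$) into classes indexed by pairs $(i,j)$ with $i+j = 2^\lvl - k$, class $(i,j)$ receiving $\alpha(i,j,k)\cdot A_1 n$ coordinates; the first condition decouples accordingly, constraining for each $k$ only the classes with $i = 0$ or $j = 0$. For $k < 2^\lvl$ these are exactly $b_1^{(k)} = (0,2^\lvl-k,k)$ and $b_2^{(k)} = (2^\lvl-k,0,k)$, and the constraint is that their induced level-$1$ chunk distributions equal $\splres_{Z,b_1^{(k)}}$ and $\splres_{Z,b_2^{(k)}}$; for $k = 2^\lvl$ the only class is $(0,0,2^\lvl)$ and the constraint $\split(\hat K, S_{0,0,2^\lvl}) = \splres_{Z,0,0,2^\lvl}$ holds vacuously, since $\split(\hat K, S_{*,*,2^\lvl}) = \splresavg_{Z,*,*,2^\lvl} = \splres_{Z,0,0,2^\lvl}$ by typicalness (so $p_{2^\lvl} = 1$).

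Next I would compute $p_k$ for $k < 2^\lvl$. By typicalness the level-$1$ chunks of the coordinates in $S_{*,*,k}$ are distributed as $\splresavg_{Z,*,*,k}$, and by definition $\alphz(k)\,\splresavg_{Z,*,*,k} = \alpha(b_1^{(k)})\,\splres_{Z,b_1^{(k)}} + \alpha(b_2^{(k)})\,\splres_{Z,b_2^{(k)}} + \alpha(\+,\+,k)\,\splresavg_{Z,\+,\+,k}$, so forcing the two boundary classes to carry the prescribed chunk distributions makes the leftover ("interior") coordinates automatically carry $\splresavg_{Z,\+,\+,k}$ among the interior classes. Writing $p_k$ as the ratio of the number of partitions obeying the chunk-distribution constraints on $b_1^{(k)},b_2^{(k)}$ (interior classes then filled arbitrarily by the leftover coordinates) to $\binom{\alphz(k)A_1 n}{(\alpha(i,j,k)A_1 n)_{i,j}}$, expanding every multinomial via $\binom{N}{(p_s N)_s} = 2^{N(H(p)\pm o(1))}$ (equivalently $\log_2 m! = m\log_2 m - m\log_2 e + O(\log m)$) and using that there are only $O(1)$ chunk types and classes, all $\log_2(\,\cdot\, A_1 n)$ terms and all $\log_2 e$ terms cancel, leaving
\[
\log_2 p_k = \bigl(\alpha(b_1^{(k)})H(\splres_{Z,b_1^{(k)}}) + \alpha(b_2^{(k)})H(\splres_{Z,b_2^{(k)}}) + \alpha(\+,\+,k)H(\splresavg_{Z,\+,\+,k}) - \alphz(k)H(\splresavg_{Z,*,*,k})\bigr) A_1 n \pm o(n).
\]

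It then remains to sum over $k$ and recognize the pieces. Using $\alpha(0,0,2^\lvl) = \alphz(2^\lvl)$ and $\splres_{Z,0,0,2^\lvl} = \splresavg_{Z,*,*,2^\lvl}$, the trivial $k = 2^\lvl$ factor fits the same formula, so summing: the terms $\alpha(i,j,k)H(\splres_{Z,i,j,k})$ over triples with $i = 0$ or $j = 0$, together with $\sum_k \alpha(\+,\+,k)H(\splresavg_{Z,\+,\+,k})$, assemble into exactly $\lambda_Z$ by its definition, while the terms $-\alphz(k)H(\splresavg_{Z,*,*,k})$ sum to $-(H(\splavg{Z}) - H(\alphz))$. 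For this last identity I would use that $\splavg{Z} = \sum_k \alphz(k)\splresavg_{Z,*,*,k}$ and that, by \cref{rmk:assumptions_on_complete_split_dist}, $\splresavg_{Z,*,*,k}$ is supported on length-$2^{\lvl-1}$ sequences of coordinate-sum $k$; hence for a random chunk $C \sim \splavg{Z}$ the value $\sum C$ is a deterministic function of $C$ with marginal law $\alphz$ and conditional law $\splresavg_{Z,*,*,\sum C}$, and the chain rule for entropy gives $H(\splavg{Z}) = H(\alphz) + \sum_k \alphz(k)H(\splresavg_{Z,*,*,k})$. Combining everything, $\log_2 \pcomp = (\lambda_Z - H(\splavg{Z}) + H(\alphz))\,A_1 n \pm o(n)$, which is the claim.

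The main obstacle I expect is the per-$k$ count: organizing the numerator so the interior classes genuinely factor out, and then verifying that the bulky factorial expression collapses after cancellation of all the $\log_2(\,\cdot\, A_1 n)$ and $\log_2 e$ terms. Some care is also needed for the degenerate cases — $k = 2^\lvl$ (handled above), values of $k$ for which $b_1^{(k)}$ or $b_2^{(k)}$ carries zero $\alpha$-mass (both the constraint and the corresponding entropy term are then trivial), and the integer rounding implicit in the quantities $\alpha(i,j,k)A_1 n$ (absorbed into $o(n)$).
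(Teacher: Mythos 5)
Your proof is correct, and the overall structure of \cref{def:global:compatibility} is understood correctly, but the route you take is genuinely different from the paper's. The paper expresses $\pcomp$ as a ratio of two \emph{global} counts of tuples $(I,J,K,\hat K)$ (typical-and-compatible over typical), computes the numerator by fixing the triple $X_IY_JZ_K$ and counting compatible $\hat K$, and achieves a clean product structure by passing to an equivalent definition (``Compatibility$'$'') in which the constraints fall on disjoint index sets. You instead fix $\hat K$ and $Z_K$, observe that the random partition of $[A_1n]$ induced by a random $\alpha$-consistent triple is independent across the strata $S_{*,*,k}$, and so factor $\pcomp = \prod_k p_k$; each $p_k$ is then a hypergeometric-style ratio over random partitions of $S_{*,*,k}$. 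Both computations hinge on the same averaging identity, $\alphz(k)\splresavg_{Z,*,*,k} = \alpha(0,2^\lvl-k,k)\splres_{Z,0,2^\lvl-k,k} + \alpha(2^\lvl-k,0,k)\splres_{Z,2^\lvl-k,0,k} + \alpha(\+,\+,k)\splresavg_{Z,\+,\+,k}$ --- the paper uses it to prove the Compatibility/Compatibility$'$ equivalence, while you use it to argue that once the two boundary classes carry their prescribed chunk distributions, the interior coordinates automatically carry $\splresavg_{Z,\+,\+,k}$. Your entropy chain-rule step $H(\splavg{Z}) = H(\alphz) + \sum_k \alphz(k)H(\splresavg_{Z,*,*,k})$, justified via \cref{rmk:assumptions_on_complete_split_dist}, corresponds to the paper's count of typical $Z_{\hat K}$ as $2^{H(\splavg{Z})A_1n \pm o(n)}$; in fact your decomposition essentially splits the paper's two global counts into aligned per-$k$ pieces and cancels them stratum by stratum. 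What each buys: the paper's global count is shorter to write once Compatibility$'$ is set up; your per-$k$ factorization is more self-contained and makes the source of each entropy term transparent, at the cost of a somewhat heavier multinomial bookkeeping in each $p_k$ (which you correctly flag, and which does collapse as you expect). The degenerate cases you call out ($k=2^\lvl$ where the two boundary triples coincide, vanishing $\alpha$-mass, integer rounding absorbed into $o(n)$) are exactly the right ones to check and are handled consistently.
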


\begin{proof}
By symmetry, it suffices to compute the following two quantities, and $\pcomp$ will be the ratio between them: (1) the number of tuples $(I, J, K, \hat{K})$ where $X_I Y_J Z_K$ is consistent with $\alpha$, $\hat{K} \in K$, $Z_{\hat{K}}$ is typical, and $Z_{\hat{K}}$ is compatible with $X_I Y_J Z_K$; (2) the number of $(I, J, K, \hat{K})$ where $X_I Y_J Z_K$ is consistent with $\alpha$, $\hat{K} \in K$, and $Z_{\hat{K}}$ is typical.

We first compute the second quantity. First, the number of typical $Z_{\hat{K}}$ is $2^{H(\splavg{Z}) \cdot A_1 \cdot n \pm o(n)}$. Each of these $Z_{\hat{K}}$ uniquely determines a level-$\lvl$ block $Z_K$. Also, for  each $Z_K$, the number of block triples $X_I Y_J Z_K$ consistent with $\alpha$ is $\frac{\numalpha}{\numzblock} = 2^{(H(\alpha) - H(\alphz)) \cdot A_1 \cdot n \pm o(n)}$. Therefore, the second quantity is
\begin{equation}
  \label{eq:global:pcomp_denominator}
  2^{(H(\splavg{Z}) + H(\alpha) - H(\alphz)) \cdot A_1 \cdot n \pm o(n)}.
\end{equation}

Next, we compute the first quantity, which is the number of $(I, J, K, \hat{K})$ where $X_I Y_J Z_K$ is consistent with $\alpha$, $\hat{K} \in K$, $Z_{\hat{K}}$ is typical, and $Z_{\hat{K}}$ is compatible with $X_I Y_J Z_K$. By \cref{item:global:compatibility2} in \cref{def:global:compatibility}, if $Z_{\hat{K}}$ is compatible with any level-$\lvl$ block triple, then it is typical. Thus, we can drop the condition that $Z_{\hat{K}}$ is typical, and equivalently count the number of $(I, J, K, \hat{K})$ where $X_I Y_J Z_K$ is consistent with $\alpha$, $\hat{K} \in K$, and $Z_{\hat{K}}$ is compatible with $X_I Y_J Z_K$.

First, the number of block triples $X_I Y_J Z_K$ consistent with $\alpha$ is $\numalpha$. Then, for each such block triple, we count the number of $Z_{\hat{K}} \in Z_K$ that is compatible with it. If we fix some $X_I Y_J Z_K$, then we also have fixed the values of $S_{i, j, k}$ for all $i, j, k$. Then we can rewrite the condition for $Z_{\hat{K}}$ being compatible with $X_I Y_J Z_K$ equivalently as follows:

\begin{definition}[Compatibility']
  \label{def:global:compatibility'}
  For level-$\l$ triple $X_I Y_J Z_K$ consistent with $\alpha$, a level-1 block $Z_{\hat{K}} \in Z_K$ is compatible with $X_IY_J Z_K$ if
  \begin{itemize}
  \item For every $\{(i, j, k) \in \mathbb{Z}_{\ge 0}^3 \mid i + j + k = 2^{\lvl}, \, i = 0 \text{ or } j = 0\}$,  $\split(\hat K, S_{i,j,k}) = \splres_{Z,i,j,k}$. (This is exactly \cref{item:global:compatibility1} in \cref{def:global:compatibility}).
  \item For every $k$, let $S_{\+, \+, k} \defeq \bigcup_{i > 0, j > 0} S_{i, j, k}$. Then $\split(\hat K, S_{\+, \+, k}) = \splresavg_{Z, \+, \+, k}$.
  \end{itemize}
\end{definition}

\cref{item:global:compatibility1} and the second condition above imply the original condition $\split(\hat K, S_{*, *,k}) = \splresavg_{Z, *, *, k}$ in \cref{item:global:compatibility2}, because
\begin{align*}
  \Split(\hat K, S_{*, *,k})
  &= \frac{1}{\sum_{i, j \ge 0} \alpha(i, j, k)} \sum_{i, j \ge 0} \alpha(i, j, k) \cdot \Split(\hat K, S_{i, j, k})\\
  &= \frac{1}{\sum_{i, j \ge 0} \alpha(i, j, k)} \left(\sum_{i, j > 0} \alpha(i, j, k) \cdot \Split(\hat K, S_{i, j, k}) + \sum_{i = 0 \text{ or } j = 0} \alpha(i, j, k) \cdot \Split(\hat K, S_{i, j, k})\right)\\
  &= \frac{1}{\sum_{i, j \ge 0} \alpha(i, j, k)} \left(\sum_{i, j > 0} \alpha(i, j, k) \cdot \Split(\hat{K}, S_{\+, \+, k}) + \sum_{i = 0 \text{ or } j = 0} \alpha(i, j, k) \cdot \splres_{Z,i,j,k}\right)\\
  &= \frac{1}{\sum_{i, j \ge 0} \alpha(i, j, k)} \left(\sum_{i, j > 0} \alpha(i, j, k) \cdot \splresavg_{Z, \+, \+, k} + \sum_{i = 0 \text{ or } j = 0} \alpha(i, j, k) \cdot \splres_{Z,i,j,k}\right)\\
  &= \frac{1}{\sum_{i, j \ge 0} \alpha(i, j, k)}  \sum_{i, j \ge 0} \alpha(i, j, k) \cdot \splres_{Z,i,j,k} \,=\, \splresavg_{Z, *, *, k}.
\end{align*}
Similarly, \cref{item:global:compatibility1} and \cref{item:global:compatibility2} together imply the second condition in \cref{def:global:compatibility'}. Therefore, \cref{def:global:compatibility'} is an equivalent definition of compatibility.

In \cref{def:global:compatibility'}, there are constraints on the complete split distributions of $\hat{K}$ on some disjoint subsets of $[A_1 n]$. Therefore, we can count the number of valid subsequences of $\hat{K}$ for each of these subsets of indices, and multiply them together. For every $(i, j, k) \in \mathbb{Z}_{\ge 0}^3$ where $i + j + k = 2^{\lvl}$ while $i = 0$ or $j = 0$, we require that $\split(\hat{K}, S_{i, j, k}) = \splres_{Z, i, j, k}$, so the number of possibilities of $\hat{K}$ on the subset of indices $S_{i,j, k}$ is $2^{H(\splres_{Z, i, j, k}) \cdot |S_{i, j, k}| \pm o(n)} = 2^{H(\splres_{Z, i, j, k}) \cdot \alpha(i, j, k) \cdot A_1 n \pm o(n)}$. For every $k$, we require that $\split(\hat K, S_{\+, \+,k}) = \splresavg_{Z, \+, \+, k}$, so the number of possibilities of $\hat K$ on $S_{\+, \+,k}$ is $2^{H(\splresavg_{Z, \+, \+, k}) \cdot |S_{\+, \+, k}| \pm o(n)} = 2^{H(\splresavg_{Z, \+, \+, k}) \cdot \alpha(\+, \+, k) \cdot A_1 n \pm o(n)}$. Overall, the number of possible compatible $\hat{K}$, multiplied by the number of block triples $X_I Y_J Z_K$, is
\begin{equation}
  \label{eq:global:pcomp_numerator}
  \numalpha \cdot \prod_{\substack{i, j, k \\ i = 0 \textup{ or } j = 0}} 2^{H(\splresavg_{Z,i, j, k}) \cdot \alpha(i, j, k) \cdot A_1 n \pm o(n)} \cdot \prod_{k} 2^{H(\splresavg_{Z, \+, \+, k}) \cdot \alpha(\+, \+, k) \cdot A_1 n \pm o(n)} = 2^{(H(\alpha) + \lambda_Z) \cdot A_1 n \pm o(n)}.
\end{equation}

Finally, as mentioned, $\pcomp$ is the ratio between \eqref{eq:global:pcomp_numerator} and \eqref{eq:global:pcomp_denominator}, so
\[ \pcomp = \left(2^{(H(\alpha) + \lambda_Z) \cdot A_1 n \pm o(n)}\right) \Big/ \left(2^{(H(\splavg{Z}) + H(\alpha) - H(\alphz)) \cdot A_1 \cdot n \pm o(n)}\right) = 2^{\left(\lambda_Z - H(\splavg{Z}) + H(\alphz)\right) A_1 \cdot n \pm o(n)} \]
as desired.
\end{proof}

\begin{claim}
  \label{cl:global:prob-of-holes}
  For every $b \in B$, every level-$\lvl$ block triple $X_I Y_J Z_K$ consistent with $\alpha$, and for each typical $Z_{\hat{K}} \in Z_K$, the probability that $Z_{\hat{K}}$ is compatible with multiple triples in $\TDoublePrime$ is at most
  \[ \frac{\numalpha \cdot \pcomp}{\numzblock \cdot M_0}, \]
  conditioned on $h_X(I) = h_Y(J) = h_Z(K) = b$.
\end{claim}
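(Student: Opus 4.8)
The plan is to control the bad event by a union bound over ``competitor'' triples: level-$\lvl$ triples $X_{I'}Y_{J'}Z_K$ that share the level-$\lvl$ $Z$-block $Z_K$ with $X_IY_JZ_K$, are consistent with $\alpha$, and are compatible with $\hat K$. The two ingredients are \cref{cl:global:hash-function-basic-properties} (the pairwise-type independence of the hash functions for two triples sharing a $Z$-block) and the Salem--Spencer property of $B$, which together force any surviving competitor to lie in the same bucket $b$.

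First I would count the competitors. Let $\mathcal C$ be the set of level-$\lvl$ triples $X_{I'}Y_{J'}Z_K$ consistent with $\alpha$ for which $Z_{\hat K}$ is compatible with $X_{I'}Y_{J'}Z_K$; such a triple is determined by $I'$ alone, since $J'_t = 2^\lvl - K_t - I'_t$. Because $X_IY_JZ_K$ is consistent with $\alpha$, the block $Z_K$ is consistent with $\alphz$, and $Z_{\hat K}$ is typical, so the definition of $\pcomp$ applies directly: there are $\numalpha/\numzblock$ triples through $Z_K$ consistent with $\alpha$ (by symmetry), of which a $\pcomp$ fraction are compatible with $\hat K$, hence $|\mathcal C| = \pcomp\cdot \numalpha/\numzblock$.

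Next I would show the bad event is contained in $\bigcup_{X_{I'}Y_{J'}Z_K\in\mathcal C,\ I'\ne I}\{h_X(I') = h_Y(J') = b\}$. If $Z_{\hat K}$ is compatible with two or more triples present in $\TDoublePrime$, choose one of them, $X_{I'}Y_{J'}Z_K$, with $I'\ne I$ (at most one compatible triple can equal $X_IY_JZ_K$, so such a choice exists regardless of whether $X_IY_JZ_K$ itself is compatible). Since this triple is a level-$\lvl$ triple of $\TDoublePrime$, it is a triple of $\TPrime$ and therefore consistent with $\alpha$, so $X_{I'}Y_{J'}Z_K\in\mathcal C$; and since it survives hashing while $h_Z(K)=b$, we have $h_X(I'),h_Y(J')\in B$ with $h_X(I')+h_Y(J')\equiv 2b\pmod M$, which makes $h_X(I'),b,h_Y(J')$ a $3$-term arithmetic progression inside $B$ and hence forces $h_X(I') = h_Y(J') = b$.

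Finally I would finish with the union bound. Conditioned on $h_X(I) = h_Y(J) = h_Z(K) = b$, each $X_{I'}Y_{J'}Z_K\in\mathcal C$ with $I'\ne I$ is a distinct triple sharing the $Z$-block $Z_K$ with $X_IY_JZ_K$, so \cref{cl:global:hash-function-basic-properties} gives $\Pr[h_X(I') = h_Y(J') = h_Z(K) = b \mid h_X(I) = h_Y(J) = h_Z(K) = b] = 1/M$, and $h_Z(K)=b$ holds automatically. Summing over the at most $|\mathcal C| = \pcomp\cdot\numalpha/\numzblock$ competitors and using $M\ge M_0$ yields the claimed bound $\numalpha\cdot\pcomp/(\numzblock\cdot M_0)$. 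I expect the only subtle point to be the containment argument of the third paragraph — spelling out that a compatible competitor present in $\TDoublePrime$ is necessarily consistent with $\alpha$ and necessarily hashed to bucket $b$ — but both facts are immediate from how $\TPrime$ was built and from the $3$-AP-freeness of $B$, and the probabilistic heart of the proof is the one-line union bound above.
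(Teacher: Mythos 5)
Your proof is correct and follows essentially the same route as the paper's: count the competitor triples through $Z_K$ compatible with $Z_{\hat K}$ (there are $\pcomp\cdot \numalpha/\numzblock$ of them by the definition of $\pcomp$), observe that any such competitor present in $\TDoublePrime$ must land in bucket $b$, and union-bound over competitors using the conditional $1/M$ bound from \cref{cl:global:hash-function-basic-properties}. You are more explicit than the paper in two places — spelling out the Salem--Spencer argument that forces a surviving competitor into bucket $b$, and noting that a competitor $X_{I'}Y_{J'}Z_K$ with $I'\ne I$ exists whether or not $X_IY_JZ_K$ itself is compatible with $Z_{\hat K}$ — but these are clarifications, not a different argument.
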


\begin{proof}
By the definition of $\pcomp$, the total number of level-$\lvl$ block triples $X_{I'} Y_{J'} Z_{K}$ that is compatible with $Z_{\hat{K}}$ is $\frac{\numalpha}{\numzblock} \cdot \pcomp$. For each $X_{I'} Y_{J'} Z_{K}$ different from  $X_{I} Y_{J} Z_{K}$, the probability that $X_{I'} Y_{J'} Z_{K}$ is mapped to the same bucket $b$ as $X_{I} Y_{J} Z_{K}$ is $\frac{1}{M}$ by \cref{cl:global:hash-function-basic-properties}. Thus, by the union bound, the probability that any of them is mapped to the same bucket as $X_{I} Y_{J} Z_{K}$ is upper bounded by $\frac{\numalpha}{\numzblock} \cdot \pcomp \cdot \frac{1}{M} \le \frac{\numalpha \cdot \pcomp}{\numzblock \cdot M_0}$. Furthermore, if none of them are mapped to the same bucket as $X_{I} Y_{J} Z_{K}$, then $Z_{\hat{K}}$ is compatible with a unique triple $X_{I} Y_{J} Z_{K}$ in $\TDoublePrime$, so the claim follows.
\end{proof}

Recall that we require $M_0$ to be at least $8\cdot \max\bigBK{\frac{\numtriple}{\numxblock}, \frac{\numtriple}{\numyblock}}$. Now, we add another (and final) constraint: $M_0 \ge \frac{\numalpha \cdot \pcomp}{\numzblock} \cdot 80N$.  That is, we will set $M_0$ to be
\begin{align*}
& \max\left\{\frac{8\numtriple}{\numxblock}, \frac{8\numtriple}{\numyblock},  \frac{\numalpha \cdot \pcomp}{\numzblock} \cdot 80N\right\} \\
&= 2^{\max\{H(\alpha) - P_{\alpha} - H(\alphx), \; H(\alpha) - P_{\alpha} - H(\alphy), \; H(\alpha) + \lambda_Z - H(\splavg{Z})\} \cdot A_1 \cdot n \pm o(n)}.
\end{align*}

Now, for every $b \in B$ and every level-$\lvl$ block triple $X_I Y_J Z_K$ that is consistent with $\alpha$ with $h_X(I) = h_Y(J) = h_Z(K) = b$,
\begin{enumerate}
\item by \cref{cl:global:prob-IJK-remain}, it remains in $\TPrime$ with probability $\ge \frac{3}{4}$;
\item by \cref{cl:global:prob-of-holes},  linearity of expectation and Markov's inequality, among  $Z_{\hat{K}} \in Z_K$ that is useful for $X_I Y_J Z_K$ (this implies that $Z_{\hat{K}}$ is typical, so we could apply \cref{cl:global:prob-of-holes}), the fraction of $Z_{\hat{K}}$ that becomes a hole in $\TTriplePrime$ is at most $10/80N = \frac{1}{8N}$ with probability at least $9/10$.
\end{enumerate}

Therefore, by the union bound, with constant probability, the subtensor of $\TTriplePrime$ over $X_I, Y_J, Z_K$ is a copy of $\T^*$ whose fraction of holes does not exceed $1/8N$. The expected number of $X_I Y_J Z_K$ with $h_X(I) = h_Y(J) = h_Z(K) = b$ over all $b \in B$ is $\numalpha \cdot M^{-1-o(1)}$, so overall, $\TTriplePrime$ contains $\numalpha \cdot M^{-1-o(1)}$ copies of $\T^*$ whose fraction of holes is $1/8N$.

By \cref{cor:fix-interface}, we can degenerate them into $\numalpha \cdot M^{-1-o(1)}$ unbroken copies of $\T^*$.

\subsection{Summary}
So far, we have degenerated $\bigbk{\CW_q^{\otimes 2^{\lvl - 1}}}^{\otimes A_1 \cdot n}$ into $\ge \numalpha \cdot M_0^{-1-o(1)}$ copies of a level-$\lvl$ interface tensor $\T^*$ with parameter list
\[ \left\{ \bk{ n \cdot A_1 \cdot \alpha^{(1)}(i, j, k), i, j, k, \splres^{(1)}_{X, i, j, k}, \splres^{(1)}_{Y, i, j, k}, \splres^{(1)}_{Z, i, j, k} } \right\}_{i + j + k = 2^{\lvl}}. \]
By plugging in the bounds of $\numalpha$ and $M_0$, we see that the number of copies we obtained (in the first region) is 
\[
  2^{A_1 n \cdot \min\BK{H(\alphx^{(1)}) - P_\alpha^{(1)}, \, H(\alphy^{(1)}) - P_\alpha^{(1)}, \,  H(\splavg{Z}^{(1)}) - \lambda_Z^{(1)}} - o(n)}.
\]

By symmetry, we can apply the same method to the second and third region, where for the second region we perform asymmetric hashing that shares $Y$-variable blocks, and for the third region we perform asymmetric hashing that shares $X$-blocks. Taking the tensor product of these results returned by our method on the three regions concludes the proof.

\section{Constituent Stage}
\label{sec:constituent-tensor}
\label{sec:constituent}

In the constituent stage for level-$\lvl$ for some $\lvl > 1$, the input is an $s$-term level-$\lvl$ $\eps$-interface tensor with parameters
\[\{(n_t, i_t, j_t, k_t, \splresXt, \splresYt, \splresZt)\}_{t \in [s]}\] that meet the following constraints:
\begin{enumerate}
\item For every $t \in [s]$, if $\hat{i}_1 + \hat{i}_2 + \cdots + \hat{i}_{2^{\lvl-1}} \ne i_t$, then $\splresXt(\hat{i}_1, \hat{i}_2, \ldots, \hat{i}_{2^{\lvl-1}}) = 0$. Similar constraints hold for $\splresYt$ and  $\splresZt$.
\item For every $t \in [s]$ with $j_t = 0$, and every $\hat{i}_1, \hat{i}_2, \ldots, \hat{i}_{2^{\lvl-1}}$, 
  $$
  \splresXt(\hat{i}_1, \hat{i}_2, \ldots, \hat{i}_{2^{\lvl-1}}) = \splresZt(2 - \hat{i}_1, 2 - \hat{i}_2, \ldots, 2 - \hat{i}_{2^{\lvl-1}}).
  $$
  Similar relations hold between $\splresXt$ and $\splresYt$ where $k_t = 0$ and between $\splresYt$ and $\splresZt$ where $i_t = 0$. 
\end{enumerate}
Additionally, we let $n =  \sum_t n_t$ and $N = 2^{\lvl - 1} \cdot n$. 
The goal of this stage is to degenerate the input to the tensor product between a matrix multiplication tensor and multiple independent copies of a level-$(\lvl-1)$ $\eps'$-interface tensor for some $\eps' > 0$.

Before we apply the laser method, let us handle the terms $t \in [s]$ in the level-$\lvl$ $\eps$-interface tensor where $i_t = 0$, $j_t = 0$ or $k_t = 0$, which are already matrix multiplication tensors. The proof idea of the following theorem is similar to the proof idea of a result in \cite{virgi12}, who showed the version of the following theorem without complete split distributions. 

\begin{theorem}
  \label{thm:consituent_MM_terms}
  If $k_t=0$, then
  \[ T_{i_t,j_t,k_t}^{\otimes n_t}[\splresXt, \splresYt, \splresZt, \eps] \equiv \angbk{1, M, 1}, \]
  where
  \[ M = 2^{n_t (H(\splresXt) \pm o_{1/\eps}(1)) \pm o(n)} \cdot q^{n_t\sum_{(\hat{i}_1, \hat{i}_2, \ldots, \hat{i}_{2^{\lvl-1}})} \splresXt(\hat{i}_1, \hat{i}_2, \ldots, \hat{i}_{2^{\lvl-1}}) \sum_{p=1}^{2^{\lvl-1}} [\hat{i}_p = 1]}. \]
  Similar results hold when $i_t = 0$ or $j_t = 0$.
\end{theorem}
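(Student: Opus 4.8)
Here is a plan for proving \cref{thm:consituent_MM_terms}. The plan is to observe that the hypothesis $k_t=0$ collapses the $Z$-side of the tensor to a single variable, so that $T_{i_t,j_t,0}^{\otimes n_t}[\splresXt,\splresYt,\splresZt,\eps]$ equals one $z$-variable times a bilinear form, and then to identify that bilinear form as ``diagonal'', whose rank — the number of its monomials — is exactly $M$.

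First I would unwind the leveled structure on the $Z$-side. Since $T_{i_t,j_t,0}$ is $\CW_q^{\otimes 2^{\lvl-1}}$ restricted to $X^{(\lvl)}_{i_t}, Y^{(\lvl)}_{j_t}, Z^{(\lvl)}_{0}$, every level-$1$ $Z$-index sequence occurring in $T_{i_t,j_t,0}^{\otimes n_t}$ must have each of its length-$2^{\lvl-1}$ chunks summing to $0$, hence must equal $\vec 0$; so $Z^{(\lvl)}_0=\{z_0^{\otimes N_t}\}$ is a single variable, and since constraint~1 forces $\splresZt=\delta_{\vec 0}$, the restriction $[\,\cdot\,,\,\cdot\,,\splresZt,\eps]$ does nothing on the $Z$-side. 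Hence $T_{i_t,j_t,0}^{\otimes n_t}[\splresXt,\splresYt,\splresZt,\eps]$ equals $z\cdot B$ for a single $z$-variable and a bilinear form $B$ in the surviving $X$- and $Y$-variables, so it is isomorphic to $\angbk{1,\operatorname{rank}(B),1}$ (variables absent from the support of the tensor are irrelevant).

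Next I would compute $\operatorname{rank}(B)$. Restricting $\CW_q$ to the single variable $z_0$ leaves $x_0y_{q+1}z_0+x_{q+1}y_0z_0+\sum_{i=1}^q x_iy_iz_0$, which is $\angbk{1,q+2,1}$ up to relabeling: each $x$-variable has a unique $y$-partner and every variable occurs exactly once. Taking the $N_t$-th tensor power, a monomial of $\CW_q^{\otimes N_t}$ with $z$-part $z_0^{\otimes N_t}$ is given by choosing at each position $p$ one of these $q+2$ pairs; recording for each $p$ whether the chosen $x$-coordinate is $x_0$, some $x_i$ with $i\in[q]$, or $x_{q+1}$ yields a sequence $\hat I\in\{0,1,2\}^{N_t}$, with the $y$-part forced to be $\hat J=\vec 2-\hat I$. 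The restriction keeps exactly the monomials whose $\hat I$ is approximately consistent with $\splresXt$; approximate consistency of $\hat J=\vec 2-\hat I$ with $\splresYt$ is then automatic from the symmetry relation $\splresYt(\sigma)=\splresXt(\vec 2-\sigma)$ in constraint~2, because $\split(\vec 2-\hat I)(\sigma)=\split(\hat I)(\vec 2-\sigma)$, so the two $L_\infty$ conditions coincide. Since distinct monomials use disjoint variable pairs, $B$ is diagonal, and therefore
\[
  \operatorname{rank}(B)=M:=\sum_{\hat I:\ \norm{\split(\hat I)-\splresXt}_\infty\le\eps}q^{c(\hat I)},\qquad c(\hat I):=\abs{\BK{p:\hat I_p=1}}.
\]

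Finally I would estimate $M$. Put $\gamma:=\sum_\sigma\splresXt(\sigma)\cdot\abs{\BK{p:\sigma_p=1}}$, so that $c(\hat I)=n_t\sum_\sigma\split(\hat I)(\sigma)\abs{\BK{p:\sigma_p=1}}$; any $\hat I$ that is $\eps$-approximately consistent with $\splresXt$ thus has $c(\hat I)=n_t(\gamma\pm\oeps(1))$ (as $\lvl$ is a constant), whence $q^{c(\hat I)}=q^{n_t\gamma}\cdot 2^{\pm n_t\oeps(1)}$. For the lower bound, restrict the sum to those $\hat I$ exactly consistent with some distribution within $\eps$ of $\splresXt$ having integer multiplicities; their number is a multinomial coefficient equal to $2^{n_t(H(\splresXt)\pm\oeps(1))\pm o(n)}$. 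For the upper bound, bound $M$ by the number of approximately-consistent $\hat I$ (at most $\poly(n_t)\cdot 2^{n_t(H(\splresXt)+\oeps(1))\pm o(n)}$, summing multinomials over the $\poly(n_t)$ nearby distributions) times $\max_{\hat I}q^{c(\hat I)}$. Both bounds give $M=2^{n_t(H(\splresXt)\pm\oeps(1))\pm o(n)}\cdot q^{n_t\gamma}$, which is the claimed expression (the spread of $q^{c(\hat I)}$ over approximately-consistent $\hat I$ and the multinomial/rounding errors all being absorbed into the prefactor). The cases $i_t=0$ and $j_t=0$ follow by the same argument after rotating the roles of $X,Y,Z$ and using the corresponding symmetry relation among $\splresXt,\splresYt,\splresZt$. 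The step I expect to be most delicate is the middle one: checking carefully that after the $Z$-restriction $B$ is genuinely diagonal (no cancellation, each variable occurring once), so that its rank is precisely the monomial count, while simultaneously tracking that each surviving level-$1$ block $X_{\hat I}$ contributes exactly $q^{c(\hat I)}$ genuine variables, each paired to a distinct $Y$-variable.
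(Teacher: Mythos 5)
Your proof is correct and takes essentially the same approach as the paper's: observe that $k_t=0$ forces a single $z$-variable, so the tensor is a diagonal bilinear form times $z_0^{\otimes N_t}$, hence $\angbk{1,M,1}$, and compute $M$ by summing the $q^{\#\{p:\,\hat I_p=1\}}$ many $X$-variables over all level-1 $X$-blocks $\hat I$ that are approximately consistent with $\splresXt$. Your write-up is slightly more careful than the paper's terse argument in two places it leaves implicit — it checks via constraint~1 that $\splresZt=\delta_{\vec 0}$ so the $Z$-side restriction is vacuous, and via the symmetry of constraint~2 that approximate $\splresYt$-consistency of $\hat J=\vec 2-\hat I$ follows automatically from approximate $\splresXt$-consistency of $\hat I$, which is exactly what justifies counting $M$ as the number of surviving $X$-variables.
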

\begin{proof}
  As $k_t = 0$, there is only one $Z$-variable $z_0$ in the given tensor. Also, for each fixed $X$-variable $x$, there is a unique $Y$-variable $y$ so that $xyz_0$ is a term in the given tensor (this is because it is a subtensor of $\CW_q^{\otimes N}$), and vice versa. Thus, the given tensor is isomorphic to  an inner product tensor $\angbk{1, M, 1}$ for some $M \ge 0$. It remains to calculate the number of $X$-variables in the given tensor. The $X$-variables are distributed among several level-1 $X$-blocks. Fixing a complete split distribution $\xiXt$ whose $L_{\infty}$ distance to $\splresXt$ is within $\eps$, the number of level-1 $X$-blocks in $T_{i_t, j_t, k_t}^{\otimes n_t}$ that conform with $\xiXt$ is
  \[
    2^{n_t H(\xiXt) \pm o(n)} \;=\; 2^{n_t (H(\splresXt) \pm \oeps(1)) \pm o(n)}.
    \numberthis \label{eq:num_of_level_1_x_block}
  \]
  In each of these level-1 blocks, say $X_{\hat I}$, the number of $X$-variables is
  \begin{align*}
    q^{\sum_{p=1}^{n_t \cdot 2^{\lvl - 1}} [\hat I_p = 1]} \;&= \; q^{n_t \sum_{(\hat i_1, \ldots, \hat i_{2^{\lvl - 1}})} \xiXt(\hat i_1, \ldots, \hat i_{2^{\lvl - 1}}) \sum_{p=1}^{2^{\lvl-1}} [\hat i_p = 1]} \\
                                                     &= \; q^{n_t \sum_{(\hat i_1, \ldots, \hat i_{2^{\lvl - 1}})} (\splresXt(\hat i_1, \ldots, \hat i_{2^{\lvl - 1}}) \pm \oeps(1)) \sum_{p=1}^{2^{\lvl-1}} [\hat i_p = 1]}.
                                                       \numberthis \label{eq:num_x_var_in_each_block}
  \end{align*}
  The product of \eqref{eq:num_of_level_1_x_block} and \eqref{eq:num_x_var_in_each_block} gives the number of $X$-variables belonging to level-1 $X$-blocks that are consistent with a certain $\xiXt$; taking summation over all $\xiXt$ (there are $\poly(n)$ of which) proves the lemma.
\end{proof}

Next, we assume that we already used \cref{thm:consituent_MM_terms} to handle terms with $i_t = 0$, $j_t = 0$ or $k_t = 0$, and assume without loss of generality that we are left with the first $s'$ terms for some $s' \le s$.

For a triple of level-$\lvl$ complete split distributions $(\splresX, \splresY, \splresZ)$ associated with the tensor power of the constituent tensor $T_{i_t,j_t,k_t}$, we define a distribution $\splonelevelX$ on $\{0, \ldots, 2^{\lvl-1}\}^2$ as follows:
\[
  \splonelevelX(l_X, r_X) \defeq \sum_{\substack{(\hat{i}_1, \hat{i}_2, \ldots, \hat{i}_{2^{\lvl - 1}}):\\  \hat{i}_1 + \cdots + \hat{i}_{2^{\lvl - 2}} = l_X, \\ \hat{i}_{2^{\lvl-2}+1} + \cdots + \hat{i}_{2^{\lvl - 1}} = r_X}} \splresX(\hat{i}_1, \hat{i}_2, \ldots, \hat{i}_{2^{\lvl-1}}).
\]
It describes how every level-$\lvl$ index $i_t$ splits into two level-$(\lvl - 1)$ indices. We similarly define $\splonelevelY$ and $\splonelevelZ$.

Let $\alpha$ be a distribution on possible combinations of $(l_X, l_Y, l_Z)$ such that the marginals of $\alpha$ are consistent with $\splonelevelX(l_X, i-l_X)$, $\splonelevelY(l_Y, j-l_Y)$, $\splonelevelZ(l_Z, k-l_Z)$. Moreover, let $\splresXt[i', j', k']$, $\splresYt[i', j', k']$, $\splresZt[i', j', k']$ be level-$(\lvl-1)$ complete split distributions. We then define the following quantities:
\begin{itemize}
\item $D$ is the set of distributions whose marginal distributions on the three dimensions are consistent with $\splonelevelX(l_X, i-l_X)$, $\splonelevelY(l_Y, j-l_Y)$, $\splonelevelZ(l_Z, k-l_Z)$ respectively, and let the penalty term $P_\alpha \defeq \max_{\alpha' \in D} H(\alpha') - H(\alpha) \ge 0$.
\item For every $k'$,
  $\alpha(\+, \+, k') \defeq \sum_{i' > 0, j' > 0} \alpha(i', j', k')$; for every $j'$, $\alpha(\+, j', \+) \defeq \sum_{i' > 0, k' > 0} \alpha(i', j', k')$; and for every $i'$,
  $\alpha(i', \+, \+) \defeq \sum_{j' > 0, k' > 0} \alpha(i', j', k')$.
\item For every $k'$,
  $\alpha(\<, \<, k') \defeq \sum_{i' < i_t , j' < j_t} \alpha(i', j', k')$; for every $j$, $\alpha(\<, j', \<) \defeq \sum_{i' < i_t, k' < k_t} \alpha(i', j', k')$; and for every $i'$,
  $\alpha(i', \<, \<) \defeq \sum_{j' < j_t, k' < k_t} \alpha(i', j', k')$.
\item For every $k'$, $\splresavg_{Z, \+, \+, k'} \defeq \frac{1}{\alpha(\+, \+, k')} \sum_{i' > 0, j' > 0} \alpha(i', j', k') \cdot \splres_{Z, i', j', k'}$, while $\splresavg_{Y, \+, j', \+}$ and $\splresavg_{X, i', \+, \+}$ are defined similarly.
\item
  $ \displaystyle \lambda_Z \defeq \sum_{i', j', k': i' = 0 \text{ or } j' = 0} \bigbk{\alpha(i', j', k')+\alpha(i_t-i', j_t-j', k_t-k')} \cdot H(\splres_{Z, i', j', k'}) \\
  + \sum_{k'} \bigbk{\alpha(\+, \+, k') + \alpha(\<, \<, k_t-k')} \cdot H(\splresavg_{Z, \+, \+, k_t-k'}), $
  while $\lambda_X$ and $\lambda_Y$ are defined similarly.
\end{itemize}

In the following proposition, we will use the above definitions for different $t \in [s']$ and $r \in [3]$. We will use $t$ in the subscripts and $(r)$ in the superscripts on variables to denote that they are computed using values of $\alpha_t^{(r)}, \splres_{X, t}^{(r)}, \splres_{Y, t}^{(r)}, \splres_{Z, t}^{(r)}, \{\splresXt[t, i', j', k']^{(r)}\}_{i',j',k'}, \{\splresYt[t, i', j', k']^{(r)}\}_{i',j',k'}, \{\splresZt[t, i', j', k']^{(r)}\}_{i',j',k'}$. 

\begin{prop}
  \label{prop:constituent-stage-no-eps}
  An $s'$-term level-$\lvl$ $\eps$-interface tensor with parameters
  \[\{(n_t, i_t, j_t, k_t, \splresXt, \splresYt, \splresZt)\}_{t \in [s']}\]
  for $\eps > 0, \; i_t, j_t, k_t > 0 \; \forall \; t \in [s']$ can be degenerated into
  \[ 2^{(E_1 + E_2 + E_3) - o(n) - \oeps(n)} \]
  independent copies of a level-$(\lvl-1)$ interface tensor with parameter list
  \[ \left\{ \bk{ n_t \cdot A_{t,r} \cdot \bigbk{\alpha_{t}^{(r)}(i', j', k')+\alpha_{t}^{(r)}(i_t-i', j_t-j', k_t-k')}, i', j', k', \splresXt[t, i', j', k']^{(r)}, \splresYt[t, i', j', k']^{(r)}, \splresZt[t, i', j', k']^{(r)} }\right\}\]
  for $t \in [s']$, $r \in [3]$, $i' + j' + k' = 2^{\lvl-1}$, $0 \le i' \le i_t$, $0 \le j' \le j_t$, $0 \le k' \le k_t$, where
  \begin{itemize}
  \item $0 \le A_{t,1}, A_{t,2}, A_{t,3} \le 1$ and $A_{t,1} + A_{t,2} + A_{t,3} = 1$ for every $t \in [s']$;
  \item For every $t$, and for every $W \in \{X, Y, Z\}$, $A_{t,1} \splres_{W, t}^{(1)} + A_{t,2} \splres_{W, t}^{(2)}+ A_{t,3} \splres_{W, t}^{(3)} = \splres_{W, t}$ ($\splres_{W, t}^{(r)}$ are intermediate variables that will be used later); 
  \item For every $W \in \{X, Y, Z\}$,  $r \in [3]$ and $i' + j' + k' = 2^{\lvl - 1}$, $\splres_{W, t, i', j', k'}^{(r)}$ is a level-$(\lvl-1)$ complete split distribution;
  \item For every $W \in \{X, Y, Z\}$, $t \in [s']$ and $r \in [3]$,
    \[ \splres_{W, t}^{(r)} = \sum_{i', j', k'} \alpha_{t}^{(r)}(i', j', k') \cdot \left(\splres_{W, t, i', j', k'}^{(r)} \times \splres_{W, t,  i_t - i', j_t - j', k_t - k'}^{(r)}\right); \]
  \item $\displaystyle
    \begin{aligned}[t]
      E_1 \defeq \min\Bigg\{
      \sum_{t \in [s']} A_{t,1} \cdot n_t \cdot \left( H(\splonelevelXt[t]^{(1)}) - P_{\alpha, t}^{(1)}\right), \;
      & \sum_{t \in [s']} A_{t,1} \cdot n_t \cdot \left( H(\splonelevelYt[t]^{(1)}) - P_{\alpha, t}^{(1)}\right), \\
      & \sum_{t \in [s']} A_{t,1} \cdot n_t \cdot \left(H(\splres_{Z, t}^{(1)}) - \lambda_{Z, t}^{(1)}\right) \Bigg\}, \\
      E_2 \defeq \min\Bigg\{
      \sum_{t \in [s']} A_{t,2} \cdot n_t \cdot \left( H(\splonelevelXt[t]^{(2)}) - P_{\alpha, t}^{(2)}\right), \;
      & \sum_{t \in [s']} A_{t,2} \cdot n_t \cdot \left( H(\splonelevelZt[t]^{(2)}) - P_{\alpha, t}^{(2)}\right), \\
      & \sum_{t \in [s']} A_{t,2} \cdot n_t \cdot \left(H(\splres_{Y, t}^{(2)}) - \lambda_{Y, t}^{(2)}\right) \Bigg\}, \\
      E_3 \defeq \min\Bigg\{
      \sum_{t \in [s']} A_{t,3} \cdot n_t \cdot \left( H(\splonelevelYt[t]^{(3)}) - P_{\alpha, t}^{(3)}\right), \;
      & \sum_{t \in [s']} A_{t,3} \cdot n_t \cdot \left( H(\splonelevelZt[t]^{(3)}) - P_{\alpha, t}^{(3)}\right), \\
      & \sum_{t \in [s']} A_{t,3} \cdot n_t \cdot \left(H(\splres_{X, t}^{(3)}) - \lambda_{X, t}^{(3)}\right) \Bigg\}.
    \end{aligned}
    $
  \end{itemize}
\end{prop}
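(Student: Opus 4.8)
The plan is to run a region-by-region asymmetric-hashing argument parallel to the proof of \cref{prop:global-stage-no-eps}, now with the given level-$\lvl$ $\eps$-interface tensor in place of $\CW_q^{\otimes N}$ and with each \emph{split constituent} $T_{i',j',k'}\otimes T_{i_t-i',\,j_t-j',\,k_t-k'}$ (obtained by refining $T_{i_t,j_t,k_t}$ along its level-$(\lvl-1)$ partition, over all $i'+j'+k'=2^{\lvl-1}$) in place of a level-$\lvl$ constituent tensor. First I would pass to a per-region analysis: using $\sum_{r}A_{t,r}\splres_{W,t}^{(r)}=\splres_{W,t}$ for $W\in\{X,Y,Z\}$, split the $n_t$ copies of each term into groups of sizes $A_{t,r}n_t$ and observe that $\bigotimes_{r}T_{i_t,j_t,k_t}^{\otimes A_{t,r}n_t}[\splres_{X,t}^{(r)},\splres_{Y,t}^{(r)},\splres_{Z,t}^{(r)},\eps]$ is a zero-out of the $t$-th term; hence it suffices to process each region separately and tensor the outputs. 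Region $1$ shares level-$(\lvl-1)$ $Z$-blocks, and regions $2,3$ are the rotations, supplying $E_2$ and $E_3$.

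In region $1$ the ambient tensor is $\bigotimes_{t\in[s']}T_{i_t,j_t,k_t}^{\otimes A_{t,1}n_t}[\splres_{X,t}^{(1)},\splres_{Y,t}^{(1)},\splres_{Z,t}^{(1)},\eps]$. A level-$1$ block approximately consistent with $\splres_{X,t}^{(1)}$ has its induced level-$(\lvl-1)$ $X$-block approximately consistent with $\splonelevelXt[t]^{(1)}$, and likewise for $Y,Z$, so the marginals of any chosen $\alpha_t^{(1)}$ are enforced for free and we go straight to one asymmetric hashing over all terms: hash the concatenated level-$(\lvl-1)$ index sequences, share $Z$-blocks, and clean up so each $X$- and each $Y$-block lies in a unique triple. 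By the counting in \cref{sec:global} this incurs the per-term penalty $P_{\alpha,t}^{(1)}$ and, bottlenecked by the $X$- and $Y$-block counts $2^{\sum_t A_{t,1}n_tH(\splonelevelXt[t]^{(1)})}$ and $2^{\sum_t A_{t,1}n_tH(\splonelevelYt[t]^{(1)})}$, keeps the first two quantities in $E_1$. I would then run the compatibility and usefulness zero-outs as in \cref{sec:global}, but enforcing for a split $(i',j',k')$ the \emph{product} complete split distributions $\splres_{W,t,i',j',k'}^{(1)}\times\splres_{W,t,\,i_t-i',\,j_t-j',\,k_t-k'}^{(1)}$; the $Z$-usefulness bottleneck produces the third quantity $\sum_t A_{t,1}n_t(H(\splres_{Z,t}^{(1)})-\lambda_{Z,t}^{(1)})$ via the $\pcomp$ computation of \cref{sec:global:fix-holes}, with $\lambda_{Z,t}^{(1)}$ now collecting contributions from both halves of each split (the $\alpha_t(\<,\<,\cdot)$ sums appear because a split is a genuine non-matrix-multiplication level-$(\lvl-1)$ constituent on both sides exactly when counted there). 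After this, each surviving level-$(\lvl-1)$ triple carries a level-$1$-independent broken copy of the split tensor $\T_1'=\bigotimes_{t,i',j',k'}(T_{i',j',k'}\otimes T_{i_t-i',j_t-j',k_t-k'})^{\otimes m_{t,i',j',k'}}[\splres_X^{(L)}\!\times\!\splres_X^{(R)},\splres_Y^{(L)}\!\times\!\splres_Y^{(R)},\splres_Z^{(L)}\!\times\!\splres_Z^{(R)},\eps]$ with $m_{t,i',j',k'}=A_{t,1}\alpha_t^{(1)}(i',j',k')n_t$, whose holes lie only in $Z$ and form at most a $\tfrac1{16N}$ fraction once the hash modulus $M_0$ is taken large enough (adding a term proportional to $N\cdot\numalpha\cdot\pcomp/\numzblock$, exactly as in the global stage).

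The genuinely new ingredient converts each broken copy of $\T_1'$ into a broken copy of the tensor $\T_2=\bigotimes_{t,i',j',k'}\bigl(T_{i',j',k'}^{\otimes m_{t,i',j',k'}}[\splres_X^{(L)},\splres_Y^{(L)},\splres_Z^{(L)}]\bigr)\otimes\bigl(T_{i_t-i',j_t-j',k_t-k'}^{\otimes m_{t,i',j',k'}}[\splres_X^{(R)},\splres_Y^{(R)},\splres_Z^{(R)}]\bigr)$, which after merging the two halves of each split — the target term $(i',j',k')$ acquires multiplicity $A_{t,1}(\alpha_t^{(1)}(i',j',k')+\alpha_t^{(1)}(i_t-i',j_t-j',k_t-k'))n_t$ — is exactly the claimed level-$(\lvl-1)$ interface tensor. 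I would zero out the level-$1$ blocks of $\T_2$ not contained in $\T_1'$: a uniformly random level-$1$ block of $\T_2$, being a pair of blocks one from each factor, interleaves its length-$2^{\lvl-2}$ half-chunks into a level-$1$ block of the parent that, by the product structure of $\T_2$ and a Hoeffding bound, is within $\eps$ in $L_\infty$ of $\splres_W^{(L)}\times\splres_W^{(R)}$ with probability $1-2^{-\Omega(n)}$. This relaxation to $\eps$-approximate consistency is exactly what drops the missing fraction from $1-2^{-o(n)}$ down to $2^{-\Omega(n)}$, so the holes newly introduced, now in all three dimensions, form a $2^{-\Omega(n)}\le\tfrac1{16N}$ fraction. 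The total hole fraction is then at most $\tfrac1{8N}$, so \cref{cor:fix-interface} applies (the target has $\Theta(N)$ variables and we have $2^{\Omega(n)}\gg 2^{C_1 N/\log N}$ broken copies) and yields unbroken copies at a multiplicative loss of $2^{-o(n)}$. Tensoring the three regions and rewriting the region-$r$ bottlenecks through $H(\splonelevelXt[t]^{(r)}),H(\splonelevelYt[t]^{(r)}),H(\splonelevelZt[t]^{(r)}),H(\splres_{W,t}^{(r)}),P_{\alpha,t}^{(r)},\lambda_{W,t}^{(r)}$ then gives $2^{(E_1+E_2+E_3)-o(n)-\oeps(n)}$ copies, the $\oeps(n)$ absorbing the slack of the $\eps$-interface input.

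The hard part is this passage from $\T_1'$ to $\T_2$: quantitatively, showing that $\eps$-approximate consistency really pushes the three-dimensional hole fraction down to $2^{-\Omega(n)}$, so that the general-tensor hole-fixing of \cref{cor:fix-interface} (with its $O(1/\log N)$ ceiling on the hole fraction) is usable; and organizationally, threading the many complete split distributions so that the distribution enforced by the usefulness zero-out equals $\splres_{W,t}^{(r)}=\sum_{i',j',k'}\alpha_t^{(r)}(i',j',k')\bigl(\splres_{W,t,i',j',k'}^{(r)}\times\splres_{W,t,\,i_t-i',\,j_t-j',\,k_t-k'}^{(r)}\bigr)$ and that the three region-outputs combine into a single interface tensor with the stated parameter list. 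Everything else is bookkeeping that runs in parallel with the global stage.
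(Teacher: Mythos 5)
Your proposal follows the paper's proof essentially step for step: divide into three regions via the constraint $\sum_r A_{t,r}\splres_{W,t}^{(r)}=\splres_{W,t}$, run asymmetric hashing with shared $Z$-blocks (then $Y$, then $X$) to get the $H(\splonelevelXt^{(r)})-P_{\alpha,t}^{(r)}$ and $H(\splonelevelYt^{(r)})-P_{\alpha,t}^{(r)}$ bottlenecks, run the compatibility and usefulness zero-outs to get the $H(\splres_{Z,t}^{(r)})-\lambda_{Z,t}^{(r)}$ bottleneck via $\pcomp$, and then invoke \cref{cor:fix-interface}. The one place where your framing differs is cosmetic: you present the hole analysis as an explicit conversion from $\T_1'$ (product form with $\eps$-slack) to $\T_2$ (tensor-product of halves), borrowing the picture from the technical overview, whereas Section 6.7 phrases the same argument as "two types of holes" directly in $\T^*$ — but the concentration argument showing that a uniformly random level-$1$ block of the target is approximately consistent with $\splres_{W,t}^{(r)}$ with probability $1-1/\poly(n)$ (you say $1-2^{-\Omega(n)}$, which is also fine and in fact stronger) is the same in both. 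Minor nit: you describe the level-$(\lvl-1)$ marginal constraints as "enforced for free" by the input $\eps$-approximation; the paper actually does an explicit exact zero-out of level-$(\lvl-1)$ blocks inconsistent with $\splonelevelWt^{(r)}$ before hashing, and the resulting discrepancy between exact and $\eps$-approximate consistency is precisely what gets charged to the first type of holes — so you cannot skip that step, you just fold it into the hole budget, which is what the paper does too. This does not affect correctness.
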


Given \cref{prop:constituent-stage-no-eps}, we obtain the following theorem, whose proof is essentially the same as that of \cref{thm:global-stage-with-eps}.

\begin{theorem}
  \label{thm:constituent-stage-with-eps}
  $2^{o(n)}$ independent copies of $s'$-term level-$\lvl$ $3\eps$-interface tensor with parameters
  \[\{(n_t, i_t, j_t, k_t, \splresXt, \splresYt, \splresZt)\}_{t \in [s']}\]
  for $\eps > 0, i_t, j_t, k_t > 0\ \forall\ t \in [s']$ can be degenerated into
  \[2^{(E_1 + E_2 + E_3) - o(n) - \oeps(n)}\]
  independent copies of a level-$(\lvl-1)$ $\eps$-interface tensor with parameter list
  \[ \left\{ \bk{ n_t \cdot A_{t,r} \cdot \bigbk{\alpha_{t}^{(r)}(i', j', k')+\alpha_{t}^{(r)}(i_t-i', j_t-j', k_t-k')}, i', j', k', \splresXt[t, i', j', k']^{(r)}, \splresYt[t, i', j', k']^{(r)}, \splresZt[t, i', j', k']^{(r)} } \right\}\]
  for $t \in [s']$, $r \in [3]$, $i' + j' + k' = 2^{\lvl-1}$, $0 \le i' \le i_t$, $0 \le j' \le j_t$, $0 \le k' \le k_t$, where the constraints are the same as those in \cref{prop:constituent-stage-no-eps}.
\end{theorem}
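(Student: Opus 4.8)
The plan is to imitate the proof of \cref{thm:global-stage-with-eps} almost verbatim, with \cref{prop:constituent-stage-no-eps} playing the role that \cref{prop:global-stage-no-eps} played there. The only two things that are not completely mechanical are (i) discretizing the target level-$(\lvl-1)$ complete split distributions so that \cref{prop:constituent-stage-no-eps} applies, while keeping all of that proposition's internal equality constraints intact, and (ii) observing that the output objective $E_1+E_2+E_3$ is continuous in all of the parameters, so that the discretization costs only an $\oeps(n)$ term.

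First I would discretize. Fix the quantities $A_{t,r}$ and $\alpha_t^{(r)}$, and replace each level-$(\lvl-1)$ complete split distribution appearing in the output parameter list (together with the intermediate split distributions $\splres_{W,t}^{(r)}$) by a nearby split distribution all of whose entries are integral multiples of the relevant unit $1/\bigbk{n_t A_{t,r}\bigbk{\alpha_t^{(r)}(i',j',k')+\alpha_t^{(r)}(i_t-i',j_t-j',k_t-k')}}$ and whose $L_\infty$ distance from the original is at most $\eps$. Because each complete split distribution has only $3^{2^{\lvl-1}}=O(1)$ coordinates, each taking $O(n)$ admissible rounded values, the number of such discretized tuples is $\poly(n)=2^{o(n)}$; call this collection $\tilde{\mathcal D}$. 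Exactly as in \cref{thm:global-stage-with-eps}, the level-$(\lvl-1)$ $\eps$-interface tensor we want to output is the \emph{ordinary sum}, over $\tilde{\mathcal D}$, of the corresponding exact ($0$-slack) level-$(\lvl-1)$ interface tensors.

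Next I would apply \cref{prop:constituent-stage-no-eps} once to each of the $2^{o(n)}$ input copies, taking one input copy per element of $\tilde{\mathcal D}$, invoking the proposition with slack parameter $3\eps$ (legitimate, since it permits any positive slack) and with its chosen level-$(\lvl-1)$ complete split distributions set to the corresponding discretized tuple. For each $d\in\tilde{\mathcal D}$ this produces at least $2^{(E_1^{(d)}+E_2^{(d)}+E_3^{(d)})-o(n)-\oeps(n)}$ independent copies of the exact level-$(\lvl-1)$ interface tensor indexed by $d$, where $E_i^{(d)}$ are the quantities of \cref{prop:constituent-stage-no-eps} evaluated at the rounded parameters. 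Since $E_1+E_2+E_3$ is a fixed finite combination of entropies $H(\cdot)$ and of the quantities $P_{\alpha,t}^{(r)}$ and $\lambda_{W,t}^{(r)}$, all continuous on the compact parameter domain (for $P_{\alpha,t}^{(r)}$ this uses that the maximizing element of $D$ depends continuously on the marginals), perturbing each split distribution by at most $\eps$ changes $E_1+E_2+E_3$ by at most $\oeps(n)$; hence $E_1^{(d)}+E_2^{(d)}+E_3^{(d)}\ge E_1+E_2+E_3-\oeps(n)$ for every $d$. Regrouping the outputs across all $d$ and using that a direct sum degenerates into the corresponding ordinary sum, we obtain $2^{(E_1+E_2+E_3)-o(n)-\oeps(n)}$ independent copies of the level-$(\lvl-1)$ $\eps$-interface tensor with the stated parameter list.

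I expect the one genuinely delicate point to be the joint discretization in step (i): the rounding must be performed so that all equality constraints of \cref{prop:constituent-stage-no-eps} survive — in particular that $\splres_{W,t}^{(r)}$ equals the $\alpha_t^{(r)}$-weighted average of the products $\splres_{W,t,i',j',k'}^{(r)}\times\splres_{W,t,i_t-i',j_t-j',k_t-k'}^{(r)}$, that the averaged split distributions $\splresavg$ and the one-level split distributions $\splonelevelXt$ (and their $Y,Z$ analogues) are the corresponding contractions, and that $A_{t,1}\splres_{W,t}^{(1)}+A_{t,2}\splres_{W,t}^{(2)}+A_{t,3}\splres_{W,t}^{(3)}$ stays within $3\eps$ of the input parameter $\splres_{W,t}$ — all while every entry remains an integral multiple of the appropriate unit. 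The natural route is to round a maximal independent set of ``free'' coordinates first and then recompute the dependent ones; this is also exactly where the gap between the input slack $3\eps$ and the output slack $\eps$ is spent, absorbing the rounding error on top of the slack already present in \cref{prop:constituent-stage-no-eps}. Everything else — the bound $|\tilde{\mathcal D}|=2^{o(n)}$, the continuity estimate, and the direct-sum-to-sum degeneration — is identical to the proof of \cref{thm:global-stage-with-eps}.
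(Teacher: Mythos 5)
Your high-level plan matches the paper's proof, but the central mechanism is slightly off, and this is precisely the "delicate point" you flag at the end, so let me zero in on it.

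You propose to apply \cref{prop:constituent-stage-no-eps} directly to the input $3\eps$-interface tensor (with its original parameters $\splresWt$), while plugging in discretized level-$(\lvl-1)$ split distributions, and to absorb the resulting mismatch in the slack. This does not work as stated: the constraint
\[
A_{t,1} \splres_{W, t}^{(1)} + A_{t,2} \splres_{W, t}^{(2)}+ A_{t,3} \splres_{W, t}^{(3)} = \splres_{W, t}
\]
in \cref{prop:constituent-stage-no-eps} is an \emph{exact} equality between the derived quantity on the left (determined, via the other exact constraint, by the rounded $\splresWt[t,i',j',k']^{(r)}$) and the \emph{input} parameter $\splresWt$ on the right. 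After rounding the level-$(\lvl-1)$ distributions, the left-hand side generally differs from $\splresWt$, and "staying within $3\eps$" does not let you invoke the proposition: its hypotheses simply are not met. There is no pre-existing slack in this equality for the rounding error to be absorbed into.

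The paper's resolution is to \emph{change the input parameter list}, not to force the constraints to hold with the original one. Concretely: round only the output-side level-$(\lvl-1)$ distributions to get $\xiWt[t,i',j',k']^{(r)}$; then \emph{define} $\xiWt^{(r)}$ by the constraint formula and $\xiWt \defeq A_{t,1}\xiWt^{(1)}+A_{t,2}\xiWt^{(2)}+A_{t,3}\xiWt^{(3)}$, so that every equality in \cref{prop:constituent-stage-no-eps} holds by construction; apply \cref{prop:constituent-stage-no-eps} with slack $\eps$ (not $3\eps$) to the $\eps$-interface tensor $\T$ with parameter list $\{(n_t,i_t,j_t,k_t,\xiXt,\xiYt,\xiZt)\}$; and finally, separately, check that $\T$ is a subtensor of the input $3\eps$-interface tensor, which follows because each $\xiWt$ is within $2\eps$ of $\splresWt$ (one $\eps$ for each factor in the product $\xiWt[t,i',j',k']^{(r)}\times\xiWt[t,i_t-i',j_t-j',k_t-k']^{(r)}$). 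The $3\eps-\eps=2\eps$ gap is spent in this subtensor-containment step, not in relaxing any constraint inside \cref{prop:constituent-stage-no-eps}. You had the right idea about where the slack goes, but your formulation keeps $\splresWt$ fixed as the proposition's input parameter, which makes the exact constraints unsatisfiable; swapping to the derived $\xiWt$ is the missing step.

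As a side note, your plan also discretizes the intermediate $\splres_{W,t}^{(r)}$ independently of the level-$(\lvl-1)$ distributions, which would break the other exact constraint $\splres_{W, t}^{(r)} = \sum_{i', j', k'} \alpha_{t}^{(r)}(i', j', k') \cdot (\splres_{W, t, i', j', k'}^{(r)} \times \splres_{W, t,  i_t - i', j_t - j', k_t - k'}^{(r)})$. In the paper's route the $\xiWt^{(r)}$ are \emph{computed from} the rounded $\xiWt[t,i',j',k']^{(r)}$ rather than rounded separately, so this constraint holds identically and no separate discretization (or consistency check) of the intermediate distributions is needed. The rest of your argument --- that $|\tilde{\mathcal D}| = 2^{o(n)}$, continuity of $E_1+E_2+E_3$ giving the $\oeps(n)$ loss, and direct-sum-to-sum degeneration --- is correct and matches the paper.
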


\begin{proof}
  Similar to \cref{thm:global-stage-with-eps}, for every set of complete split distributions $\midBK{\xiWt[t, i', j', k']^{(r)}}_{W, t, r, i', j', k'}$ that is at most $\eps$ away in $L_\infty$ distance from $\midBK{\splresWt[t, i', j', k']^{(r)}}_{W, t, r, i', j', k'}$ , we take an independent copy of the input interface tensor, and degenerate it to independent copies of the output interface tensor with the specified complete split distributions. Let
  \[
    \xi_{W, t}^{(r)} = \sum_{i', j', k'} \alpha_{t}^{(r)}(i', j', k') \cdot \left(\xi_{W, t, i', j', k'}^{(r)} \times \xi_{W, t,  i_t - i', j_t - j', k_t - k'}^{(r)}\right) \quad (\forall \, W \in \{X, Y, Z\}, \, r \in [3], \, t \in [s'])
    \numberthis \label{eq:xi_computed_from_target}
  \]
  and $\xi_{W, t} = A_{t, 1} \xi_{W, t}^{(1)} + A_{t, 2} \xi_{W, t}^{(2)} + A_{t, 3} \xi_{W, t}^{(3)}$ be determined by the considered complete split distributions $\midBK{\xi_{W, t, i', j', k'}}$. According to \cref{prop:constituent-stage-no-eps}, an $\eps$-interface tensor $\T$ with parameter list $\{(n_t, i_t, j_t, k_t, \xiXt, \xiYt, \xiZt)\}_{t \in [s']}$ can degenerate to $2^{E_1 + E_2 + E_3 - o(n) - \oeps(n)}$ copies of the target interface tensor. Summing up a copy of the outcome tensor for each $\midBK{\xiWt[t, i', j', k']^{(r)}}_{W, t, r, i', j', k'}$ will give the output $\eps$-interface tensor, so we can get $2^{E_1 + E_2 + E_3 - o(n) - \oeps(n)}$ independent copies of the output tensor in total.

  It remains to show that $\T$ is a subtensor of the input interface tensor, i.e., a $3\eps$-interface tensor with parameters $\midBK{(n_t, i_t, j_t, k_t, \splresXt, \splresYt, \splresZt)}_{t \in [s']}$. On the right-hand side of \eqref{eq:xi_computed_from_target}, the two complete split distributions have at most $\eps$ distance from $\splres_{W, t, i', j', k'}^{(r)}$ and $\splres_{W, t, i_t - i', j_t - j', k_t - k'}^{(r)}$, so their product has $\le 2\eps$ 
  distance\footnote{The distance is at most $2\eps$ for the following reason: first, we change $\splres_{W, t, i', j', k'}^{(r)}$ to $\xi_{W, t, i', j', k'}^{(r)}$, which introduces an additive $\eps$ error (as the right hand side in \cref{eq:xi_computed_from_target} is a weighted average of the entries of $\xi_{W, t, i', j', k'}^{(r)}$); then we change $\splres_{W, t, i_t - i', j_t - j', k_t - k'}^{(r)}$ to $\xi_{W, t, i_t - i', j_t - j', k_t - k'}^{(r)}$, which introduces another additive $\eps$ error. } from $\splres_{W, t, i', j', k'}^{(r)} \times \splres_{W, t, i_t - i', j_t - j', k_t - k'}^{(r)}$; since the coefficients $\alpha_t^{(r)}(i', j', k')$ sum up to 1, we know that the left-hand side $\xiWt^{(r)}$ has at most $2\eps$ distance from $\splresWt^{(r)}$ as well, and the same holds between $\xiWt$ and $\splresWt$. Thus the $\eps$-interface tensor with complete split distributions $\midBK{\xiWt}_{W, t}$ is contained in the $3\eps$-interface tensor with $\midBK{\splresWt}_{W, t}$ as a subtensor. Then we conclude the proof.
\end{proof}

The remainder of this section aims to prove \cref{prop:constituent-stage-no-eps}.

\subsection{Dividing into Regions}

For each of the $s'$ terms, say the $t$-th term, we pick three real numbers $A_{t,1}, A_{t,2}, A_{t,3} \ge 0$ where $A_{t,1} + A_{t,2} + A_{t,3} = 1$, that aims to divide the $t$-th term in the input level-$\lvl$ $\eps$-interface tensor into three regions of sizes $A_{t,1} n_t, A_{t,2} n_t$ and $A_{t,3} n_t$ respectively.
We also pick three different complete split distributions $\splresXt^{(1)}, \splresXt^{(2)}, \splresXt^{(3)}$, with the constraint
\begin{equation}
\splresXt^{(1)} A_{t,1} + \splresXt^{(2)} A_{t,2} + \splresXt^{(3)} A_{t,3} = \splresXt.
\end{equation}
We also pick $\splresYt^{(r)}$ and $\splresZt^{(r)}$ for $r \in [3]$ with similar constraints. Similar to \cref{rmk:assumptions_on_complete_split_dist},  we assume without loss of generality that, for every $t, r$ and every $L \in \{0, 1, 2\}^{2^{\lvl-1}}$, 
\[
  \splresXt^{(r)}(L) = \splresZt^{(r)}(\vec{2}-L) \text{ if } j_t = 0, \quad   
  \splresZt^{(r)}(L) = \splresYt^{(r)}(\vec{2}-L) \text{ if } i_t = 0, \quad   
  \splresYt^{(r)}(L) = \splresXt^{(r)}(\vec{2}-L) \text{ if } k_t = 0   
\]
where $\vec{2}$ denotes the length-$(2^{\lvl-1})$ vector whose coordinates are all $2$, 
and 
\[
  \splresXt^{(r)}(L) = 0 \text{ if } \sum_{t} L_t \ne i_t, \quad   
  \splresYt^{(r)}(L) = 0 \text{ if } \sum_{t} L_t \ne j_t, \quad   
  \splresZt^{(r)}(L) = 0 \text{ if } \sum_{t} L_t \ne k_t.   
\]

For any level-$1$ $X$-block, if the portion of it in the $r$-th region of the $t$-th term is not $\eps$-approximate consistent with $\splresXt^{(r)}$, we zero it out. We similarly handle level-1 $Y$-blocks and $Z$-blocks. It is not hard to see the following.

\begin{claim}
  \label{cl:dividing_into_region_zero_out}
  After the previous zeroing-out, we obtain a tensor that is isomorphic to
  \[
    \bigotimes_{r=1}^3 \bigotimes_{t = 1}^{s'} T_{i_t, j_t, k_t}^{\otimes A_{t,r} n_t}[\splresXt^{(r)}, \splresYt^{(r)}, \splresZt^{(r)}, \eps].
  \]
\end{claim}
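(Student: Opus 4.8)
The plan is to derive the claim from two facts: a variable-relabeling isomorphism that reorganizes the $s'$ tensor factors into the $3s'$ region factors, and the observation that the newly imposed zero-outs, together with the constraints already built into the input $\eps$-interface tensor, leave precisely the level-$1$ blocks appearing in the target tensor (we assume throughout, as elsewhere, that each $A_{t,r}n_t$ is a nonnegative integer with $\sum_r A_{t,r}n_t = n_t$). For the isomorphism: since each $T_{i_t,j_t,k_t}=T^{(\lvl)}_{i_t,j_t,k_t}$ is a single subtensor of $\CW_q^{\otimes 2^{\lvl-1}}$, permuting the $n_t$ tensor factors within $T_{i_t,j_t,k_t}^{\otimes n_t}$ and then permuting the $s'$ groups of factors is just a relabeling of the $X$-, $Y$-, $Z$-variables, so
\[
  \bigotimes_{t=1}^{s'} T_{i_t,j_t,k_t}^{\otimes n_t} \;\equiv\; \bigotimes_{r=1}^{3}\bigotimes_{t=1}^{s'} T_{i_t,j_t,k_t}^{\otimes A_{t,r}n_t}.
\]
This permutation moves entire length-$2^{\lvl-1}$ chunks, so it respects the level-$1$ block partition: writing a level-$1$ $X$-index sequence on the left as $\hat{I}\in\{0,1,2\}^{2^{\lvl-1}n}$ with sub-sequences $\hat{I}^{(t)}$ (the part in term $t$) and $\hat{I}^{(t,r)}$ (the part in region $r$ of term $t$), the corresponding block on the right has its portion in the $r$-th copy of $T_{i_t,j_t,k_t}^{\otimes A_{t,r}n_t}$ indexed by $\hat{I}^{(t,r)}$; and since $\split$ of an index sequence is the empirical distribution over its length-$2^{\lvl-1}$ chunks, $\split(\hat{I}^{(t)})=\sum_{r=1}^3 A_{t,r}\,\split(\hat{I}^{(t,r)})$, with the analogous identities for $Y$ and $Z$.

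Next I would track which level-$1$ blocks survive. The input tensor $\bigotimes_t T_{i_t,j_t,k_t}^{\otimes n_t}[\splresXt,\splresYt,\splresZt,\eps]$ keeps exactly the $X$-blocks with $\bignorm{\split(\hat{I}^{(t)})-\splresXt}_\infty\le\eps$ for every $t$, and the new zero-out further restricts to those with $\bignorm{\split(\hat{I}^{(t,r)})-\splresXt^{(r)}}_\infty\le\eps$ for every $t$ and $r$. The second family of conditions implies the first: using the constraint $\sum_r A_{t,r}\splresXt^{(r)}=\splresXt$ and $\sum_r A_{t,r}=1$,
\begin{align*}
  \bignorm{\split(\hat{I}^{(t)})-\splresXt}_\infty
  &= \Bignorm{\textstyle\sum_{r=1}^3 A_{t,r}\bigbk{\split(\hat{I}^{(t,r)})-\splresXt^{(r)}}}_\infty \\
  &\le \sum_{r=1}^3 A_{t,r}\,\bignorm{\split(\hat{I}^{(t,r)})-\splresXt^{(r)}}_\infty \;\le\; \eps .
\end{align*}
Thus after both zero-out steps the surviving $X$-blocks are exactly those with each $\hat{I}^{(t,r)}$ approximately consistent with $\splresXt^{(r)}$, and the identical argument handles the $Y$- and $Z$-blocks. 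Under the isomorphism above, this is precisely the set of variable blocks over which $\bigotimes_{r=1}^3\bigotimes_{t=1}^{s'} T_{i_t,j_t,k_t}^{\otimes A_{t,r}n_t}[\splresXt^{(r)},\splresYt^{(r)},\splresZt^{(r)},\eps]$ is defined; since both are subtensors of $\bigotimes_{r,t} T_{i_t,j_t,k_t}^{\otimes A_{t,r}n_t}$ over that same block set, the tensor obtained after the zero-outs is, via the isomorphism, equal to this target, hence isomorphic to it, as claimed.

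The only point that needs care --- rather than being a genuine obstacle --- is the bookkeeping in the first paragraph: one must verify that the factor permutation really does send level-$1$ blocks to level-$1$ blocks and that ``the portion of a level-$1$ block in the $r$-th region of the $t$-th term'' is literally the level-$1$ block indexed by the sub-sequence $\hat{I}^{(t,r)}$, with $\split$ equal to the empirical chunk distribution of $\hat{I}^{(t,r)}$. Once this identification is pinned down, the convexity inequality above does all the work and the rest is routine.
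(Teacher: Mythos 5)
Your proof is correct and takes essentially the same approach as the paper: the paper's proof also reduces to a single term $t$, observes that the remaining blocks after the zero-out are those with each region's portion $\eps$-approximately consistent with $\splresXt^{(r)}$, and uses the convexity/triangle-inequality argument (from $\sum_r A_{t,r}\splresXt^{(r)}=\splresXt$) to show such blocks already lay in the input $\eps$-interface tensor. The only difference is that you explicitly spell out the relabeling isomorphism and the identity $\split(\hat I^{(t)})=\sum_r A_{t,r}\,\split(\hat I^{(t,r)})$, which the paper treats as immediate.
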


\begin{proof}
  We only need to show that for a fixed $t$,
  \[
    T_{i_t, j_t, k_t}^{\otimes n_t}[\splresXt, \splresYt, \splresZt, \eps] \;\degen\; \bigotimes_{r=1}^3 T_{i_t, j_t, k_t}^{\otimes A_{t,r} n_t} [\splresXt^{(r)}, \splresYt^{(r)}, \splresZt^{(r)}, \eps]
    \numberthis \label{eq:dividing/sub_zero_out}
  \]
  by performing the above zeroing-out rule, i.e., zeroing out every level-1 $X$-block whose portion in the $r$-th region is not $\eps$-approximate consistent with $\splresXt^{(r)}$, and doing similarly for $Y$- and $Z$-blocks. Suppose some level-1 $X$-block belongs to the right-hand side and has complete split distributions $\xiXt^{(1)}, \xiXt^{(2)}, \xiXt^{(3)}$ in three regions respectively, each of which is at most $\eps$ away from $\splresXt^{(1)}, \splresXt^{(2)}, \splresXt^{(3)}$ in $L_{\infty}$ distance. Then, its average complete split distribution $\xiXt \defeq A_1 \xiXt^{(1)} + A_2 \xiXt^{(2)} + A_3 \xiXt^{(3)}$ has at most $\eps$ distance from $\splresXt$, which means that the considered level-1 $X$-block also belong to the left-hand side. It is the same for $Y$- and $Z$-blocks, so the right-hand side of \eqref{eq:dividing/sub_zero_out} is a subtensor of the left-hand side, i.e., Eq.~\eqref{eq:dividing/sub_zero_out} holds, which further implies the claim.
\end{proof}

In the following, we will focus on the first region $r=1$, in which we will apply asymmetric hashing that allows the sharing of $Z$-blocks. Let
\[ \T^{(1)} \defeq \bigotimes_{t = 1}^{s'} T_{i_t, j_t, k_t}^{\otimes A_{t,1} n_t}[\splresXt^{(1)}, \splresYt^{(1)}, \splresZt^{(1)}, \eps]. \]
We will omit the superscript $(1)$ on all variables for conciseness.

\subsection{Asymmetric Hashing}

Next, we apply hashing similarly to the global stage. For every $t \in [s']$, recall that $\alpha_t$ is a distribution on $\{(i', j', k') \in \mathbb{Z}_{\ge 0}^3: i' + j' + k' = 2^{\lvl-1}\}$. Additionally, the marginal distributions of $\alpha_t(i', j', k')$ on the three dimensions are the same as $\splonelevelXt(i', i_t - i')$, $\splonelevelYt(j', j_t - j')$, $\splonelevelZt(k', k_t - k')$, respectively.

Each level-$(\lvl-1)$ index sequence is partitioned into $s'$ parts, where each part corresponds to one term in $\T$. The $t$-th part is a length-$(2n_t)$ $\{0, \ldots, 2^{\lvl-1}\}$-sequence, which can also be viewed as a length-$(n_t)$ $\{0, \ldots, 2^{\lvl-1}\}^2$-sequence by combining pairs of adjacent numbers. If the $t$-th part of a level-$(\lvl-1)$ $X$-index sequence is not consistent with the distribution $\splonelevelXt$ for any $t$, we zero out the corresponding level-$(\lvl-1)$ $X$-block. We similarly handle the $Y$- and $Z$-blocks.

Let $\numxblock$ be the number of remaining level-$(\lvl-1)$ $X$-blocks, and it is not difficult to see that
\begin{equation}
  \label{eq:constituent:NBX}
  \numxblock = 2^{\sum_{t} H(\splonelevelXt) \cdot A_{t,1} n_t \pm o(n)}.
\end{equation}
Similarly, let $\numyblock$ and $\numzblock$ be the number of remaining $Y$- and $Z$-blocks, and we have 
\begin{equation}
\label{eq:constituent:NBY-NBZ}
    \numyblock = 2^{\sum_{t} H(\splonelevelYt) \cdot A_{t,1} n_t \pm o(n)}, \quad \numzblock = 2^{\sum_{t} H(\splonelevelZt) \cdot A_{t,1} n_t \pm o(n)}. 
\end{equation}
Let $\numalpha$ be the number of remaining block triples that are consistent with $\{\alpha_t\}_{t \in [s']}$. We have 
\begin{equation}
\label{eq:constituent:numalpha}
    \numalpha = 2^{\sum_t H(\alpha_t) \cdot A_{t,1} n_t \pm o(n)}.
\end{equation}
Finally, let $\numtriple$ be the number of remaining level-$(\lvl-1)$ block triples $X_IY_JZ_K$.
\begin{claim}
  \label{cl:constituent:numtriple-bound}
  $\numtriple = 2^{\sum_t (H(\alpha_t) + P_{\alpha, t}) \cdot A_{t,1} n_t \pm o(n)}$, where we recall that $P_{\alpha, t} \defeq \max_{\alpha'_t \in D_t} H(\alpha'_t) - H(\alpha_t)$ in which $D_t$ is the set of distributions sharing the same marginals as $\alpha_t$.
\end{claim}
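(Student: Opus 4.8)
The plan is to carry over the argument of \cref{cl:global:numtriple-bound} to the multi‑term setting, the only new ingredient being that the count now factors over the $s'$ terms. First I would record the consequence of the zeroing‑out performed just before the hashing step: since every level‑$(\lvl-1)$ $X$-block (resp.\ $Y$-, $Z$-block) whose $t$-th part fails to be consistent with $\splonelevelXt$ (resp.\ $\splonelevelYt$, $\splonelevelZt$) for some $t$ has been removed, every surviving level‑$(\lvl-1)$ block triple $X_I Y_J Z_K$ has the property that, for each $t \in [s']$, the restriction of $(I,J,K)$ to the $t$-th part is consistent with some distribution $\alpha'_t$ whose three marginals agree with those of $\alpha_t$, and which is moreover ``integral'' in the sense that $\alpha'_t(i',j',k')$ times the (rounded) length $A_{t,1} n_t$ of the $t$-th part is a nonnegative integer for all $(i',j',k')$. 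Write $D'_t \subseteq D_t$ for the (finite) set of such $\alpha'_t$. Conversely, any such triple survives.

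Next I would count. For a fixed tuple $(\alpha'_1,\dots,\alpha'_{s'})$ with $\alpha'_t \in D'_t$ for all $t$, the number of level‑$(\lvl-1)$ block triples whose $t$-th part realizes $\alpha'_t$ for every $t$ is — since the $s'$ parts occupy pairwise disjoint blocks of coordinates and each part is counted by a multinomial coefficient — the product $\prod_{t \in [s']} 2^{H(\alpha'_t) A_{t,1} n_t \pm o(n)}$. Summing over all admissible tuples gives
\[ \numtriple \;=\; \sum_{\substack{\alpha'_t \in D'_t \\ t \in [s']}} \prod_{t \in [s']} 2^{H(\alpha'_t) A_{t,1} n_t \pm o(n)}. \]
Because each $|D'_t| = \poly(n)$ and $s'$ is a fixed constant, there are only $\poly(n)$ tuples, so the sum is dominated by its largest term: $\numtriple = 2^{M \pm o(n)}$ with $M = \max \sum_{t} H(\alpha'_t) A_{t,1} n_t$ over admissible tuples. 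Since the membership condition on $\alpha'_t$ involves only the $t$-th coordinate of the tuple, this maximization decouples, $M = \sum_{t} A_{t,1} n_t \cdot \max_{\alpha'_t \in D'_t} H(\alpha'_t)$.

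Finally, exactly as in \cref{cl:global:numtriple-bound}, as $n \to \infty$ the integral distributions $D'_t$ become dense in the polytope $D_t$ and, by continuity of the entropy function, $\max_{\alpha'_t \in D'_t} H(\alpha'_t) \to \max_{\alpha'_t \in D_t} H(\alpha'_t) = H(\alpha_t) + P_{\alpha,t}$; absorbing the discrepancy into the $\pm o(n)$ term yields $\numtriple = 2^{\sum_t (H(\alpha_t) + P_{\alpha,t}) A_{t,1} n_t \pm o(n)}$, as claimed. The only place requiring a little care — and the ``main obstacle'', such as it is — is the bookkeeping for the multi‑term structure: verifying that both the counting and the subsequent maximization factor over the $s'$ terms, which rests on the fact that distinct terms of the interface tensor live on disjoint index ranges and are constrained by independent zeroing‑outs; everything else is the single‑term argument verbatim.
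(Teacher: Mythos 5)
Your proof is correct and follows essentially the same approach as the paper: counting triples consistent with each tuple $(\alpha'_1,\dots,\alpha'_{s'})$ of distributions sharing marginals with the $\alpha_t$, summing over the polynomially many such tuples, and passing from the integral set $D'_t$ to $D_t$ by continuity of entropy. The paper's own proof of this claim is terser (it explicitly writes only the upper bound and invokes the same reasoning as \cref{cl:global:numtriple-bound} for the rest), whereas you carry out the matching lower bound and the decoupling over $t$ explicitly; this is a faithful elaboration, not a different route.
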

\begin{proof}
  Fixing a series of distributions $\alpha'_t \in D_t$ ($t = 1, 2, \ldots, s'$), the number of level-$(\lvl - 1)$ block triples consistent with $\midBK{\alpha'_t}_{t \in [s']}$ equals
  \[
    2^{\sum_t H(\alpha'_t) \cdot A_{t,1} n_t \pm o(n)} \;\le\; 2^{\sum_t \max_{\alpha''_t \in D_t} H(\alpha''_t) \cdot A_{t,1} n_t \pm o(n)} \;\le\; 2^{\sum_t (H(\alpha_t) + P_{\alpha, t}) \cdot A_{t,1} n_t \pm o(n)}.
  \]
  Taking summation over all $\poly(n) = 2^{o(n)}$ series of distributions $\midBK{\alpha'_t}_{t \in [s']}$ will prove the claim.
\end{proof}

Let $M \in [M_0, 2M_0]$ be a prime number for some integer $M_0$. Similar as before, the value of $M_0$ is yet to be fixed, but we first require that 
\begin{equation}
\label{eq:constituent:M_0_bounds_1}
M_0 \ge 8\cdot \max\left\{\frac{\numtriple}{\numxblock}, \frac{\numtriple}{\numyblock}\right\}.
\end{equation}

We independently pick uniformly random elements $b_0, \{w_p\}_{p=0}^{2n} \in \{0, \ldots, M-1\}$, and define the following hash functions $h_X, h_Y, h_Z : \{0, \ldots, 2^{\lvl-1}\}^n \rightarrow \{0, \ldots, M - 1\}$:
\begin{align*}
    h_X(I) &= b_0 + \left(\sum_{p=1}^{2n} w_p \cdot I_p \right) \bmod M,\\
    h_Y(J) &= b_0 + \left(w_0 + \sum_{p=1}^{2n} w_p \cdot J_p \right) \bmod M,\\
    h_Z(K) &= b_0 + \frac{1}{2}\left(w_0+\sum_{p=1}^{2n} w_p \cdot (2^{\lvl-1} - K_p) \right) \bmod M.
\end{align*}

Next, for a Salem-Spencer subset $B$ of $\{0, \ldots, M - 1\}$ that has size $M^{1-o(1)}$, we zero out all level-$(\lvl-1)$ blocks $X_I$ with $h_X(I) \notin B$, $Y_J$ with $h_Y(J) \notin B$, and $Z_K$ with $h_Z(K) \notin B$. Then all remaining block triples are contained in a bucket $b$ for some $b \in B$.

For every bucket $b$, if it contains two level-$(\lvl-1)$ triples $X_I Y_J Z_K$ and $X_I Y_{J'}Z_{K'}$ that share the same $X$-block, then we zero out $X_I$. We similarly handle $Y$-blocks. We repeatedly perform the previous zeroing-outs so that all remaining triples do not share $X$- or $Y$-blocks. For every level-$(\lvl-1)$ block  $X_I$ (or $Y_J$), we check whether the unique triple containing it is consistent with  $\{\alpha_t\}_{t \in [s']}$; if not, we zero out $X_I$ (or $Y_J$). We call the tensor after this step $\TPrime$. 

The following claims, which are analogous to the claims in \cref{sec:global}, still hold, and we omit their proofs to conciseness. 

\begin{claim}[Implicit in \cite{cw90}, see also \cite{duan2023}]
  \label{cl:constituent:hash-function-basic-properties}
  For a level-$(\lvl-1)$ block triple $X_I Y_J Z_K \in \T$, and for every $b \in \{0, \ldots, M - 1\}$,
  \[\Pr\Bk{\tall h_X(I) = h_Y(J) = h_Z(K) = b} = \frac{1}{M^2}.\]
  Furthermore, for two different block triples $X_I Y_J Z_K,  X_I Y_{J'} Z_{K'} \in \T$ that share the same $X$-block, and for every $b \in \{0, \ldots, M - 1\}$,
  \[\Pr\Bk{\tall h_X(I) = h_Y(J') = h_Z(K') = b \;\middle\vert\; h_X(I) = h_Y(J) = h_Z(K) = b} = \frac{1}{M}. \]
  This also holds analogously for different block triples that share the same $Y$-block or $Z$-block.
\end{claim}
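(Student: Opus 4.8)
The plan is to run the Coppersmith--Winograd hashing analysis, which is the same linear-algebra argument already carried out for \cref{cl:global:hash-function-basic-properties}, with the index alphabet $\{0,\dots,2^{\lvl}\}$ and sequence length $n$ there replaced by the alphabet $\{0,\dots,2^{\lvl-1}\}$ and length $2n$ here. First I would record the structural fact that any level-$(\lvl-1)$ triple $X_IY_JZ_K$ in $\T$ satisfies $I_p+J_p+K_p=2^{\lvl-1}$ for every $p\in[2n]$: within each term $T_{i_t,j_t,k_t}$ (where $i_t+j_t+k_t=2^{\lvl}$), a surviving level-$(\lvl-1)$ index pair splits $(i_t,j_t,k_t)$ into two halves each summing to $2^{\lvl-1}$, by the definition of the level-$(\lvl-1)$ partition. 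Hence $2^{\lvl-1}-K_p=I_p+J_p$, and substituting this into the definition of $h_Z$ yields the identity
\[ h_X(I)+h_Y(J)\equiv 2\,h_Z(K)\pmod M, \]
where I use that $M$ is an odd prime, so $2$ is invertible in $\F_M$.

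Next I would work over the field $\F_M$, regarding the randomness as a uniform vector $(b_0,w_0,w_1,\dots,w_{2n})\in\F_M^{2n+2}$ and letting $\ell_X(I),\ell_Y(J),\ell_Z(K)$ denote the homogeneous linear functionals $b_0+\sum_p I_p w_p$, $b_0+w_0+\sum_p J_p w_p$, $b_0+\tfrac12 w_0+\tfrac12\sum_p(2^{\lvl-1}-K_p)w_p$ that equal $h_X(I),h_Y(J),h_Z(K)$ respectively. The key bookkeeping observation is that the $b_0$-coefficient equals $1$ in all three functionals, while the $w_0$-coefficient is $0$ in $\ell_X$ and nonzero in $\ell_Y,\ell_Z$; this makes any two of $\ell_X(I),\ell_Y(J),\ell_Z(K)$ linearly independent, so $(h_X(I),h_Z(K))$ is uniform on $\F_M^2$. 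Since $b+b=2b$ is consistent with the identity above, the event $\{h_X(I)=h_Y(J)=h_Z(K)=b\}$ coincides with $\{h_X(I)=b,\ h_Z(K)=b\}$, which has probability $1/M^2$; this proves the first display.

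For the second display, the conditioning event $E=\{h_X(I)=h_Y(J)=h_Z(K)=b\}$ is, by the identity, an affine subspace of $\F_M^{2n+2}$ of codimension $2$ whose direction is the annihilator of $\Span\{\ell_X(I),\ell_Y(J)\}$; hence $\Pr[h_Y(J')=b\mid E]=1/M$ as soon as $\ell_Y(J')\notin\Span\{\ell_X(I),\ell_Y(J)\}$, since then $h_Y(J')$ restricted to $E$ is balanced over $\F_M$. If $\ell_Y(J')=a\,\ell_X(I)+c\,\ell_Y(J)$, comparing $w_0$-coefficients forces $c=1$ and then the $b_0$-coefficients force $a=0$, so $J'=J$; but $K_p=2^{\lvl-1}-I_p-J_p$ is determined by $I,J$, so together with the shared block $X_I$ this forces $K'=K$ and the two triples coincide, a contradiction. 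Finally, since $X_IY_{J'}Z_{K'}$ is also a triple in $\T$, the identity gives $h_X(I)+h_Y(J')\equiv 2\,h_Z(K')\pmod M$, so on $E$ (where $h_X(I)=b$) one has $h_Y(J')=b\iff h_Z(K')=b$, which together with $h_X(I)=b$ yields the claimed conditional probability $1/M$. The variants for a shared $Y$- or $Z$-block are identical: in the $Y$-case one shows $\ell_X(I')\notin\Span\{\ell_X(I),\ell_Y(J)\}$ using $I'\neq I$ together with the $b_0$- and $w_0$-coefficients, and in the $Z$-case (where $K$ is fixed and $I$ determines $J$, so $I'\neq I$) one shows $\ell_X(I')\notin\Span\{\ell_X(I),\ell_Z(K)\}$ the same way. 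The only delicate points, and the closest thing to an obstacle, are purely bookkeeping: verifying that $2$ is invertible modulo $M$, and checking in each block-sharing case that two distinct triples genuinely differ in the relevant index sequence — which is exactly where the coordinatewise constraint $I_p+J_p+K_p=2^{\lvl-1}$, forcing each index sequence to be a function of the other two, is used.
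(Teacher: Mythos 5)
Your proof is correct and fills in the argument the paper omits; the claim is stated in the paper without proof (deferred to \cite{cw90,duan2023}, and to the analogous global-stage claim, which is likewise unproved there), and your linear-algebraic analysis — deriving $h_X(I)+h_Y(J)\equiv 2h_Z(K)\pmod M$ from the coordinatewise constraint $I_p+J_p+K_p=2^{\lvl-1}$, using the $b_0$- and $w_0$-coefficients to certify pairwise linear independence of the three hash functionals, and then reducing the conditional statement to showing $\ell_Y(J')\notin\Span\{\ell_X(I),\ell_Y(J)\}$ — is exactly the standard Coppersmith--Winograd hashing argument the paper intends. The only caveat worth flagging explicitly in the writeup is the invertibility of $2$ modulo $M$, which you already noted; this holds since $M$ is a large odd prime in the regime of interest.
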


\begin{claim}
  \label{cl:constituent:prob-IJK-remain}
  For every $b \in B$ and for every level-$(\lvl-1)$ block triple $X_I Y_J Z_K \in \T$ that is consistent with $\{\alpha_t\}_{t \in [s']}$, the probability that $X_I Y_J Z_K$ remains in $\TPrime$ conditioned on $h_X(I) = h_Y(J) = h_Z(K) = b$ is $\ge \frac{3}{4}$. 
\end{claim}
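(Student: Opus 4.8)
The plan is to mirror the proof of \cref{cl:global:prob-IJK-remain} almost verbatim, replacing the global-stage hashing facts by their constituent-stage analogues. Fix $b \in B$ and a level-$(\lvl-1)$ block triple $X_I Y_J Z_K \in \T$ that is consistent with $\{\alpha_t\}_{t \in [s']}$, and condition throughout on $h_X(I) = h_Y(J) = h_Z(K) = b$. The first step is to observe that, under this conditioning, the only way $X_I Y_J Z_K$ can fail to survive into $\TPrime$ is if $X_I$ is zeroed out because some other triple $X_I Y_{J'} Z_{K'}$ also lands in bucket $b$, or $Y_J$ is zeroed out because some other triple $X_{I'} Y_J Z_{K'}$ lands in bucket $b$: the $Z$-blocks are never zeroed for being shared in this (asymmetric) region, and since $X_I Y_J Z_K$ is assumed consistent with $\{\alpha_t\}_t$ the final $\alpha$-consistency check on $X_I$ and $Y_J$ cannot remove it either. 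So it suffices to bound the probabilities of these two bad events.

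Next I would bound the first bad event. Right before hashing there are $\numtriple$ surviving level-$(\lvl-1)$ triples and $\numxblock$ surviving level-$(\lvl-1)$ $X$-blocks, and by the symmetry of the leveled partition within each of the $s'$ terms — permuting the coordinates of the $t$-th part simultaneously in the $X$-, $Y$-, and $Z$-index sequences preserves both the set of surviving blocks and the set of surviving triples and acts transitively on surviving $X$-blocks — every $X$-block lies in exactly $\numtriple/\numxblock$ triples. Hence $X_I$ is shared with $\numtriple/\numxblock-1$ other triples; by \cref{cl:constituent:hash-function-basic-properties} each of them lands in bucket $b$ with conditional probability $1/M$, so by the union bound the first bad event has probability at most
\[ \frac{\numtriple}{M\,\numxblock} \;\le\; \frac{\numtriple}{M_0\,\numxblock} \;\overset{\cref{eq:constituent:M_0_bounds_1}}{\le}\; \frac{1}{8}. \]
The same argument with $Y$ in place of $X$ bounds the second bad event by $\tfrac{1}{8}$. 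A final union bound then gives that $X_I Y_J Z_K$ fails to remain in $\TPrime$ with probability at most $\tfrac{1}{4}$, i.e. it remains with probability at least $\tfrac{3}{4}$, as claimed.

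I expect the only step needing genuine care is the symmetry claim that every surviving level-$(\lvl-1)$ $X$-block belongs to the same number of triples; once this is in hand everything else is a mechanical repetition of the global-stage computation. The subtlety is that the surviving $X$-blocks are exactly those whose $t$-th part is consistent with $\splonelevelXt$ for each $t$, and one must check that the coordinate-permutation action within each term genuinely preserves all the zeroing-out performed so far — membership of $h_X,h_Y,h_Z$ in $B$, the $\alpha_t$-consistency, and the approximate-consistency constraints inherited from the $\eps$-interface tensor — all of which are symmetric functions of the coordinates within a term, so the action is well-defined and transitive on surviving $X$-blocks (and likewise for $Y$- and $Z$-blocks). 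Given that, the proof goes through exactly as above.
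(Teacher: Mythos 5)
Your proof is correct and is essentially what the paper intends: the paper explicitly omits the proof of \cref{cl:constituent:prob-IJK-remain} as ``analogous'' to the global-stage claims, and your argument is a faithful transcription of the proof of \cref{cl:global:prob-IJK-remain} with \cref{cl:constituent:hash-function-basic-properties} and \cref{eq:constituent:M_0_bounds_1} slotted in. Two small imprecisions in your closing discussion are worth tidying: (1) the count $\numtriple/\numxblock$ is taken \emph{right before the hashing step}, so the symmetry action need not preserve membership of $h_X,h_Y,h_Z$ in $B$ (which is in any case \emph{not} a symmetric function of the coordinates, as it depends on the random weights $w_p$), nor the later $\alpha_t$-consistency check — the only invariants it must preserve are the $\splonelevelXt$-, $\splonelevelYt$-, $\splonelevelZt$-consistency of the level-$(\lvl-1)$ blocks and the tensor constraint $I_p+J_p+K_p=2^{\lvl-1}$; and (2) since $\splonelevelXt$ is a distribution on \emph{pairs} of adjacent level-$(\lvl-1)$ coordinates, the correct symmetry group is the permutation of the $A_{t,1}n_t$ level-$\lvl$ pairs within each term (applied simultaneously to $I$, $J$, $K$), not arbitrary permutations of the $2A_{t,1}n_t$ level-$(\lvl-1)$ coordinates, which would break the pair structure.
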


\begin{claim}
  \label{cl:constituent:expected-IJK-remain}
  The expected number of level-$(\lvl-1)$ block triples in $\TPrime$ is at least $\numalpha \cdot M_0^{-1-o(1)}$.
\end{claim}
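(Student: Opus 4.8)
The plan is to mirror the argument for \cref{cl:global:expected-IJK-remain} from the global stage, since the only difference is that the relevant distribution is now a family $\{\alpha_t\}_{t\in[s']}$ rather than a single $\alpha$, and the hash functions act on length-$2n$ index sequences; neither change affects the probabilistic reasoning. First I would fix any level-$(\lvl-1)$ block triple $X_I Y_J Z_K$ in $\T$ that is consistent with $\{\alpha_t\}_{t\in[s']}$, and fix any bucket index $b\in B$. By \cref{cl:constituent:hash-function-basic-properties}, $\Pr[h_X(I)=h_Y(J)=h_Z(K)=b]=\tfrac{1}{M^2}$, and conditioned on this event, \cref{cl:constituent:prob-IJK-remain} guarantees that $X_I Y_J Z_K$ survives all the within-bucket cleanup steps (the zeroing-outs that remove $X$- and $Y$-blocks shared by two triples in a bucket, and the zeroing-out of blocks whose unique triple is inconsistent with $\{\alpha_t\}$) with probability at least $\tfrac34$. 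Hence the probability that this particular triple lands in bucket $b$ and remains in $\TPrime$ is at least $\tfrac{3}{4M^2}$.

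Next I would sum over $b\in B$: the events ``$X_I Y_J Z_K$ is in bucket $b$'' are disjoint across distinct $b$, so the probability that $X_I Y_J Z_K$ remains in $\TPrime$ at all is at least $|B|\cdot\tfrac{3}{4M^2}$. Applying linearity of expectation over all $\numalpha$ block triples consistent with $\{\alpha_t\}_{t\in[s']}$ (these are exactly the triples that have a chance of surviving, by the final zeroing-out defining $\TPrime$), the expected number of level-$(\lvl-1)$ block triples remaining in $\TPrime$ is at least $\numalpha\cdot|B|\cdot\tfrac{3}{4M^2}$. Finally, using $|B|=M^{1-o(1)}$ from the Salem--Spencer bound and $M\in[M_0,2M_0]$, this quantity equals $\numalpha\cdot M^{-1-o(1)}=\numalpha\cdot M_0^{-1-o(1)}$, as claimed.

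I do not expect any genuine obstacle here; the statement is a routine transfer of \cref{cl:global:expected-IJK-remain} to the constituent-stage setting. The only points that require a line of care are: (i) confirming that the triples counted are precisely those consistent with $\{\alpha_t\}_{t\in[s']}$, so that \cref{cl:constituent:prob-IJK-remain} applies to each of them; and (ii) absorbing the constant-factor gap between $M$ and $M_0$, together with the $M^{o(1)}$ slack in the Salem--Spencer set size, into the $o(1)$ in the exponent of $M_0^{-1-o(1)}$.
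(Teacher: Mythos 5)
Your proof is correct and matches the paper's intended argument: the paper explicitly omits the proof of this claim, stating that it is analogous to the global-stage version (\cref{cl:global:expected-IJK-remain}), and your write-up supplies precisely that transfer, using \cref{cl:constituent:hash-function-basic-properties} and \cref{cl:constituent:prob-IJK-remain} in place of their global-stage counterparts and summing over $b\in B$ and over triples consistent with $\{\alpha_t\}_{t\in[s']}$.
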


\subsection{Compatibility Zero-Out I}

Recall that for every $W \in \{X, Y, Z\}$ and $i' + j' + k' = 2^{\lvl - 1}$, $\splres_{W, t, i', j', k'}$ is a level-$(\lvl-1)$ complete split distribution, and they satisfy
\begin{equation}
\label{eq:constituent:complete-split-dist-constraint}
\splres_{W, t} = \sum_{i', j', k'} \alpha_{t}(i', j', k') \cdot \left(\splres_{W, t, i', j', k'} \times \splres_{W, t,  i_t - i', j_t - j', k_t - k'}\right).
\end{equation}
Let
\[ S^{(I, J, K)}_{t, i', j', k'} \defeq \{p \textup{ is in the $t$-th term} \mid I_p = i', J_p = j', K_p = k'\}, \]
and
\[S^{(K)}_{t, *, *, k'} := \{p \textup{ is in the $t$-th term} \mid K_p = k'\}, \quad S_{t, *, *, *} := \{p \textup{ is in the $t$-th term}\}.  \]
If clear from the context, we will drop the superscript $(I, J, K)$ or $(K)$.

Recall that in $\TPrime$, every level-$(\lvl-1)$ block $X_I$ is in a unique block triple $X_I Y_J Z_K$. For every level-1 block $X_{\hat{I}} \in X_I$, we will zero out $X_{\hat{I}}$ if $\split(\hat I, S_{t, i',j',k'}) \ne \splres_{X, t, i', j', k'}$ for any $t, i', j', k'$. Similarly, every level-$\lvl$ block $Y_J$ is in a unique block triple, and we zero out every $Y_{\hat{J}} \in Y_J$ where $\split(\hat J, S_{t, i',j',k'}) \ne \splres_{Y, t, i', j', k'}$ for any $t, i', j', k'$.

For every level-$1$ block $Z_{\hat K} \in Z_K$, we zero out $Z_{\hat K}$ if $\split(\hat K, S_{t, *,*,k'}) \ne \splresavg_{Z, t, *,*,k'}$ for any $t, k'$,  where
\[
  \splresavg_{Z, t, *,*,k'} \defeq \frac{\sum_{i'+j' = 2^{\lvl-1} - k'} \bigbk{\alpha(i', j', k') + \alpha(i_t-i', j_t-j', k_t-k')} \cdot \splres_{Z,i',j',k'}}{\sum_{i'+j' = 2^{\lvl-1} - k'} \bigbk{\alpha(i', j', k')+ \alpha(i_t-i', j_t-j', k_t-k')}}.
\]

We call the tensor after the previous zeroing-outs $\TDoublePrime$.

Next, we define the notion of compatibility. 

\begin{definition}[Compatibility]
  \label{def:constituent:compatibility}
  For some $I, J, K$, a level-$1$ block $Z_{\hat{K}} \in Z_K$ is compatible with a level-$(\lvl-1)$ triple $X_IY_J Z_K$ if 
  \begin{enumerate}
  \item 
    \label{item:constituent:compatibility1} For every $t$ and every $(i', j', k') \in \mathbb{Z}_{\ge 0}^3 \cap [0, i_t] \times [0, j_t] \times [0, k_t]$ with $ i' + j' + k' = 2^{\lvl-1}$, $i' = 0 \text{ or } j' = 0$, there is $\split(\hat K, S_{t, i',j',k'}) = \splres_{Z,t, i',j',k'}$.
  \item
    \label{item:constituent:compatibility2}
    For every $t$ and every index $k' \in \{0, 1, \ldots, \min\{2^{\lvl-1}, k_t\}\}$,  $\split(\hat K, S_{t, *,*,k'}) = \splresavg_{Z,t,  *,*,k'}$.
  \end{enumerate}
\end{definition}

\begin{claim}
  \label{cl:constituent:compatible}
  In $\TDoublePrime$, for every remaining level-1 block triple $X_{\hat I}Y_{\hat J}Z_{\hat K}$ and the level-$(\lvl-1)$ block triple $X_I Y_J Z_K$ that contains it, $Z_{\hat K}$ is compatible with $X_I Y_J Z_K$.
\end{claim}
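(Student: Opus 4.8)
The plan is to transcribe the proof of \cref{cl:global:compatible} almost verbatim, now carrying two levels of chunking instead of one. Throughout, a position of the level-$(\lvl-1)$ index sequence is read as a block of $2^{\lvl-2}$ consecutive positions of the level-1 index sequence, and the sets $S_{t,i',j',k'}$, together with $\split(\cdot,\,\cdot)$ applied to such sets, are the level-$(\lvl-1)$ analogues — restricted to the $t$-th term — of the notions used in \cref{sec:global}.

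First I would dispose of \cref{item:constituent:compatibility2} of \cref{def:constituent:compatibility}: it holds for free, because Compatibility Zero-Out I deleted every level-1 $Z$-block $Z_{\hat K}$ with $\split(\hat K, S_{t,*,*,k'}) \ne \splresavg_{Z,t,*,*,k'}$ for some $t,k'$, so any surviving $Z_{\hat K}$ — in particular the one inside our triple — already satisfies it.

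For \cref{item:constituent:compatibility1}, I would fix $t$ and a triple $(i',j',k') \in \mathbb{Z}_{\ge 0}^3 \cap [0,i_t]\times[0,j_t]\times[0,k_t]$ with $i'+j'+k' = 2^{\lvl-1}$ and $j'=0$; the case $i'=0$ is symmetric after swapping the roles of $X$ and $Y$. Since $X_{\hat I}Y_{\hat J}Z_{\hat K}$ survives in $\TDoublePrime$, the block $X_{\hat I}$ survived Compatibility Zero-Out I, so $\split(\hat I, S_{t,i',j',k'}) = \splres_{X,t,i',j',k'}$. Because $j'=0$, every level-$(\lvl-1)$ position $p \in S_{t,i',j',k'}$ has $J_p=0$, which forces the corresponding length-$2^{\lvl-2}$ chunk of $\hat J$ to be $\vec{0}$; and since the level-1 triple $X_{\hat I}Y_{\hat J}Z_{\hat K}$ is a nonzero subtensor of $\CW_q^{\otimes N}$, we have $\hat I_{\hat t}+\hat J_{\hat t}+\hat K_{\hat t}=2$ at every level-1 coordinate $\hat t$, so each such chunk of $\hat K$ equals $\vec{2}$ minus the corresponding chunk of $\hat I$. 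Hence for every $L \in \{0,1,2\}^{2^{\lvl-2}}$, $\split(\hat K, S_{t,i',j',k'})(L) = \split(\hat I, S_{t,i',j',k'})(\vec{2}-L) = \splres_{X,t,i',j',k'}(\vec{2}-L)$. Appealing to the WLOG normalization of the level-$(\lvl-1)$ complete split distributions analogous to \cref{rmk:assumptions_on_complete_split_dist} (namely $\splres_{X,t,i',0,k'}(L)=\splres_{Z,t,i',0,k'}(\vec{2}-L)$, valid since otherwise the output interface tensor vanishes and the claim is trivial), this gives $\split(\hat K, S_{t,i',j',k'}) = \splres_{Z,t,i',j',k'}$, which is exactly \cref{item:constituent:compatibility1}.

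The only step I expect to require genuine care is bookkeeping rather than mathematics: making precise that $S_{t,i',j',k'}$ lives among the level-$(\lvl-1)$ positions of the $t$-th term, that each such position fans out to $2^{\lvl-2}$ level-1 coordinates, and that each $\splres_{W,t,i',j',k'}$ is a distribution over $\{0,1,2\}^{2^{\lvl-2}}$; and recording the boundary-reflection normalization that makes the passage from $\split(\hat I, S_{t,i',j',k'})$ (resp. $\split(\hat J, S_{t,i',j',k'})$) to $\split(\hat K, S_{t,i',j',k'})$ valid precisely when $j'=0$ (resp. $i'=0$). Once that is in place, the argument is literally the proof of \cref{cl:global:compatible} with $(i,j,k)$ replaced by $(i',j',k')$ and all quantities restricted to the $t$-th term.
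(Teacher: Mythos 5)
Your proof is correct and is exactly the argument the paper intends; the paper itself simply states that this proof is the same as that of \cref{cl:global:compatible}, which is what you have transcribed with the appropriate level shift (chunks of length $2^{\lvl-2}$, positions ranging over level-$(\lvl-1)$ indices) and per-term restriction. The one point you rightly flag — the level-$(\lvl-1)$ analogue of \cref{rmk:assumptions_on_complete_split_dist} normalizing $\splres^{(r)}_{W,t,i',j',k'}$ when one of $i',j',k'$ is zero — is indeed not stated explicitly in the constituent-stage setup, but it is a harmless WLOG for the same reason as in the global stage (otherwise the corresponding term of the output interface tensor vanishes), and the final identification $\split(\hat K, S_{t,i',j',k'}) = \splres_{Z,t,i',j',k'}$ does rely on it.
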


The proof of this claim is the same as \cref{cl:global:compatible}. 

\subsection{Compatibility Zero-Out II: Unique Triple}

In this step, we zero out all level-$1$ $Z$-block $Z_{\hat K}$ that are compatible with more than one level-$(\lvl-1)$ triples and they become holes. After this step, each remaining level-$1$ $Z$-block $Z_{\hat{K}}\in Z_K$ is compatible with a unique level-$(\lvl-1)$ triple $X_I Y_J Z_K$ containing it.

\subsection{Usefulness Zero-Out}

Next, we further zero out some level-1 $Z$-blocks using the following definition of \emph{usefulness}.

\begin{definition}[Usefulness]
For a level-1 block $Z_{\hat K}$ and a level-$(\lvl-1)$ triple $X_I Y_J Z_K$ containing it, if for all $t, i', j', k'$ we have $\split(\hat K, S_{t, i',j',k'}) = \splres_{Z,t, i',j',k'}$, then we say that $Z_{\hat K}$ is \emph{useful} for $X_I Y_J Z_K$.
\end{definition}

For each $Z_{\hat K}$, it appears in a unique triple $X_I Y_J Z_K$ by the previous zeroing out. Furthermore, if $Z_{\hat K}$ is not useful for this triple, we zero out $Z_{\hat K}$. We call the current tensor $\TTriplePrime$.

Ideally, we want the subtensor of $\TTriplePrime$ over each triple $X_I Y_J Z_K$ to be isomorphic to 
\[ \mathcal{T}^* = \bigotimes_{t \in [s']} \; \bigotimes_{i'+j'+k' = 2^{\lvl-1}} T_{i',j',k'}^{\otimes A_{t,1} \cdot (\alpha_t(i', j', k')+\alpha_t(i_t-i', j_t-j', k_t-k')) \cdot n_t} [\splres_{X, t, i', j', k'}, \splres_{Y, t, i', j', k'}, \splres_{Z, t, i', j', k'}]. \]
However, there will be two types of holes. The first type of holes is caused by the fact that some level-$1$ subtensors are already missing in the input tensor because we enforced complete split distributions $\splresXt, \splresYt, \splresZt$ on it; the second type of holes is caused by zeroing out $Z_{\hat{K}}$ that are compatible with multiple level-$(\lvl-1)$ triples. In the next section, we will analyze and fix these two types of holes. 

\subsection{Fixing Holes}

First, we analyze the fraction of holes that are caused by the complete split distributions enforced in the input. To do so, we focus on a fixed triple $X_I Y_J Z_K$ and the subtensor $\T^*$ we desire. Then we take a random level-$1$ block that is not zeroed out in $\T^*$, and upper bound the probability that this level-$1$ block is zeroed out in the input level-$\lvl$ $\eps$-interface tensor. By symmetry, it suffices to focus on $X$-blocks. 

Fix any $(\hat{i}_1, \ldots, \hat{i}_{2^{\lvl-1}})$, let us analyze the fraction of its occurrences in a random level-$1$ $X$-block in $\T^*$. For every $t \in [s']$, and for every $i', j', k'$, we first focus on the level-$\lvl$ positions in the $t$-th term where $(i_t, j_t, k_t)$ is split into $(i', j', k')$ and $(i_t - i', j_t - j', k_t - k')$ (thus, there are $A_{t,1} \cdot \alpha_t(i', j', k') \cdot n_t$ such positions). Among these positions, we want to analyze the number of positions that correspond to the level-$1$ chunk $(\hat{i}_1, \ldots, \hat{i}_{2^{\lvl-1}})$. Therefore, the first half-chunk, which corresponds to $(i', j', k')$, should be $(\hat{i}_1, \ldots, \hat{i}_{2^{\lvl-2}})$, and the second half-chunk, which corresponds to $(i_t - i', j_t - j', k_t - k')$, should be $(\hat{i}_{2^{\lvl-2}+1}, \ldots, \hat{i}_{2^{\lvl-1}})$. 

There are $A_{t,1} \cdot (\alpha_t(i', j', k')+\alpha_t(i_t-i', j_t-j', k_t-k')) \cdot n_t$ level-$(\lvl-1)$ positions corresponding to $(i', j', k')$, and among them, $A_{t,1} \cdot \alpha_t(i', j', k') \cdot n_t$ are in odd positions. By definition of $\T^*$, the fraction of $(\hat{i}_1, \ldots, \hat{i}_{2^{\lvl-2}})$ in these $A_{t,1} \cdot (\alpha_t(i', j', k')+\alpha_t(i_t-i', j_t-j', k_t-k')) \cdot n_t$ positions is $\splres_{X, t, i', j', k'} (\hat{i}_1, \ldots, \hat{i}_{2^{\lvl-2}})$, and if we take a random level-$1$ $X$-block in $\T^*$, the fraction of $(\hat{i}_1, \ldots, \hat{i}_{2^{\lvl-2}})$ among the odd positions corresponding to $(i', j', k')$ is $\splres_{X, t, i', j', k'} (\hat{i}_1, \ldots, \hat{i}_{2^{\lvl-2}}) \pm o(1)$ with $1-1/\poly(n)$ probability, by concentration bounds. Furthermore, the subset of positions in these $A_{t,1} \cdot \alpha_t(i', j', k') \cdot n_t$ positions is also random. Similarly, with $1-1/\poly(n)$ probability, the fraction of  $(\hat{i}_{2^{\lvl-2}+1}, \ldots, \hat{i}_{2^{\lvl-1}})$ in the even positions corresponding to $(i_t - i', j_t - j', k_t - k')$ is  $\splres_{X, t, i', j', k'} (\hat{i}_{2^{\lvl-2}+1}, \ldots, \hat{i}_{2^{\lvl-1}}) \pm o(1)$, and the positions are also random. Applying concentration bounds again, we get that the fraction of level-$\lvl$ positions corresponding to $(\hat{i}_{1}, \ldots, \hat{i}_{2^{\lvl-1}})$ among positions that split into $(i', j', k')$ and $(i_t - i', j_t - j', k_t - k')$
is $$\splres_{X, t, i', j', k'} (\hat{i}_{1}, \ldots, \hat{i}_{2^{\lvl-2}})\cdot \splres_{X, t, i_t-i', j_t-j', k_t-k'} (\hat{i}_{2^{\lvl-2}+1}, \ldots, \hat{i}_{2^{\lvl-1}}) \pm o(1).$$

Summing over all $i', j', k'$, we get that with probability $1-1/\poly(n)$, the fraction of level-$\lvl$ positions with $(\hat{i}_{1}, \ldots, \hat{i}_{2^{\lvl-1}})$ is 
\begin{align*}
& \sum_{i', j', k'} \alpha_t(i', j', k') \cdot \splres_{X, t, i', j', k'} (\hat{i}_{1}, \ldots, \hat{i}_{2^{\lvl-2}})\cdot \splres_{X, t, i_t-i', j_t-j', k_t-k'} (\hat{i}_{2^{\lvl-2}+1}, \ldots, \hat{i}_{2^{\lvl-1}}) \pm o(1)\\
=\;& \splres_{X, t}(\hat{i}_{1}, \ldots, \hat{i}_{2^{\lvl-1}}) \pm o(1). \tag{by \cref{eq:constituent:complete-split-dist-constraint}}
\end{align*}
The $o(1)$ term can become less than $\eps$, and the $1-1/\poly(n)$ probability can be bounded by $1-1/n^2$
for sufficiently large $n$. Therefore, a random level-$1$ $X$-block appears in $\T$ with probability at least $1-1/n^2$. This means that the fraction of holes caused by the complete split distributions enforced in the input is $1-1/n^2$ for the $X$-dimension. By symmetry, the same also holds for the $Y$- and $Z$-dimensions.

\bigskip

Next, we focus on holes caused by zeroing out $Z_{\hat{K}}$ that are compatible with multiple level-$(\lvl-1)$ triples. The analysis will be similar to \cref{sec:global:fix-holes}.

First, notice that for every level-1 $Z$-block $Z_{\hat K}$ that appears in the input of the constituent stage, its complete split distribution $\xiZt$ in the $t$-th term must be within $\eps$ $L_\infty$-distance to the given parameter $\splresZt$. Then we define $\pcomp$ as follows:

\begin{definition}[$\pcomp$]
  For fixed $Z_{\hat{K}}$ and $Z_K$ where $Z_{\hat{K}} \in Z_K$ and $\hat K$ has level-$\l$ complete split distributions $\midBK{\xiZt}_{t \in [s']}$, we define $\pcomp^*(\midBK{\xiZt}_{t \in [s']})$ as the probability that a uniformly random block triple $X_I Y_J Z_K$ consistent with $\{\alpha_t\}_{t \in [s']}$ is compatible with $Z_{\hat{K}}$. We further define $\displaystyle \pcomp \defeq \max_{\substack{\midBK{\xiZt}_{t \in [s']} \,: \\ \norm{\xiZt - \splresZt}_\infty \le \eps \; \forall t}} \pcomp^*(\midBK{\xiZt}_{t \in [s']})$.
\end{definition}

By symmetry between level-$\lvl$ positions, this probability $\pcomp^*(\midBK{\xiZt}_{t \in [s']})$ is the same for different $\hat K$ that have the same complete split distributions, so $\pcomp^*$ and $\pcomp$ is well-defined.

\begin{claim}
  The value of $\pcomp^*(\midBK{\xiZt}_{t \in [s']})$ is at most
  \[
    2^{\sum_{t \in [s']} \left( \lambda_{Z,t} - H(\xiZt) + H(\gamma_{Z, t}) \right) A_{t,1} \cdot n_t \pm o(n)},
  \]
  where we recall that
  \begin{align*}
    \lambda_{Z,t} = & \sum_{i', j', k' \,:\, i' = 0 \textup{ or } j' = 0} \bigbk{\alpha_t(i', j', k') + \alpha_t(i_t-i', j_t-j', k_t-k')} \cdot H(\splres_{Z, t, i', j', k'}) \\
    & + \sum_{k'} \bigbk{\alpha_t(\+, \+, k') + \alpha_t(\<, \<, k_t-k')} \cdot H(\splresavg_{Z, \+, \+, k_t-k'}),
  \end{align*}
  and 
  \[
    \alpha_t(\+, \+, k') = \sum_{i' > 0, \, j' > 0} \alpha_t(i', j', k'), \quad \alpha_t(\<, \<, k') = \sum_{i' < i_t, \, j' < j_t} \alpha_t(i', j', k').
  \]
  Furthermore,
  \[
    \pcomp \,\le\, 2^{\sum_{t \in [s']} \left( \lambda_{Z,t} - H(\splresZt) + H(\gamma_{Z, t}) + \oeps(1) \right) A_{t,1} \cdot n_t + o(n)}.
    \numberthis \label{eq:constituent:pcomp_upper_bound}
  \]
\end{claim}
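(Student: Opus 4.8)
The plan is to compute $\pcomp^*(\midBK{\xiZt}_{t \in [s']})$ as a ratio of two counts and bound each, exactly as in the proof of the corresponding claim for the global stage in \cref{sec:global:fix-holes}. First I would observe that, by symmetry among the $n_t$ level-$\lvl$ positions within each term $t$, conditioned on $Z_{\hat K}$ having complete split distribution $\xiZt$ on each term, the probability that a uniformly random level-$(\lvl-1)$ triple $X_IY_JZ_K$ consistent with $\midBK{\alpha_t}$ (with $Z_K$ the block containing $Z_{\hat K}$) is compatible with $Z_{\hat K}$ does not depend on the choice of $\hat K$. Hence $\pcomp^*(\midBK{\xiZt})$ is the ratio between (i) the number of tuples $(\hat K, I, J)$ such that $\hat K$ has complete split distribution $\xiZt$ on term $t$ for all $t$, the triple $X_IY_JZ_{K(\hat K)}$ is consistent with $\midBK{\alpha_t}$, and $Z_{\hat K}$ is compatible with it, and (ii) the same count without the compatibility requirement. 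As the interface tensor is a tensor product over $t \in [s']$, both counts factor over $t$; write $\tilde n_t \defeq A_{t,1} n_t$.

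For the denominator I would argue term by term: the number of level-$1$ $Z$-index sequences on term $t$ consistent with $\xiZt$ is $2^{H(\xiZt)\tilde n_t \pm o(n)}$; each such sequence determines $K|_t$, whose configuration of left $Z$-indices is distributed according to the marginal of $\splonelevelZt$, leaving $2^{(H(\alpha_t) - H(\splonelevelZt))\tilde n_t \pm o(n)}$ choices of $(I,J)|_t$ making the triple consistent with $\alpha_t$ (a product of multinomial coefficients, using the proposition's assumption that the $Z$-marginal of $\alpha_t$ agrees with $\splonelevelZt$; the $\eps$-approximate consistency of the interface tensor introduces only an $\oeps(1)$ perturbation here). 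Multiplying over $t$, the denominator is $\prod_{t} 2^{(H(\xiZt) + H(\alpha_t) - H(\splonelevelZt))\tilde n_t \pm o(n)}$.

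For the numerator I would drop the requirement that $\hat K$ has complete split distribution \emph{exactly} $\xiZt$, which only increases the count, so it suffices to bound $\numalpha$ times the maximum over triples $X_IY_JZ_K$ of the number of compatible $Z_{\hat K} \in Z_K$. Just as in \cref{sec:global:fix-holes}, \cref{item:constituent:compatibility1} and \cref{item:constituent:compatibility2} of \cref{def:constituent:compatibility} are equivalent to prescribing $\split(\hat K, \cdot)$ on a family of pairwise disjoint subsets of the level-$(\lvl-1)$ positions of term $t$ — namely $S_{t,i',j',k'}$ for $i'=0$ or $j'=0$, on which it must equal $\splresZt[t,i',j',k']$, together with $S_{t,\+,\+,k'} \defeq \bigcup_{i',j'>0}S_{t,i',j',k'}$, on which it must equal the appropriate averaged distribution $\splresavg_{Z,t,\+,\+,\cdot}$ — whose sizes and prescribed distributions are precisely those appearing in the definition of $\lambda_{Z,t}$ (using $|S_{t,i',j',k'}| = (\alpha_t(i',j',k') + \alpha_t(i_t-i',j_t-j',k_t-k'))\tilde n_t$ and $|S_{t,\+,\+,k'}| = (\alpha_t(\+,\+,k') + \alpha_t(\<,\<,k_t-k'))\tilde n_t$). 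Counting the valid $\hat K$ as a product of entropy contributions over these subsets then gives $\prod_t 2^{\lambda_{Z,t}\tilde n_t \pm o(n)}$, and combined with $\numalpha = \prod_t 2^{H(\alpha_t)\tilde n_t \pm o(n)}$ the numerator is at most $\prod_t 2^{(H(\alpha_t) + \lambda_{Z,t})\tilde n_t \pm o(n)}$. Dividing by the denominator yields the first stated bound. For the bound on $\pcomp$, I would take the maximum over $\midBK{\xiZt}$ with $\midnorm{\xiZt - \splresZt}_\infty \le \eps$ and use continuity of entropy on the fixed finite support $\midBK{0,1,2}^{2^{\lvl-1}}$ to replace $H(\xiZt)$ by $H(\splresZt) \pm \oeps(1)$ (and similarly absorb the induced perturbations in the $\gamma$- and $\lambda$-quantities), obtaining \eqref{eq:constituent:pcomp_upper_bound}.

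The main obstacle is the numerator step: unlike in the global stage, each level-$\lvl$ position of term $t$ gives rise to a \emph{pair} of level-$(\lvl-1)$ positions (a left and a right half-chunk of $\hat K$), and left/right positions of complementary configurations $(i',j',k')$ and $(i_t-i',j_t-j',k_t-k')$ are merged into one output term, so one must verify carefully that the reformulation of compatibility into conditions on disjoint subsets, together with the bookkeeping of $\+$ versus $\<$ indices, matches the definition of $\lambda_{Z,t}$ exactly. A secondary point, easy but to be tracked, is that the input is an $\eps$-interface tensor, so all the consistency requirements above are approximate; this perturbs the exponents by only $\oeps(1)$, which is precisely the source of the $\oeps(1)$ term in \eqref{eq:constituent:pcomp_upper_bound}.
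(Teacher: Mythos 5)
Your proposal matches the paper's proof essentially step for step: you compute $\pcomp^*$ as the ratio of two counts, obtain the denominator by factoring over terms $t$ as $\prod_t 2^{(H(\xiZt) + H(\alpha_t) - H(\splonelevelZt[t]))\,A_{t,1}n_t \pm o(n)}$, overestimate the numerator by dropping the $\xiZt$-consistency requirement and then using the disjoint-subsets reformulation of compatibility (the constituent-stage analogue of \cref{def:global:compatibility'}), and finally pass from $\pcomp^*$ to $\pcomp$ by continuity of entropy on the fixed finite support. You also correctly identify the genuine technical point — the $|S_{t,i',j',k'}| = (\alpha_t(i',j',k')+\alpha_t(i_t-i',j_t-j',k_t-k'))A_{t,1}n_t$ bookkeeping and the role of the $\+$ versus $\<$ aggregates — as exactly what gives $\lambda_{Z,t}$. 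One small remark: the parenthetical about an $\oeps(1)$ perturbation in the denominator is unnecessary; if the marginal of $Z_K$ induced by $\xiZt$ does not exactly match $\splonelevelZt[t]$ then $\pcomp^*=0$ and the bound is trivial, and otherwise the count $\numalpha/\numzblock$ is exact up to $\pm o(n)$ — the only place $\oeps(1)$ is needed is the final step bounding $\pcomp$.
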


\begin{proof}
  Similar to before, it suffices to compute the following two quantities, and $\pcomp^*(\midBK{\xiZt}_t)$ will be the ratio between them:
  \begin{enumerate}[label=(\arabic*)]
  \item the number of tuples $(I, J, K, \hat{K})$ where $X_I Y_J Z_K$ is consistent with $\{\alpha_t\}_{t \in [s']}$, $\hat{K} \in K$, $\hat K$ has complete split distributions $\midBK{\xiZt}_t$, and $Z_{\hat{K}}$ is compatible with $X_I Y_J Z_K$;
  \item the number of $(I, J, K, \hat{K})$ where $X_I Y_J Z_K$ is consistent with $\{\alpha_t\}_{t \in [s']}$, $\hat{K} \in K$, and $\hat K$ has complete split distributions $\midBK{\xiZt}_t$.
  \end{enumerate}

  We first compute the second quantity. First, the number of $Z_{\hat{K}}$ with the desired complete split distributions $\midBK{\xiZt}_t$ is $2^{\sum_t H(\xiZt) \cdot A_{t, 1} \cdot n_t \pm o(n)}$. Each of these $Z_{\hat{K}}$ uniquely determines a level-$(\lvl-1)$ block $Z_K$. Also, for each $Z_K$, the number of block triples $X_I Y_J Z_K$ consistent with $\{\alpha_t\}_{t \in [s']}$ is $\frac{\numalpha}{\numzblock} = 2^{\sum_t (H(\alpha_t) - H(\gamma_{Z, t})) \cdot A_{t,1} \cdot n_t \pm o(n)}$. Therefore, the second quantity is
  \begin{equation}
    \label{eq:constituent:pcomp_denominator}
    2^{\sum_t (H(\xiZt) + H(\alpha_t) - H(\gamma_{Z, t})) \cdot A_{t,1} \cdot n_t \pm o(n)}.
  \end{equation}

  Then, we compute the first quantity, which does not exceed the number of $(I, J, K, \hat{K})$ where $X_I Y_J Z_K$ is consistent with $\{\alpha_t\}_{t \in [s']}$, $\hat{K} \in K$, and $Z_{\hat K}$ is compatible with $X_I Y_J Z_K$. (We dropped the condition of having correct level-$\l$ complete split distributions $\midBK{\xiZt}_t$ and got an overestimation.)

  First, the number of block triples $X_I Y_J Z_K$ consistent with $\{\alpha_t\}_{t \in [s']}$ is $\numalpha$. Then, for each such block triple, we count the number of $Z_{\hat{K}} \in Z_K$ that is compatible with it. If we fix some $X_I Y_J Z_K$, then we also have fixed the values of $S_{t, i, j, k}$ for all $t, i, j, k$. Then it is not difficult to see that the following condition is equivalent to the condition for $Z_{\hat{K}}$ being compatible with $X_I Y_J Z_K$:

\begin{definition}[Compatibility']
  \label{def:constituent:compatibility'}
  For level-$(\l - 1)$ triple $X_I Y_J Z_K$ consistent with $\midBK{\alpha_t}_{t \in [s']}$, a level-1 block $Z_{\hat{K}} \in Z_K$ is compatible with $X_IY_J Z_K$ if
  \begin{itemize}
  \item For every $t$ and every $(i', j', k') \in \mathbb{Z}_{\ge 0}^3 \cap [0, i_t] \times [0, j_t] \times [0, k_t]$ with $i' + j' + k' = 2^{\lvl-1}$, $i' = 0 \text{ or } j' = 0$, there is $\split(\hat K, S_{t, i',j',k'}) = \splresZt[t,i',j',k']$. (This is exactly \cref{item:constituent:compatibility1} in \cref{def:constituent:compatibility}).
  \item For every $t, k$, let $S_{t, \+, \+, k} := \bigcup_{i > 0, j > 0} S_{t, i, j, k}$. Then $\split(\hat K, S_{t, \+, \+, k}) = \splresavg_{Z, t, \+, \+, k}$.
  \end{itemize}
\end{definition}

In this definition, there are constraints on the complete split distributions of $\hat{K}$ on some disjoint subsets of level-$(\l - 1)$ positions, i.e., subsets of $\bigBk{2 \sum_t A_{t,1} n_t}$. Therefore, we can count the number of valid subsequences of $\hat{K}$ for each of these subsets of indices, and multiply them together to get the number of valid $\hat K$. For every $t$ and every $(i', j', k') \in \mathbb{Z}_{\ge 0}^3 \cap [0, i_t] \times [0, j_t] \times [0, k_t]$ where $i' + j' + k' = 2^{\lvl-1}$ with $i' = 0$ or $j' = 0$, we require that $\split(\hat{K}, S_{t, i', j', k'}) = \splres_{Z, t, i', j', k'}$, so the number of possibilities of $\hat{K}$ on the subset of indices $S_{t, i',j', k'}$ is
\[ 2^{H(\splres_{Z, t, i', j', k'}) \cdot |S_{t, i', j', k'}| \pm o(n)} = 2^{H(\splres_{Z, t, i', j', k'}) \cdot (\alpha_t(i', j', k') + \alpha_t(i_t - i', j_t - j', k_t - k')) \cdot A_{t,1} n_t \pm o(n)}. \]
For every $t, k$, we require that $\split(\hat K, S_{t, \+, \+,k'}) = \splresavg_{Z, t, \+, \+, k'}$, so the number of possibilities of $\hat K$ on $S_{t, \+, \+,k'}$ is
\[ 2^{H(\splresavg_{Z, t, \+, \+, k'}) \cdot |S_{t, \+, \+, k'}| \pm o(n)} = 2^{H(\splresavg_{Z, t, \+, \+, k'}) \cdot (\alpha_t(\+, \+, k') + \alpha_t(\<, \<, k_t - k')) \cdot A_{t,1} n_t \pm o(n)}. \]
Overall, the number of possible compatible $\hat{K}$, multiplied by the number of block triples $X_I Y_J Z_K$, is
\begin{align}
  \label{eq:constituent:pcomp_numerator}
  \begin{split}
    &\numalpha \cdot \prod_{\substack{t, i', j', k' \\ i' = 0 \textup{ or } j' = 0}} 2^{H(\splresavg_{Z, t, i', j', k'}) \cdot (\alpha_t(i', j', k') + \alpha_t(i_t- i', j_t-j', k_t-k') )\cdot A_{t,1} n_t \pm o(n)} \\
    & \qquad \cdot \prod_{t, k'} 2^{H(\splresavg_{Z, t, \+, \+, k'}) \cdot (\alpha_t(\+, \+, k')+\alpha_t(\<, \<, k_t-k')) \cdot A_{t,1} n_t \pm o(n)}\\ 
    ={} & 2^{\sum_t (H(\alpha_t) + \lambda_{Z,t}) \cdot A_{t,1} n_t \pm o(n)}.
  \end{split}
\end{align}

Finally, as mentioned, $\pcomp^*(\midBK{\xiZt}_{t})$ is the ratio between \eqref{eq:constituent:pcomp_numerator} and \eqref{eq:constituent:pcomp_denominator}, so
\[
  \pcomp^*(\midBK{\xiZt}_{t \in [s']}) \,\le\, 2^{\sum_t \left(\lambda_{Z,t} - H(\xiZt) + H(\gamma_{Z, t})\right) A_{t,r} \cdot n_t + o(n)}
\]
as desired. The bound \eqref{eq:constituent:pcomp_upper_bound} on $\pcomp$ follows as the $L_\infty$ distance between $\midBK{\xiZt}_t$ and $\midBK{\splresZt}_t$ is at most $\eps$.
\end{proof}

The proof of the following claim is essentially the same as that of \cref{cl:global:prob-of-holes}.
\begin{claim}
  \label{cl:constituent:prob-of-holes}
  For every $b \in B$, every level-$(\lvl-1)$ block triple $X_I Y_J Z_K$ consistent with $\{\alpha_t\}_{t \in [s']}$, and for each typical $Z_{\hat{K}} \in Z_K$, the probability that $Z_{\hat{K}}$ is compatible with multiple triples in $\TDoublePrime$ is at most
  \[ \frac{\numalpha \cdot \pcomp}{\numzblock \cdot M_0}, \]
  conditioned on $h_X(I) = h_Y(J) = h_Z(K) = b$.
\end{claim}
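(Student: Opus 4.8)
The plan is to follow the proof of \cref{cl:global:prob-of-holes} almost verbatim, the only genuinely new point being that the $\eps$-relaxation in the interface tensor forces us to work with $\pcomp$ (the maximum over admissible complete split distributions) in place of a single value. Throughout I would condition on $h_X(I) = h_Y(J) = h_Z(K) = b$. The first observation is that every level-$(\lvl-1)$ block triple that is compatible with $Z_{\hat K}$ must have $Z$-block exactly $Z_K$, since compatibility (\cref{def:constituent:compatibility}) presupposes $Z_{\hat K} \in Z_K$; hence the only triples in $\TDoublePrime$ that could witness $Z_{\hat K}$ being compatible with more than one triple are of the form $X_{I'} Y_{J'} Z_K$.

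Next I would count these candidate triples. Because $Z_{\hat K}$ survives into the input of the constituent stage, its level-$\lvl$ complete split distribution $\xiZt$ in the $t$-th term satisfies $\norm{\xiZt - \splresZt}_\infty \le \eps$ for every $t$; consequently, by the definition of $\pcomp^*$ and the fact that $\pcomp^*(\{\xiZt\}_{t\in[s']}) \le \pcomp$, together with symmetry among level-$(\lvl-1)$ $Z$-blocks (which gives $\frac{\numalpha}{\numzblock}$ triples per $Z$-block), the number of block triples $X_{I'} Y_{J'} Z_K$ consistent with $\{\alpha_t\}_{t\in[s']}$ and compatible with $Z_{\hat K}$ is at most $\frac{\numalpha}{\numzblock}\cdot\pcomp$. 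Now, if $Z_{\hat K}$ is compatible with two or more triples in $\TDoublePrime$, then at least one such triple $X_{I'} Y_{J'} Z_K$ is distinct from $X_I Y_J Z_K$ and survives the hashing step, which forces $h_X(I') = h_Y(J') = h_Z(K) = b$, i.e.\ it lands in bucket $b$ (here we use $b \in B$ and $h_Z(K) = b$). By \cref{cl:constituent:hash-function-basic-properties} applied to a pair of triples sharing the $Z$-block $Z_K$, each fixed candidate triple lands in bucket $b$ with probability $\frac1M$ under the conditioning, so a union bound over the at most $\frac{\numalpha}{\numzblock}\cdot\pcomp$ candidates bounds the probability of the bad event by $\frac{\numalpha}{\numzblock}\cdot\pcomp\cdot\frac1M \le \frac{\numalpha\cdot\pcomp}{\numzblock\cdot M_0}$, as claimed. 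The typicalness of $Z_{\hat K}$ enters exactly as in the global stage, ensuring the symmetry argument applies cleanly.

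The only step requiring any care — and where I expect the ``main obstacle,'' though it is a minor one — is the passage from $\pcomp^*$ to $\pcomp$: in the global stage there is no $\eps$-slack, so the count of compatible triples is pinned down exactly, whereas here two level-$1$ blocks $Z_{\hat K}$ sitting inside the same $Z_K$ may carry slightly different complete split distributions. Defining $\pcomp$ as the maximum of $\pcomp^*(\{\xiZt\}_t)$ over all $\{\xiZt\}_t$ within $\eps$ in $L_\infty$ of $\{\splresZt\}_t$ is precisely what absorbs this, so once that observation is recorded the rest of the argument is a direct transcription of the proof of \cref{cl:global:prob-of-holes}.
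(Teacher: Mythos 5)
Your proof is correct and follows essentially the same route as the paper, which in fact only remarks that the proof is ``essentially the same as that of \cref{cl:global:prob-of-holes}'': condition on the bucket, count at most $\frac{\numalpha}{\numzblock}\cdot\pcomp$ candidate triples $X_{I'}Y_{J'}Z_K$ compatible with $Z_{\hat K}$, and union-bound a $\frac1M$ collision probability for each via \cref{cl:constituent:hash-function-basic-properties}. Your explicit note that the $\eps$-slack means one must pass from $\pcomp^*(\{\xiZt\}_t)$ to the maximum $\pcomp$ is exactly the subtlety the paper leaves implicit, and you have handled it correctly.
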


Recall that we require $M_0$ to be at least $8 \cdot \max \BigBK{\frac{\numtriple}{\numxblock}, \frac{\numtriple}{\numyblock}}$. Now, we add another (and final) constraint: $M_0 \ge \frac{\numalpha \cdot \pcomp}{\numzblock} \cdot n^2$.  That is, we will set $M_0$ to be
\begin{gather*}
 \max\left\{\frac{8\numtriple}{\numxblock}, \; \frac{8\numtriple}{\numyblock}, \; \frac{\numalpha \cdot \pcomp}{\numzblock} \cdot n^2\right\} \\
\le 2^{\max\{\sum_t (H(\alpha_t) - P_{\alpha, t} - H(\gamma_{X, t})) A_{t,1} \cdot n_t, \;\; \sum_t (H(\alpha_t) - P_{\alpha, t} - H(\gamma_{Y, t})) A_{t,1} \cdot n_t, \;\; \sum_t (H(\alpha_t) + \lambda_{Z,t} - H(\splresZt)) A_{t,1} \cdot n_t\} + o(n)}.
\end{gather*}

Similar to before, for every $b \in B$ and every level-$(\lvl-1)$ block triple $X_I Y_J Z_K$ that is consistent with $\{\alpha_t\}_{t \in [s']}$ and $h_X(I) = h_Y(J) = h_Z(K) = b$, with constant probability, it remains in $\TPrime$ and the fraction of holes caused by enforcing that each $Z_{\hat{K}}$ is compatible with a unique triple is $1/n^2$. Additionally, as discussed earlier, the fraction of holes caused by the input complete split distribution constraints are also $1/n^2$. Overall, we expect to get $\numalpha \cdot M^{-1-o(1)}$ copies of $\T^*$ whose fraction of holes is $O(1/n^2)$.

By \cref{cor:fix-interface}, we can degenerate them into $\numalpha \cdot M^{-1-o(1)}$ unbroken copies of $\T^*$ because $O(1/n^2) \le \frac{1}{8N}$ for sufficiently large $n$.

\subsection{Summary}

In the analysis, we have degenerated $\bigotimes_{t = 1}^{s'} T_{i_t, j_t, k_t}^{\otimes A_{t,1} n_t}\bigBk{\splresXt^{(1)}, \splresYt^{(1)}, \splresZt^{(1)}, \eps}$ into $\ge \numalpha \cdot M_0^{-1-o(1)}$ copies of a level-$(\lvl-1)$ interface tensor $\T^*$ with parameter list
\[
  \begin{aligned}
    \Big\{ \Big(
    A_{t,1} \cdot n_t \cdot \bigbk{\alpha^{(1)}_t(i', j', k')+\alpha^{(1)}_t(i_t-i', j_t-j', k_t-k')}, & \\
    i', \; j', \; k', \; \splres^{(1)}_{X, t, i', j', k'}, \; \splres^{(1)}_{Y, t, i', j', k'}, \; \splres^{(1)}_{Z, t, i', j', k'} & \Big) \Big\} _{t \in [s'], i' + j' + k' = 2^{\lvl-1}}.
  \end{aligned}
\]
By plugging in the bounds of $\numalpha$ and $M_0$, we see that the number of copies we obtained (in the first region) is 
\[
  2^{\min\left\{\sum_{t \in [s']} A_{t,1} \cdot n_t \cdot \left( H(\splonelevelXt[t]^{(1)}) - P_{\alpha, t}^{(1)}\right), \;\; \sum_{t \in [s']} A_{t,1} \cdot n_t \cdot \left( H(\splonelevelYt[t]^{(1)}) - P_{\alpha, t}^{(1)}\right), \;\; \sum_{t \in [s']} A_{t,1} \cdot n_t \cdot \left(H(\splres_{Z, t}^{(1)}) - \lambda_{Z, t}^{(1)}\right)\right\} - \oeps(n) - o(n)}.
\]
We conclude the proof by applying the same method to the second and third region, where for the second region we perform asymmetric hashing that shares $Y$-variable blocks, and for the third region we perform asymmetric hashing that shares $X$-blocks, and taking the tensor product of these returned results.

\section{Fixing Holes}
\label{sec:fixing_holes}

In this section, we show (by generalizing a result by Duan \cite{duanpersonal}) that we can degenerate a direct sum of some broken copies of an interface tensor into an unbroken copy of the same tensor as long as we only have a small fraction of holes in the $X$-, $Y$-, $Z$-dimensions. Since our result of fixing holes in all $X$-, $Y$-, $Z$-variables might be of independent interest, we present our result in a more general setting.

Let us first describe the setup of this section. We consider a partitioned tensor $T$ on variable sets $X = \{x_1,\dots, x_{N_X}\}$, $Y = \{y_1,\dots, y_{N_Y}\}$, $Z = \{z_1,\dots, z_{N_Z}\}$ of size $|X| = N_X$, $|Y| = N_Y$, $|Z| = N_Z$ with partitions $X = \bigsqcup_{i = 1}^{M_X} X_i$, $Y = \bigsqcup_{j = 1}^{M_Y} Y_k$, $Z = \bigsqcup_{k = 1}^{M_Z} Z_k$ into equal-size parts $|X_i| = m_X$ for all $i\in [M_X]$, $|Y_j| = m_Y$ for all $j\in [M_Y]$, and $|Z_k| = m_Z$ for all $k\in [M_K]$. (We use the notation $X_i$ to represent both the part itself and the set of elements in this part.) Let $\calP_X = \{X_i \mid i \in [M_X]\}$ denote the set of parts in the partition of $X$, and similarly let $\calP_Y, \calP_Z$ denote the set of parts in the partition of $Y$ and $Z$ respectively. Note that by definition $N_X = M_X \cdot m_X$, $N_Y = M_Y \cdot m_Y$, $N_Z = M_Z \cdot m_Z$ and $|\calP_X| = M_X$, $|\calP_Y| = M_Y$, $|\calP_Z| = M_Z$.

We consider the broken copies of $T$ where some of the $X$-, $Y$- and $Z$-parts are missing which we call the \emph{holes}. (Equivalently, the variables in a part are either all present or all missing.) More specifically, we say that $T_\hole$ is \emph{a broken copy of $T$ with holes $P^{(0)}_X \subseteq M_X$, $P^{(0)}_Y \subseteq M_Y$, $P^{(0)}_Z \subseteq M_Z$} when
\[
  T_\hole \;=\; T \big\vert_{X \setminus \bigsqcup_{X_t \in P^{(0)}_X} X_t, \; Y \setminus \bigsqcup_{Y_t \in P^{(0)}_Y} Y_t, \; Z \setminus \bigsqcup_{Z_t \in P^{(0)}_Z} Z_t}
  \numberthis \label{eq:naive_definition_Thole}
\]
is obtained from $T$ via zeroing out the variables in the parts $P^{(0)}_X \subseteq \calP_X$, $P^{(0)}_Y \subseteq \calP_Y$, $P^{(0)}_Z \subseteq \calP_Z$. For simplicity, we define the notation
\[
  T \Vert_{P_X, \, P_Y, \, P_Z} \;\defeq\; T \big\vert_{\bigsqcup_{X_t \in P_X} X_t, \; \bigsqcup_{Y_t \in P_Y} Y_t, \; \bigsqcup_{Z_t \in P_Z} Z_t}
\]
to represent the subtensor of $T$ over the set of parts $P_X, P_Y, P_Z$. With this notation, \cref{eq:naive_definition_Thole} can be rewritten as
\[
  \Thole \;=\; T \big\Vert_{\calP_X \setminus P^{(0)}_X, \; \calP_Y \setminus P^{(0)}_Y, \; \calP_Z \setminus P^{(0)}_Z}.
\]
We call the ratios $\bigabs{P^{(0)}_X} \big/ M_X$, $\bigabs{P^{(0)}_Y} \big/ M_Y$, $\bigabs{P^{(0)}_Z} \big/ M_Z$ the \emph{fraction of holes} in the $X$-, $Y$-, $Z$-dimension respectively.

We will show that we can degenerate sub-polynomially many broken copies of $T$ with small fraction of holes in all three dimensions into an unbroken copy of $T$ if $T$ satisfies the following property.

\begin{property}\label{property:fixable}
  There exists a subset $\calG \subseteq \calS_{N_X} \times \calS_{N_Y} \times \calS_{N_Z}$ of permutations over the variables of $T$ where $\calS_N$ denotes the symmetric group on $[N]$, such that $\calG$ satisfies the following:
  \begin{enumerate}
  \item Every $(\pi_X,\pi_Y,\pi_Z) \in \calG$ \emph{preserves the partitions}.
    Specifically, it permutes any part into some entire part, i.e., for every part $X_t\in \calP_X$, there exists $X_{t'}\in \calP_X$ such that $\pi_X(X_t) \defeq \midBK{\pi_X(x) \mid x \in X_t} = X_{t'}$. Similar conditions hold for $Y$- and $Z$-parts. Hence, $\pi_X, \pi_Y, \pi_Z$ also induce permutations on $\calP_X, \calP_Y, \calP_Z$, respectively. \label{item:parts-to-parts}
  \item Every $(\pi_X,\pi_Y,\pi_Z) \in \calG$ \emph{preserves the tensor structure of $T$}. Formally, the coefficient of $x_i \cdot y_j \cdot z_k$ in $T$ equals the coefficient of $\pi_X(x_i) \cdot \pi_Y(y_j) \cdot \pi_Z(z_k)$ in $T$, for all variables $x_i, y_j, z_k$. \label{item:preserve-structure}
  \item \label{item:uniform} A uniformly random element $(\pi_X, \pi_Y, \pi_Z) \in \calG$ \emph{permutes any given part to a uniform random part}. Formally, for any fixed $X_t, X_{t'}\in \calP_X$, $Y_t, Y_{t'}\in \calP_Y$, $Z_t, Z_{t'}\in \calP_Z$, and for a uniformly random $(\pi_X,\pi_Y,\pi_Z)\in \calG$, we have
    \begin{align*}
      \Pr[\pi_X(X_t) = X_{t'}] &= \frac{1}{M_X}, \\
      \Pr[\pi_Y(Y_t) = Y_{t'}] &= \frac{1}{M_Y}, \\
      \Pr[\pi_Z(Z_t) = Z_{t'}] &= \frac{1}{M_Z}.
    \end{align*}
  \end{enumerate}
\end{property}

We show the following. 

\begin{theorem}\label{thm:fix-holes-general}
  Let $T$ be a partitioned tensor defined above; let $T_1,\dots, T_r$ be broken copies of $T$, where in each $T_i$ for $i \in [r]$, at most $\frac{1}{4 \log M_X}$, $\frac{1}{4 \log M_Y}$, and $\frac{1}{4 \log M_Z}$ fraction of $X$-, $Y$-, and $Z$-parts are holes, respectively. If $T$ satisfies \cref{property:fixable} with a set of permutations $\calG$, then there exists a constant $C_0$ such that for $r \ge C_0\cdot M^{\frac{3}{\log \log N}}$ where $M = \max\{M_X, M_Y, M_Z\}$, we have
  \[
    \bigoplus_{i = 1}^r T_i \;\degen\; T.
  \]
  In particular, $M^{o(1)}$ broken copies of $T$ with fraction of holes $O\bigbk{\frac{1}{\log M}}$ can degenerate into an unbroken copy of $T$.
\end{theorem}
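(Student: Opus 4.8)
The plan is to fix the holes dimension by dimension, or rather to set up a single recursion on all three dimensions simultaneously, following the intuition sketched in the technical overview. Write $T = T_{\mathrm{hole}} + \sum_{(a,b,c)\neq(0,0,0)} T\Vert_{P^{(a)}_X, P^{(b)}_Y, P^{(c)}_Z}$, where for one chosen broken copy $T_{\mathrm{hole}}$ we let $P^{(0)}_X, P^{(0)}_Y, P^{(0)}_Z$ be its sets of hole-parts and $P^{(1)}_X = \calP_X \setminus P^{(0)}_X$, etc. The first term is produced for free by $T_{\mathrm{hole}}$ itself (after relabeling, so the supports are disjoint from everything else). Each of the remaining seven terms is a subtensor of $T$ in which at least one of the three dimensions is restricted to a set of parts of size at most $c\cdot M_W$ with $c = \tfrac{1}{4\log M_W}$. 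The idea is to recursively fix each such smaller subtensor and then combine the eight pieces. Because the eight pieces live on disjoint part-sets (after relabeling), their direct sum is literally $T$ — no degeneration needed for the recombination step, only for producing the pieces.

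The subtlety is that a "subtensor with one dimension shrunk by a factor $c$" is not of the same form as $T$ (it has a different number of parts in one dimension), so the recursion is not literally on $T$ but on a family of subtensors $T\Vert_{P_X, P_Y, P_Z}$. To run the recursion I would: (i) first produce, from the $r$ given broken copies of $T$, many broken copies of the desired smaller subtensor — this is immediate, since zeroing out the unwanted parts of a broken copy of $T$ gives a broken copy of the subtensor with at most the same fraction of holes in each surviving dimension; (ii) check that the shrunk subtensor still satisfies \cref{property:fixable} relative to its own (smaller) partitions — this uses item~\ref{item:uniform}: restricting the random permutation group $\calG$ to the stabilizer of the chosen part-sets, or more simply using that a random element of $\calG$ maps a fixed part uniformly, one obtains a transitive enough symmetry on the surviving parts after a short averaging argument. (In fact one can avoid stabilizers: apply a random $(\pi_X,\pi_Y,\pi_Z)\in\calG$ globally to re-randomize which parts are the holes before splitting, which is exactly why we need item~\ref{item:uniform} and the hole-fraction bound rather than a specific hole pattern.) The recursion bottoms out when some dimension has $O(1)$ parts, at which point the hole-fraction bound $\tfrac14/\log(\text{small})$ forces that dimension to be hole-free, and one peels it off trivially.

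For the cost analysis: let $f(M_X, M_Y, M_Z)$ be the number of broken copies of (a subtensor with these part-counts) needed. The recurrence is roughly $f(M_X,M_Y,M_Z) \le 1 + 3 f(cM, M, M) + 3f(M, cM, M)\text{-type terms} + f(cM,cM,cM)$ — seven recursive calls, each with at least one coordinate multiplied by $c = \Theta(1/\log M)$, plus one for $T_{\mathrm{hole}}$. Crucially we must also re-randomize to make the hole patterns uniform, which costs a union-bound / concentration step: a random re-partition via $\calG$ makes each surviving sub-block inherit a hole fraction of at most, say, $2c\cdot(\text{old fraction})$-ish concentration, at the price of discarding a constant fraction of copies; iterating, the depth of the recursion is $O(\log M / \log\log M)$ because each level divides some coordinate by $\Theta(\log M)$ and we stop when a coordinate reaches a constant. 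With branching factor $\le 8$ (really the relevant quantity is the product of the shrink factors along root-to-leaf paths, which telescopes), the total number of leaves, hence the number of broken copies consumed, is $M^{O(3/\log\log N)}$, matching the claimed bound $r \ge C_0 M^{3/\log\log N}$; the final sentence about $M^{o(1)}$ copies then follows since $3/\log\log N \to 0$, and the hole-fraction hypothesis $O(1/\log M)$ survives all $O(\log M/\log\log M)$ levels of the recursion because $\log M$ only shrinks to a constant at the very bottom.

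The main obstacle I expect is item~(ii): ensuring the recursion is well-founded, i.e.\ that every subtensor encountered still satisfies \cref{property:fixable} with a clean enough uniformity statement, and simultaneously controlling how the hole fractions evolve. A naive split can create a sub-block whose hole fraction in some dimension has gone up (the holes could cluster into the parts we kept), which would break the recursion after very few levels; the fix is exactly the re-randomization by a uniformly random $(\pi_X,\pi_Y,\pi_Z)\in\calG$ applied \emph{before} each split, combined with a Chernoff bound showing that for all but a $2^{-\Omega(\cdot)}$ fraction of choices the hole fraction in each surviving sub-block stays below the threshold needed one level down. Getting the quantitative bookkeeping of this re-randomization to compose cleanly over $\Theta(\log M/\log\log M)$ levels — so that the threshold $\tfrac{1}{4\log M_W}$ at the top is exactly what is needed, and the $3/\log\log N$ in the exponent comes out — is the delicate part; everything else is routine given \cref{property:fixable}.
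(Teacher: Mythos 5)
Your high-level plan — the $8$-way decomposition of $T\Vert_{P_X,P_Y,P_Z}$ into $T\Vert_{P_X^{(a)},P_Y^{(b)},P_Z^{(c)}}$, the observation that seven of the eight pieces have at least one dimension shrunk by $\Theta(1/\log M)$, and the resulting recursion of depth $O(\log M / \log\log M)$ with branching factor $7$ — is exactly the shape of the paper's proof. But there is a genuine gap in how you frame the recursion, and it is precisely the issue you yourself flag as your ``main obstacle.''

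You set up the recursion as producing ``broken copies of the shrunk subtensor'' and then worry whether that subtensor still satisfies \cref{property:fixable} with respect to its own smaller partitions, proposing stabilizer subgroups, re-randomization before each split, and Chernoff bounds to control how the hole fractions evolve. None of this is needed, and the complications you anticipate do not arise. The right formulation is to define $f(h_X,h_Y,h_Z)$ as the number of broken copies \emph{of the original tensor $T$} (each with the globally given hole fractions $\le 1/(4\log M_W)$) required to produce the subtensor $T\Vert_{P_X,P_Y,P_Z}$ with $|P_W|=h_W$. At every node of the recursion you consume one fresh broken copy of the \emph{original} $T$, apply \cref{lem:shufflable} with the \emph{original} $\calG$ to obtain a single permutation $(\pi_X,\pi_Y,\pi_Z)$ satisfying $|P_W \cap \pi_W(P_W^{(0)})| \le 4\,|P_W|\,|P_W^{(0)}|/M_W \le |P_W|/\log M_W$, relabel and zero out to $P_X,P_Y,P_Z$, and split into $8$ pieces. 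Because this bound scales linearly with $|P_W|$, the amount of contamination inside the target shrinks automatically as the target shrinks — no re-randomization, no concentration, no discarding of copies, and no need for the subtensor to carry any symmetry of its own. The shuffle lemma itself is just Markov's inequality plus a union bound over the three dimensions applied to a uniformly random element of $\calG$, using \cref{item:uniform}; Chernoff-type concentration is never invoked. Separately, your claim that ``no degeneration is needed for the recombination step'' is slightly off: recombining eight pieces produced on disjoint variable sets into $T\Vert_{P_X,P_Y,P_Z}$ does require a (trivial) restriction map that identifies the several copies of each surviving variable, since $T$ is the \emph{sum}, not the \emph{direct sum}, of the eight subtensors — but this is routine and does not affect the count. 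With the correct framing, the recursion you wrote closes without any further bookkeeping, giving $f(M_X,M_Y,M_Z) \le 7^{1+\sum_W \lceil \log_{\log M_W} M_W \rceil} = M^{O(1/\log\log M)}$, which is the stated bound.
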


Before proving \cref{thm:fix-holes-general}, we first show the following \cref{lem:shufflable} that will explain why we need \cref{item:uniform} in \cref{property:fixable}. The lemma essentially states that if $T$ satisfies \cref{property:fixable}, then we can find a set of permutations $\pi_X, \pi_Y,\pi_Z$ on the partitions of $X$-, $Y$-, $Z$-variables such that any set of parts can be permuted away from any set of positions that we specify. Specifically, one should think under the context of degenerating a broken copy of $T$ with holes into some subtensor $T\vert_{X', Y', Z'}$, the lemma states that we can find a set of permutations preserving the tensor structure of $T$ on the variable sets such that the holes are away from the terms in $T\vert_{X', Y', Z'}$. Then applying the permutation on the broken copy would give the subtensor $T\vert_{X', Y', Z'}$ without holes or with fewer amount of holes.

\begin{lemma}\label{lem:shufflable}
  Let $T$ be a tensor satisfying the assumptions of \cref{thm:fix-holes-general} with $\calG$. Then there exists $(\pi_X, \pi_Y, \pi_Z)\in \calG$ such that for any sets of parts $P_X, P_X'\subseteq \calP_X$, $P_Y, P_Y'\subseteq \calP_Y$, $P_Z, P_Z' \subseteq \calP_Z$ we have
  \begin{align}
    \begin{split}\label{eq:shufflable}
      \abs{P_X \cap \pi_X(P_X')} &\le \frac{4|P_X|\cdot |P_X'|}{|\calP_X|},\\ 
      \abs{P_Y \cap \pi_Y(P_Y')} &\le \frac{4|P_Y|\cdot |P_Y'|}{|\calP_Y|},\\
      \abs{P_Z \cap \pi_Z(P_Z')} &\le \frac{4|P_Z|\cdot |P_Z'|}{|\calP_Z|}.
    \end{split}
  \end{align}
\end{lemma}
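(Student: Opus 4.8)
The plan is to use the probabilistic method over the distribution on $\calG$ guaranteed by \cref{item:uniform} of \cref{property:fixable}. First I would fix arbitrary sets of parts $P_X, P_X' \subseteq \calP_X$, $P_Y, P_Y' \subseteq \calP_Y$, $P_Z, P_Z' \subseteq \calP_Z$ and sample $(\pi_X, \pi_Y, \pi_Z) \in \calG$ uniformly at random. For a single fixed pair $(X_t, X_{t'})$ with $X_{t'} \in P_X'$, the event $\pi_X(X_{t'}) = X_t$ has probability $1/M_X$ by \cref{item:uniform}. Summing over $X_{t'} \in P_X'$ and $X_t \in P_X$ (using linearity of expectation; note $\pi_X$ is a bijection on parts so these events for distinct $X_{t'}$ are disjoint), we get $\E\bigl[\, \labs{P_X \cap \pi_X(P_X')} \,\bigr] = |P_X| \cdot |P_X'| / M_X = |P_X|\cdot|P_X'| / |\calP_X|$, and similarly for $Y$ and $Z$. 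By Markov's inequality, each of the three bad events
\[
\labs{P_X \cap \pi_X(P_X')} > \frac{4|P_X|\cdot|P_X'|}{|\calP_X|}, \quad
\labs{P_Y \cap \pi_Y(P_Y')} > \frac{4|P_Y|\cdot|P_Y'|}{|\calP_Y|}, \quad
\labs{P_Z \cap \pi_Z(P_Z')} > \frac{4|P_Z|\cdot|P_Z'|}{|\calP_Z|}
\]
has probability at most $1/4$, so by the union bound all three inequalities in \eqref{eq:shufflable} hold simultaneously with probability at least $1 - 3/4 = 1/4 > 0$ for this fixed choice of the six sets.

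\textbf{Removing the dependence on the fixed sets.} The subtlety is that \eqref{eq:shufflable} must hold for \emph{all} sextuples $(P_X, P_X', P_Y, P_Y', P_Z, P_Z')$ simultaneously with a single $(\pi_X,\pi_Y,\pi_Z)$, whereas the above only gives positive probability for each fixed sextuple. I would resolve this by the standard observation that it suffices to verify \eqref{eq:shufflable} for \emph{singletons} $P_X, P_Y, P_Z$ (i.e. $|P_X| = |P_Y| = |P_Z| = 1$) and arbitrary $P_X', P_Y', P_Z'$ — or even better, it suffices to verify it when each of the six sets is a singleton, since the intersection $|P_W \cap \pi_W(P_W')| = \sum_{X_t \in P_W}\sum_{X_{t'}\in P_W'} \ind[\pi_W(X_{t'}) = X_t]$ is additive, and the claimed bound $4|P_W||P_W'|/|\calP_W|$ is also (super)additive in $|P_W|$ and $|P_W'|$. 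Concretely: if for every single part $X_t$ and every single part $X_{t'}$ we had the (trivially true, since it is $0$ or $1$) bound, that is useless; instead the right reduction is to note that the \emph{maximum over all $P_X$ of fixed size $s$ and all $P_X'$ of fixed size $s'$} of $|P_X \cap \pi_X(P_X')|$ is attained by taking both to be "heaviest" and equals at most $\min(s,s') \le \sqrt{ss'}$, which is $\le 4ss'/M_X$ precisely when $ss' \ge M_X^2/16$ — so the inequality is only nontrivial when $s s'$ is large, and in that regime one wants a concentration rather than a first-moment argument. The cleaner route, which I would actually take, is: for each \emph{ordered pair of parts} $(X_{t'}, X_t)$ there is an indicator; for a fixed target set $P_X$ of size $s$ and a single source part $X_{t'}$, $\E[|P_X \cap \pi_X(\{X_{t'}\})|] = s/M_X$, and since this is a $\{0,1\}$ random variable it equals its expectation's... no. Let me instead just restrict attention to singleton $P_X'$: define, for each part $X_{t'}$, $f(X_{t'}) = $ the part $\pi_X(X_{t'})$; then $|P_X \cap \pi_X(P_X')| = |\{X_{t'} \in P_X' : f(X_{t'}) \in P_X\}|$. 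The bound \eqref{eq:shufflable} for general $P_X'$ follows from the bound for all singleton $P_X'$ only if we additionally control how many $X_{t'}$ map into $P_X$; but that count is exactly $|f^{-1}(P_X)| = |\pi_X^{-1}(P_X)| = |P_X|$ since $\pi_X$ permutes parts bijectively! Hence $|P_X \cap \pi_X(P_X')| \le |P_X \cap \pi_X(\calP_X)| = |P_X|$ always, and also $\le |P_X'|$; so \eqref{eq:shufflable} is automatically satisfied whenever $\min(|P_X|,|P_X'|) \le 4|P_X||P_X'|/|\calP_X|$, i.e. whenever $\max(|P_X|,|P_X'|) \ge |\calP_X|/4$.

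\textbf{The remaining case and conclusion.} It remains to handle the regime where all of $|P_X|, |P_X'| < |\calP_X|/4$ (and similarly for $Y,Z$). Here I would apply the first-moment-plus-union-bound argument above but only over this restricted family, or — simplest — observe that it suffices to prove \eqref{eq:shufflable} simultaneously for all sextuples where \emph{each} of the six sets has size exactly $\lceil |\calP_W|/4\rceil$ or smaller, and note that by monotonicity (both sides are monotone in the set sizes, intersection up, RHS up) it suffices to prove it for the \emph{largest} such sets, i.e. for a single canonical sextuple per choice of sizes; there are only $\poly(M)$ size-choices, so a union bound over them, each with failure probability $\le 3/4$... which does not beat $1$. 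So the genuinely needed input is a concentration bound: for $(\pi_X,\pi_Y,\pi_Z)$ uniform in $\calG$ and fixed $P_X, P_X'$ with $s := |P_X|, s' := |P_X'|$, the random variable $|P_X \cap \pi_X(P_X')|$ has mean $ss'/M_X$ and I expect it concentrates (this is where pairwise-independence-type structure of $\calG$, or a martingale/negative-association argument on the random injection $\pi_X$ restricted to $P_X'$, is used) so that it exceeds $4ss'/M_X \ge 4\E$ with probability at most, say, $M^{-10}$ once $ss'/M_X$ is bounded below by a constant, and then a union bound over the $\le 2^{o(N)}$... no, over at most $M_X^2$ canonical sextuples per size class, hence $\poly(M)$ total, succeeds. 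I would state this concentration as following from the structure of $\calG$ inherited from the leveled-partition setting (where $\calG$ is a group of "coordinate permutations" and $\pi_X(P_X')$ is a uniformly random "pattern", giving a hypergeometric-type tail). \textbf{The main obstacle} is exactly this last point: \cref{property:fixable} as stated only promises the first moment (\cref{item:uniform}) and group structure, not higher-moment control, so the clean union bound over all exponentially-many sextuples is not available directly — one must either (a) reduce to $\poly(M)$ canonical sextuples via monotonicity as sketched and get a usable tail from the $\min(s,s')$ trivial bound in the "large" regime plus a hypergeometric concentration in the "small" regime, or (b) weaken the target and absorb a union-bound loss into the constant $4$. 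I expect the paper takes route (a): handle $\max(|P_W|,|P_W'|)\ge |\calP_W|/4$ by the trivial bijectivity bound, and handle the complementary regime by a concentration inequality for the random pattern, with the union bound over $\poly(M)$ representatives.
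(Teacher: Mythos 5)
Your first paragraph --- fix the six sets, compute $\E\bigl[\,\abs{P_W \cap \pi_W(P_W')}\,\bigr] = |P_W|\cdot|P_W'|/|\calP_W|$ from \cref{item:uniform} of \cref{property:fixable} by linearity of expectation, apply Markov's inequality with threshold four times the mean, and union-bound over the three dimensions $W \in \{X,Y,Z\}$ to succeed with probability $\ge 1/4$ --- is exactly the paper's proof, and is all that is needed.

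Everything after that is spent on a quantifier-order worry that is a reasonable reaction to the literal wording of the lemma but does not reflect a gap in the paper's argument. You read the statement as $\exists\,(\pi_X,\pi_Y,\pi_Z)\;\forall\,(P_X,P_X',P_Y,P_Y',P_Z,P_Z')$, i.e.\ one permutation triple good for every sextuple of sets at once. But the paper both proves and uses only the weaker $\forall\,(P_X,\dots)\;\exists\,(\pi_X,\pi_Y,\pi_Z)$: in the proof of \cref{thm:fix-holes-general} the lemma is invoked anew at each recursion node with a concrete sextuple ($P_X, P_X^{(0)}, P_Y, P_Y^{(0)}, P_Z, P_Z^{(0)}$ determined by the current target subtensor and the broken copy consumed there), and the resulting permutation is allowed to depend on that sextuple. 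Your diagnosis that the stronger version cannot follow from \cref{property:fixable} alone --- it fixes only the first moment, so a union bound over the exponentially many sextuples would require concentration beyond what is granted --- is correct, and is precisely why the lemma should be read with the existential on the inside. Your observation that the regime $\max(|P_W|,|P_W'|)\ge|\calP_W|/4$ is handled for free because $\pi_W$ is a bijection on parts, giving $\abs{P_W \cap \pi_W(P_W')}\le\min(|P_W|,|P_W'|)$, is a nice aside but unnecessary here.
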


\begin{proof}
  We prove the lemma using a probabilistic argument. Consider a uniformly random element $(\pi_X, \pi_Y, \pi_Z)\in \calG$. By \cref{item:uniform} in \cref{property:fixable}, for any $X_t \in \calP_X$, we have
  \[\Pr\Bk{\pi_X(X_t)\in P_X} = \frac{|P_X|}{|\calP_X|}.\]
  By linearity of expectation
  \[\E\Bk{\abs{P_X \cap \pi_X(P_X')}} = \sum_{X_t \in P_X'}\Pr\Bk{\pi_X(X_t) \in P_X} = \frac{|P_X|\cdot |P_X'|}{|\calP_X|}.\]
  Thus by Markov's inequality, have
  \[\Pr\Bk{\abs{\pi_X(X_t)\in P_X} > \frac{4|P_X|\cdot |P_X'|}{\calP_X}} \le \frac{1}{4}.\]
  The argument works similarly for $Y$ and $Z$, so by union bound over $X, Y, Z$, with probability $\ge \frac{1}{4}$, a random $(\pi_X, \pi_Y, \pi_Z)\in \calG$ satisfies \cref{eq:shufflable}. Therefore, we can conclude that there exists such a $(\pi_X, \pi_Y, \pi_Z)\in \calG$ satisfying \cref{eq:shufflable}.
\end{proof}

We now proceed to prove \cref{thm:fix-holes-general}. The main idea is to first take a broken copy of $T$ that covers most of the terms in the tensor, then write the missing terms as a sum of 7 smaller subtensors which we treat as 7 subproblems, and finally recurse on each of the subproblems with smaller sizes.

\begin{proof}[Proof of \cref{thm:fix-holes-general}]
  Assume $P_X, P_Y, P_Z$ are sets of $h_X, h_Y, h_Z$ parts of $X$-, $Y$-, $Z$-dimension respectively, and assume that we need to produce $T \Vert_{P_X, P_Y, P_Z}$. The number of broken copies of $T$ required for this purpose is denoted as $f(h_X, h_Y, h_Z)$. Clearly, $f(h_X, h_Y, h_Z) = 0$ when one of $h_X, h_Y, h_Z$ equals zero (because $T \Vert_{P_X, P_Y, P_Z}$ would be an empty tensor), and we need to upper bound $f(M_X, M_Y, M_Z)$, which is the number of broken copies required to produce a complete copy of $T$.

  Take a broken copy of $T$, namely $\Thole = T \Vert_{\calP_X \setminus P^{(0)}_X, \; \calP_Y \setminus P^{(0)}_Y, \; \calP_Z \setminus P^{(0)}_Z}$, where $P^{(0)}_X$, $P^{(0)}_Y$, $P^{(0)}_Z$ are the set of holes. Then, applying \cref{lem:shufflable} on $P_X, P^{(0)}_X, P_Y, P^{(0)}_Y, P_Z, P^{(0)}_Z$ gives $(\pi_X, \pi_Y, \pi_Z) \in \calS_{N_X} \times \calS_{N_Y} \times \calS_{N_Z}$ such that

  \begin{align}\label{eq:subproblem-size}
    \begin{split}
      \abs{P_X^{(0)'}} &\defeq \abs{P_X \cap \pi_X\bigbk{P_X^{(0)}}} \le 4 \cdot \frac{\abs{P_X} \cdot \bigabs{P_X^{(0)}}}{M_X} \le \frac{1}{\log M_X} \cdot \abs{P_X}, \\
      \abs{P_Y^{(0)'}} &\defeq \abs{P_Y \cap \pi_Y\bigbk{P_Y^{(0)}}} \le \frac{1}{\log M_Y} \cdot \abs{P_Y}, \\
      \abs{P_Z^{(0)'}} &\defeq \abs{P_Z \cap \pi_X\bigbk{P_Z^{(0)}}} \le \frac{1}{\log M_Z} \cdot \abs{P_Z}.
    \end{split}
  \end{align}
  We relabel the variables in $\Thole$ according to the permutations $\pi_X, \pi_Y, \pi_Z$, obtaining another broken copy of $T$ with sets of holes $\pi_X\bigbk{P_X^{(0)}}$, $\pi_Y\bigbk{P_Y^{(0)}}$, $\pi_Z\bigbk{P_Z^{(0)}}$. We then zero out all parts outside $P_X, P_Y, P_Z$. The obtained tensor, denoted by $\Thole'$, is a subtensor of the target tensor $T \Vert_{P_X, P_Y, P_Z}$:
  \[
    \Thole' \;=\; T \Vert_{P_X \setminus P_X^{(0)'}, \; P_Y \setminus P_Y^{(0)'}, \; P_Z \setminus P_Z^{(0)'}} \;\defeq\; T \Vert_{P_X^{(1)'}, \, P_Y^{(1)'}, \, P_Z^{(1)'}},
  \]
  where $P_X^{(0)'} \defeq P_X \cap \pi_X\bigbk{P_X^{(0)}}$ is the set of holes in $X$-parts, and $P_X^{(1)'} \defeq P_X \setminus P_X^{(0)'}$; similar for $Y$- and $Z$-dimension.

  Next, we write $T \Vert_{P_X, P_Y, P_Z}$ as a sum of 8 subtensors:
  \[
    T \Vert_{P_X, P_Y, P_Z} \;=\; T \Vert_{P_X^{(1)'}, \, P_Y^{(1)'}, \, P_Z^{(1)'}} + \sum_{\substack{a, b, c \in \midBK{0, 1} \\ 0 \in \midBK{a, b, c}}} T \Vert_{P_X^{(a)'}, \, P_Y^{(b)'}, \, P_Z^{(c)'}}.
  \]
  Notice that the first term $T \Vert_{P_X^{(1)'}, \, P_Y^{(1)'}, \, P_Z^{(1)'}}$ equals $\Thole'$ (which we already obtained by consuming one broken copy $\Thole$), and the other seven subtensors are significantly smaller than $T \Vert_{P_X, P_Y, P_Z}$, so we can obtain them recursively. The fact that $\bigabs{P_X^{(1)'}} \le |P_X| = h_X$, $\bigabs{P_Y^{(1)'}} \le |P_Y| = h_Y$, $\bigabs{P_Z^{(1)'}} \le |P_Z| = h_Z$ together with \cref{eq:subproblem-size} gives us the following recursion:
  \begin{align*}
    f(h_X, h_Y, h_Z) \le & \, 1 + f\lpr{\frac{h_X}{\log M_X}, h_Y, h_Z} + f\lpr{h_X, \frac{h_Y}{\log M_Y}, h_Z} + f\lpr{h_X, h_Y, \frac{h_Z}{\log M_Z}}\\
                        &+f\lpr{\frac{h_X}{\log M_X}, \frac{h_Y}{\log M_Y}, h_Z} + f\lpr{\frac{h_X}{\log M_X}, h_Y, \frac{h_Z}{\log M_Z}} \\
                        &+ f\lpr{h_X,\frac{h_Y}{\log M_Y}, \frac{h_Z}{\log M_Z}} + f\lpr{\frac{h_X}{\log M_X}, \frac{h_Y}{\log M_Y}, \frac{h_Z}{\log M_Z}}.
  \end{align*}
  Since $\abs{\calP_X} = M_X$, $\abs{\calP_Y} = M_Y$, $\abs{\calP_Z} = M_Z$, we can solve the recursion for $f(M_X, M_Y, M_Z)$ and get 
  \begin{align*}
    f(M_X, M_Y, M_Z) &\le 7^{1 + \ceil{\log_{\log M_X} M_X} + \ceil{\log_{\log M_Y} M_Y} + \ceil{\log_{\log M_Z}M_Z}} \\
                     &\le C_0 \cdot M^{\frac{3}{\log \log M}}
  \end{align*}
  where $C_0$ is a sufficiently large constant and $M =\max \{M_X, M_Y, M_Z\}$ since the function $M^{1/\log \log M}$ is monotonic increasing for sufficiently large $M$.
\end{proof}

We remark that \cref{thm:fix-holes-general} also works for non-partitioned tensors satisfying \cref{property:fixable} when considering on $X$-, $Y$-, $Z$-variables as partitioned into size-$1$ parts where each part consists of a single variable.

Now let us return our attention to the context of fast matrix multiplication and show that we can fix the holes in the interface tensors with holes obtained in our algorithm.

\fixinterface*

\begin{proof}
  Consider the level-$1$ partition of the $X$-, $Y$-, $Z$-variables in $T$ into level-$1$ blocks indexed by sequences in $\{0,1,2\}^N$ with length exactly $N = 2^{\ell-1}\cdot \sum_{t\in [s]}n_t$ as defined in the statement. By definition, the level-$1$ blocks remaining in $T$ are consistent with the distributions $\splresXt, \splresYt, \splresZt$ over each term $t\in [s]$ in $T$, which means that every level-$1$ block $X_{\hat{I}}$ with index sequence $\hat{I}\in \{0,1,2\}^N$ has the same number of $0$'s, $1$'s, and $2$'s. This implies that each level-$1$ $X$-variable block contains the same number of variables and the number of level-$1$ blocks can be bounded by $3^N$. Similarly, there are $\le 3^N$ level-$1$ $Y$- and $Z$-variable blocks and the partitions of $Y$- and $Z$-variables into level-$1$ blocks are partitions into equal-sized parts.

  We let the partitions of $X$-, $Y$-, $Z$-variables into level-1 blocks be the partitions used for \cref{thm:fix-holes-general}, and therefore the number of blocks $M_X, M_Y, M_Z \le 3^N$. Then suppose we can find an appropriate $\calG\subseteq \calS_{|X|}\times \calS_{|Y|} \times \calS_{|Z|}$ satisfying \cref{property:fixable} for $T$, then by \cref{thm:fix-holes-general}, as the fraction of holes in every broken copy $T_i$ is at most $\frac{1}{8N} \le \frac{1}{4 \log 3^N} \le \min\bigBK{\frac{1}{4 \log M_X}, \frac{1}{4 \log M_Y}, \frac{1}{4 \log M_Z}}$ in all three dimensions, a direct sum of $\lpr{3^N}^{\frac{3}{\log \log 3^N}} = 2^{C_1\cdot \frac{N}{\log N}}$ broken copies (with sufficiently large constant $C_1 > 0$) of $T$ can degenerate into an unbroken copy of $T$.

  Thus it suffices to construct a set of permutations $\calG\subseteq \calS_{|X|}\times \calS_{|Y|} \times \calS_{|Z|}$ that together with $T$ satisfies \cref{property:fixable}. Note that every $X$-, $Y$-, or $Z$-variable in $T$ is indexed by a sequence in $\midBK{0, 1, \ldots, q + 1}^{N} = \bigbk{\midBK{0, 1, \ldots, q + 1}^{2^{\l - 1}}}^{n}$, we call every $2^{\l - 1}$ consecutive indices a \emph{chunk} and randomly permute chunks within the same term in $T$. Specifically, consider the set $\calH = \calS_{n_1}\times \dots \times \calS_{n_s}$. For each $\sigma = (\sigma_1,\dots, \sigma_s)\in \calH$, consider that $\sigma_t$ permutes the $n_t$ length-$2^{\ell-1}$ chunks in the $t$-th term for $t\in [s]$. $\sigma$ can be regarded as a permutation over $[n]$, indicating the destinations of all $n$ chunks. It also induces a permutation $\sigma' \in \calS_{N}$ over $N$ level-1 indices. Formally, the $j$-th index in the $i$-th chunk is permuted to the $j$-th index in the $\sigma(i)$-th chunk, i.e., $\sigma'((i - 1) \cdot 2^{\l - 1} + j) = (\sigma(i) - 1) \cdot 2^{\l - 1} + j$ for all $i \in [n]$ and $j \in [2^{\l - 1}]$. Further, $\sigma'$ induces a permutation $\pi_X$ over all $X$-variables, given by
  \[
    \pi_X\bigbk{x_{(\hat i_1, \hat i_2, \ldots, \hat i_N)}} \,\defeq\, x_{(\hat i_{\sigma'(1)}, \hat i_{\sigma'(2)}, \ldots, \hat i_{\sigma'(N)})},
  \]
  where $x_{(\hat i_1, \hat i_2, \ldots, \hat i_N)}$ represents the $X$-variable indexed by $(\hat i_1, \hat i_2, \ldots, \hat i_N) \in \midBK{0, 1, \ldots, q + 1}^{N}$. The permutations $\pi_Y, \pi_Z$ over $Y$- and $Z$-variables are defined similarly. Finally, $\calG$ is defined as all permutations generated in the above way, i.e., $\calG = \midBK{(\pi_X, \pi_Y, \pi_Z) \textup{ induced from } \sigma \in \calH}$.

  Note that $\calG$ is well-defined, since for any element $(\pi_X, \pi_Y, \pi_Z)\in \calG$ and any level-$1$ index sequence $\hat{I}$ in $T$ satisfying the complete split distributions $\midBK{\splresXt}_{t \in [s]}$, $\pi_X(X_{\hat{I}})$ must also satisfy the complete split distributions $\midBK{\splresXt}_{t \in [s]}$, because the permutation acts on each term individually. Now we check that $\calG$ satisfies \cref{property:fixable}. It is easy to see by definition that the set $\calG$ satisfies \cref{item:parts-to-parts} and \cref{item:preserve-structure} since variables in one level-1 variable block all get permuted to the same level-1 variable block. \cref{item:uniform} holds due to the symmetry of the chunks within the same term.
\end{proof}

\newcommand{\numgrow}{V}
\newcommand{\numgrowl}[1][\l]{\numgrow_{#1}}
\newcommand{\interl}[1][\l]{\T_{#1}}

\section{Numerical Result}
\label{sec:numerical}

Let $\l^* > 0$ be an integer and let $N = 2^{\l^* - 1} \cdot n$. Our upper bound of $\omega(1, \kappa, 1)$ is formed by successively applying \cref{thm:global-stage-with-eps,thm:consituent_MM_terms,thm:constituent-stage-with-eps} to degenerate $2^{o(n)}$ independent copies of $\CW_q^{\otimes N} \equiv \bigbk{\CW_q^{\otimes 2^{\l^* - 1}}}^{\otimes n}$ into independent matrix multiplication tensors of the form $\angbk{a, a^{\kappa}, a}$, shown in \cref{alg:framework}.

\begin{figure}[ht]
  \centering
  \begin{tcolorbox}
    \captionof{algocf}{Procedure of Degeneration}{\label{alg:framework}}
    Let $\eps > 0$ be a fixed constant and $\l^* > 0$ be an integer.
    \begin{enumerate}
    \item Degenerate $2^{o(n)}$ independent copies of $\bigbk{\CW_q^{\otimes 2^{\l^* - 1}}}^{\otimes n}$ into $\numgrowl[\l^*]$ (independent) copies of a level-$\l^*$ $(\eps \cdot 3^{\l^*})$-interface tensor $\interl[\l^*]$, where the number of copies $\numgrowl[\l^*]$ and the parameter list of $\interl[\l^*]$ are given in \cref{thm:global-stage-with-eps} and \cref{prop:global-stage-no-eps}.
    \item For each $\l = \l^*, \ldots, 2$:
      \begin{itemize}
      \item Degenerate \emph{every} $2^{o(n)}$ copies of the level-$\l$ $(\eps \cdot 3^{\l})$-interface tensor $\interl[\l]$ into $\numgrowl[\l - 1]$ independent copies of the tensor product of a level-$(\l - 1)$ $(\eps \cdot 3^{\l - 1})$-interface tensor $\interl[\l - 1]$ and some matrix multiplication tensor $\angbk{a_\l, b_\l, c_\l}$. Here, the number of copies $\numgrowl[\l - 1]$, the parameter list of $\interl[\l - 1]$ and the matrix multiplication size $\angbk{a_\l, b_\l, c_\l}$ are all given in \cref{thm:constituent-stage-with-eps} and \cref{prop:constituent-stage-no-eps}.
      \end{itemize}
    \item The level-$1$ $3\eps$-interface tensor $\interl[1]$ can degenerate into a matrix multiplication tensor, written $\angbk{a_1, b_1, c_1}$, according to \cref{thm:consituent_MM_terms}.
    \item So far, we have obtained $\numgrow \defeq \prod_{\l = 1}^{\l^*} \numgrowl[\l]$ copies of $\angbk{A, B, C} \equiv \bigotimes_{\l = 1}^{\l^*} \angbk{a_\l, b_\l, c_\l}$.
    \end{enumerate}
    We first let $n \to \infty$ and apply Sch{\"o}nhage's asymptotic sum inequality 
    (\cref{thm:schonhage-ineq-rect}) on the above degeneration, obtaining a bound on $\omega(1, \kappa, 1)$ which might depend on $\eps$; then, we let $\eps \to 0$, obtaining the bound $\omega(1, \kappa, 1) \le \omega'$ as long as
    \[
      \lim_{\eps \to 0} \lim_{n \to \infty} \numgrow^{1/n} \cdot \min\bigBK{A, B^{1/\kappa}, C}^{\omega'/n} \ge (q + 2)^{2^{\l^* - 1}}.
      \numberthis \label{eq:asymptotic_sum_inequality_in_algorithm}
    \]
  \end{tcolorbox}
  \vspace{-1.5em}
\end{figure}

\bigskip

Every degeneration step in \cref{alg:framework} requires a set of parameters, including the distribution $\alpha$ over constituent tensors, the proportions of tensor powers $A_1, A_2, A_3$ assigned to three regions, and others. If we are given an assignment to the parameters, we can precisely calculate
\[
  \lim_{\eps \to 0} \lim_{n \to \infty} \numgrowl^{1/n}, \quad \lim_{\eps \to 0} \lim_{n \to \infty} a_\l^{1/n}, \quad \lim_{\eps \to 0} \lim_{n \to \infty} b_\l^{1/n}, \quad \lim_{\eps \to 0} \lim_{n \to \infty} c_\l^{1/n}
\]
according to \cref{thm:global-stage-with-eps,thm:consituent_MM_terms,thm:constituent-stage-with-eps}. Plugging them into \eqref{eq:asymptotic_sum_inequality_in_algorithm} would verify the correctness of the claimed bound on $\omega(1, \kappa, 1)$.

\paragraph{Optimization strategy.}

Finding a set of parameters that lead to the best bound of $\omega(1, \kappa, 1)$ can be modeled as a constrained optimization problem:
\[
  \begin{array}{cl}
    \textup{minimize} & \qquad \omega' \\
    \textup{subject to} & \textup{all constraints in \cref{thm:global-stage-with-eps,thm:consituent_MM_terms,thm:constituent-stage-with-eps}} \\
                      & \displaystyle \lim_{\eps \to 0} \lim_{n \to \infty} \numgrow^{1/n} \cdot \min\bigBK{A, B^{1/\kappa}, C}^{\omega'/n} \ge (q + 2)^{2^{\l^* - 1}}.
  \end{array}
  \numberthis \label{eq:optimization_problem}
\]
We used \emph{sequential quadratic programming (SQP)} to solve this optimization problem, which is a well-known iterative approach for solving nonlinear constrained optimization. The software package SNOPT~\cite{SNOPT} is used for performing SQP. Like all other optimization methods for nonlinear optimization, SQP does not guarantee finding the global optimum or a specific convergence rate; the quality of the solution and the time performance both rely on the \emph{initial point} of the iterative process, which could be provided by the user.

For $\kappa = 1$, we take the parameters from \cite{LeGall32power} which Le Gall used to analyze $\CW_q^{\otimes 2^{\l^* - 1}}$ for square matrix multiplication, and transform it into a feasible solution to the optimization problem \eqref{eq:optimization_problem}, which we set as the initial point. Specifically, Le Gall's parameters consist of a distribution $\alpha$ over level-$\l^*$ constituent tensors (for the global stage) together with a split distribution $\alpha_{i,j,k}$ for every constituent tensor $T_{i,j,k}$ (for the constituent stages). We specify our parameters as follows:
\begin{itemize}
\item For every constituent tensor $T_{i,j,k}$ that appears in our interface tensors, we directly set $\alpha_{i,j,k}$ as its split distribution in every region, and let $A_1 = A_2 = A_3 = 1/3$, which means that all three regions are symmetric to each other.
\item The distribution used in our global stage is set to $\alpha$ as well. Other parameters are uniquely determined by these specified ones.
\item For every constituent tensor $T_{i,j,k}$ that contains a zero, say $i = 0$, we choose its complete split distributions $\splresX, \splresY, \splresZ$ that maximizes its size as an inner product tensor, i.e., maximizes $H(\splresY)$.
\item Other parameters are uniquely determined by the specified ones.
\end{itemize}
It is easy to see that these parameters form a feasible solution. Furthermore, these parameters actually lead to the same upper bound on $\omega$ as Le Gall's analysis. We start from this feasible solution and perform SQP to obtain an upper bound for $\omega = \omega(1, 1, 1)$.

For $\kappa \ne 1$, our strategy is to start with a solution for another $\kappa$ nearby. For example, it is natural to believe that a good solution for $\omega(1, 0.95, 1)$ is similar to that for $\omega(1, 1, 1)$. Therefore, we use our parameters for $\omega(1, 1, 1)$ as the initial point for optimizing the bound of $\omega(1, 0.95, 1)$, and proceed with SQP to obtain the bound for $\omega(1, 0.95, 1)$. Then, we can further start with our parameters for $\omega(1, 0.95, 1)$ to obtain parameters for $\omega(1, 0.90, 1)$, and so on.

\paragraph{Lagrange multipliers.} In \cref{thm:global-stage-with-eps}, we need to calculate $P_\alpha = \max_{\alpha' \in D} H(\alpha') - H(\alpha)$ where $D$ represents the set of distributions that share marginals with $\alpha$. Although this definition of $P_\alpha$ is not a closed form in terms of $\alpha$, we can let the max-entropy distribution $\alpha_{\max} \defeq \argmax_{\alpha' \in D} H(\alpha')$ be an optimizable variable, and use the method of Lagrange multipliers to ensure that $\alpha'$ has the largest entropy among $D$.

Formally, we first add linear constraints to force $\alpha_{\max}$ and $\alpha$ to have the same marginals:
\begin{align*}
  \sum_{j + k = 2^{\l^*} - i} \bigbk{\alpha_{\max}(i,j,k) - \alpha(i,j,k)} &= 0, \qquad \forall i = 0, 1, \ldots, 2^{\l^*},
                                                  \numberthis \label{eq:same_marginal_constraint_x} \\
  \sum_{i + k = 2^{\l^*} - j} \bigbk{\alpha_{\max}(i,j,k) - \alpha(i,j,k)} &= 0, \qquad \forall j = 0, 1, \ldots, 2^{\l^*},
                                                  \numberthis \label{eq:same_marginal_constraint_y} \\
  \sum_{i + j = 2^{\l^*} - k} \bigbk{\alpha_{\max}(i,j,k) - \alpha(i,j,k)} &= 0, \qquad \forall k = 0, 1, \ldots, 2^{\l^*},
                                                  \numberthis \label{eq:same_marginal_constraint_z} \\
  \sum_{i,j,k} \alpha_{\max}(i,j,k) &= 1, \numberthis \label{eq:alpha_max_sum=1} \\
  \alpha_{\max}(i,j,k) &\ge 0, \qquad \forall i+j+k = 2^{\lvl^*}. \numberthis \label{eq:alpha_max_nonnegative}
\end{align*}
Let $\lambda_X(i), \lambda_Y(j), \lambda_Z(k), \lambda_S$ ($0 \le i, j, k \le 2^{\l^*}$) be Lagrange multipliers for \eqref{eq:same_marginal_constraint_x}, \eqref{eq:same_marginal_constraint_y}, \eqref{eq:same_marginal_constraint_z}, \eqref{eq:alpha_max_sum=1} respectively, which we also treat as optimizable variables. Then the first-order optimality of $H(\alpha_{\max})$ can be written as
\[
  \lambda_X(i) + \lambda_Y(j) + \lambda_Z(k) + \lambda_S = \ln \alpha_{\max}(i,j,k) + 1, \qquad \forall i + j + k = 2^{\l^*}. \numberthis \label{eq:lagrange_constraint}
\]
(Note that any $\alpha_{\max}$ satisfying \eqref{eq:lagrange_constraint} will also satisfy strict inequalities in \eqref{eq:alpha_max_nonnegative}, thus we do not need to create Lagrange multipliers for \eqref{eq:alpha_max_nonnegative}.)
Since the entropy function $H(\cdot)$ is strictly concave, any $\alpha_{\max}$ satisfying these constraints is guaranteed to have maximum entropy. (Conversely, the true max-entropy distribution $\alpha_{\max}$ will satisfy all these requirements.) We include these Lagrange multiplier constraints \eqref{eq:lagrange_constraint} in our optimization problem \eqref{eq:optimization_problem}.\footnote{In the program, we use the exponential form of \eqref{eq:optimization_problem}: $\exp(\lambda_X(i) + \lambda_Y(j) + \lambda_Z(k) + \lambda_S - 1) = \alpha_{\max}(i,j,k)$, in order to avoid numerical issues like $\ln 0$.} Similarly, in \cref{thm:constituent-stage-with-eps}, we also introduce Lagrange multiplier constraints when we need to ensure that some distribution has maximum entropy given its marginals.

\paragraph{Smooth the landscape.} In \cref{thm:global-stage-with-eps,thm:constituent-stage-with-eps}, the intermediate variables named $E_1, E_2, E_3$ are minimums of three terms. If we calculate them according to the definition, it would create a ``spike'' (non-differentiable point) in the landscape, which is unfriendly for many optimizable methods including SQP. (SQP requires all objective and constraint functions to be twice continuously differentiable.) To address this issue, we treat $E_1, E_2, E_3$ as optimizable variables and transform the minimum into linear inequality constraints:
\[
  E = \min(x, y, z) \qquad \Rightarrow \qquad E \le x, \; E \le y, \; E \le z.
\]
Since $E$ (any of $E_1, E_2, E_3$) is positively correlated with the number of matrix multiplication tensors we produce, we do not need to worry that $E$ takes on a value smaller than $\min(x, y, z)$. The newly introduced constraints are linear and thus have smooth landscapes. We include these auxiliary optimizable parameters and constraints in the optimization problem \eqref{eq:optimization_problem}. In practice, we also observe that SQP would not work well without this type of smoothing.

\paragraph{Numerical results.}
\label{subsec:numerical}

We wrote a MATLAB~\cite{MATLAB2022} program to solve the optimization problem \eqref{eq:optimization_problem}, with the help of SNOPT~\cite{SNOPT}, a software package for solving large-scale optimization problems. By running the program for different $\kappa$, we obtained various upper bounds of $\omega(1, \kappa, 1)$, as shown in \cref{table:result}. All bounds are obtained by analyzing the fourth power\footnote{Our analysis also works for the eighth power, but it was too slow to solve the optimization problem due to the large number of parameters.} of the CW tensor with $q = 5$. Specifically, we obtained the important bounds $\omega \le 2.371552$, $\alpha \ge 0.321334$, and $\mu \le 0.527661$. The code and parameters are available at \url{https://osf.io/7wgh2/?view_only=ce1a6a66d9fc432d8f6da39a6ea4b6e4}.

\bibliographystyle{alpha}
\bibliography{ref}

\newcommand{\etalchar}[1]{$^{#1}$}
\begin{thebibliography}{BCC{\etalchar{+}}17b}

\bibitem[AFLG15]{ambainis}
Andris Ambainis, Yuval Filmus, and Fran{\c{c}}ois Le~Gall.
\newblock Fast matrix multiplication: limitations of the {Coppersmith-Winograd}
  method.
\newblock In {\em Proceedings of the 47th Annual {ACM} on Symposium on Theory
  of Computing (STOC)}, pages 585--593, 2015.

\bibitem[Alm21]{journals/toc/Alman21}
Josh Alman.
\newblock Limits on the universal method for matrix multiplication.
\newblock {\em Theory Comput.}, 17:1--30, 2021.

\bibitem[ASU13]{AlonSU13-sunflower-matrixmult}
Noga Alon, Amir Shpilka, and Christopher Umans.
\newblock On sunflowers and matrix multiplication.
\newblock {\em Comput. Complex.}, 22(2):219--243, 2013.

\bibitem[AV18a]{almanitcs}
Josh Alman and Virginia {Vassilevska Williams}.
\newblock Further limitations of the known approaches for matrix
  multiplication.
\newblock In {\em Proceedings of the 9th Innovations in Theoretical Computer
  Science Conference (ITCS)}, pages 25:1--25:15, 2018.

\bibitem[AV18b]{aw2}
Josh Alman and Virginia {Vassilevska Williams}.
\newblock Limits on all known (and some unknown) approaches to matrix
  multiplication.
\newblock In {\em Proceedings of the 59th {IEEE} Annual Symposium on
  Foundations of Computer Science (FOCS)}, pages 580--591, 2018.

\bibitem[AV21]{AlmanW21}
Josh Alman and Virginia {Vassilevska Williams}.
\newblock A refined laser method and faster matrix multiplication.
\newblock In {\em Proceedings of the 2021 {ACM-SIAM} Symposium on Discrete
  Algorithms (SODA)}, pages 522--539, 2021.

\bibitem[BCC{\etalchar{+}}17a]{blasiak2017cap}
Jonah Blasiak, Thomas Church, Henry Cohn, Joshua~A. Grochow, Eric Naslund,
  William~F. Sawin, and Chris Umans.
\newblock On cap sets and the group-theoretic approach to matrix
  multiplication.
\newblock {\em Discret. Anal.}, 2017(3):1--27, 2017.

\bibitem[BCC{\etalchar{+}}17b]{blasiak2017groups}
Jonah Blasiak, Thomas Church, Henry Cohn, Joshua~A. Grochow, and Chris Umans.
\newblock Which groups are amenable to proving exponent two for matrix
  multiplication?
\newblock {\em arXiv:1712.02302}, 2017.

\bibitem[Beh46]{behrend1946sets}
Felix~A. Behrend.
\newblock On sets of integers which contain no three terms in arithmetical
  progression.
\newblock {\em Proceedings of the National Academy of Sciences of the United
  States of America}, 32(12):331, 1946.

\bibitem[CKL07]{CzumajKL07}
Artur Czumaj, Miroslaw Kowaluk, and Andrzej Lingas.
\newblock Faster algorithms for finding lowest common ancestors in directed
  acyclic graphs.
\newblock {\em Theor. Comput. Sci.}, 380(1-2):37--46, 2007.

\bibitem[Cop82]{Coppersmith82}
Don Coppersmith.
\newblock Rapid multiplication of rectangular matrices.
\newblock {\em {SIAM} J. Comput.}, 11(3):467--471, 1982.

\bibitem[Cop97]{coppersmith1997rectangular}
Don Coppersmith.
\newblock Rectangular matrix multiplication revisited.
\newblock {\em J. Complex.}, 13(1):42--49, 1997.

\bibitem[CVZ21]{journals/toc/ChristandlVZ21}
Matthias Christandl, P{\'{e}}ter Vrana, and Jeroen Zuiddam.
\newblock Barriers for fast matrix multiplication from irreversibility.
\newblock {\em Theory Comput.}, 17:1--32, 2021.

\bibitem[CW90]{cw90}
Don Coppersmith and Shmuel Winograd.
\newblock Matrix multiplication via arithmetic progressions.
\newblock {\em J. Symb. Comput.}, 9(3):251--280, 1990.

\bibitem[DS13]{stothers}
Alexander~M. Davie and Andrew~J. Stothers.
\newblock Improved bound for complexity of matrix multiplication.
\newblock {\em Proceedings of the Royal Society of Edinburgh: Section A
  Mathematics}, 143:351--369, 4 2013.

\bibitem[Dua22]{duanpersonal}
Ran Duan.
\newblock Personal communication, 2022.

\bibitem[DWZ23]{duan2023}
Ran Duan, Hongxun Wu, and Renfei Zhou.
\newblock Faster matrix multiplication via asymmetric hashing.
\newblock In {\em Proceedings of the 64th {IEEE} Symposium on Foundations of
  Computer Science (FOCS)}, 2023.

\bibitem[EG04]{EG04}
Friedrich Eisenbrand and Fabrizio Grandoni.
\newblock On the complexity of fixed parameter clique and dominating set.
\newblock {\em Theor. Comput. Sci.}, 326(1):57--67, 2004.

\bibitem[GMS05]{SNOPT}
Philip~E. Gill, Walter Murray, and Michael~A. Saunders.
\newblock {SNOPT}: An {SQP} algorithm for large-scale constrained optimization.
\newblock {\em SIAM Rev.}, 47(1):99--131, Jan 2005.

\bibitem[HP98]{HUANG1998257}
Xiaohan Huang and Victor~Y. Pan.
\newblock Fast rectangular matrix multiplication and applications.
\newblock {\em J. Complex.}, 14(2):257--299, 1998.

\bibitem[KZHP08]{ke2008fast}
ShanXue Ke, BenSheng Zeng, WenBao Han, and Victor~Y Pan.
\newblock Fast rectangular matrix multiplication and some applications.
\newblock {\em Science in China Series A: Mathematics}, 51:389--406, 2008.

\bibitem[LG12]{legallrect}
Fran{\c{c}}ois Le~Gall.
\newblock Faster algorithms for rectangular matrix multiplication.
\newblock In {\em Proceedings of the 53rd Annual {IEEE} Symposium on
  Foundations of Computer Science (FOCS)}, pages 514--523, 2012.

\bibitem[LG14]{LeGall32power}
Fran{\c{c}}ois Le~Gall.
\newblock Powers of tensors and fast matrix multiplication.
\newblock In {\em Proceedings of the 39th International Symposium on Symbolic
  and Algebraic Computation (ISSAC)}, pages 296--303, 2014.

\bibitem[LG24]{LeGall24}
Fran{\c{c}}ois Le~Gall.
\newblock Faster rectangular matrix multiplication by combination loss
  analysis.
\newblock In {\em Proceedings of the 2024 {ACM-SIAM} Symposium on Discrete
  Algorithms (SODA)}, page to appear, 2024.

\bibitem[LGU18]{legallrect2}
Fran{\c{c}}ois Le~Gall and Florent Urrutia.
\newblock Improved rectangular matrix multiplication using powers of the
  {Coppersmith-Winograd} tensor.
\newblock In {\em Proceedings of the 29th Annual {ACM-SIAM} Symposium on
  Discrete Algorithms (SODA)}, pages 1029--1046, 2018.

\bibitem[Mat22]{MATLAB2022}
{MATLAB 9.12 (R2022a)}.
\newblock The MathWorks Inc., Natick, Massachusetts, 2022.

\bibitem[NP85]{NS85}
Jaroslav {Ne{\v{s}}et{\v{r}}il} and Svatopluk Poljak.
\newblock On the complexity of the subgraph problem.
\newblock {\em Comment. Math. Univ. Carol.}, 26(2):415--419, 1985.

\bibitem[Sch81]{Schonhage81}
Arnold Sch{\"{o}}nhage.
\newblock Partial and total matrix multiplication.
\newblock {\em {SIAM} J. Comput.}, 10(3):434--455, 1981.

\bibitem[SS42]{salemspencer}
Rapha{\"e}l Salem and Donald~C Spencer.
\newblock On sets of integers which contain no three terms in arithmetical
  progression.
\newblock {\em Proceedings of the National Academy of Sciences},
  28(12):561--563, 1942.

\bibitem[Str69]{strassen}
Volker Strassen.
\newblock Gaussian elimination is not optimal.
\newblock {\em Numer. Math.}, 13:354--356, 1969.

\bibitem[Str86]{laser}
Volker Strassen.
\newblock The asymptotic spectrum of tensors and the exponent of matrix
  multiplication.
\newblock In {\em Proceedings of the 27th Annual Symposium on Foundations of
  Computer Science (FOCS)}, pages 49--54, 1986.

\bibitem[SYZ11]{ShapiraYZ11}
Asaf Shapira, Raphael Yuster, and Uri Zwick.
\newblock All-pairs bottleneck paths in vertex weighted graphs.
\newblock {\em Algorithmica}, 59(4):621--633, 2011.

\bibitem[{Vas}12]{virgi12}
Virginia {Vassilevska Williams}.
\newblock Multiplying matrices faster than {Coppersmith-Winograd}.
\newblock In {\em Proceedings of the 44th Symposium on Theory of Computing
  Conference, (STOC)}, pages 887--898, 2012.

\bibitem[Zwi02]{zwickbridge}
Uri Zwick.
\newblock All pairs shortest paths using bridging sets and rectangular matrix
  multiplication.
\newblock {\em J. {ACM}}, 49(3):289--317, 2002.

\end{thebibliography}
\end{document}